\documentclass[12pt,oneside]{report}
\textheight 22.6cm
\voffset -.8cm
\textwidth 17.6cm
\evensidemargin -0.15cm
\oddsidemargin -0.45cm 
\topmargin 0.45cm
\footskip 1.5cm
\headheight 0cm

\usepackage{hyperref}
\hypersetup{colorlinks=true,linkcolor=MidnightBlue,citecolor=Bittersweet}
\usepackage{amsmath}
\usepackage{amssymb}
\usepackage{amscd}
\usepackage{amsthm}
\usepackage[dvipsnames]{xcolor}
\usepackage{braket}
\usepackage{enumitem}
\usepackage{bbm}
\usepackage{graphicx}
\usepackage{chngcntr}
\usepackage{slashed}

\numberwithin{equation}{chapter}
\pagestyle{plain}
\newtheoremstyle{plain}
{2ex} 
{2ex} 
{\rm} 
{} 
{\bfseries} 
{. } 
{0.1em} 
{} 
\theoremstyle{plain}
\newtheorem{defi}{Definition}[chapter]
\newtheorem*{defi*}{Definition}
\newtheorem{theo}[defi]{Theorem}
\newtheorem{prop}[defi]{Proposition}
\newtheorem*{prop*}{Proposition}
\newtheorem{lemm}[defi]{Lemma}  
\newtheorem*{lemm*}{Lemma}    
\newtheorem{coro}[defi]{Corollary}
\newtheorem{conj}[defi]{Conjecture}
\newtheorem{rmk}[defi]{Remark}
\def\be{\begin{equation}}
\def\ee{\end{equation}}

\def\A{\mathcal{A}}

\def\b{\mathfrak{b}}

\def\de{\partial}
\def\der{\bar{\partial}}
\def\dia{\diamond}
\def\Diff{\text{Diff}_{\h}}
\def\Diffs{\text{Diff}_{\h,\sigma}}
\def\DRn{\mathcal{D}\!\,(\gln)}
\def\dom{\text{Dom}}

\def\E{\operatorname{E}}

\def\F{\operatorname{F}}
\def\Fn{\F_{\n_+}}
\def\Ft{\tilde{\F}}
\def\Ftn{\Ft_{\n_+}}

\def\g{\mathfrak{g}}
\def\gl{\mathfrak{gl}}
\def\gln{{\mathbf{gl}}_n}

\def\hn{\h(n)}
\def\h{{\mathbf{h}}}
\def\hf{\mathfrak{h}}

\def\k{\mathfrak{k}}
\def\K{\mathbb{K}}

\def\lb{\vec{\lambda}}

\def\mathZ{\mathbb{Z}_{\geq 0}}
\def\Mc{\mathcal{M}}
\def\M{\operatorname{M}}
\def\Mb{\bar{\M}}

\def\n{\mathfrak{n}}
\def\N{\mathbb{N}}
\def\nb{\{1,\dots,n\}}

\def\PPsi{\hat{\Psi}}

\def\q{\check{\operatorname{q}}}
\def\QQ{\operatorname{Q}}

\def\R{\operatorname{R}}
\def\RR{\hat{\R}}

\def\sl{\mathfrak{sl}}
\def\S{\mathcal{S}}

\def\teL{\operatorname{L}}

\def\th{\tilde{h}}

\def\U{\operatorname{U}}
\def\Ub{\bar{\U}}
\def\Ubh{\Ub(\hf)}

\def\Un{\Ub(n)}

\def\V{\operatorname{V}}

\def\Wb{\bar{\mathcal{W}}}
\def\Wc{\mathcal{W}}
\def\W{\text{W}}

\def\Z{x}
\def\Zc{\mathcal{Z}}

\newcommand{\quom}[2]{\text{$#1$} \big/ \raisebox{-.3em}{\text{$#2$}}}
\newcommand{\qquom}[3]{\raisebox{-.3em}{\text{$#1$}} \setminus \text{$#2$} \ \big/ \ \raisebox{-.3em}{\text{$#3$}}}

\newcommand{\chim}[2]{\chi_{#1 \setminus #2}}
\newcommand{\Diffv}[1]{\text{Diff}_{\h,#1}}
\newcommand{\mtop}[2]{\genfrac{}{}{0pt}{}{#1}{#2}}

\newcommand{\vv}[1]{\raisebox{-.4em}{\Big\vert}_{\hspace{-0.15em}\raisebox{0.3em}{\text{$#1$}}}}

\begin{document}

\begin{titlepage}
\noindent
\thispagestyle{empty}
\begin{minipage}{15cm}
\begin{flushleft}
\vspace{-2cm}
   \includegraphics[height=2cm]{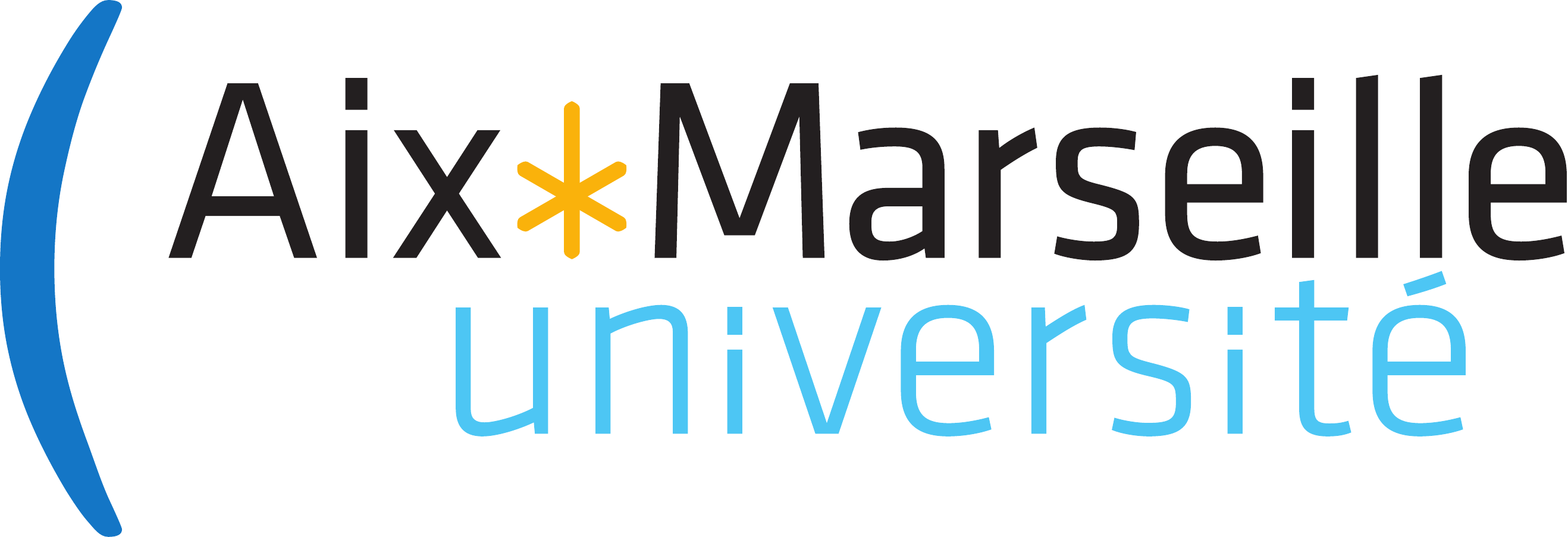}
   \hfill
   \includegraphics[height=2cm]{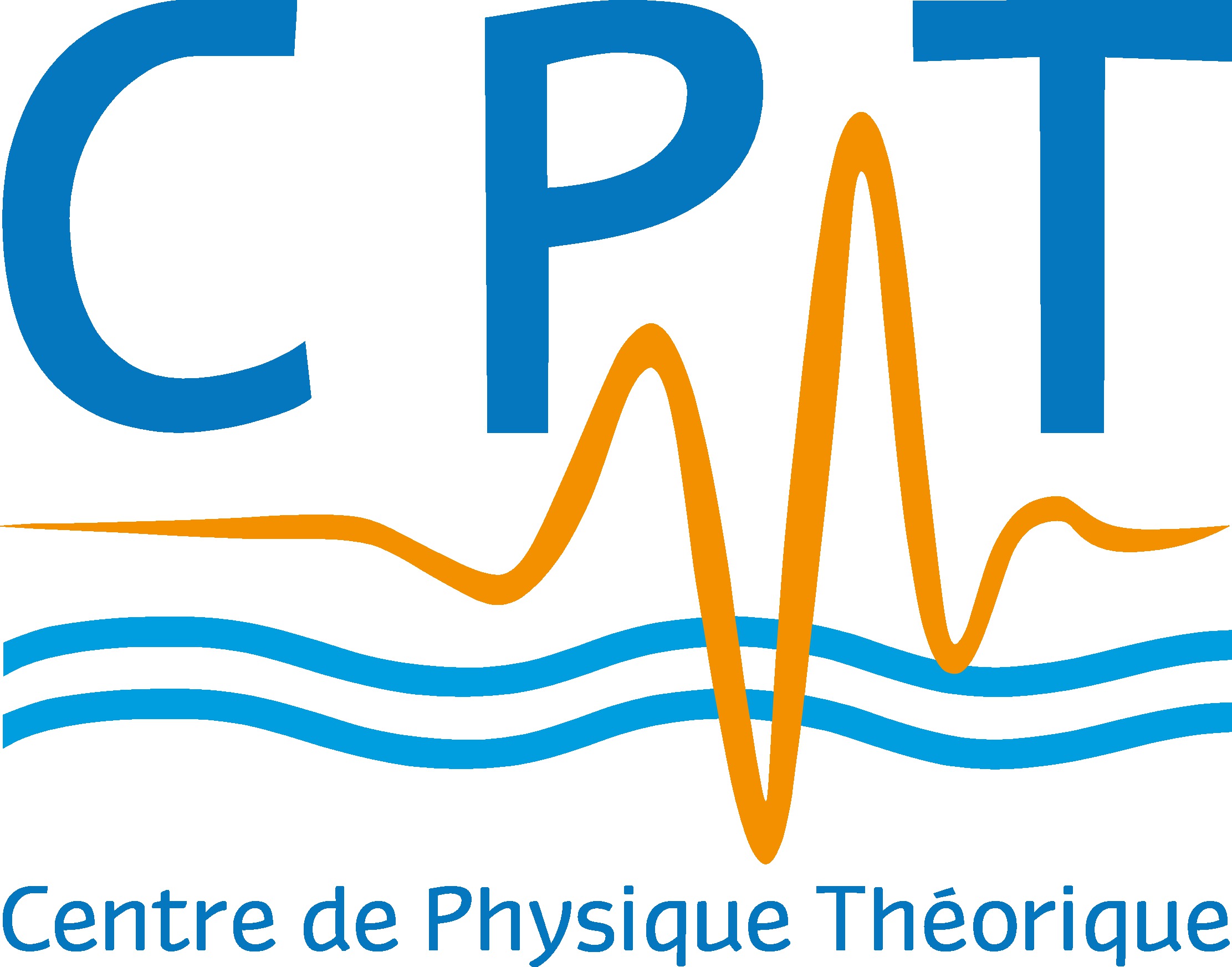}\\
\end{flushleft}
\begin{flushright}
\end{flushright}
\end{minipage}
\centering
\begin{center}
\begin{minipage}{15cm}
\centering
\textsc{Universit\'e d'Aix-Marseille \\
\vspace{0.5ex}
\'Ecole Doctorale 352\\
\vspace{0.5ex}
Facult\'e des Sciences\\
\vspace{0.5ex}
Centre de Physique Th\'eorique de Marseille\\}
\vspace{8ex}
\begin{bf}
\textsc{Th\`ese pr\'esent\'ee pour obtenir le grade universitaire de docteur\\}
\end{bf}
\vspace{4ex}
\begin{it}
Discipline :\\
\vspace{0.5ex}
\end{it}
\textsc{PHYSIQUE ET SCIENCES DE LA MATI\`ERE}

\vspace{2ex}

\begin{it}
Sp\'ecialit\'e : \\
\vspace{0.5ex}
\end{it}
Physique Th\'eorique et Math\'ematique \\
\vspace{8ex}
\begin{large}
Basile HERLEMONT
\end{large}
\end{minipage}

\vspace{10ex}

\begin{minipage}{15cm}
\textwidth 18cm
\begin{bf}
\begin{Large}
\begin{center}
Calcul diff\'erentiel sur des espaces $\h$-deform\'es\\[.2cm]
Differential calculus on $\h$-deformed spaces
\end{center}
\end{Large}
\end{bf}

\vspace{10ex}

\centering
Soutenue le 16 novembre 2017 devant le jury compos\'e de:\\
\vspace{8ex}
\begin{tabular}{lllll}
M.	        & 	Robert	        &	COQUEREAUX    &	CPT     	&	Examinateur \\
M.	        & 	Sergey	        &	KHOROSHKIN    &	ITEP     	&	Rapporteur \\
M.	        &	Oleg        &	OGIEVETSKY	&	CPT           &	Directeur de th\`ese \\
M.	        &	Vladimir	        &	RUBTSOV	&	Universit\'e d'Angers            &	Rapporteur \\  
\end{tabular}
\end{minipage}
\end{center}
\end{titlepage}
\newpage
\chapter*{Introduction}
\vskip -1cm
Let $\k$ be a reductive Lie algebra and $\k = \n_- \oplus \mathfrak{h} \oplus \n_+$ its Gauss decomposition. 
Any irreducible $\k$-module is generated by the action of the nilpotent subalgebra $\n_-$ of $\k$ on a highest weight vector and the space of highest weight vectors, 
that is, vectors annihilated by $\n_+$, is one-dimensional.
Let now $\g$ be a Lie algebra whose representations we want to understand. Assume that $\g$ contains $\k$ and the adjoint action of $\k$ on $\g$ is completely reducible. Any finite-dimensional irreducible $\g$-module $\Mc$ is a direct sum of irreducible $\k$-modules. 
It follows that $\Mc$ is generated by the action of $\n_-$ on the space $V_{\n_+}(\Mc)=\{v \in \Mc\, \vert\, \n_+ v=0\}$ of highest weight vectors for $\k$. J. Mickelsson 
\cite{M} introduced the  algebra $\S(\g,\k)$, which was later called Mickelsson algebra, which acts irreducibly 
on the space $V_{\n_+}(\Mc)$. In general, the algebra $\S(\g,\k)$ is not finitely generated. 
However, after passing to the field of fractions of $\U(\hf)$, we obtain the algebra $\Zc(\g,\k)$, called reduction algebra, whose structure is more transparent .
The algebra $\Zc(\g,\k)$ was later described as a localization of the double coset space
$$\qquom{\n_- \U(\g)}{\U(\g)}{\U(\g)\n_+} \ ,$$
equipped with the associative multiplication defined with the help of the ``extremal projector" of $\k$, introduced by Asherova-Smirnov-Tolstoy \cite{AST}. 
One may work with smaller rings of fractions than the entire field of fractions of $\U(\hf)$. There is minimal localization, see details in {\bf Section \ref{stepalgebras}},
which will be denoted $\mathcal{R}^\g_\k$ and also called reduction algebra. The reduction algebra can be defined in a more general situation, 
for an associative algebra $\A$ instead of $\U(\g)$; we then denote it by $\mathcal{R}^\A_\k$. 

In this thesis, we study the properties and generalizations of the algebra $\Diff(n,N)$,
the reduction algebra of $\W_{nN} \otimes \U(\gl_n)$ with respect to the diagonal embedding of $\U(\gl_n)$, where $\W_{nN}$ is the Weyl algebra in $nN$ variables.
The algebra $\Diff(n,N)$ plays an important role in the theory of diagonal reduction algebras (we refer to \cite{KO2,KO3,KO5,KO6} for generalities on diagonal reduction algebras). Similarly to the ring of $q$-differential operators \cite{WZ}, the algebra $\Diff(n,N)$ can be described in the R-matrix formalism. The R-matrix, needed here, is a solution of the so-called \emph{dynamical Yang--Baxter equation}.

\vskip .1cm
Let $\h$ be a finite dimensional abelian Lie algebra and $V$ a semi-simple $\h$-module. A (meromorphic) map $\RR : \h^\star \to \text{End}(V \otimes V)$ satisfies the dynamical Yang-Baxter equation if
$$\RR_{12}(\lambda) \RR_{23}(\lambda-h_1) \RR_{12}(\lambda) = \RR_{23}(\lambda-h_1) \RR_{12}(\lambda) \RR_{23}(\lambda-h_1)\ ,$$
where  $\RR_{12} = \mathbbm{1} \otimes \RR$ and $\RR_{23} = \RR \otimes \mathbbm{1}$. The operator $\RR_{23}(\lambda-h_1)$ is defined by
$$\RR_{23}(\lambda-h_1) v_1 \otimes v_2 \otimes v_3 = v_1 \otimes \left( \RR(\lambda-\mu_1) v_2 \otimes v_3 \right)$$
where $\mu_1 \in \h^\star$ is the weight of $v_1$ (we refer to \cite{F,GN,ES} for different aspects of the dynamical Yang--Baxter equation and its solutions).
The Yang--Baxter equation appears as a consistency condition for some $1+1$ dimensional quantum systems, in the theory of integrable models etc.
The R-matrices gives rise to different quadratic algebras \cite{Ka,Ma,CP}.

The rings of differential operators on $q$-spaces are quadratic algebras see \cite{WZ}. They have applications in the $q$-differential calculus, 
construction of $q$-Minkowski space etc (see \cite{SWZ,SWZ2,OSWZ,OZ,O}). These rings result from the pairing of two quantum planes. 
In a similar fashion, the ring $\Diff(n,N)$ results from the pairing of two $\hf$-deformed spaces.
It turns out that the result is surprisingly different from that for $q$-deformed vector spaces and the pairing is not unique.

\vskip .1cm
{\bf Contents of the thesis.}
\textbf{Chapter \ref{ChapterLiestepAlge}} is the introduction to the theory of reduction algebras. In \textbf{Section \ref{stepalgebras}} we give the initial definition
of reduction algebras.  \textbf{Sections \ref{EndoVermaMod}, \ref{ExtremalProj}, \ref{UsesExtreProj}} are devoted to the extremal projector and the alternative definition of 
reduction algebras. In \textbf{Section \ref{ExampleRedAlge}} we discuss two examples of reduction algebras.

\vskip .1cm
In \textbf{Chapters \ref{ChapterRinghdef}--\ref{Chaptergenesevcop}}, mainly based on \cite{HO1,HO2}, we present the results of this thesis. 
In \textbf{Chapter \ref{ChapterRinghdef}} we describe the center of $\Diff(n):=\Diff(n,1)$ and construct an isomorphism 
between certain localizations of $\Diff (n)$ and the Weyl algebra $\text{W}_n$ extended by $n$ indeterminates.
In \textbf{Chapter \ref{GeneralizedDiff}} we define and study general consistent pairings of $\h$-deformed coordinate rings. The consistency condition leads to an 
over-determined system of finite-difference equations. We give the general solution of the system. The resulting algebras $\Diffs(n)$ are labeled by a certain ``potential" $\sigma$ in a localization of $\U(\hf)$.
In \textbf{Chapter \ref{ChapterRepreDiff}} we generalize the results of \textbf{Chapter \ref{ChapterRinghdef}} to
the rings $\Diffs(n)$ and initiate the representation theory of $\Diffs(n)$.
In \textbf{Chapter \ref{Chaptergenesevcop}} we show that $\Diff(n,N)$  is an essentially unique algeba realizing the pairing of $\h$-deformed spaces 
if $N>1$. We describe the subspace of quadratic central elements of $\Diff(n,N)$ and present an action of the symmetric group $\mathbb{S}_n$ on the ring $\Diff(n,N)$ and on the diagonal reduction algebra $\DRn$, generalizing the formulas for $\Diff(n)$. 

\vskip .1cm
A possible continuation of our work is the study of the finite dimensional $\Diffs(n)$-modules and a generalization of results of \textbf{Chapter \ref{ChapterRinghdef}} 
to $\Diff(n,N)$, $N>1$.

\paragraph{Acknowledgements.}
I thank my Ph.D advisor Pr. Oleg Ogievetsky to have accepted to supervise me for this thesis. Not all of these three years have been easy. However, I am grateful to him to have tried to adapt to my personality. I also thank him to have introduced me to the study of a lot of subjects I enjoyed: the theory of reduction algebras, quantum groups, quadratic algebras, algebraic combinatorics and others.

\vskip .1cm
I thank Pr. Robert Coquereaux to have answered to a lot of my not so clear questions and to have accepted to be in the jury of my thesis. I am grateful to Pr. Sergey Khoroshkin, to have taken some of his time to answer my questions as well and to have accepted to be in the jury of my thesis. I also thank Pr. Vladimir Roubtsov to have accepted to be in the jury of my thesis.

\vskip .1cm
I ``gratefulnessly" thank my family, always present, and my friends. Thanks to Romain, Morgan \& cie, who always have been there, perfectly playing their friend's role maybe as much as cards. Thanks to Adrien for all the discussions about mathematics and other difficult matters. Finally, thanks to Irene who tried to support me during the calm and also the turbulent moments.

\tableofcontents
\chapter{Reduction algebras}\label{ChapterLiestepAlge}
\vskip -1cm
In \textbf{Section \ref{stepalgebras}} we introduce 
the step algebra, as well as the Mickelsson and reduction algebras of a Lie algebra $\g$ with respect to  its reductive Lie subalgebra $\k$ and motivate their study. The original definition \ref{defMickelssonalgebras} does not provide an efficient way to work with reduction algebras. 
However the reduction algebra $\mathcal{R}^\g_\k$ can be defined differently, 
with the help of the extremal projector, see \cite{AST,AST2,Zh3}. This description is useful in applications.

\vskip .1cm
In \textbf{Sections \ref{EndoVermaMod}} and \textbf{\ref{ExtremalProj}} we explain the existence of the extremal projector with the use 
of  the universal Verma module $\M_{\n_+} := \U(\k) / \U(\k)\n_+$. We prove its uniqueness and give its factorized form. In \textbf{Section \ref{UsesExtreProj}} 
we define the reduction algebra and Zhelobenko automorphisms. The material of \textbf{Sections \ref{EndoVermaMod}} and \textbf{\ref{ExtremalProj}} 
is based on \cite{M,Hom1,Hom2,Zh3,KO1}.

\vskip .1cm
In \textbf{Section \ref{ExampleRedAlge}}, we discuss two examples of reduction algebras. The first one is $\mathcal{R}^{\gl_3}_{\gl_2}$; we work out the details explicitly employing the extremal projector. 
The second example is the reduction algebra $\Diff(n,N)$ of the tensor product $\text{W}_{nN} \otimes \U(\gl_n)$, where $\text{W}_{nN}$ is the Weyl algebra, with respect to the diagonal embedding of $\U(\gl_n)$. Here we choose to use the particular properties of the reduction algebras minimizing the direct calculations with the extremal projector. The algebra $\Diff(n,N)$ and its generalizations are studied in \textbf{Chapters \ref{ChapterRinghdef}-\ref{Chaptergenesevcop}}.

\vskip .1cm
We assume some knowledge on Lie algebras and refer to the books \cite{Hum}, \cite{Se} and \cite{J}.

\subsubsection*{Notation}

\noindent$\bullet$ $\mathbb{Z}$ is the set of integers : $\dots,-2,-1,0,1,2,\dots$ 

\noindent$\bullet$ $\mathZ$ is the set of non-negative integers : $0,1,2,\dots$ 

\noindent$\bullet$ $\N$ is the set of natural numbers : $1,2,\dots$ 

\noindent$\bullet$ $\mathbb{C}$ is the field of complex numbers.

\noindent$\bullet$ $\K$ is an algebraically closed field of characteristic zero.

\noindent$\bullet$ The entries of the product of matrices:  $(AB)^i_j = \sum_a A^i_a B^a_j$.

\noindent$\bullet$ $A^t$ is the transpose of $A$. 

\noindent$\bullet$ $\mathbb{S}_n$ is the symmetric group on $n$ letters. The symbol $s_i$ stands for the transposition $(i,i+1)$. 

\noindent$\bullet$ The symbol $\otimes$, if not otherwise specified, stands for the tensor product over $\mathbb{C}$.

\subsubsection*{Background in Lie algebra theory}\label{ReducLieAlgebra}

\noindent{\bf The universal enveloping algebra of a Lie algebra.}
Let $\g$ be a complex Lie algebra and $\U(\g)$ its universal enveloping algebra. The diagonal map
$ \g \longrightarrow \g \oplus \g$, $a \longmapsto  (a,a)$, 
extends to the embedding of enveloping algebras 
$\U(\g) \longrightarrow \U(\g) \otimes \U(\g) $, also called diagonal.

\begin{theo}\label{PBWtheo} (The Poincar\'e$-$Birkhoff$-$Witt theorem)  
Let $g_1,\dots,g_m$ be a basis of a Lie algebra $\g$. The elements
\begin{equation}\label{basisPBWtheorem}
g_1^{k_1} \dots g_m^{k_m} \quad , \quad (k_1,\dots,k_m) \in \mathZ^m\ ,
\end{equation}
form a basis of $\U(\g)$. 
\end{theo}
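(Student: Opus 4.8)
The plan is to establish the two assertions separately: that the monomials \eqref{basisPBWtheorem} span $\U(\g)$, and that they are linearly independent. The spanning statement is the routine half. Writing $[g_i,g_j]=\sum_k c_{ij}^k g_k$ for the structure constants, the relation $g_i g_j = g_j g_i + [g_i,g_j]$ lets me interchange two adjacent factors in any product of generators at the cost of a term of strictly smaller degree. I would therefore induct first on the total degree of a monomial in the generators, and then, at fixed degree, on the number of pairs of factors standing in the wrong order: interchanging an adjacent out-of-order pair produces an ordered pair (with one fewer inversion) together with a term of strictly smaller degree, so after finitely many steps every product of generators is rewritten as a $\K$-linear combination of the ordered monomials \eqref{basisPBWtheorem}. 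Since the $g_i$ generate $\U(\g)$ as an algebra, these monomials span.

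For the substance of the theorem, linear independence, the plan is to produce a representation of $\g$ on the polynomial algebra $P:=\K[x_1,\dots,x_m]$ in which the ordered monomials act visibly independently. I would define operators $\rho_i\in\operatorname{End}(P)$, $i=1,\dots,m$, by induction on the degree of the monomial acted upon: set $\rho_i(1)=x_i$; for an ordered monomial $x_J=x_{j_1}\cdots x_{j_r}$ with $j_1\le\cdots\le j_r$, put $\rho_i(x_J)=x_i x_J$ when $i\le j_1$, and otherwise put
\begin{equation}
\rho_i(x_J)=x_{j_1}\,\rho_i(x_{J'})+\sum_k c_{i j_1}^k\,\rho_k(x_{J'})\ ,
\end{equation}
where $x_{J'}=x_{j_2}\cdots x_{j_r}$ and $x_{j_1}(\cdots)$ denotes ordinary multiplication in $P$. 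Both terms on the right involve monomials of degree $r-1$, so the recursion is well posed. The aim is then to verify that $i\mapsto\rho_i$ extends to a homomorphism of Lie algebras $\rho\colon\g\to\operatorname{End}(P)$, that is, $\rho_i\rho_j-\rho_j\rho_i=\sum_k c_{ij}^k\rho_k$.

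Granting this, the universal property of $\U(\g)$ promotes $\rho$ to an associative algebra map $\tilde\rho\colon\U(\g)\to\operatorname{End}(P)$. Evaluating $\tilde\rho(g_1^{k_1}\cdots g_m^{k_m})=\rho_1^{k_1}\cdots\rho_m^{k_m}$ on $1\in P$ and working from the right, at every stage the operator applied carries an index no larger than the current smallest variable, so only the rule for $i\le j_1$ is ever invoked and one obtains $\tilde\rho(g_1^{k_1}\cdots g_m^{k_m})(1)=x_1^{k_1}\cdots x_m^{k_m}$. A nontrivial $\K$-linear relation among the monomials \eqref{basisPBWtheorem} would then, upon applying $\tilde\rho(\,\cdot\,)$ to $1$, yield the same relation among the pairwise distinct monomials $x_1^{k_1}\cdots x_m^{k_m}$ in $P$; as the latter are linearly independent, no such relation exists. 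Together with the spanning step this shows the monomials \eqref{basisPBWtheorem} form a basis.

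I expect the one genuinely delicate point to be the verification that $\rho$ respects the bracket. The identity $\rho_i\rho_j-\rho_j\rho_i=\sum_k c_{ij}^k\rho_k$ would be checked by induction on the degree of the monomial $x_J$; the inductive step, in which $i$, $j$ and the leading index $\ell:=j_1$ of $x_J$ are mutually out of order, is where the two ways of pushing indices past one another must agree, and they do so precisely because the structure constants obey the Jacobi relation $\sum_k(c_{ij}^k c_{k\ell}^p+c_{j\ell}^k c_{ki}^p+c_{\ell i}^k c_{kj}^p)=0$. Isolating this triple-index reduction and discharging it cleanly is the crux; the remaining cases are bookkeeping with the recursion above.
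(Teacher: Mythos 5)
The paper never proves Theorem \ref{PBWtheo}: it is stated as standard background, with the reader referred to the textbooks \cite{Hum}, \cite{Se}, \cite{J}. Your proposal must therefore be judged on its own. Its outline is the classical one (spanning by a rewriting induction, independence via a representation of $\g$ on $P=\K[x_1,\dots,x_m]$), and the spanning half is fine; but the recursion defining the operators $\rho_i$ contains a genuine error, not just a verification postponed to the end. In the clause for $i>j_1$ the first term must be $\rho_{j_1}\bigl(\rho_i(x_{J'})\bigr)$, i.e.\ the \emph{action} of $g_{j_1}$, not the ordinary product $x_{j_1}\,\rho_i(x_{J'})$. With multiplication, as you wrote it, the identity $\rho_i\rho_j-\rho_j\rho_i=\sum_k c_{ij}^k\rho_k$ is simply false. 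Concrete counterexample: take the three-dimensional solvable Lie algebra with $[g_2,g_1]=g_1$, $[g_3,g_1]=0$, $[g_3,g_2]=g_1+g_3$ (Jacobi holds). Your recursion gives $\rho_2(x_1)=x_1x_2+x_1$, $\rho_3(x_2)=x_2x_3+x_1+x_3$, and then $\rho_3(x_2^2)=x_2(x_2x_3+x_1+x_3)+\rho_1(x_2)+\rho_3(x_2)=x_2^2x_3+2x_1x_2+2x_2x_3+x_1+x_3$, so that $(\rho_3\rho_2-\rho_2\rho_3)(x_2)=x_1x_2+x_2x_3+x_3$, whereas $\rho_{[g_3,g_2]}(x_2)=(\rho_1+\rho_3)(x_2)=x_1x_2+x_2x_3+x_1+x_3$; the two sides differ by $x_1$. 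Hence no Lie algebra map $\g\to\operatorname{End}(P)$ arises, and the universal-property step, on which the whole independence argument rests, collapses.

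The error is located exactly at the point you advertised as a convenience: ``both terms on the right involve monomials of degree $r-1$, so the recursion is well posed.'' The correct recursion, with $\rho_{j_1}$ in place of multiplication, is \emph{not} well posed for such a cheap reason, and handling this is the real content of the classical proof (Serre \cite{Se}, Jacobson \cite{J}): one proves simultaneously, by induction on degree, that $\rho_i(x_J)=x_ix_J+w$ with $\deg w\le |J|$; then in $\rho_{j_1}\bigl(\rho_i(x_{J'})\bigr)=\rho_{j_1}(x_ix_{J'})+\rho_{j_1}(w)$ the first piece is evaluated by the ordered clause (every index of $x_ix_{J'}$ is $\ge j_1$, so there the action \emph{is} multiplication) and the second piece by the degree induction. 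Action and multiplication agree on the leading term but not on the lower-degree tail $w$ --- precisely where your version diverges from the correct one, and where the Jacobi identity is ultimately needed. If you restore the action in the recursion and add the strengthened induction hypothesis, the rest of your argument (evaluation at $1$ yielding $x_1^{k_1}\cdots x_m^{k_m}$, hence independence of the ordered monomials) goes through as you describe.
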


\begin{defi} (The PBW property) Let $R$ be a ring over a commutative $\mathbb{C}$-ring $U$. The ring $R$ is said to have the PBW property 
with respect to a family of elements $r_1,\dots,r_m \in R$ if the monomials
$$r_1^{k_1} \dots r_m^{k_m} \quad , \quad (k_1,\dots,k_m) \in \mathZ^m\ ,$$
form a $U$-basis of $R$. 
\end{defi}

\noindent{\bf Reductive Lie algebras.}
Let $\k$ be a complex finite-dimensional reductive Lie algebra. The word ``reductive" means that the adjoint action of $\k$ is completely reducible. A reductive Lie algebra is a direct sum of a semisimple Lie algebra and an abelian Lie algebra. We fix a 
triangular decomposition of $\k$,
\begin{equation}\label{CartandecomporeducLiealg}
\k = \n_+ \oplus \hf \oplus \n_- \ .
\end{equation}
Here $\hf$ is the Cartan subalgebra and $\n_+,\n_-$ are two opposite nilpotent Lie subalgebras.
Let
\begin{equation}\label{borelsubalgereduLiealg}
\b_- = \n_- \oplus \hf \qquad \text{and} \qquad \b_+ = \n_+ \oplus \hf \ ,
\end{equation}
be the corresponding  Borel subalgebras of $\k$.

Let $\Delta$ be the set of roots, $\Delta^+$ the subset of positive roots and $\Pi$ the set of simple roots. Let
$$\k_\alpha := \left\{ x \in \k \ : \ [h,x] = \alpha(h)x \ , \ \forall h \in \hf \right\}$$
be the subspace corresponding to the root $\alpha$. We have
$$\n_- = \bigoplus_{\alpha \in \Delta^+} \k_{-\alpha} \qquad \text{and} \qquad \n_+ = \bigoplus_{\alpha \in \Delta^+} \k_{\alpha} \ .$$
We choose a basis $(e_{\alpha},e_{-\alpha})$ in $\k_\alpha\oplus\k_{-\alpha}$ such that 
the elements $(e_{\alpha},h_{\alpha},e_{-\alpha})$, where  $h_{\alpha}:=[e_{\alpha},e_{-\alpha}]$ is a coroot corresponding to $\alpha$, form an $\sl_2$-triple, that is $[h_{\alpha},e_{\pm\alpha}]=\pm 2e_{\pm\alpha}$.

\vskip .1cm
\noindent
We define the Chevalley anti-involution $\epsilon$ of $\k$ by
\begin{equation}\label{formchevalleyantiinv}
\epsilon(e_\alpha) = e_{-\alpha} \ , \ \epsilon(e_{-\alpha}) = e_\alpha \ , \  \alpha \in \Delta \ , \ \text{and}\ \ \epsilon(h)=h\ ,\ h\in\hf.
\end{equation}
A linear order $\prec$ on the set $\Delta^+_\hf$ of positive roots of $\g$ is said to be convex if for any $\alpha, \beta \in \Delta^+_\hf(\g)$, 
$\alpha\prec\beta$, such that $\alpha+\beta \in \Delta^+_\hf$ we have
\begin{equation}\label{equaconvexorder}
\alpha \prec \beta \Longrightarrow 
\alpha \prec \alpha+\beta \prec \beta \ .
\end{equation}
A positive part of the root system of any reductive Lie algebra admits a convex order. 

\vskip .1cm
\noindent{\bf Weyl group.}
Let $\Pi = \{\alpha_1,\dots,\alpha_n\}$ and $(a_{ij})_{1\leq i,j \leq n}$ the Cartan matrix of $\k$.. The Weyl group $W$ of $\k$
is generated by the simple reflections $\sigma_i$ corresponding to the roots $\alpha_i$.
The reflections $\sigma_1,\dots,\sigma_n$ satisfy the following braid relations
\begin{equation}\label{braidrela}
\underset{m_{ij}}{\underbrace{\sigma_i \sigma_j \sigma_i\dots}}  = \underset{m_{ij}}{\underbrace{\sigma_j \sigma_i \sigma_j\dots}}   \quad , \quad i \neq j \ ,
\end{equation}
where $m_{ij} = 2$ if $a_{ij}=0$; $m_{ij} = 3$ if $a_{ji} a_{ij}=1$; $m_{ij} = 4$ if $a_{ji} a_{ij}=2$; $m_{ij} = 6$ if $a_{ji} a_{ij}=3$. 

\vskip .1cm
\noindent
For any $\alpha \in \Delta$, we define a map $T_\alpha$ from $\k$ to $\k$  by
\begin{equation}\label{liftweylgroupgene}
T_\alpha = \text{exp}(\text{ad}_{e_\alpha}) \circ \text{exp}(-\text{ad}_{e_{-\alpha}}) \circ \text{exp}(\text{ad}_{e_\alpha}) \ .
\end{equation}
The restriction of $T_\alpha$ on $\hf$ coincides with the action of $W$ on $\hf$. We denote by the same symbol the extension of the 
automorphism  $T_\alpha$  to $\U(\k)$.

\vskip .1cm
\noindent
Let $T_i := T_{\alpha_i}$, $i \in \{1,\dots,n\}$. The maps $T_i$'s satisfy the same braid relations \eqref{braidrela} as the $\sigma_i$'s.

\section{Origins}\label{stepalgebras}

Let $\g$ and $\k$ be finite dimensional complex Lie algebras. Let $\k \subset \g$ such that the adjoint action of $\k$ in $\g$ is completely reducible. We say that $\k$ is reductive in $\g$. In particular, $\k$ is a reductive Lie algebra and admits a triangular decomposition \eqref{CartandecomporeducLiealg}.
\begin{defi}
For a set $I \subset \U(\g)$, the right normalizer of $I$ is $N(I) = \{ a \in \U(\g) : Ia \subset I \}$.
\end{defi}

\subsection{Motivation}\label{motivstepalgebra}

Let $\Mc$ be a finite dimensional $\g$-module, completely reducible as a $\k$-module.
Then the latter is fully characterized by the space of highest weight vectors $V_{\n_+}(\Mc)$.
We want to restore the $\g$-module $\Mc$ from the space $V_{\n_+}(\Mc):= \left\{ v \in \Mc : \n_+ . v = 0 \right\}$ equipped with some extra structure.
This was done by J. Mickelsson \cite{M} who defined the \emph{step algebra} of $\g$ with respect to $\k$. We use a simplified version of the step algebra 
which we call Mickelsson algebra. It acts on the space $V_{\n_+}(\Mc)$. This section is essentially based on \cite{M}, \cite{Hom1}, \cite{Hom2} and \cite{Zh3}.

\subsection{Definition of Mickelsson algebra}\label{ConstrstepAlgebras}

An element $x \in \U(\g)$ stabilizes $V_{\n_+}(\Mc)$ if $x.V_{\n_+}(\Mc) \subset V_{\n_+}(\Mc)$. Any element $x \in N(\U(\g)\n_+)$ stabilizes $V_{\n_+}(\Mc)$. This motivates the following definition.

\begin{defi}\label{defstepgebras}
The Mickelsson algebra $\mathcal{S}(\g,\k)$  is defined by
$$\mathcal{S}(\g,\k) := \quom{N(\U(\g)\n_+)}{\U(\g)\n_+} \ .$$
\end{defi}
In his paper \cite{M}, Mickelsson considered $\k$ to be semisimple. However, we see from Definition \ref{defstepgebras} that $\mathcal{S}(\g,\k) = \mathcal{S}(\g,\k')$ with $\k = \k' \oplus \mathfrak{a}$, $\mathfrak{a}$ the center of $\k$. 
\begin{prop*} [\cite{Hom2}] 
If $\Mc$ is an irreducible $\g$-module then $\mathcal{S}(\g,\k) v =V_{\n_+}(\Mc)$ for all $v \in V_{\n_+}(\Mc) \setminus \{0\}$.
\end{prop*}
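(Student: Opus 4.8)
The plan is to first make the action of $\S(\g,\k)$ on $V_{\n_+}(\Mc)$ explicit, then to reduce the statement to a surjectivity assertion on $\n_+$-invariants, and finally to establish that surjectivity by a straightening argument. Write $J:=\U(\g)\n_+$. A direct check shows $J\subseteq N(J)$ and that $J$ is a two-sided ideal of $N(J)$, so $\S(\g,\k)=N(J)/J$ is an algebra; moreover $\hf\subseteq N(J)$, hence $\U(\hf)$ maps into $\S(\g,\k)$. Since $\n_+$ annihilates $V_{\n_+}(\Mc)$, the ideal $J$ acts by zero on $V_{\n_+}(\Mc)$, while $N(J)$ preserves $V_{\n_+}(\Mc)$ (if $\n_+a\subseteq J$ and $\n_+w=0$ then $\n_+ av=0$); thus the action of $\S(\g,\k)$ on $V_{\n_+}(\Mc)$ is well defined. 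The containment $\S(\g,\k)v\subseteq V_{\n_+}(\Mc)$ being clear, only the reverse inclusion needs proof.

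I would then reformulate the problem. For $a\in\U(\g)$ one has $a\in N(J)$ if and only if $\n_+a\subseteq J$, i.e. the class $\bar a$ is annihilated by the left action of $\n_+$ on $Q:=\U(\g)/J$; hence $\S(\g,\k)\cong Q^{\n_+}$. Because $\n_+v=0$, the map $u\mapsto uv$ factors through a surjection of left $\U(\g)$-modules $\bar\phi_v\colon Q\to\Mc$, and it intertwines the $\n_+$-actions, so it carries $Q^{\n_+}$ into $\Mc^{\n_+}=V_{\n_+}(\Mc)$ with image exactly $\S(\g,\k)v$. Thus the statement is equivalent to the surjectivity of $\bar\phi_v$ on $\n_+$-invariants. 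Two harmless reductions simplify this. First, using $\U(\hf)\subseteq\S(\g,\k)$ and the semisimple action of $\hf$, a weight-separating polynomial in $\hf$ extracts any single weight component of $v$, so $\S(\g,\k)v$ contains $\S(\g,\k)v_\mu$ for each nonzero $v_\mu$; it therefore suffices to treat a nonzero weight vector $v$. Second, since $V_{\n_+}(\Mc)$ is spanned by weight vectors it suffices to reach an arbitrary weight vector $w$; writing $w=uv$ and keeping only the $\hf$-homogeneous component of $u$ of the correct weight, I may assume $u$ is weight homogeneous, and after projecting $\U(\g)=\U(\g^-)\oplus J$ onto the first summand, where $\g=\g^-\oplus\n_+$, I may assume $u\in\U(\g^-)$.

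The crux is to modify $u$ into an element of $N(J)$ without changing $uv=w$. For $u\in\U(\g^-)$ the condition $u\in N(J)$ reduces, for each simple root $\alpha$, to the vanishing of the $\U(\g^-)$-component $\delta_\alpha(u)$ of $e_\alpha u$; and since $e_\alpha w=0$ one already knows $\delta_\alpha(u)v=0$. The defect $\delta_\alpha(u)$ has strictly higher $\hf$-weight than $u$, so I would remove it by downward induction on the height of the weight: using the $\sl_2$-triple attached to $\alpha$ and the complete reducibility of $\Mc$ over $\k$ (which identifies $V_{\n_+}(\Mc)$ with the highest weight lines of the $\k$-isotypic components), each defect can be cancelled by subtracting a correction built from $e_{-\alpha}$-strings that annihilates $v$ and strictly lowers the total defect. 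Iterating over the simple roots and over decreasing weights terminates and produces $a\in N(J)\cap\U(\g^-)$ with $av=w$, proving $w\in\S(\g,\k)v$ and hence $V_{\n_+}(\Mc)\subseteq\S(\g,\k)v$. This straightening is the main obstacle: it is precisely the hand construction of the projection onto highest weight vectors, the combinatorial heart later packaged into the extremal projector $P$, for which the argument collapses to $w=Pw=P(uv)=(Pu)v$ with $Pu$ representing an element of $\S(\g,\k)$ acting on the finite-dimensional module $\Mc$.
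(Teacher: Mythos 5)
Your reformulation is correct, and it is the right way to start: $\mathcal{S}(\g,\k)\cong Q^{\n_+}$ with $Q=\U(\g)/\U(\g)\n_+$, the claim becomes surjectivity of $Q^{\n_+}\to \Mc^{\n_+}$, and your reductions to weight vectors and to $u\in\U(\g^-)$ are all sound. (For calibration: the paper itself states this proposition without proof, citing \cite{Hom2}, so your attempt is measured against the argument it gestures at.) The problem is that the entire content of the theorem sits inside the one step you leave as a plan, and the plan as stated hides the two points that make the theorem true. First, a correction with constant coefficients cannot cancel a defect \emph{as an element of} $\U(\g)$: if $\delta=\delta_\alpha(u)$ is the $\U(\g^-)$-component of $e_\alpha u$, then the $\alpha$-defect of $e_{-\alpha}\delta$ contains the $\U(\g^-)$-component of $h_\alpha\delta$, namely $\delta h_\alpha+(\mathrm{wt}\,\delta)(h_\alpha)\,\delta$, which involves $\delta h_\alpha$ and is not a multiple of $\delta$. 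Exact cancellation forces corrections of the form $e_{-\alpha}^k\,\mathrm{ad}_{e_\alpha}^k(u)\,q_k$ with polynomials $q_k\in\U(\hf)$ chosen so that a telescoping identity holds (this is the denominator-cleared extremal projector), and that identity has to be written down and verified, not invoked. Second, the element $\hat u\in N(\U(\g)\n_+)$ this construction produces satisfies $\hat u v=C\,w$, where $C$ is a product of integers of the form $(\lambda_w+\rho)(h_\alpha)+l$, $\alpha\in\Delta^+$, $l\geq 1$; you may conclude $w\in\mathcal{S}(\g,\k)v$ only because $C\neq 0$, i.e.\ because the weight $\lambda_w$ of the $\k$-highest weight vector $w$ is dominant. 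This non-vanishing is exactly where finite dimensionality and complete $\k$-reducibility of $\Mc$ do their work; your sketch invokes complete reducibility but never extracts this point, and without it the construction returns $\hat u v=0$ and proves nothing.

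Third, in rank greater than one the phrase ``iterating over the simple roots'' is not innocent: cancelling the $\alpha$-defect creates new $\beta$-defects, there is no evident monotone quantity, and the termination of such an iteration is precisely the content of the factorization of the extremal projector along a convex order of the positive roots (Proposition \ref{factoextrprojroots}, due to Asherova--Smirnov--Tolstoy); one sentence does not replace it. Your bookkeeping is also inverted: defects sit at weight $\mathrm{wt}(u)+\alpha$, so they move \emph{up}, and what terminates the rank-one recursion is local nilpotency of $\mathrm{ad}\,\n_+$ on each filtration component, not ``decreasing weights''. Finally, the closing remark is false as stated: $Pu$ carries denominators from $\Ub(\hf)$, hence represents an element of the localized reduction algebra $\mathcal{Z}(\g,\k)$ (or $\mathcal{R}^\g_\k$), not of $\mathcal{S}(\g,\k)$, and $P$ is not even defined on all of a finite-dimensional module, since its denominators vanish on non-dominant weight spaces. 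The proposition is precisely about the unlocalized algebra acting on a module where denominators may act by zero, so writing $w=Pw=(Pu)v$ assumes what has to be proved.
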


\noindent
We are now interested in the generating sets of elements of $\mathcal{S}(\g,\k)$.

\vskip .1cm
We can see that the subalgebra $\U(\hf)$ of $\U(\g)$ is embedded in the quotient $\mathcal{S}(\g,\k)$. We will use the same notation $\U(\hf)$ for its image in $\mathcal{S}(\g,\k)$. 

\vskip .1cm
Let $\mathfrak{p}$ be an ad$_{\k}$-invariant complement to $\k$ in $\g$. Let $g_1,\dots,g_B$ be a ``weight" basis of $\mathfrak{p}$, that is, $[h,g_i]=\beta_i(h)g_i$ with $(h,\beta_i) \in \hf \times \hf^\star$, $i=1,\dots,B$. 
For any $i=1,\dots,B$ there exist elements $s_i\in N(\U(\g)\n_+)$ such that $s_i \equiv f_i g_i$ $\text{mod}\ \n_-\U(\g)$ with $f_i \in \U(\hf)$. The image of such $s_i$ in 
$\mathcal{S}(\g,\k)$, denoted by the same letter $s_i$, is called an \emph{elementary step}. Elementary steps generate the step algebra, which is a subalgebra 
of $\mathcal{S}(\g,\k)$, and verify the following property.
\begin{prop}\label{almostPBWpropstepalgebras}
For any element $x \in \mathcal{S}(\g,\k)$, there exists $f \in \U(\hf)$ such that $f x$ is a linear combination over $\U(\hf)$ of the monomials
$$s_1^{k_1} \dots s_B^{k_B} \quad , \quad (k_1,\dots,k_B) \in \mathZ^B \ .$$
\end{prop}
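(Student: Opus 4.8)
The plan is to realise $\mathcal{S}(\g,\k)$ as a space of highest weight vectors and then to induct on the degree in $\mathfrak{p}$, reducing everything to a dimension count on an associated graded module.

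First I would reformulate the algebra. Set $\Mb := \U(\g)/\U(\g)\n_+$, viewed as a left $\U(\g)$-module. For $a \in \U(\g)$ the normalizer condition $\U(\g)\n_+\,a \subseteq \U(\g)\n_+$ is equivalent to $\n_+ a \subseteq \U(\g)\n_+$, i.e. to $\n_+\cdot\bar a = 0$ for the image $\bar a\in\Mb$; hence $\mathcal{S}(\g,\k) = \Mb^{\n_+}$, the space of $\n_+$-highest weight vectors of $\Mb$. A Poincar\'e--Birkhoff--Witt ordering with $\n_+$ last identifies $\Mb$, as a vector space, with $\U(\n_-)\otimes\U(\hf)\otimes\U(\mathfrak{p})$. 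I would then filter $\Mb$ by the degree in $\mathfrak{p}$. Because $[\n_+,\mathfrak{p}]\subseteq\mathfrak{p}$ this filtration is $\n_+$-stable, and the associated graded is $\operatorname{gr}\Mb \cong \M_{\n_+}\otimes\mathrm{S}(\mathfrak{p})$ with the diagonal $\k$-action, where $\M_{\n_+}=\U(\k)/\U(\k)\n_+$ is the universal Verma module and $\mathrm{S}(\mathfrak{p})$ carries the adjoint action. The congruence $s_i\equiv f_i g_i \bmod \n_-\U(\g)$ shows that $\bar s_i$ has $\mathfrak{p}$-degree $1$ with leading symbol $f_i\otimes g_i$.

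The heart of the proof is the following graded statement, over $R:=$ the field of fractions of $\U(\hf)$. For every $d$ the highest weight subspace $(\operatorname{gr}_d\Mb)^{\n_+}=(\M_{\n_+}\otimes\mathrm{S}^d(\mathfrak{p}))^{\n_+}$ becomes, after extension of scalars to $R$, free of rank $\dim\mathrm{S}^d(\mathfrak{p})$: this is the classical count of singular vectors in the tensor product of a generic Verma module with the finite-dimensional module $\mathrm{S}^d(\mathfrak{p})$, one per weight vector. Writing $s^{\mathbf{k}}:=s_1^{k_1}\cdots s_B^{k_B}$ and $|\mathbf{k}|:=\sum_i k_i$, the symbols $\sigma(s^{\mathbf{k}})$ with $|\mathbf{k}|=d$ lie in this subspace, and their top $\mathfrak{p}$-parts are nonzero $\U(\hf)$-multiples of the pairwise distinct monomials $g_1^{k_1}\cdots g_B^{k_B}\in\mathrm{S}^d(\mathfrak{p})$; hence they are $R$-linearly independent and thus form an $R$-basis of $(\operatorname{gr}_d\Mb)^{\n_+}\otimes_{\U(\hf)} R$.

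Granting this, the proposition follows by induction on the $\mathfrak{p}$-degree $d$ of $x\in\mathcal{S}(\g,\k)$, the base case $d=0$ being the equality $\M_{\n_+}^{\n_+}=\U(\hf)$. For the step, the symbol $\sigma(x)$ lies in $(\operatorname{gr}_d\Mb)^{\n_+}$ (a symbol of a highest weight vector is again one), so $\sigma(x)=\sum_{|\mathbf{k}|=d}c_{\mathbf{k}}\,\sigma(s^{\mathbf{k}})$ with $c_{\mathbf{k}}\in R$; clearing denominators by a single $f'\in\U(\hf)$ gives $\tilde c_{\mathbf{k}}\in\U(\hf)$ with $f'\sigma(x)=\sum_{|\mathbf{k}|=d}\tilde c_{\mathbf{k}}\,\sigma(s^{\mathbf{k}})$, so that $f'x-\sum_{|\mathbf{k}|=d}\tilde c_{\mathbf{k}}\,s^{\mathbf{k}}$ has $\mathfrak{p}$-degree $<d$. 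Applying the induction hypothesis to it produces $f''\in\U(\hf)$ making $f''(f'x-\sum\tilde c_{\mathbf{k}}s^{\mathbf{k}})$ an ordered $\U(\hf)$-combination of the $s^{\mathbf{j}}$; since $\U(\hf)$ is commutative, $f:=f''f'$ then works. The main obstacle is the graded count: showing that the leading symbols of the elementary steps exhaust the highest weight vectors of $\operatorname{gr}\Mb$ over $R$. This is precisely where the structure of the universal Verma module and the passage to the field of fractions enter, and it is also what forces the multiplication by a nonconstant $f\in\U(\hf)$ rather than a bare identity $x=\sum f_{\mathbf{k}}s^{\mathbf{k}}$.
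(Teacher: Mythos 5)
The paper never proves this proposition: it is quoted as a known result, with the references \cite{M}, \cite{Hom1}, \cite{Hom2}, \cite{Zh3} given at the head of the section, so your argument has to be judged on its own merits. Its skeleton is sound and is indeed the standard one: identifying $\mathcal{S}(\g,\k)$ with the highest weight vectors of $\Mb=\U(\g)/\U(\g)\n_+$, filtering by $\mathfrak{p}$-degree, identifying $\operatorname{gr}\Mb\cong\M_{\n_+}\otimes\mathrm{S}(\mathfrak{p})$, and counting highest weight vectors of $\M_{\n_+}\otimes\mathrm{S}^d(\mathfrak{p})$ generically over the field of fractions $R$ are all correct, as is the d\'evissage by induction on the $\mathfrak{p}$-degree.

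The gap is in the step you dispose of in one clause: the claim that the ``top $\mathfrak{p}$-part'' of $\sigma(s^{\mathbf{k}})$, i.e.\ its image under the projection $\M_{\n_+}\otimes\mathrm{S}^d(\mathfrak{p})\to\U(\hf)\otimes\mathrm{S}^d(\mathfrak{p})$ killing $\n_-$, is a nonzero $\U(\hf)$-multiple of the single monomial $g^{\mathbf{k}}$. This is false, because $\n_-\U(\g)$ is only a \emph{left} ideal: in a product $s_is_j=(f_ig_i+n^{(i)})(f_jg_j+n^{(j)})$ with $n^{(i)},n^{(j)}\in\n_-\U(\g)$, the cross term $f_ig_i\,n^{(j)}$ is not in $\n_-\U(\g)$, since commuting $g_i$ past the $\n_-$-factors of $n^{(j)}$ produces brackets $[g_i,\n_-]\subseteq\mathfrak{p}$, and these contribute top-degree terms proportional to \emph{other} monomials $g^{\mathbf{m}}$ of the same weight. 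The paper's own computations exhibit exactly this: since a highest weight vector is fixed by the extremal projector, the double coset image of a product in $\mathcal{S}(\g,\k)$ is the $\dia$-product of the images, and in $\mathcal{R}^{\gl_3}_{\gl_2}$ equation \eqref{reducrela1} gives $x^1\dia\de_1=\,:\de_1x^1:-\frac{1}{h+1}:\de_2x^2:+\frac{h}{h+1}(h_{13}+1)$, whose degree-two part involves the two distinct weight-zero monomials $\de_1x^1$ and $\de_2x^2$; likewise \eqref{relshdiffcompdx2}, where $\Z^i\der_i$ spreads over all the $\der_j\Z^j$. So ``pairwise distinct leading monomials'' is simply unavailable, and with it the linear independence, hence the basis property, hence the induction step.

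What is true, and what the cited proofs of Mickelsson, van den Hombergh and Zhelobenko actually establish, is a \emph{triangularity} statement: the top part of $\sigma(s^{\mathbf{k}})$ is a nonzero multiple of $g^{\mathbf{k}}$ plus monomials whose weight multiset is obtained from that of $g^{\mathbf{k}}$ by lowering earlier factors and raising later factors by positive roots, and one must construct a well-founded order on the monomials of a fixed weight and degree under which these corrections are strictly larger — a nontrivial point, especially for iterated products, since in the symmetric algebra the positions orienting the moves are forgotten and naive weight functionals are not monotone under them. This combinatorial core is the actual content of the theorem and is entirely missing from your argument. A smaller related gap: you assert that $s_i$ has $\mathfrak{p}$-degree one with symbol $f_i\otimes g_i$, but this does not follow from the bare congruence $s_i\equiv f_ig_i\bmod\n_-\U(\g)$ (the correction term could have higher $\mathfrak{p}$-degree); it holds for the steps built from the extremal projector, and must either be proved or built into the definition. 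In short: you have correctly reduced the proposition to the generic count plus an independence claim, but the justification offered for the independence claim is wrong, and repairing it requires precisely the argument the proposition encapsulates.
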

This property can be seen as a ``weak PBW theorem" in $\mathcal{S}(\g,\k)$. Therefore, one can ask if there exists a localization of $\mathcal{S}(\g,\k)$ satisfying the PBW property. The first intention is to pass to the field of fractions $\U'(\hf)$ of $\U(\hf)$.
Define the corresponding localizations of $\U(\k)$ and $\U(\g)$,
$$\U'(\k) = \U(\k) \otimes_{\U(\hf)} \U'(\hf) \quad \text{and} \quad \U'(\g) = \U(\g) \otimes_{\U(\hf)} \U'(\hf)\ .$$

\begin{defi}\label{defMickelssonalgebras}
The reduction algebra $\mathcal{Z}(\g,\k)$ is the following extension of the Mickelsson algebra $\mathcal{S}(\g,\k)$:
$$\mathcal{Z}(\g,\k) = \quom{N(\U'(\g)\n_+)}{\U'(\g)\n_+} \ .$$
\end{defi}

The reduction algebra $\mathcal{Z}(\g,\k)$ is actually a localization of its associated step algebra and of the algebra $\mathcal{S}(\g,\k)$.

\begin{prop}\label{Locasteptensprod} We have
\begin{equation}\label{factoMickalge}
\mathcal{Z}(\g,\k) = \S(\g,\k) \otimes_{\U(\hf)} \U'(\hf) \ .
\end{equation}
\end{prop}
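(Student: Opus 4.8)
The plan is to realize every object in sight as the localization of its ``integral'' version at the multiplicative set $S:=\U(\hf)\setminus\{0\}$, and then to invoke exactness of localization. Concretely, I would first observe that $\S(\g,\k)$ sits in the short exact sequence of right $\U(\hf)$-modules
\[
0\longrightarrow\U(\g)\n_+\longrightarrow N(\U(\g)\n_+)\longrightarrow\S(\g,\k)\longrightarrow 0 ,
\]
where the inclusion $\U(\g)\n_+\subseteq N(\U(\g)\n_+)$ holds because $\U(\g)\n_+$ is a left ideal. Since $\U'(\hf)$ is the field of fractions of the commutative domain $\U(\hf)$, the functor $-\otimes_{\U(\hf)}\U'(\hf)=S^{-1}(-)$ is exact. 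Thus, once I know that tensoring with $\U'(\hf)$ turns the two left-hand terms into $\U'(\g)\n_+$ and $N(\U'(\g)\n_+)$ respectively, applying the functor to the sequence yields $\mathcal{Z}(\g,\k)=N(\U'(\g)\n_+)/\U'(\g)\n_+=\S(\g,\k)\otimes_{\U(\hf)}\U'(\hf)$, which is the assertion. So the whole proof reduces to the two identifications $\U'(\g)\n_+=S^{-1}\U(\g)\n_+$ and $N(\U'(\g)\n_+)=S^{-1}N(\U(\g)\n_+)$.

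For this I first record that $S$ is a two-sided Ore set in $\U(\g)$. This follows from the $\hf$-weight grading of $\U(\g)$ under the adjoint action: a weight vector $x$ of weight $\beta$ and $u\in\U(\hf)$ satisfy $xu=u^{[\beta]}x$, where $u^{[\beta]}$ denotes $u$ with its argument shifted by $\beta$, and clearing the finitely many weights occurring in a given element produces common left and right denominators in $S$. Hence $\U'(\g)=S^{-1}\U(\g)=\U(\g)S^{-1}$ and localization is flat. The same shift relations give $\n_+\U(\hf)\subseteq\U(\hf)\n_+$ and $\n_+ S^{-1}\subseteq S^{-1}\n_+$, which make $\U(\g)\n_+$ a right $\U(\hf)$-submodule of $\U(\g)$ and immediately yield $\U'(\g)\n_+=S^{-1}\U(\g)\n_+=\U(\g)\n_+S^{-1}$, the first identification. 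Because $N(\U(\g)\n_+)$ is a subalgebra containing $\U(\hf)$, it too is a right $\U(\hf)$-module, so all three terms of the sequence are genuinely right $\U(\hf)$-modules and the exactness step above is legitimate.

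The main obstacle is the second identification $N(\U'(\g)\n_+)=S^{-1}N(\U(\g)\n_+)$. The inclusion $\supseteq$ is routine: if $a\in N(\U(\g)\n_+)$ and $s\in S$, then $\n_+\,as^{-1}\subseteq\U(\g)\n_+\,s^{-1}\subseteq S^{-1}\U(\g)\n_+=\U'(\g)\n_+$ by $\n_+ s^{-1}\subseteq S^{-1}\n_+$, so $as^{-1}$ normalizes $\U'(\g)\n_+$. The reverse inclusion is where the real content lies, and it rests on a \emph{saturation} property: the quotient $\U(\g)/\U(\g)\n_+$ is torsion-free as a left $\U(\hf)$-module, equivalently $\U(\g)\cap\U'(\g)\n_+=\U(\g)\n_+$. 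I would prove torsion-freeness by a PBW argument: ordering a homogeneous basis so that $\hf$ comes first and $\n_+$ comes last, $\U(\g)$ is free as a left $\U(\hf)$-module on the PBW monomials in the remaining generators, and, using $[\mathfrak{p},\n_+]\subseteq\mathfrak{p}$ and $[\n_+,\n_+]\subseteq\n_+$, one checks that $\U(\g)\n_+$ is exactly the free submodule spanned by the monomials with nontrivial $\n_+$-part; the complementary quotient is then free, in particular torsion-free. Granting this, take $b\in N(\U'(\g)\n_+)$ and write $b=at^{-1}$ with $a\in\U(\g)$ and $t\in S$. From $\n_+ b\subseteq\U'(\g)\n_+$ and the right $\U'(\hf)$-stability of $\U'(\g)\n_+$ one gets $\n_+ a\subseteq\U'(\g)\n_+$; since $\n_+ a\subseteq\U(\g)$, saturation forces $\n_+ a\subseteq\U(\g)\n_+$, hence $\U(\g)\n_+\,a\subseteq\U(\g)\n_+$, i.e. $a\in N(\U(\g)\n_+)$, whence $b=at^{-1}\in S^{-1}N(\U(\g)\n_+)$. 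Combining the two identifications with the exactness step of the first paragraph then completes the proof.
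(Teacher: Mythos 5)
Your proof is correct, but there is nothing to compare it with line by line: the paper states this proposition without proof, as part of the background material it imports from the literature (Mickelsson, van den Hombergh, Zhelobenko), so your argument fills a gap rather than retraces one. Your route — localize the exact sequence $0\to\U(\g)\n_+\to N(\U(\g)\n_+)\to\S(\g,\k)\to 0$ of right $\U(\hf)$-modules and reduce everything to the identifications $\U'(\g)\n_+=S^{-1}(\U(\g)\n_+)$ and $N(\U'(\g)\n_+)=S^{-1}N(\U(\g)\n_+)$ — is sound, and you correctly isolate where the real content sits: the saturation property $\U(\g)\cap\U'(\g)\n_+=\U(\g)\n_+$, which follows from your PBW argument, since $\U(\g)\n_+$ is indeed the free $\U(\hf)$-span of the PBW monomials with nontrivial $\n_+$-part (right multiplication by $e\in\n_+$ lands in $\U(\n_+)\n_+$, the augmentation ideal of $\U(\n_+)$, for which only $[\n_+,\n_+]\subseteq\n_+$ is needed). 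The Ore property of $S=\U(\hf)\setminus\{0\}$ via the $\hf$-weight decomposition of $\U(\g)$ is also fine; note that this decomposition exists because the adjoint action of $\k$ on $\g$ is assumed completely reducible. Two points deserve a sentence more in a final write-up: you use flatness a second time, implicitly, to identify the abstract tensor product $N(\U(\g)\n_+)\otimes_{\U(\hf)}\U'(\hf)$ with the subset $N(\U(\g)\n_+)\,S^{-1}$ of $\U'(\g)$; and the denominator bookkeeping should be kept consistent (your modules are right $\U(\hf)$-modules, so localized elements are fractions $as^{-1}$, while the saturation step uses $s^{-1}y$ — both are legitimate precisely because $S$ is a two-sided Ore set with common denominators, but say so). Since all identifications take place inside $\U'(\g)$, the resulting equality is one of algebras, as the statement requires.
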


By Proposition \ref{almostPBWpropstepalgebras}, we can see that the localisation \eqref{factoMickalge} implies that any reduction algebra $\mathcal{Z}(\g,\k)$ satisfies the PBW property (see \cite{Zh3} for details). 

\vskip .1cm
One can ask if there exist other localizations of $\mathcal{S}(\g,\k)$ which also have the PBW property. The answer is yes and there is a minimal localization, 
by a multiplicative subset of $\U(\hf)$ generated by the elements $(h_\alpha + k)$ with $k \in \mathbb{Z}$ and $\alpha \in \Delta$. The localization of the algebra $\mathcal{S}(\g,\k)$ by this multiplicative set is denoted $\mathcal{R}^\g_\k$. 
In the sequel we mainly work with this localization.

\section{Endomorphism spaces}\label{EndoVermaMod}
\subsection{Weyl algebra $\text{W}_n$}\label{WeylAlgebraSec}
Let $\text{W}_n$ be the Weyl algebra of rank $n$, the algebra with generators $X^j,D_j$, $j=1,\dots,n$ and the defining relations
$$X^iX^j=X^jX^i\ ,\ D_iD_j=D_jD_i\ ,\ D_iX^j=\delta_i^j+X^jD_i\ ,\ \ i,j=1,\dots,n\ .$$
The Weyl algebra satisfies the PBW property. 
The algebra $\W_n$ is the algebra of polynomial differential operators on the ring $\K[X^1,\dots,X^n]$ and we have the natural isomorphism
$M:=\W_n/I\simeq\K[X^1,\dots,X^n]$ of $W_n$-modules. Here $I$ is the left ideal generated by the elements  $D_1,\dots,D_n$. The algebra $\text{End}(M)$ can be described as $\F_n = \K[X^1,\dots,X^n][[D_1,\dots,D_n]]$, the algebra of formal power series of the form
\begin{equation}\label{formendWeyl}
\sum_{k \in \mathZ^n} f_k D_1^{k_1} \dots D_n^{k_n}\ ,\ f_k \in \K[X^1,\dots,X^n]\ \ \text{for all} \ k=(k_1,\dots,k_n) \in \mathZ^n \ .
\end{equation}
Set $\text{deg}\left( (X^1)^{k_1} \dots (X^n)^{k_n}\right) 
:=k_1+\dots+k_n$. Let $M=\oplus_{d\in\mathZ} M_d$ be the corresponding decomposition of $M$ into the direct sum of homogeneous components. 
An endomorphism $\psi \in \text{End}(M)$ is said to be bounded if there exists $m \in \mathZ$ such that $\psi(p)\in \oplus_{j=d-m}^{d+m}M_j$
for all $p \in M_d$, $d\in\mathZ$. Let $\tilde{\F}_n$ denote the subalgebra of $\F_n$ consisting of all bounded endomorphisms.
For $d \in \mathbb{Z}$, let $\Ft_{n,d}$ be the subspace of $\F_n$ of elements of the form \eqref{formendWeyl} such that $\text{deg}(f_k) = d + \left(\sum_i k_i \right)$.
Then  
\begin{equation}\label{decompoF11}
\Ft_n = \bigoplus_{d \in \mathbb{Z}} \Ft_{n,d} \ .
\end{equation}
\noindent{\bf Vacuum projector.}
For any $i \in \{1,\dots,n\}$, let 
$$P_i = \sum_{k \in \mathZ} \frac{(-1)^k}{k!} (X^i)^k (D_i)^k \in \Ft_{n,0}\ .$$
We have $P_i X^i = D_i P_i = 0$. Therefore $P X^i = D_i P = 0$, $i=1,\dots,n$, for the product $P = \prod_{i=1}^n P_i$. The endomorphism $P$ projects in $\K[X^1,\dots,X^n]$ on $\K$ along the ideal generated by $X^1,\dots,X^n$. The element $P$ is called the vacuum projector of the Weyl algebra $\text{W}_n$ (see 14.9 of \cite{Zh2}).

\subsection{Endomorphism space of universal Verma module}\label{EndRedLieAlg}
In this section we describe a similar construction for a reductive Lie algebra $\k$. 

Let $\Ubh$ be the localization of $\U(\hf)$ with respect to the multiplicative set $S$ generated by the elements of the form
\begin{equation}\label{formeleloca}
h_\alpha + k \quad , \quad \alpha \in \Delta^+ \ , \ k \in \mathbb{Z}\ .
\end{equation}
Let $\Ub(\k)$ be the corresponding localization of $\U(\k)$ :
\begin{equation}\label{factolocalk11}
\Ub(\k) = \U(\k) \otimes_{\U(\hf)} \Ubh \ .
\end{equation}
The localization $\Mb_{\n_+}$ of the universal Verma module $\M_{\n_+}$ is defined by
$$\Mb_{\n_+} := \quom{\Ub(\k)}{\Ub(\k)\n_+} \ .$$
The PBW theorem implies the folowing isomorphism of $\Ub(\k)$-modules:
\begin{equation}\label{isoVermamod}
\Mb_{\n_+} \simeq \Ub(\b_-) := \U(\b_-) \otimes_{\U(\hf)} \Ubh \ .
\end{equation}
The ring $\Ub(\k)$ is naturally a $\Ubh$-bimodule. Let  $e_+(r)= e_{\alpha_1}^{r_1} \dots e_{\alpha_L}^{r_L}$ and $e_-(r) = \epsilon(e_+(r))$, $r=(r_1,\dots,r_L) \in \mathbb{Z}_{\geq 0}^L$. Here $\Delta^+ = \{\alpha_1,\dots,\alpha_L\}$ is the set of positive roots. The PBW theorem 
implies that the elements
\begin{equation}\label{defgenek12}
e_-(r) e_+(s)\ ,\  r,s \in \mathbb{Z}_{\geq 0}^L
\end{equation}
form a basis of $\Ub(\k)$ considered as a one-sided $\Ubh$-module. 

The analogue $\Fn$ of the algebra $\F_n$ is formed by elements $\sum_{s \in \mathZ^L} f_s e_+(s)$ with $f_s\in \Ub(\b_-)$.
For any $d \in \mathbb{Z}$, let $\Ft_{\n_+,d}$ be the subspace of $\Fn$ of elements of the form 
\begin{equation}\label{lineardecompototalweightelem}
\sum f_{r,s} e_-(r)e_+(s) \ ,
\end{equation}
with $\sum_i (s_i-r_i) = d$. The subalgebra $\Ftn \subset \Fn$, analogous to $\Ft_n$, is
$$\Ftn = \bigoplus_{d \in \mathbb{Z}} \Ft_{\n_+,d} \ .$$
The algebra $\Ftn$ is called the Taylor extension of $\Ub(\k)$. Let $\Ftn^0 \subset \Ftn$ be the subalgebra of elements of the form \eqref{lineardecompototalweightelem} with $r=s$. Let $\E_{\n_+}^0 \subset \text{End}(\Mb_{\n_+})$ be the subalgebra of endomorphisms $\psi$ such that $\psi(ha) = h \psi(a)$, $(h,a) \in \hf \times \Mb_{\n_+}$. Similarly to the proof of Theorem 1 of \cite{Zh3}, one can prove the isomorphism of algebras
\begin{equation}\label{isomFtn0E0}
\Ftn^0 \simeq \E_{\n_+}^0 \ .
\end{equation}
Theorem 1 of \cite{Zh3} is proved 
over the field of fractions of $\U(\hf)$. Going through the proof of Theorem 1 of \cite{Zh3}, one can see that it is not necessary. The elements of $\U(\hf)$ which have to be inverted (to prove the theorem) are the coefficients $\mathbf{s}(e_-(r),e_-(s))$ for $r,s \in \mathZ^ n$ where $\mathbf{s}$ is the Shapovalov form fo $\k$. We know from part 2 of \cite{Sha} that these coefficients are products of elements of the form \eqref{formeleloca}.

\section{Extremal projector}\label{ExtremalProj}
The extremal projectors of reductive Lie algebras are widely used in the representation theory of Lie algebras, see \cite{T} for a survey. In this section we give the definition of the extremal projector, verify its existence, uniqueness and write down a certain factorization.
\subsection{Definition}
\begin{theo}\label{theoextremalprojector}
There exists a unique element $P \in \Ftn$, called the extremal projector of $\k$, which satisfies
$$P^2 = P \quad , \quad \epsilon(P) = P\ ,$$
where $\epsilon$ is the Chevalley anti-involution \eqref{formchevalleyantiinv}, and
\begin{equation}\label{extrprojdefequa1}
P \ \n_- = \n_+ \, P = 0\ ,
\end{equation}
\begin{equation}\label{extrprojdefequa2}
P \, \equiv \, 1 \ \text{mod} \ \Ftn \n_+ \quad \text{and} \quad P \, \equiv \, 1 \ \text{mod} \ \n_- \Ftn \ .
\end{equation}
\end{theo}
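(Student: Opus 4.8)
The plan is to regard the annihilation conditions \eqref{extrprojdefequa1} together with the normalization \eqref{extrprojdefequa2} as the real defining data: I would first show that these alone pin $P$ down uniquely, and only afterwards produce one such element, reading off $P^2=P$ and $\epsilon(P)=P$ as automatic consequences rather than as things to be arranged by hand.

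\emph{Uniqueness.} Suppose $P,P'\in\Ftn$ both satisfy \eqref{extrprojdefequa1} and \eqref{extrprojdefequa2}. From $P\equiv 1\bmod\Ftn\n_+$ write $P-1=\sum_i a_ix_i$ with $x_i\in\n_+$ and $a_i\in\Ftn$; then $(P-1)P'=\sum_i a_i(x_iP')=0$ because $\n_+P'=0$, so $PP'=P'$. Symmetrically, from $P'\equiv 1\bmod\n_-\Ftn$ write $P'-1=\sum_j y_jb_j$ with $y_j\in\n_-$ and $b_j\in\Ftn$; then $P(P'-1)=\sum_j(Py_j)b_j=0$ because $P\n_-=0$, so $PP'=P$. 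Hence $P=PP'=P'$. This argument is carried out entirely inside $\Ftn$ and uses neither idempotency nor $\epsilon$-invariance.

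\emph{Idempotency and $\epsilon$-invariance.} Granting existence of one $P$ obeying \eqref{extrprojdefequa1}--\eqref{extrprojdefequa2}, apply the uniqueness computation with $P'=P$: the identity $PP'=P'$ becomes $P^2=P$. For $\epsilon$-invariance, recall that $\epsilon$ is an anti-involution fixing $\hf$ and interchanging $\n_+$ with $\n_-$ \eqref{formchevalleyantiinv}; it extends to $\Ftn$, interchanging $\Ftn\n_+$ with $\n_-\Ftn$ and sending $\Ft_{\n_+,d}$ to $\Ft_{\n_+,-d}$. Applying $\epsilon$ to \eqref{extrprojdefequa1}--\eqref{extrprojdefequa2} shows that $\epsilon(P)$ again satisfies every one of these conditions, so uniqueness forces $\epsilon(P)=P$.

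\emph{Existence.} It remains to build one $P$. Put $v_0:=\bar1\in\Mb_{\n_+}$. By the PBW isomorphism \eqref{isoVermamod} and the splitting $\Ub(\b_-)=\Ubh\oplus\n_-\Ub(\b_-)$, one gets a decomposition of left $\Ubh$-modules $\Mb_{\n_+}=\Ubh\,v_0\oplus\n_-\Mb_{\n_+}$, both summands being $\hf$-stable since $[\hf,\n_-]\subseteq\n_-$. Let $\pi$ be the projection onto $\Ubh v_0$ along $\n_-\Mb_{\n_+}$; it is left $\Ubh$-linear, hence lies in $\E_{\n_+}^0$, and under \eqref{isomFtn0E0} it corresponds to an element $P\in\Ftn^0\subseteq\Ftn$. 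Expanding $P$ in degree $0$ as $\sum_r f_r\,e_-(r)e_+(r)$ and applying it to $v_0$, every term with $r\neq0$ dies because $e_+(r)v_0=0$, so $Pv_0=f_0v_0$ with $f_0\in\Ubh$ the constant term; comparison with $\pi(v_0)=v_0$ gives $f_0=1$. Thus $P=1+\sum_{r\neq0}f_r\,e_-(r)e_+(r)$, and since each remaining summand lies simultaneously in $\Ftn\n_+$ and in $\n_-\Ftn$, the normalization \eqref{extrprojdefequa2} holds. As for \eqref{extrprojdefequa1}, as operators on $\Mb_{\n_+}$ one has $\pi\n_-=0$ since $\n_-\Mb_{\n_+}=\ker\pi$, and $\n_+\pi=0$ since the image $\Ubh v_0$ is the top weight space of $\Mb_{\n_+}$, which $\n_+$ annihilates for weight reasons.

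\emph{Main obstacle.} The delicate point is to promote these two \emph{operator} identities to identities $\n_+P=0=P\n_-$ inside $\Ftn$; this is precisely the faithfulness of the action under which $\Ftn$ is realized as bounded endomorphisms of $\Mb_{\n_+}$, a triangularity statement in the PBW basis $e_-(r)e_+(s)$. Equivalently, and this is how I would actually exhibit $P$ and verify \eqref{extrprojdefequa1} by hand, one fixes a convex order $\beta_1\prec\cdots\prec\beta_L$ of $\Delta^+$ (which exists) and sets $P=P_{\beta_1}\cdots P_{\beta_L}$, where each $P_\beta=\sum_{k\geq0}c_k\,e_{-\beta}^{\,k}e_\beta^{\,k}$ is the rank-one ($\sl_2$-)projector attached to the triple $(e_\beta,h_\beta,e_{-\beta})$. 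The single-root relations $e_\beta P_\beta=0=P_\beta e_{-\beta}$ fix the $c_k$ by a first-order recursion whose solution involves exactly the inverses of the elements $h_\beta+k$, $k\in\mathbb{Z}$, so that $P_\beta$, and hence $P$, lives in the localization \eqref{formeleloca}; this is why $\Ubh$ rather than $\U(\hf)$ is forced on us. Showing that the product annihilates all of $\n_+$ on the left and all of $\n_-$ on the right is the genuinely hard, combinatorial step, and it is here that the convexity \eqref{equaconvexorder} of the chosen order is indispensable. Once \eqref{extrprojdefequa1}--\eqref{extrprojdefequa2} are checked for this $P$, the uniqueness above guarantees both that $P$ is independent of the convex order chosen and that $P^2=P$ and $\epsilon(P)=P$.
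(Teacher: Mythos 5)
Your proposal is correct and is essentially the paper's own argument: existence comes from the projector of $\Mb_{\n_+}$ onto $\Ubh v_0$ along $\n_-\Mb_{\n_+}$, placed in $\Ftn^0$ through the isomorphism \eqref{isomFtn0E0}, and the two-sided multiplication trick gives uniqueness, with $P^2=P$ and $\epsilon(P)=P$ as corollaries (the paper's computation $\epsilon(P)=P\epsilon(P)=P$ is the same argument you package as uniqueness, and it obtains the second normalization in \eqref{extrprojdefequa2} via $\epsilon$ where you read both off directly from the degree-zero expansion). The ``main obstacle'' you flag --- promoting the operator identities $\n_+\pi=0=\pi\n_-$ to identities in $\Ftn$ --- is elided at exactly the same point in the paper's proof (the step ``$e_+Pa=0$ for all $a$ implies $e_+P=0$'', which rests on the faithful realization of $\Ftn$ inside $\text{End}(\Mb_{\n_+})$ underlying \eqref{isomFtn0E0}), and the convex-order factorization you sketch as a fallback is Proposition \ref{factoextrprojroots}, which the paper quotes from \cite{AST} and does not need for this theorem.
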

\begin{proof}
This is due to the isomorphism \eqref{isomFtn0E0}. Let $P$ be the projector of $\bar{\M}_{\n_+}$ on $\Ub(\hf)$ along the $\Ub(\hf)$-submodule $\n_- \Ub(\k)$. So $P\in \E_{\n_+}^0=\Ftn^0$, $P^2=P$ and $P \n_- = 0$. Since for any $(e_+,a) \in \n_+ \times \Mb_{\n_+}$ we have $e_+ P a = 0$, it implies that $e_+ P = 0$ and so $\n_+ P = 0$. The first property of \eqref{extrprojdefequa2} is checked by projection of $1$ in $\bar{M}_{\n_+}$. The element $\epsilon(P) \in \Ftn$ satisfies \eqref{extrprojdefequa1} and the second property of \eqref{extrprojdefequa2} so $\epsilon(P) = P \epsilon(P) = P$. This implies that $P$ satisfies the second property of \eqref{extrprojdefequa2} and of course that $\epsilon(P)=P$. Clearly, $P$ is uniquely defined.
\end{proof}

\subsection{$\sl_2$-triple}\label{ExtremalProjsl2}

\begin{defi}\label{extremalprojsl2}
For any $\alpha \in \Delta^+$ and $t \in \mathbb{Z}$, we define the element
\begin{equation}\label{equaextremalprojsl2}
P_\alpha(t) = \sum_{k=0}^{\infty} \frac{(-1)^k}{k!} \frac{1}{f_{\alpha,k}(t)} e_{-\alpha}^k e_\alpha^k
\end{equation}
of $\Ftn$ with $f_{\alpha,0}(t) = 1$ and $f_{\alpha,k}(t) = \prod_{l=1}^k (h_\alpha + t + l)$ for all $k \in \N$.
\end{defi}
One can verify that
$e_\alpha P_\alpha(t) = \frac{t-1}{h_\alpha+t-1} P_\alpha(t-1) e_\alpha$.
Clearly, $\epsilon (P_\alpha)=P_\alpha$, where $\epsilon$ is the Chevalley anti-involution \eqref{formchevalleyantiinv}, and and $P_\alpha\in\Ftn^0$, so
$ P_\alpha(t) e_{-\alpha} = \frac{t-1}{h_\alpha+t+1} e_{-\alpha} P_\alpha(t-1) $.
It implies that $e_\alpha P_\alpha(1) = P_\alpha(1) e_{-\alpha} = 0$, $P_\alpha(1) \, \equiv \, 1 \ \text{mod} \ \Ftn e_\alpha$ and $P_\alpha(1) \, \equiv \, 1 \ \text{mod} \ e_{-\alpha} \Ftn$. The uniqueness implies the following proposition.

\begin{prop}\label{propextremalprojsl2}
For any $\alpha \in \Delta^+$, the element $P_\alpha(1) \in \Ftn$ defined in \eqref{extremalprojsl2} is the extremal projector of the subalgebra of $\k$ generated by  
$\{e_{-\alpha},h_\alpha,e_\alpha\}$.
\end{prop}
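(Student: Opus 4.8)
The plan is to reduce the statement to the uniqueness clause of Theorem~\ref{theoextremalprojector}, since almost all the required relations were already recorded just before the proposition. Consider the $\sl_2$-subalgebra $\mathfrak{s}_\alpha := \langle e_{-\alpha}, h_\alpha, e_\alpha\rangle$, with triangular decomposition $\n_+ = \K e_\alpha$, $\hf = \K h_\alpha$, $\n_- = \K e_{-\alpha}$; the Chevalley anti-involution $\epsilon$ of $\k$ restricts to the anti-involution of $\mathfrak{s}_\alpha$ interchanging $e_{\pm\alpha}$ and fixing $h_\alpha$. Because $P_\alpha(1)$ only involves $e_{\pm\alpha}$ and rational functions of $h_\alpha$ whose denominators $f_{\alpha,k}(1) = \prod_{l=1}^k(h_\alpha+l+1)$ are products of factors $h_\alpha + k$, $k\in\mathbb{Z}$, it is a well-defined element of the Taylor extension of $\Ub(\mathfrak{s}_\alpha)$. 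Theorem~\ref{theoextremalprojector}, applied to the reductive algebra $\mathfrak{s}_\alpha$, then guarantees that $\mathfrak{s}_\alpha$ has a unique extremal projector, so it suffices to check that $P_\alpha(1)$ satisfies the defining properties of that theorem.

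First I would collect the relations established in the discussion following Definition~\ref{extremalprojsl2}: that $P_\alpha(1)\in\Ftn^0$ and $\epsilon(P_\alpha(1))=P_\alpha(1)$; the annihilation relations $e_\alpha P_\alpha(1) = 0$ and $P_\alpha(1) e_{-\alpha} = 0$, which are precisely $\n_+ P_\alpha(1) = P_\alpha(1)\n_- = 0$ of \eqref{extrprojdefequa1}; and the normalizations $P_\alpha(1)\equiv 1 \ \text{mod}\ \Ftn e_\alpha$ and $P_\alpha(1)\equiv 1 \ \text{mod}\ e_{-\alpha}\Ftn$, which are \eqref{extrprojdefequa2}. (The annihilation relations come from specializing the two commutation identities for $P_\alpha(t)$ at $t=1$, where the scalar factor $t-1$ vanishes, and the normalizations are immediate from the explicit series \eqref{equaextremalprojsl2}, whose $k=0$ term is $1$ while every $k\geq 1$ term carries an $e_\alpha$ on the right and an $e_{-\alpha}$ on the left.)

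The only defining property of Theorem~\ref{theoextremalprojector} not already on record is the idempotency $P_\alpha(1)^2 = P_\alpha(1)$, and I would obtain it formally rather than by resumming \eqref{equaextremalprojsl2}. Writing the normalization as $P_\alpha(1) = 1 + e_{-\alpha} X$ with $X\in\Ftn$, one has $P_\alpha(1)^2 = P_\alpha(1)(1 + e_{-\alpha}X) = P_\alpha(1) + \big(P_\alpha(1)e_{-\alpha}\big)X = P_\alpha(1)$, using $P_\alpha(1)e_{-\alpha}=0$. With all the properties verified, the uniqueness in Theorem~\ref{theoextremalprojector} identifies $P_\alpha(1)$ with the extremal projector of $\mathfrak{s}_\alpha$. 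There is no real obstacle here, as the serious relations were proved before the statement; the only genuinely new input is the idempotency, and the single point to watch is that $P_\alpha(1)$ must be read in the Taylor extension of the rank-one subalgebra $\mathfrak{s}_\alpha$ (localizing only at $h_\alpha + k$) so that Theorem~\ref{theoextremalprojector} applies verbatim.
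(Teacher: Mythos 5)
Your proof is correct and takes essentially the same route as the paper: the paper likewise collects the relations established just before the proposition (the annihilation identities at $t=1$, the two normalizations, and $\epsilon$-invariance) and then simply invokes the uniqueness clause of Theorem~\ref{theoextremalprojector} for the $\sl_2$-subalgebra, stating ``The uniqueness implies the following proposition.'' Your explicit idempotency check via $P_\alpha(1)\,e_{-\alpha}=0$ and your remark about reading $P_\alpha(1)$ in the Taylor extension of the rank-one subalgebra are details the paper leaves implicit, but they do not change the argument.
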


\subsection{Factorization} 

Let $\rho$ be the half sum of the positive roots of $\k$ : $\rho = \displaystyle \frac{1}{2} \sum_{\alpha \in \Delta^+} \alpha$.

\begin{prop}\label{factoextrprojroots}
\cite{AST} The extremal projector $P$ of $\k$ admits the factorization in the ordered product:
\begin{equation}\label{factorizationofP}P = \prod_{i=1}^L P_{\alpha_i}(\rho(h_\alpha)) \ ,\end{equation}
for any choice of convex order $\{\alpha_1,\dots,\alpha_L\}$ on $\Delta^+$. 
\end{prop}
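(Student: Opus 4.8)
The plan is to identify the ordered product with the extremal projector through the uniqueness supplied by Theorem~\ref{theoextremalprojector}. Write $Q:=\prod_{i=1}^{L}P_{\alpha_i}(\rho(h_{\alpha_i}))$ for a fixed convex order $\alpha_1\prec\dots\prec\alpha_L$. By the proof of Theorem~\ref{theoextremalprojector} the extremal projector is the projection of $\Mb_{\n_+}$ onto $\Ubh$ along $\n_-\Ub(\k)$, and under the isomorphism \eqref{isomFtn0E0} an element of $\Ftn^0$ is determined by its action on $\Mb_{\n_+}$. It therefore suffices to check that $Q\in\Ftn^0$ and that $Q$ acts as this projection, i.e. $\n_+Q=0$, $Q\n_-=0$ and $Q\equiv1\ \mathrm{mod}\ \Ftn\n_+$. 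Two of these are immediate: each factor lies in $\Ftn^0$ and $\Ftn^0$ is a subalgebra, so $Q\in\Ftn^0$; and since every term of $P_{\alpha_i}(t)$ other than the leading $1$ ends in a positive power of $e_{\alpha_i}$, one has $P_{\alpha_i}(t)\equiv1\ \mathrm{mod}\ \Ftn\n_+$, whence $Q\equiv1\ \mathrm{mod}\ \Ftn\n_+$.

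For the two screening identities I would first remove the right-hand one by symmetry. The reversed order $\succ$ is again convex, and since $\epsilon$ fixes each factor $P_{\alpha_i}(\rho(h_{\alpha_i}))$ and reverses products, one has $\epsilon(Q_\succ)=Q_\prec$, where $Q_\prec,Q_\succ$ denote the products in the two orders. Thus, once $\n_+Q=0$ is known for every convex order, applying $\epsilon$ to $\n_+Q_\succ=0$ and using $\epsilon(\n_+)=\n_-$ gives $Q_\prec\n_-=0$. So it remains to prove $\n_+Q=0$. Because $\n_+$ is generated as a Lie algebra by the simple root vectors, an induction on the height of $\gamma\in\Delta^+$, writing $e_\gamma$ (up to scalar) as $[e_{\alpha_j},e_{\gamma'}]=e_{\alpha_j}e_{\gamma'}-e_{\gamma'}e_{\alpha_j}$ with $\alpha_j$ simple and $\gamma'$ of smaller height, reduces $\n_+Q=0$ to the assertions $e_{\alpha_j}Q=0$ for the simple roots $\alpha_j$.

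The core of the argument is this last assertion, and here the $\rho$-shifts enter decisively. For a simple root $\rho(h_{\alpha_j})=1$, so by Proposition~\ref{propextremalprojsl2} the corresponding factor satisfies $e_{\alpha_j}P_{\alpha_j}(1)=0$. The idea is to carry $e_{\alpha_j}$ to the right through the factors $P_{\alpha_1},\dots,P_{\alpha_{j-1}}$ until it meets $P_{\alpha_j}$ and is annihilated. Commuting $e_{\alpha_j}$ past a factor $P_{\alpha_i}(t)=\sum_k c_k(t)\,e_{-\alpha_i}^k e_{\alpha_i}^k$ produces, besides the straight term, correction terms built from the root vectors $e_{\alpha_j\pm\alpha_i}$ arising in the reordering inside $\Ub(\k)$. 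The convexity condition \eqref{equaconvexorder} is what keeps this under control: any positive root $\alpha_j+\alpha_i$ created in this way lies strictly between $\alpha_i$ and $\alpha_j$ in the order, hence in the already-traversed initial segment, while the coefficients $c_k(\rho(h_{\alpha_i}))$ together with the relation $e_\alpha P_\alpha(t)=\tfrac{t-1}{h_\alpha+t-1}P_\alpha(t-1)e_\alpha$ are tuned so that these corrections telescope. I expect the main obstacle to be precisely the bookkeeping of these correction terms: one must set up an induction over the convex order showing that, after passing each factor, the accumulated terms either vanish or are again of a form annihilated further to the right, so that nothing survives beyond $e_{\alpha_j}P_{\alpha_j}(1)=0$. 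Granting this, $e_{\alpha_j}Q=0$ for all simple $\alpha_j$, hence $\n_+Q=0$, and the reductions above give $Q=P$; idempotency and $\epsilon(Q)=Q$ then hold automatically, since $Q$ equals the extremal projector.
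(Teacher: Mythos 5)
First, a point of comparison: the paper gives no proof of this proposition at all --- it is quoted from \cite{AST} --- so your attempt is to be judged on its own. Your formal reductions are correct and well organized: each factor lies in $\Ftn^0$ and hence so does $Q$; since $\Ftn\n_+$ is stable under left multiplication by $\Ftn$, expanding the product of terms $1+B_i$ with $B_i\in\Ftn e_{\alpha_i}$ gives $Q\equiv 1\ \text{mod}\ \Ftn\n_+$; the reversed convex order is again convex, $\epsilon$ fixes each factor $P_{\alpha_i}(\rho(h_{\alpha_i}))$ and reverses products, so $Q_\prec\,\n_-=0$ does follow from $\n_+Q_\succ=0$; positive root vectors are iterated commutators of simple ones, so $\n_+Q=0$ reduces by height induction to $e_{\alpha_j}Q=0$ for $\alpha_j$ simple; and $\rho(h_{\alpha_j})=1$ for a simple root, so Proposition \ref{propextremalprojsl2} applies to that factor. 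Identifying an element of $\Ftn^0$ satisfying $Q\equiv1\ \text{mod}\ \Ftn\n_+$ and $Q\n_-=0$ with the projection of $\Mb_{\n_+}$ onto $\Ubh$ along $\n_-\Ub(\k)$, via \eqref{isomFtn0E0}, indeed yields $Q=P$, with idempotency and $\epsilon$-invariance for free.

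The gap is exactly the step you flag and then assume: that $e_{\alpha_j}$ can be carried through the factors preceding $P_{\alpha_j}(1)$ with all corrections disappearing. This is not bookkeeping; it is the entire content of the theorem, and ``the corrections telescope'' is a hope, not a mechanism. Moreover your convexity remark points the wrong way: if $\alpha_i\prec\alpha_j$, convexity places $\alpha_i+\alpha_j$ strictly \emph{between} $\alpha_i$ and $\alpha_j$, i.e.\ among the factors not yet traversed, not in the initial segment already passed; at the same time the commutators $[e_{\alpha_j},e_{-\alpha_i}]$ create lowering operators $e_{-\beta}$ with $\beta=\alpha_i-\alpha_j$ sitting \emph{before} $\alpha_i$, so corrections of both kinds accumulate and interact with the remaining factors. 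A cleaner way to close the gap, and essentially the classical route of \cite{AST}: (1) show the ordered product is independent of the choice of convex order --- any two convex orders are related by moves confined to rank-two subsystems, so this reduces to finitely many braid-type identities among the $P_\gamma$'s (with their shifts) in types $A_1\times A_1$, $A_2$, $B_2$, $G_2$; (2) given independence, for each simple $\alpha_j$ choose a convex order beginning with $\alpha_j$ (one exists, corresponding to a reduced word of the longest Weyl group element starting with $\sigma_j$); then $e_{\alpha_j}Q=e_{\alpha_j}P_{\alpha_j}(1)\cdots=0$ with no commuting at all, and your reductions finish the argument. The unavoidable computation is then localized in the rank-two identities, which still have to be done --- that is where the $\rho$-shifts are genuinely used.
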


\section{Uses of extremal projector}\label{UsesExtreProj}

The extremal projector is extensively used in the theory of reduction algebras. It allows to give alternative definitions of the Mickelsson algebras \ref{defMickelssonalgebras} and to describe them in terms of generators and relations.

Assume that $\k$ is a Lie subalgebra of  a finite-dimensional Lie algebra $\g$ and the adjoint action of $\k$ is semisimple on $\g$. 
The algebra $\Ub(\k)$ is a free left $\Ub(\hf)$-module with a $\Ub(\hf)$-basis given by the monomials \eqref{defgenek12}.
Let $\mathfrak{p}$ be as in \textbf{Section \ref{ConstrstepAlgebras}}.
Let $\Ub(\g)$ be the localization of $\U(\k)$ with respect to the set of denominators \eqref{formeleloca}.
The algebra $\Ub(\g)$ is a free left $\Ub(\k)$-module with a $\Ub(\k)$-basis
\begin{equation}\label{Ukbasisofg}
g_1^{k_1} \dots g_B^{k_B} \quad , \quad (k_1,\dots,k_B) \in \mathZ^B\ .
\end{equation}
\subsection{Double coset algebra}\label{AlternativeDefReducAlge}
\begin{defi}\label{defireducalgedoubcos}
Let $\mathcal{R}^\g_\k$ be the double coset space $\qquom{\n_-\Ub(\g)}{\Ub(\g)}{\Ub(\g)\n_+}$
equipped with the product $\tilde{a} \dia \tilde{b} = aPb\ \text{mod}\ \n_-\Ub(\g)+\Ub(\g)\n_+$ for representatives $a,b$ of the 
cosets $\tilde{a},\tilde{b}\in \mathcal{R}^\g_\k$.
\end{defi}
\vspace{-.2cm}
We give without proof several remarks, see \cite{Zh1,Zh3,Kh,KO1}. The product $\dia$ is well defined due to the local nilpotency of the adjoint action 
of $\n_+$ on $\U(\g)$. In the sequel we often denote the image of an element $a$ in the double coset algebra by the same letter $a$.  
The algebra $\mathcal{R}^\g_\k$ is isomorphic to the localized algebra $\mathcal{S}(\g,\k) \otimes_{\U(\hf)} \bar{U}(\hf)$. The reduction algebra $\mathcal{R}^\g_\k$ is generated over $\Ub(\hf)$ by the images of the elements $g_1,\dots,g_B$. These generators satisfy ordering relations $$g_i \dia g_j = \sum_{k<l} f_{klij} \, g_k \dia g_l + \sum_{k} f_{kij} \, g_k + f_{ij} \ , \ i>j\ ,\ f_{klij}, f_{kij}, f_{ij} \in \Ub(\hf)\ .$$
 
The double coset algebra $\mathcal{R}^{\A}_\k$ can be defined in a more general situation, for an associative algebra $\A$ containing $\U(\k)$. The algebra $\A$ must satisfy some conditions with respect to $\k$: the so called $\k$-admissibility of $\A$ and the local highest weight condition.

\subsection{Zhelobenko automorphisms}\label{ZhelobenkoAuto}
The algebra $\mathcal{R}^\A_\k$ admits the action of Zhelobenko automorphisms. For details see \cite{Zh3}, \cite{KO1}.

\begin{defi}\label{Zhelobenkoautoms}
For any $\alpha_i \in \Pi$ and $x \in A$ let 
$$\operatorname{q}_i(x) = \sum_{k \geq 0} \frac{(-1)^k}{k!} \text{ad}_{e_{\alpha_i}}^k(x) e_{-{\alpha_i}} \frac{1}{f_{\alpha_i,k}} \ ,\ \ \text{where}\ \ f_{\alpha_i,k} = \displaystyle \prod_{l=1}^k (h_{\alpha_i} - l + 1) \ ,$$
and, see \eqref{liftweylgroupgene},
$\q_i = \operatorname{q}_i \circ \ T_i$.
\end{defi}
\noindent The operators $\q_i$ descend to automorphisms of $\mathcal{R}^\A_\k$ and satisfy the braid relations, see \eqref{braidrela},
\begin{equation}\label{propZhelobenkoautoms}
\underset{m_{ij}}{\underbrace{\q_i \q_j \q_i \dots}} = \underset{m_{ij}}{\underbrace{\q_j \q_i \q_j \dots}} \quad , \quad i \neq j \ .
\end{equation}

\section{Examples of reduction algebras}\label{ExampleRedAlge}
This section illustrates the machinery of reduction algebras. 

\vskip .1cm
Let $e_{ij}$, $i,j=1,\dots,n$ be the standard generators of the Lie algebra $\gl_n=\gl_n(\mathbb{C})$, 
$$[e_{ij},e_{kl}] = \delta_{kj} e_{il} - \delta_{il} e_{kj} \quad , \quad  i,j,k,l=1,\dots,n \ .$$

In the first example, we construct ``by hand" the reduction algebra of $\gl_3$ with respect to $\gl_2$, working 
explicitly with the extremal projector. The second example is the algebra of $\hf$-deformed differential operators $\Diff(n,N)$, 
the reduction algebra of the tensor product $\text{Diff}(n,N)\otimes \U(\gl_n)$ with respect to the diagonal embedding of $\U(\gl_n)$. Here $\text{Diff}(n,N)$ is the algebra of polynomial differential operators in $nN$ variables. 

\subsection{Reduction algebra of $\gl_3$ with respect to $\gl_2$}\label{FirstExempleRedAlg}
We use the following notation (its meaning will be explained in \textbf{Section \ref{GeneralizedDiff}}) for generators of $\gl_3$
$$x^1 = e_{13} \ , \ x^2 = e_{23} \ , \ e = e_{12} \ , \ \de_1 = e_{31} \ , \ \de_2 = e_{32} \ , \ f = e_{21}\ ,\ h_1 = e_{11} \ , \ h_2 = e_{22} \ , \ h_3 = e_{33}\ .$$
 We denote also: $h_{ij} = h_i-h_j$, $i,j=1,2,3$, $\n_- = \mathbb{C}f$, $\hf = \mathbb{C} h_{12}$ and $\n_+ = \mathbb{C}e$. We describe the reduction algebra 
 $\mathcal{R}^{\gl_3}_{\gl_2}$ with $\gl_2$ generated by $e,f$ and $h=h_{12}$. The algebra $\mathcal{R}^{\gl_3}_{\gl_2}$ is generated by $h_{3},x^1,x^2,\de_1,\de_2$ over $\Ub(\hf)$. The symbol $:\!a\!:$ stands for the image of $a \in \Ub(\gl_3)$ in the quotient by $\Ub(\gl_3) \n_+ + \n_- \Ub(\gl_3)$. The products $a \dia b = a P b$, $(a,b) \in \{x^1,x^2,\de_1,\de_2\}^2$ are given by
\begin{equation}\label{reducrela1}
\begin{array}{c}
\displaystyle{ x^2 \dia x^1 = : x^1 x^2 : \quad , \quad x^1 \dia x^2 = \frac{h+2}{h+1} :x^1 x^2:\ ,}\\[.5em]
\displaystyle{ \de_2 \dia \de_1 = \frac{h+2}{h+1} : \de_1 \de_2 : \quad , \quad \de_1 \dia \de_2 = :\de_1 \de_2:\ ,}\\[1em]
\displaystyle{ x^2 \dia \de_1 = : \de_1 x^2 : = \de_1 \dia x^2 \quad , \quad x^1 \dia \de_2 = :\de_2 x^1: = \de_2 \dia x^1\ ,}\\[.5em]
\displaystyle{ x^2 \dia \de_2 = : \de_2 x^2 : + h_{23} \quad , \quad \de_2 \dia x^2 = :\de_2 x^2: - \frac{1}{h+1} :\de_1 x^1:\ ,}\\[.5em]
\displaystyle{ x^1 \dia \de_1 = : \de_1 x^1 : - \frac{1}{h+1} :\de_2 x^2: + \frac{h}{h+1}(h_{13}+1) \quad , \quad \de_1 \dia x^1 = :\de_1 x^1: \ .}
\end{array}\end{equation}
For instance, we compute the second relation in the first line of \eqref{reducrela1}:
\begin{align*}
x^1 \dia x^2 & = :x^1 P x^2: = :x^1 \left(1-\frac{1}{h+2}fe+\dots\right) x^2: = :x^1 x^2: - \frac{1}{h+1} :x^1fex^2: \\[-.1em]
& = :x^1 x^2: - \frac{1}{h+1} :[x^1,f][e,x^2]: = :x^1 x^2: + \frac{1}{h+1} :x^2x^1: = \frac{h+2}{h+1} :x^1x^2: \ .
\end{align*}
The relations in the three first lines of  \eqref{reducrela1} lead to
$$
x^1 \dia x^2 = \frac{h+1}{h+2} x^2 \dia x^1 \quad , \quad \de_1 \dia \de_2 = \frac{h+2}{h+1} \de_2 \dia \de_1 \quad , \quad x^1 \dia \de_2 = \de_2 \dia x^1 \quad , \quad x^2 \dia \de_1 = \de_1 \dia x^2 \ .
$$
The relations two last lines of \eqref{reducrela1} form a linear system. It follows that
$$
\begin{array}{c}
\displaystyle{x^1 \dia \de_1 = \displaystyle \frac{h(h+2)}{(h+1)^2} \de_1 \dia x^1 - \frac{1}{h+1} \de_2\dia x^2 + \frac{h}{h+1}(h_{13}+1)\ ,}\\[.4em]
\displaystyle{x^2 \dia \de_2 = \displaystyle \frac{1}{h+1} \de_1 \dia x^1 + \de_2 \dia x^2 + h_{23} \ .}
\end{array}$$
The remaining relations are directly inherited from the weight relations in $\U(\gl_3)$:
$$\begin{array}{c}
\displaystyle{x^i \dia h_j = (h_j - \delta_{ij}) \dia x^i \ , \ \de_i \dia h_j = (h_j + \delta_{ij}) \dia \de_i \ , \ h_i \dia h_j = h_j \dia h_i \ , \ i,j=1,2,}\\[.4em]
\displaystyle{h_3\dia x^i=x^i\dia (h_3-1)\ ,\ h_3\dia \de_i=x^i\dia (h_3+1)\ ,\ h_3\dia h_i=h_i\dia h_3\ , \ i=1,2.}
\end{array}$$

\subsection{Reduction algebra $\Diff(n,N)$}\label{ReducAlgeDiff}
Let $\hf$ be the Cartan subalgebra of $\gl_n$ generated by the elements $h_i = e_{ii}$, $i=1,\dots,n$. We denote $\th_i := h_i - i$ and $\th_{ij} := \th_i-\th_j$, $i,j=1,\dots,n$. We write $\U(n)$ and $\Un$ instead of $\U(\hf)$ and $\U(\hf)$, see \textbf{Section \ref{EndRedLieAlg}}. 
Let $\text{Diff}(n,N)$ be the algebra of polynomial differential operators in $nN$ variables. It is generated by elements $\Z^{i\alpha}$ and $\de_{i\alpha}$, $i=1\dots,n$, $\alpha=1,\dots,N$ with relations
\begin{equation}\label{commutrelaDiff11}
[\Z^{i\alpha},\Z^{j\beta}] = [\de_{i\alpha},\de_{j\beta}] = 0 \quad \text{and} \quad [\de_{i\alpha},\Z^{j\beta}] = \delta^j_i \delta^\beta_\alpha \ .
\end{equation}
We describe the reduction algebra $\Diff(n,N) := \mathcal{R}^{\A}_{\gl_n}$ of the algebra 
$\A :=\text{Diff}(n,N)\otimes \U(\gl_n)$ with respect to the diagonally embedded $\U(\gl_n)\to \text{Diff}(n,N)\otimes \U(\gl_n)$, 
\begin{equation}\label{embeddingUgln}
e_{ij}\mapsto \psi'(e_{ij}):=\psi(e_{ij})\otimes 1+1\otimes e_{ij}\ ,\ \ \text{where}\ \  \psi(e_{ij}):= \sum_{\alpha=1}^N \Z^{i\alpha} \de_{j\alpha} \quad , \quad i,j=1,\dots,n \ .
\end{equation}
We consider the Cartan decomposition $\gl_n = \n_- \oplus \hf \oplus \n_+$ with
$$\n_- = \sum_{1 \leq i < j \leq n} \mathbb{C} e_{ji} \quad \text{and} \quad \n_+ = \sum_{1 \leq i < j \leq n} \mathbb{C} e_{ij} \ .$$
Over $\Un$, the generators of $\Diff(n,N)$ are
$\Z^{i\alpha} \otimes 1$ and $\de_{i\alpha} \otimes 1$,
which we denote by $\Z^{i\alpha}$ and $\de_{i\alpha}$ as well. In the sequel we sometimes omit the symbol 
$\dia$ for the product in $\mathcal{R}^{\A}_{\gl_n}$.

\noindent{\bf Zhelobenko automorphisms.} The action of the Zhelobenko operators $\q_i$, $i=1,\dots,n-1$ (see Definition \ref{Zhelobenkoautoms}) is given by 
(see \cite{KO5})
\begin{equation}\label{automq}
\begin{split}
\q_i(\Z^{i\alpha}) &= - \Z^{i+1,\alpha} \frac{\th_{i,i+1}}{\th_{i,i+1}-1} \quad , \quad \q_i(\Z^{i+1,\alpha}) = \Z^{i\alpha} \quad , \quad \q_i(\Z^{j\alpha}) = \Z^{j\alpha} \ , \ j \neq i,i+1\ , \\
\q_i(\de_{i\alpha}) &= \de_{i+1,\alpha} \frac{\th_{i,i+1}}{\th_{i,i+1}-1} \quad , \quad \q_i(\de_{i+1,\alpha}) = -\de_{i\alpha} \quad , \quad \q_i(\de_{j\alpha}) = \de_{j\alpha} \ , \ j \neq i,i+1\ , \\
\q_i(\th_j)&=\th_{s_i(j)}\ .
\end{split}
\end{equation}
\noindent{\bf Chevalley anti-involution.} We extend the Chevalley anti-involution of $\U(\gl_n)$
to $\A$ by
\begin{equation}\label{invoantiautoDiff11}
\epsilon(\Z^{i\alpha}) = \de_{i\alpha} \quad , \quad \epsilon(\de_{i\alpha}) = \Z^{i\alpha} \ . \end{equation}
Since $\epsilon(P) = P$, the anti-involution $\epsilon$ induces an anti-involution of $\Diff(n,N)$.

\vskip .2cm
\noindent{\bf Defining relations.}
In $\text{Diff}(n,N)$ we have the weight relations
\begin{equation}\label{weightrelainit1}
\Z^{j\alpha} \th_i = (\th_i-\delta^j_i) \Z^{j\alpha} \quad , \quad \de_{j\alpha} \th_i = (\th_i+\delta_{ji}) \de_{j\alpha} \quad , \quad i,j=1,\dots,n \ .
\end{equation}

\noindent{\bf Relations between $\Z$'s and between $\de$'s.} The element $\Z^{1\alpha}$ is a highest weight vector, 
\begin{equation}\label{highestweightx11}
[\psi'(e_{rs}),\Z^{1\alpha}] = 0 \quad , \quad 1 \leq r < s \leq n \ .
\end{equation}
It follows that $\Z^{1\alpha} \Z^{1\beta} = :\Z^{1\beta} \Z^{1\alpha}: = \Z^{1\beta} \Z^{1\alpha}$. Applying the automorphisms $\q_i$, we obtain $\Z^{i\alpha} \Z^{i\beta} = :\Z^{i\beta} \Z^{i\alpha}: = \Z^{i\beta} \Z^{i\alpha}$.
Due to \eqref{factorizationofP}, we have $\Z^{1\alpha} \Z^{2\beta} = : \Z^{1\alpha} P \Z^{2\beta}: =: \Z^{1\alpha} P_{12} \Z^{2\beta}:$ where $P_{12}$ is the extremal projector of the $\sl_2$-subalgebra $\{e_{12},e_{21},h_{12}\}$. As in \textbf{Section \ref{FirstExempleRedAlg}}, we find
$$\Z^{2\alpha} \Z^{1\beta} = \Z^{1\beta} \Z^{2\alpha} - \frac{1}{\th_{12}} \Z^{2\beta} \Z^{1\alpha}\ .$$
Applying the Zhelobenko automorphisms, 
we get
\begin{equation}\label{relaxxini0}
\Z^{j\alpha} \Z^{i\beta} = \Z^{i\beta} \Z^{j\alpha} - \frac{1}{\th_{ij}} \Z^{j\beta} \Z^{i\alpha}  \ , \ 1 \leq i < j \leq n \ .
\end{equation}
Using the equation \eqref{relaxxini0} and its image under the permutation of $\alpha$ and $\beta$, we find
\begin{equation}\label{relaxxini0b}\Z^{i\alpha} \Z^{j\beta} = \frac{\th_{ij}^2-1}{\th_{ij}^2} \Z^{j\beta} \Z^{i\alpha} + \frac{1}{\th_{ij}} \Z^{i\beta} \Z^{j\alpha}  \quad , \quad 1 \leq i < j \leq n\ .\end{equation}
We rewrite \eqref{relaxxini0} and \eqref{relaxxini0b} in the matrix form 
\begin{equation}\label{relaxxini}
\Z^{i\alpha}\Z^{j\beta}=\sum_{k,l}\RR^{ij}_{kl} \Z^{k\beta} \Z^{l\alpha}\ . 
\end{equation}
The meaning of the matrix $\RR$ will be explained in Section \ref{ChapterRinghdef}. 
Applying $\epsilon$ to \eqref{relaxxini}, we get
\begin{equation}\label{reladdinit1}
\de_{j\alpha} \de_{i\beta} = \sum_{k,l} \de_{l\beta} \de_{k\alpha} \RR^{ij}_{kl} \  .
\end{equation}
\noindent{\bf Relations between $\Z$'s and $\de$'s.} Due to the structure of the extremal projector $P$, 
$\Z^{i\alpha}\dia\de_{j\beta}=A^{ik}_{jl}:\Z^{l\alpha}\de_{k\beta}:$, where $A^{ik}_{jl}\in \Ub(n)$, $A^{ij}_{ji}=1$ and $A^{ik}_{jl}=0$ if $l<i$ or $j>k$. So 
the transition matrix from $:\Z^{l\alpha}\de_{k\beta}:$'s to $\Z^{i\alpha}\dia\de_{j\beta}$'s (similarly, from $:\de_{k\alpha}\Z^{l\beta}:$'s to $\de_{j\alpha}\dia\Z^{i\beta}$'s) is triangular with ones on the diagonal. Hence
the defining relations between the generators $\Z$ and $\de$ have the form
\begin{equation}\label{relaxndn3}
\Z^{i\alpha} \de_{j\beta} = \sum_{k,l} \hat{S}^{ki}_{lj} \de_{k\beta} \Z^{l\alpha} - \delta^i_j \delta^\alpha_\beta \sigma_{i\alpha} \ ,\ \ \hat{S}^{ki}_{lj}, \sigma_{i\alpha} \ \in \Ub(\hf) \ . 
\end{equation}
The constant term in the right hand side is due to the structure of relations \eqref{commutrelaDiff11}.

The ring $\Diff(n,N)$ has the PBW property over $\Ub(\hf)$ with respect to the set of generators  $\Z^{l\alpha},\de_{k\beta}$, see \cite{KO4}.
The PBW property is sufficient for the determination of $\hat{S}$ if $N>1$. 
Indeed, reordering the monomial $\Z^{i\alpha}\Z^{j\beta}\de_{k\beta}$, $\alpha\neq\beta$, in two ways, we get the system 
\begin{equation}\label{equamatrixRS}
\sum_{a,b}\hat{R}^{ij}_{ab}\hat{S}^{ab}_{ck}\sigma_{a\beta}=\delta^i_c\delta^j_k\sigma_{j\beta}\vert_{_{h_l\to h_l-\delta^l_i,l=1,\dots,n}}
\ ,\ i,j,c,k=1,\dots,n\ .
\end{equation}
This system uniquely defines $\hat{S}$ if $\hat{R}$ is invertible and $\sigma_{a\beta}$ are non-zero. However, for $N=1$, one cannot determine $\hat{S}$
using just the PBW property($\hat{R}$ also cannot be determined for $N=1$). The explicit calculation with the projector shows that for $N=1$ the system \eqref{equamatrixRS} is still valid.

\vskip .2cm
\noindent{\bf Conclusion.} 
The ring $\Diff(n,N)$ is generated over $\Un$ by the elements $\Z^{i\alpha}, \der_{i\alpha}$ with the defining commutation relations \eqref{weightrelainit1},  
 \eqref{relaxxini}, \eqref{reladdinit1} and \eqref{relaxndn3}.

\chapter{$\h$-deformed differential operators}\label{ChapterRinghdef}

\vskip -1cm
The ring $\Diff(n,N)$, discussed in \textbf{Section \ref{ExampleRedAlge}}, is formed by $N$ copies of the ring $\Diff(n)=\Diff(n,1)$. In this chapter we investigate the structure of the ring $\Diff(n)$. We recall its definition in \textbf{Section \ref{hdiff-sect}}.
Our first result is the description of the center of $\Diff(n)$: it is a ring of polynomials in $n$ generators. 

\vskip .1cm
The ring $\Diff(n)$ is a Noetherian Ore domain, see \cite{KO4}. It is natural to investigate its rings of fractions
and test the validity of the Gelfand--Kirillov-like conjecture \cite{GK} as it is done for the ring of $q$-differential operators in \cite{O}. 
The second result of this chapter consists in a construction of an isomorphism 
between certain localizations of $\Diff (n)$ and the Weyl algebra $\text{W}_n$ extended by $n$ indeterminates. 

\paragraph{Notation.}
Let $\h(n)$ be the abelian Lie algebra with generators $\th_i$, $i=1,\dots,n$, and  $\U(n)$ its universal enveloping algebra. Set $\th_{ij}=\th_i-\th_j\in \h(n)$. 
We define $\Ub(n)$ to be the ring of fractions of the commutative ring 
$\U(n)$ with respect to  the multiplicative set of denominators, generated by the elements
$(\th_{ij}+k)^{-1}$, $k \in \mathbb{Z}$, $i,j=1,\dots,n$, $i\neq j$.
Let
\begin{equation}\label{defphichi}
\psi_i:=\prod_{k:k>i}\th_{ik}\ , \psi_i':=\prod_{k:k<i}\th_{ik}\ \ \text{and}\ \ \chi_i:=\psi_i\psi_i'\ ,\ i=1,\dots,n\ .
\end{equation}
Let $\varepsilon_j$, $j=1,\dots,n$, be the elementary translations of the generators of $\U(n)$, 
$\varepsilon_j:\th_i\mapsto\th_i+\delta_i^j$. For an element $p\in \Ub(n)$ we denote  
$\varepsilon_j(p)$ by $p[\varepsilon_j]$. We shall use the finite difference operators $\Delta_j$ defined by
$$\Delta_j f:=f-f[-\varepsilon_j]\ .$$ 
We have 
$\Delta_i (fg) = f \Delta_i(g) + \Delta_i(f) g - \Delta_i(f) \Delta_i(g)$.

We denote by $e_k$, $k=0,\dots,n$, the elementary symmetric polynomials in the variables $\th_1,\dots,\th_n$,
and by $e(t)$ the generating function of the polynomials $e_k$,
\begin{equation}\label{eah}
e_k = \sum_{i_1<\dots<i_k} \th_{i_1} \dots \th_{i_k}\ ,\ 
e(t) = \sum_{k=0}^n e_k \, t^k = \prod_{i=1}^n\, (1+\th_i t) \ .
\end{equation}
We denote by $\RR\in\text{End}_{\Ub(n)}\left( \Ub(n)^n\otimes_{\Ub(n)}\Ub(n)^n\right)$ the standard solution of the dynamical Yang--Baxter equation
\begin{equation}
\sum_{a,b,u} {\RR}^{ij}_{ab} \RR^{bk}_{ur}[-\varepsilon_a]\RR^{au}_{mn} = \sum_{a,b,u} \RR^{jk}_{ab}[-\varepsilon_i]\RR^{ia}_{mu} \RR^{ub}_{nr}[-\varepsilon_m] 
\end{equation}
of type A. The non-zero components of the operator $\RR$ are
\begin{equation}\label{dynRcompb}
\RR_{ij}^{ij}=\frac{1}{\th_{ij}}\ ,\ \ i\not=j\ ,\qquad\text{and}\qquad \RR_{ji}^{ij}=\left\{
\begin{array}{cc}
 \dfrac{\th_{ij}^2-1}{\th_{ij}^2}\ ,& \, i<j,\\[0.5em]
 1\ ,&\, i\geq j\ .
\end{array}
\right.
\end{equation}
We shall need the following properties of $\RR$ : 
\begin{eqnarray}
\label{weze}
&\RR^{ij}_{kl}[\varepsilon_i+\varepsilon_j]=\RR^{ij}_{kl} \ ,\ i,j,k,l=1,\dots,n\ . &\\
\label{iceco}
&\RR^{ij}_{kl}=0\ \text{if}\ (i,j)\neq (k,l)\ \text{or}\ (l,k) \ .&\\
\label{Q5do}
&\RR^2=\text{Id} \ .&
\end{eqnarray}

We denote by $\PPsi\in\text{End}_{\Ub(n)}\left( \Ub(n)^n\otimes_{\Ub(n)}\Ub(n)^n\right)$ the dynamical version of the skew inverse of  $\RR$
(see e.g. \cite{O2}, section 4.1.2 for details of the $R$-matrix technique needed here),
defined by  
\begin{equation}\label{skewn}
\sum_{k,l}\,\PPsi_{jl}^{ik}\,\RR_{nk}^{ml}[\varepsilon_m]=\delta_{n}^i\delta_j^m.
\end{equation}
The non-zero components of the operator $\PPsi$ are, see \cite{KO5},
\begin{equation}\label{explpsi}
\PPsi^{ij}_{ij} = \QQ^+_i \QQ^-_j \frac{1}{\th_{ij}+1} \ , \qquad \PPsi^{ij}_{ji} = \left\{\begin{array}{cc}
1&,\ i<j\\[.5em]
\displaystyle{\frac{(\th_{ij}-1)^2}{\th_{ij}(\th_{ij}-2)} }&,\ i>j\end{array}\right.
\end{equation}
where
\begin{equation}\label{defqpm}
\QQ^{\pm}_i=\frac{\chi_i[\pm\varepsilon_i]}{\chi_i} \ .
\end{equation}

\section{Definition and properties of rings of $\h$-deformed differential operators}\label{hdiff-sect}

The ring $\Diff (n)$ of $\h$-deformed differential operators of type A is a $\Un$-bimodule with the generators $\Z^j$ and $\der_j$, $j=1,\dots,n$.
The ring $\Diff (n)$ is free as a one-sided $\Un$-module; the left and right $\Un$-module structures are related by
\begin{equation}\label{relsweights}
\th_i\Z^j=\Z^j(\th_i+\delta_i^j)\ ,\ \th_i\der_j=\der_j(\th_i-\delta_i^j)\ .
\end{equation}
The defining relations for the generators $\Z^j$ and $\der_j$, $j=1,\dots,n$, read, see 
\cite{KO5},
\begin{equation}\label{defhdiffA}
 \Z^i \Z^j=\sum_{k,l}\RR_{kl}^{ij}\Z^k \Z^l\ ,\qquad \der_i\der_j=\sum_{k,l}\RR_{ji}^{lk}\der_k\der_l\ ,\qquad 
 \Z^i\der_j=\sum_{k,l}\RR_{lj}^{ki}[\varepsilon_k]\der_k \Z^l-\delta_{j}^i\ ,
\end{equation}
or, in components, 
\begin{equation}\label{relshdiffcompx}
\Z^i \Z^j=\frac{\th_{ij}+1}{\th_{ij}}\Z^j \Z^i\ ,\ i<j\ ,\end{equation}
\begin{equation}\label{relshdiffcompd}
\der_i \der_j=\frac{\th_{ij}-1}{\th_{ij}}\,\der_j \der_i\ ,\ i<j\ ,\end{equation}
\begin{equation}\label{relshdiffcompdx}
\Z^i \der_j=\left\{
\begin{array}{ll}
\der_j \Z^i&,\ \ i<j\ , \\[.4em]
\displaystyle{\frac{\th_{ij}(\th_{ij}-2)}{(\th_{ij}-1)^2}}\,\der_j \Z^i&
,\  i>j\ ,\end{array}\right.
\end{equation}
\begin{equation}\label{relshdiffcompdx2} 
\Z^i \der_i = \sum_j\displaystyle{\frac{1}{1-\th_{ij}}}\,\der_j \Z^j-1
\end{equation}
which form a set of ordering relations of $\Diff(n)$ with respect to the order,
\begin{equation}\label{partialordergenerators}
\der_j < \der_i < \Z^j < \Z^i \quad \text{when} \quad 1 \leq i < j \leq n \ .
\end{equation}
We have seen in \textbf{Chapter \ref{ChapterLiestepAlge}} that $\Diff(n)$ satisfies the PBW property: the monomials
\begin{equation}\label{UhbasisDiff}
(\der_n)^{l_n} \dots (\der_1)^{l_1} (\Z^1)^{k_1} \dots (\Z^n)^{k_n} \quad , \quad k_1,\dots,k_n,l_1,\dots,l_n \in \mathbb{Z}_{\geq 0}
\end{equation}
form a $\Un$-basis of $\Diff(n)$. A $\Un$-basis is also formed by the monomials
\begin{equation}\label{UhbasisDiff211}
(\Z^1)^{k_1} \dots (\Z^n)^{k_n} (\der_n)^{l_n} \dots (\der_1)^{l_1} \quad , \quad k_1,\dots,k_n,l_1,\dots,l_n \in \mathbb{Z}_{\geq 0}\ .
\end{equation}
The relations \eqref{relsweights} are called the $\h(n)$-weight relations of $\Diff(n)$. An element $f \in \Diff(n)$ is said to have an $\h(n)$-weight $\omega \in \h(n)^\star$ if
\begin{equation}\label{hdesp1aw}
\th_i f = f \left(\th_i+\omega(\th_i)\right)\ ,\ i=1,\dots,n\ .
\end{equation}
The ring $\Diff(n)$ acquires the structure of a filtered algebra by the assignment
\begin{equation}\label{filtrationDiff}
\deg(\Z^i) = \deg(\der_i) = 1 \quad , \quad i=1,\dots,n .
\end{equation}
 The associated graded algebra is denoted gr$\Diff(n)$. 

\vskip .1cm
The ring $\Diff (n)$ admits Zhelobenko automorphisms $\q_i$, $i=1,\dots,n-1$, given by \eqref{automq}.
The operators $\q_i$, $i=1,\dots,n-1$, generate the action of the braid group, see \cite{KO1}.

\vskip .2cm
The  ring $\Diff (n)$ admits an involutive anti-automorphism $\epsilon$, defined by 
\begin{equation}\label{antiautom}
\epsilon(\der_i)=\varphi_i\Z^i\ ,\ \epsilon(\Z^i)=\der_i\varphi_i^{-1}\ ,
\ \text{where}\ \varphi_i:=\frac{\psi_i}{\psi_i[-\varepsilon_i]}=\prod_{k:k>i}\frac{\th_{ik}}{\th_{ik}-1}\ ,
\ i=1,\dots,n \ ,
\end{equation} 
The proof reduces to the formula 
$\frac{\varphi_i[-\epsilon_j]}{\varphi_i}=\frac{\th_{ij}^2-1}{\th_{ij}^2}\ \ \text{for}\ \ 1\leq i<j\leq n$. 
 
\begin{lemm}\label{qga}
Let $\Gamma_i:=\der_i\Z^i\ $ for $i=1,\dots,n$. Then 
\begin{itemize}
\item[(i)] $\Gamma_i\Z^j=\displaystyle{\frac{\th_{ij}+1}{\th_{ij}}}\Z^j\Gamma_i\ $ and 
$\ \Gamma_i\der_j=\displaystyle{\frac{\th_{ij}-1}{\th_{ij}}}\der_j\Gamma_i\ $ for $i\neq j$, $\ i,j=1,\dots,n$.
\item[(ii)] $\q_i(\Gamma_j)=\Gamma_{s_i(j)}$ for $i=1,\dots,n-1$ and $j=1,\dots,n$.
\item[(iii)] $\Gamma_i\Gamma_j=\Gamma_j\Gamma_i\ $ for $i,j=1,\dots,n$.
\end{itemize}
\end{lemm}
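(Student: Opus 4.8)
The plan is to dispatch the three parts in the order (i), (ii), (iii), recording first two structural facts about $\Gamma_i=\der_i\Z^i$ that I will use throughout. Its $\h(n)$-weight is $0$ (it is a product of $\der_i$ of weight $-\varepsilon_i$ and $\Z^i$ of weight $+\varepsilon_i$), so by \eqref{relsweights} it commutes with every element of $\Un$; and it is fixed by the anti-involution $\epsilon$, since $\epsilon(\der_i\Z^i)=\epsilon(\Z^i)\epsilon(\der_i)=(\der_i\varphi_i^{-1})(\varphi_i\Z^i)=\der_i\Z^i$ by \eqref{antiautom}. I will also use \eqref{relsweights} in the transport form $\der_j f=f[\varepsilon_j]\der_j$ and $f\Z^j=\Z^j f[\varepsilon_j]$ for $f\in\Un$.

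For the first relation of (i) I would expand $\Gamma_i\Z^j=\der_i\Z^i\Z^j$ and move the two right-hand generators leftward past $\der_i\Z^i$. Commuting $\Z^i$ past $\Z^j$ via \eqref{relshdiffcompx} (with the case $i>j$ obtained by inverting the coefficient of the case $j<i$) and then $\der_i$ past $\Z^j$ via \eqref{relshdiffcompdx} produces two scalar factors in $\Un$; each is carried across $\der_i$ and so acquires the shift $\varepsilon_i$. A short check in each of the cases $i<j$ and $i>j$ shows the product of these factors equals $\frac{\th_{ij}+1}{\th_{ij}}$, leaving $\der_i\Z^i=\Gamma_i$ on the right, which is the assertion. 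Rather than repeat this for $\der_j$, I would obtain the second relation of (i) by applying $\epsilon$ to the first: using $\epsilon(\Gamma_i)=\Gamma_i$, $\epsilon(\Z^j)=\der_j\varphi_j^{-1}$ and that $\epsilon$ fixes $\Un$, the image reads $\der_j\Gamma_i\varphi_j^{-1}=\Gamma_i\der_j\varphi_j^{-1}\frac{\th_{ij}+1}{\th_{ij}}$; solving for $\Gamma_i\der_j$ (right-multiplying to cancel $\varphi_j$, moving the weight-$0$ element $\Gamma_i$ through $\Un$, and transporting one scalar across $\der_j$, which sends $\th_{ij}\mapsto\th_{ij}-1$) gives $\Gamma_i\der_j=\frac{\th_{ij}-1}{\th_{ij}}\der_j\Gamma_i$.

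For (ii) I would use that each $\q_i$ is an algebra automorphism of $\Diff(n)$ (Definition \ref{Zhelobenkoautoms} and the following remark), so $\q_i(\Gamma_j)=\q_i(\der_j)\q_i(\Z^j)$, and substitute \eqref{automq}. When $j\neq i,i+1$ both factors are fixed, giving $\Gamma_j=\Gamma_{s_i(j)}$. For $j\in\{i,i+1\}$ I would substitute the images of $\der_j$ and $\Z^j$, bring the resulting $\Un$-coefficients together using $\der_j f=f[\varepsilon_j]\der_j$ and the weight-$0$ property of $\Gamma$, and verify that the scalar factors and signs collapse to the identity, yielding $\Gamma_{i+1}$ for $j=i$ and $\Gamma_i$ for $j=i+1$, i.e. $\Gamma_{s_i(j)}$.

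Part (iii) is then a short consequence of (i), and is where the two factors of (i) pay off. For $i\neq j$ I would write $\Gamma_i\Gamma_j=\Gamma_i\der_j\Z^j$ and apply the two relations of (i) in turn, obtaining $\frac{\th_{ij}-1}{\th_{ij}}\der_j\,\frac{\th_{ij}+1}{\th_{ij}}\Z^j\,\Gamma_i$; pushing $\der_j$ through the scalar $\frac{\th_{ij}+1}{\th_{ij}}$ turns it into $\frac{\th_{ij}}{\th_{ij}-1}$, so the two rational factors are mutually inverse and cancel, leaving $\der_j\Z^j\Gamma_i=\Gamma_j\Gamma_i$; the case $i=j$ is trivial. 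I expect the only real obstacle to be the first relation of (i): it is the sole genuinely computational step, needing the split $i<j$ versus $i>j$ and careful bookkeeping of the $\varepsilon_i$-shifts picked up by the coefficients as they pass $\der_i$. With that in hand, $\epsilon$ delivers the $\der$-analogue and (ii), (iii) are formal.
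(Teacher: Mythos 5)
Your parts (i) and (iii) are correct and coincide with what the paper's one-line proof intends: the two-case computation for $\Gamma_i\Z^j$ does produce the factor $\frac{\th_{ij}+1}{\th_{ij}}$ both for $i<j$ and for $i>j$ (I checked both), the transfer to $\Gamma_i\der_j$ via $\epsilon$ is legitimate because $\epsilon(\Gamma_i)=\Gamma_i$ and $\epsilon$ fixes $\Un$, and your cancellation argument for (iii) is exactly the paper's ``(iii) follows from (i)''.

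Part (ii), however, contains a step that fails as described. You promise to substitute \eqref{automq} and ``verify that the scalar factors and signs collapse to the identity'', but with \eqref{automq} as printed they do not collapse. Multiplicativity of $\q_i$ gives, for $j=i+1$,
\[
\q_i(\Gamma_{i+1})=\q_i(\der_{i+1})\,\q_i(\Z^{i+1})=(-\der_i)\,\Z^i=-\Gamma_i\,,
\]
so a sign survives, and for $j=i$, carrying the coefficient of $\q_i(\der_i)$ across $\Z^{i+1}$ (which sends $\th_{i,i+1}\mapsto\th_{i,i+1}-1$) leaves
\[
\q_i(\Gamma_i)=-\der_{i+1}\Z^{i+1}\,\frac{\th_{i,i+1}-1}{\th_{i,i+1}-2}\cdot\frac{\th_{i,i+1}}{\th_{i,i+1}-1}
=-\frac{\th_{i,i+1}}{\th_{i,i+1}-2}\,\Gamma_{i+1}\,,
\]
neither of which is $\Gamma_{s_i(j)}$. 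The culprit is not your strategy but the printed formulas themselves: \eqref{automq} does not even define an automorphism of $\Diff(n)$ --- for instance, applying it to the two sides of $\der_i\der_{i+1}=\frac{\th_{i,i+1}-1}{\th_{i,i+1}}\der_{i+1}\der_i$ (relation \eqref{relshdiffcompd}) and ordering yields $-\der_{i+1}\der_i\frac{\th_{i,i+1}-1}{\th_{i,i+1}-2}$ on the left but $-\der_{i+1}\der_i\frac{\th_{i,i+1}+1}{\th_{i,i+1}}$ on the right, contradicting freeness of the PBW basis. With a correctly normalized Zhelobenko action, for example
$\q_i(\Z^i)=-\Z^{i+1}\frac{\th_{i,i+1}}{\th_{i,i+1}-1}$, $\q_i(\Z^{i+1})=\Z^i$, $\q_i(\der_i)=-\der_{i+1}\frac{\th_{i,i+1}}{\th_{i,i+1}+1}$, $\q_i(\der_{i+1})=\der_i$,
all defining relations are preserved and your computation does collapse to $\q_i(\Gamma_i)=\Gamma_{i+1}$, $\q_i(\Gamma_{i+1})=\Gamma_i$. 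So to make (ii) sound you must first detect and repair this normalization (or prove (ii) by another route); as written, the verification you defer to is one that would in fact fail, which means the calculation was not actually carried out.
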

\begin{proof}
Formulas (i) and (ii) are obtained by a direct calculation; (iii) follows from (i).
\end{proof}

\vskip .2cm
We will use the following technical lemma whose proof consists in a direct calculation.
\begin{lemm}\label{lecommfamilies}
Let $\mathfrak{A}$ be an associative algebra. Assume that elements $\breve{h}_i$, $\breve{Z}_i,\breve{Z}_i\in\mathfrak{A}$, $i=1,\dots,n$, satisfy
$$\breve{h}_i\breve{h}_j=\breve{h}_j\breve{h}_i\ ,\ \breve{h}_i\breve{Z}^j=\breve{Z}^j(\breve{h}_i+\delta_i^j)\ ,\ \breve{h}_i\breve{Z}_j=\breve{Z}_j(\breve{h}_i-\delta_i^j)\ ,\ 
i,j=1,\dots, n\ .$$
Let $\breve{h}_{ij}:=\breve{h}_i-\breve{h}_j$ and 
$\breve{\psi}_i:=\prod_{k:k>i}\breve{h}_{ik}$, $\breve{\psi}_i':=\prod_{k:k<i}\breve{h}_{ik}$,  $i=1,\dots,n$.
Assume that the elements $\breve{h}_{ij}$ are invertible. Then 

\vskip .2cm
\noindent (i) the elements $\breve{Z}^i$ satisfy $$\breve{Z}^i\breve{Z}^j=\frac{\breve{h}_{ij}+1}{\breve{h}_{ij}}\breve{Z}^j\breve{Z}^i\ \ \text{for}\ \  i<j\ ,\ i,j=1,\dots,n$$ if and only if 
any of the two families $\{\breve{Z}^{\circ i}\}_{i=1}^n$ or $\{\breve{Z}'^{\circ i}\}_{i=1}^n$ is commutative, where 
\begin{equation}\label{twocommfamiliesa}  \breve{Z}^{\circ i}:=\psi_i\breve{Z}^i\ ,\ \breve{Z}'^{\circ i}:=\breve{Z}^i\psi_i'\ ;\end{equation}

\vskip .2cm
\noindent (ii) the elements $\breve{Z}_i$ satisfy $$\breve{Z}_i\breve{Z}_j=\frac{\breve{h}_{ij}-1}{\breve{h}_{ij}}\breve{Z}_j\breve{Z}_i\ \ \text{for}\ \  i<j\ ,\ i,j=1,\dots,n$$ if and only if 
any of the two families $\{\breve{Z}^{\circ}_i\}_{i=1}^n$ or $\{\breve{Z}'^{\circ}_i\}_{i=1}^n$ is commutative, where 
\begin{equation}\label{twocommfamiliesb}  \breve{Z}^{\circ}_i:=\psi_i\breve{Z}_i\ ,\ \breve{Z}'^{\circ}_i:=\breve{Z}_i\psi_i'\ .\end{equation}

\end{lemm}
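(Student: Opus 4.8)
The plan is to verify the lemma by direct computation, reducing the commutativity of each ``circled'' family to the stated quadratic relation by means of the weight relations. Throughout I write, for a rational function $p$ of the $\breve{h}_i$, the shift $p[\varepsilon_j]$ for the result of replacing $\breve{h}_j$ by $\breve{h}_j+1$ (and $p[-\varepsilon_j]$ for $\breve{h}_j\mapsto\breve{h}_j-1$). The hypotheses then read $p\,\breve{Z}^j=\breve{Z}^j\,p[\varepsilon_j]$ and $\breve{Z}^j p = p[-\varepsilon_j]\,\breve{Z}^j$, and likewise $p\,\breve{Z}_j=\breve{Z}_j\,p[-\varepsilon_j]$, $\breve{Z}_j p = p[\varepsilon_j]\,\breve{Z}_j$; since the $\breve{h}_i$ commute, these determine how any factor $\breve{h}_{ij}$ or $\frac{\breve{h}_{ij}\pm 1}{\breve{h}_{ij}}$ passes across a generator, and the invertibility of the $\breve{h}_{ij}$ guarantees the $\breve{\psi}_i,\breve{\psi}_i'$ are invertible.

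For part (i) I would first record two elementary shift facts for $i<j$. On the one hand $\breve{\psi}_j=\prod_{k>j}\breve{h}_{jk}$ involves no index $\le j$ besides $j$, so $\breve{\psi}_j[-\varepsilon_i]=\breve{\psi}_j$ and hence $\breve{Z}^i$ commutes with $\breve{\psi}_j$. On the other hand $\breve{\psi}_i=\prod_{k>i}\breve{h}_{ik}$ contains the factor $\breve{h}_{ij}$, and $\breve{\psi}_i[-\varepsilon_j]=\frac{\breve{h}_{ij}+1}{\breve{h}_{ij}}\breve{\psi}_i$, whence $\breve{Z}^j\breve{\psi}_i=\frac{\breve{h}_{ij}+1}{\breve{h}_{ij}}\breve{\psi}_i\breve{Z}^j$. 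Using these, for $i<j$ one gets
\begin{equation*}
\breve{Z}^{\circ i}\breve{Z}^{\circ j}=\breve{\psi}_i\breve{\psi}_j\,\breve{Z}^i\breve{Z}^j\ ,\qquad \breve{Z}^{\circ j}\breve{Z}^{\circ i}=\frac{\breve{h}_{ij}+1}{\breve{h}_{ij}}\,\breve{\psi}_i\breve{\psi}_j\,\breve{Z}^j\breve{Z}^i\ .
\end{equation*}
Because $\breve{\psi}_i\breve{\psi}_j$ is invertible and commutes with the scalar factor, equality of these two expressions is, after cancellation, exactly equivalent to $\breve{Z}^i\breve{Z}^j=\frac{\breve{h}_{ij}+1}{\breve{h}_{ij}}\breve{Z}^j\breve{Z}^i$. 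Every step is reversible, which yields the claimed ``if and only if'' for the family $\{\breve{Z}^{\circ i}\}$; the family $\{\breve{Z}'^{\circ i}\}$ is treated identically, the only change being that the rational factor is produced to the right of $\breve{Z}^j\breve{Z}^i$ and must be carried back across the generators via the weight relations, which leaves the relation unchanged.

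For part (ii) I would avoid repeating the computation and instead reduce to part (i) by the relabelling $\breve{h}_i\mapsto-\breve{h}_i$, $\breve{Z}^i:=\breve{Z}_i$: the hypotheses $\breve{h}_i\breve{h}_j=\breve{h}_j\breve{h}_i$ and $\breve{h}_i\breve{Z}_j=\breve{Z}_j(\breve{h}_i-\delta_i^j)$ turn precisely into the upper-index weight relations for the new Cartan generators, the quantities $\breve{\psi}_i,\breve{\psi}_i'$ change only by the central invertible scalars $(-1)^{n-i}$ and $(-1)^{i-1}$ (which cancel in any commutator), and the factor $\frac{\breve{h}_{ij}+1}{\breve{h}_{ij}}$ becomes $\frac{\breve{h}_{ij}-1}{\breve{h}_{ij}}$. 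Thus part (i) applied to the new generators gives part (ii) verbatim. The one point needing care in the whole argument—and the step I would double-check—is the side on which the rational factors $\frac{\breve{h}_{ij}\pm1}{\breve{h}_{ij}}$ land after each transposition, since passing them across $\breve{Z}^i$ or $\breve{Z}^j$ shifts their argument; keeping this bookkeeping consistent is what makes the two computed products directly comparable and the cancellation of $\breve{\psi}_i\breve{\psi}_j$ legitimate.
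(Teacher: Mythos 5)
Your proof is correct, and it is precisely the direct calculation that the paper alludes to but never writes out (the paper's entire justification is that the proof ``consists in a direct calculation''). All the shift bookkeeping checks out — in particular the key identities $\breve{Z}^i\breve{\psi}_j=\breve{\psi}_j\breve{Z}^i$ and $\breve{Z}^j\breve{\psi}_i=\frac{\breve{h}_{ij}+1}{\breve{h}_{ij}}\breve{\psi}_i\breve{Z}^j$ for $i<j$, and the legitimacy of cancelling the invertible factor $\breve{\psi}_i\breve{\psi}_j$ — and your reduction of part (ii) to part (i) via $\breve{h}_i\mapsto-\breve{h}_i$ is a tidy economy over repeating the computation.
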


\section{Zero weight  elements}\label{quadce}

\subsection{Quadratic central elements}

Define
\begin{equation}\label{defcentralelemDiff(n)11}
c_k:=\sum_j \frac{\partial e_k}{\partial\th_j}\Gamma_j-e_k\ .
\end{equation}
It follows from Lemma \ref{qga} that $\q_j(c_k)=c_k$ for all $j=1,\dots,n-1$ and $k=1,\dots,n$.

\begin{prop}\label{quceel}
The elements $c_k$, $k=1,\dots,n$, belong to the center of the ring $\Diff (n)$.
\end{prop}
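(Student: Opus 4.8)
The plan is to show that $c_k$ commutes with each of the generators $\th_i$, $\Z^i$ and $\der_i$ of $\Diff(n)$; since these generate $\Diff(n)$ as a ring over $\Un$, this suffices. It is convenient to treat all $k$ at once through the generating function
\begin{equation*}
C(t):=\sum_{k=0}^{n}c_k\,t^k=t\sum_{j}p_j(t)\,\Gamma_j-e(t)\ ,\qquad p_j(t):=\prod_{i\neq j}(1+\th_i t)\ ,
\end{equation*}
where we used $\sum_k\frac{\partial e_k}{\partial\th_j}t^k=\frac{\partial e(t)}{\partial\th_j}=t\,p_j(t)$; then $c_k$ is central for all $k$ if and only if $[C(t),g]=0$ for every generator $g$. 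Each $\Gamma_j=\der_j\Z^j$ has $\h(n)$-weight $0$ and therefore commutes with $\Un$, and so does $C(t)$; this already settles commutation with the $\th_i$.

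First I would establish $[\Z^1,C(t)]=0$. Moving $\Z^1$ through $C(t)$, the terms with $j\neq 1$ are governed by Lemma \ref{qga}(i), namely $\Gamma_j\Z^1=\frac{\th_{j1}+1}{\th_{j1}}\Z^1\Gamma_j$, together with the weight relations \eqref{relsweights} applied to the scalar factors $p_j(t)$ and $e(t)$. The remaining diagonal contribution $\Gamma_1\Z^1=\der_1\Z^1\Z^1$ is controlled by the cross relation \eqref{relshdiffcompdx2}, which reads $\Z^1\der_1=\sum_m\frac{1}{1-\th_{1m}}\Gamma_m-1$ and lets one rewrite $\Z^1\Gamma_1=\Z^1\der_1\Z^1$. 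Collecting all the coefficients, every term cancels and one obtains $\Z^1C(t)=C(t)\Z^1$. This verification is the computational heart of the argument and the step I expect to be the main obstacle: the off-diagonal pieces generate many contributions whose cancellation relies on the precise form of \eqref{relshdiffcompdx2} and on elementary identities such as $\frac{1}{1-\th_{1m}}\cdot\frac{\th_{m1}+1}{\th_{m1}}=\frac{1}{\th_{m1}}$.

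It then remains to propagate to the other generators using the two symmetries of $\Diff(n)$. Since $\q_i(c_k)=c_k$ (established just before the statement), we have $\q_i(C(t))=C(t)$ for all $i$. By \eqref{automq} one has $\q_i(\Z^{i+1})=\Z^i$, i.e.\ $\Z^{i+1}=\q_i^{-1}(\Z^i)$, so applying the automorphism $\q_i^{-1}$ to $[\Z^i,C(t)]=0$ yields $[\Z^{i+1},C(t)]=0$; starting from $\Z^1$ and iterating over $i=1,\dots,n-1$ gives $[\Z^i,C(t)]=0$ for every $i$. Finally, the anti-involution $\epsilon$ of \eqref{antiautom} fixes $\Un$ pointwise and satisfies $\epsilon(\Gamma_j)=\epsilon(\Z^j)\epsilon(\der_j)=\der_j\varphi_j^{-1}\varphi_j\Z^j=\Gamma_j$, so $\epsilon(C(t))=C(t)$. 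Applying $\epsilon$ to $\Z^iC(t)=C(t)\Z^i$ and using that $C(t)$ commutes with $\varphi_i^{-1}\in\Un$ gives $\der_i\,C(t)=C(t)\,\der_i$. Hence $C(t)$, and therefore each $c_k$, lies in the center of $\Diff(n)$.
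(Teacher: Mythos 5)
Your proof is correct, and its computational core is the same as the paper's: the paper likewise passes to a generating function, $c(t)=u(t)e(t)+1$ with $u(t)=t\sum_i\frac{1}{1+\th_it}\Gamma_i-1$ (your $C(t)$ is exactly $c(t)-1$, since $c_0=-1$), commutes the generators through it using the weight relations, Lemma \ref{qga}(i) and \eqref{relshdiffcompdx2}, and then settles the $\der_i$ via the anti-automorphism \eqref{antiautom}, just as you do. The genuine difference is in how you cover all the $\Z^i$: the paper runs the commutation computation uniformly for every $\Z^j$ — in generating-function form the general $j$ costs nothing extra, the coefficient of $\Gamma_k$ in $\Z^ju(t)$ being $\frac{t(1+t\th_j)}{(1+t\th_k)(1+(\th_j-1)t)}$ for all $k$, including $k=j$ — whereas you verify only $[\Z^1,C(t)]=0$ and transport it to the other generators by the Zhelobenko automorphisms, using $\q_i(c_k)=c_k$ and $\q_i(\Z^{i+1})=\Z^i$. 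That use of the symmetry is legitimate (the paper records $\q_j(c_k)=c_k$ just before the proposition but never exploits it in the proof) and it shrinks the direct computation to a single generator, at the price of invoking the automorphism property \eqref{automq}; the paper's uniform calculation is self-contained and, thanks to the generating function, no harder. For completeness, the cancellation you left implicit does close: for $k\neq 1$ the coefficient of $\Gamma_k\Z^1$ in $\Z^1C(t)$ is $\frac{tp_k(t)}{(\th_{k1}+1)(1+\th_1t)}\bigl[(1+(\th_1-1)t)\th_{k1}+1+\th_kt\bigr]=tp_k(t)$, and the $\Gamma$-free part is $-tp_1(t)-\frac{1+(\th_1-1)t}{1+\th_1t}\,e(t)=-e(t)$, matching $C(t)\Z^1$ term by term.
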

\begin{proof}
We introduce the generating functions $e(t):=\sum_{k=0}^ne_kt^k=\prod_i(1+\th_i t)$
and 
$$c(t):=\sum_{k=1}^nc_kt^k=u(t)e(t)+1\ \ \text{with}\ \ u(t):=t\sum_i\frac{1}{1+\th_i t}\Gamma_i-1\ .$$
The expression $u(t)$ is introduced for convenience; the denominator $1+\th_i t$, not defined in the ring $\Diff(n)$, vanishes in the product $u(t)e(t)$.
We check that the polynomial $c(t)$ is central. We have 
\begin{equation}\label{quacen1}
\Z^je(t)=\frac{1+(\th_j-1)t}{1+\th_jt}e(t)\Z^j\ .
\end{equation}
$$\text{Next,}\ \  \Z^ju(t)=\biggl(\sum_{k:k\neq j}\frac{t}{1+\th_k t}\frac{\th_{kj}}{\th_{kj}+1}
\Gamma_k+
\frac{t}{1+(\th_j-1)t}\Bigl(\sum_k\frac{1}{1-\th_{jk}}\Gamma_k-1\Bigr)-1\biggr)\Z^j\ .\hspace{2cm}$$
The coefficient of $\Gamma_k$ in this expression is equal to
$\frac{t(1+t \th_j)}{(1+t\th_k)(1+(\th_j-1)t)}$
for both $k\neq j$ and $k=j$. Therefore, 
\begin{equation}\label{quacen2}
\Z^ju(t)=\frac{1+\th_jt}{1+(\th_j-1)t}u(t)\Z^j\ .
\end{equation}
Combining \eqref{quacen1} and \eqref{quacen2} we find that $c(t)$ commutes with $\Z^j$, $j=1,\dots,n$. For $\der_j$ one can either make a parallel calculation or use the anti-automorphism \eqref{antiautom}.
\end{proof}

\begin{lemm}\label{gammaandc}
(i) The matrix $V$, defined by $V^k_j:=\frac{\partial e_j}{\partial \th_k}$, is invertible. 
Its inverse is 
\begin{equation}\label{defV-1matrix11}
(V^{-1})^j_i=\frac{(-1)^{j-1}\th_i^{n-j}}{\chi_i}\ ,
\end{equation}
where the elements $\chi_i$ are defined in \eqref{defphichi}. 
 
\vskip .2cm\noindent 
(ii) We have
\begin{equation}\label{znachc}
\Gamma_j = \frac{\th_j^n-\th_j^nc(-\th_j^{-1})}{\chi_j} = \sum_k (V^{-1})_j^k c_k + \frac{\th_j^n}{\chi_j} \ .
\end{equation}
\end{lemm}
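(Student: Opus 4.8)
First I would prove part (i), the invertibility of $V$ and the formula for its inverse. The matrix entry $V^k_j = \frac{\partial e_j}{\partial \th_k}$ is the elementary symmetric polynomial of degree $j-1$ in the variables $\{\th_1,\dots,\th_n\}\setminus\{\th_k\}$, since differentiating $e_j$ with respect to $\th_k$ removes exactly the terms containing $\th_k$ and strips off that factor. To verify that the claimed matrix $(V^{-1})^j_i = \frac{(-1)^{j-1}\th_i^{n-j}}{\chi_i}$ is a genuine inverse, I would compute the product $\sum_j V^k_j (V^{-1})^j_i$ and show it equals $\delta^k_i$. The natural device is the generating-function identity: setting $t=-\th_i^{-1}$ in the factored form $e(t)=\prod_m(1+\th_m t)$ makes the product vanish unless the factor $(1+\th_i t)$ is absent. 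Concretely, I would use that $\sum_j \frac{\partial e_j}{\partial\th_k}(-\th_i^{-1})^{j-1}$ reassembles, up to sign and a power of $\th_i$, into $\prod_{m\neq k}(1-\th_m\th_i^{-1})$, which is $\prod_{m\neq k}\th_{im}/\th_i^{n-1}$ and vanishes precisely when $k\neq i$ because the factor $\th_{ii}=0$ appears; when $k=i$ the surviving product is $\chi_i/\th_i^{n-1}$ (times a sign), cancelling the prefactor to give $1$. This is a clean polynomial-identity verification with no structural obstacle.

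Next I would prove part (ii). The right-hand equality $\sum_k (V^{-1})_j^k c_k + \frac{\th_j^n}{\chi_j}$ follows once the left-hand formula for $\Gamma_j$ is established, simply by inverting the linear system: from the definition \eqref{defcentralelemDiff(n)11}, $c_k = \sum_j V^k_j \Gamma_j - e_k$, so applying $V^{-1}$ gives $\Gamma_j = \sum_k (V^{-1})^k_j(c_k + e_k) = \sum_k (V^{-1})^k_j c_k + \sum_k (V^{-1})^k_j e_k$, and I would only need to check that $\sum_k (V^{-1})^k_j e_k = \th_j^n/\chi_j$. Using part (i) this last sum is $\sum_k \frac{(-1)^{k-1}\th_j^{n-k}}{\chi_j} e_k = \frac{1}{\chi_j}\sum_k (-1)^{k-1}\th_j^{n-k} e_k$, and recognizing $\sum_{k=0}^n (-1)^k \th_j^{n-k} e_k = \prod_m(\th_j - \th_m)$ — which vanishes since the $m=j$ factor is zero — I would isolate the $k=0$ term to obtain exactly $\th_j^n/\chi_j$.

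The substantive step is therefore the left-hand equality $\Gamma_j = \frac{\th_j^n - \th_j^n c(-\th_j^{-1})}{\chi_j}$, connecting $\Gamma_j$ to the central generating function $c(t)$. Here I would use the generating function $c(t) = u(t)e(t)+1$ with $u(t) = t\sum_i \frac{1}{1+\th_i t}\Gamma_i - 1$ from the proof of Proposition \ref{quceel}. The key manoeuvre is to evaluate $\th_j^n c(-\th_j^{-1})$: substituting $t = -\th_j^{-1}$ into $e(t)=\prod_i(1+\th_i t)$ kills every factor except through the $\frac{1}{1+\th_i t}$ poles in $u(t)$, so only the $i=j$ term of the sum survives in the product $u(t)e(t)$, leaving a finite residue proportional to $\Gamma_j\prod_{i\neq j}(1+\th_i t)\big|_{t=-\th_j^{-1}}$. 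I expect this residue extraction to be the main obstacle, since one must handle the formal substitution $t=-\th_j^{-1}$ carefully: the expression $c(t)$ is a genuine polynomial in $t$ (the poles cancel in $u(t)e(t)$), so the evaluation is legitimate, but tracking the surviving factor $\prod_{i\neq j}(1+\th_i t)|_{t=-\th_j^{-1}} = \prod_{i\neq j}\th_{ji}(-\th_j^{-1})^{n-1}$ and reconciling it with $\chi_j = \psi_j\psi_j' = \prod_{i\neq j}\th_{ji}$ up to sign requires care. Once the residue is identified as $\th_j^n - \chi_j\Gamma_j$ (so that $\th_j^n c(-\th_j^{-1}) = \th_j^n - \chi_j\Gamma_j$), solving for $\Gamma_j$ yields the stated formula.
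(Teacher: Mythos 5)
Your proposal is correct, and both of its ingredients are in the paper: inverting the linear system $c_k=\sum_j V^j_k\Gamma_j-e_k$ via part (i) and the vanishing $e(-\th_j^{-1})=0$ is the paper's main argument for (ii), while your evaluation of $c(t)=u(t)e(t)+1$ at $t=-\th_j^{-1}$ is exactly the paper's parenthetical ``another proof''. (For part (i) the paper simply cites \cite{OP}, Proposition 4, so your generating-function verification supplies a proof the paper omits.) The one thing you missed is that the two equalities in \eqref{znachc} do not need separate arguments: since $c(t)=\sum_k c_k t^k$, one has $-\th_j^n c(-\th_j^{-1})=\sum_k(-1)^{k-1}\th_j^{n-k}c_k=\chi_j\sum_k(V^{-1})^k_j c_k$ term by term, so the middle expression of \eqref{znachc} is just the right-hand one repackaged in generating-function form. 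Consequently, what you single out as ``the substantive step'' is pure notation once the linear system is inverted, and the residue computation, though valid, is redundant. A small slip there: each factor $(1+\th_i t)\vert_{t=-\th_j^{-1}}$ equals $\th_{ji}/\th_j$, so $\prod_{i\neq j}(1+\th_i t)\vert_{t=-\th_j^{-1}}=\chi_j/\th_j^{n-1}$ with no sign $(-1)^{n-1}$; your final identity $\th_j^n c(-\th_j^{-1})=\th_j^n-\chi_j\Gamma_j$ is nevertheless correct.
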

\noindent{\it{Proof.}} (i) See, e.g. \cite{OP}, Proposition 4.

\noindent
(ii) Since $c_k=\sum_jV_k^j\Gamma_j-e_k$, we have
$$\Gamma_j = \sum_k(V^{-1})_j^k(c_k+e_k)=\frac{1}{\chi_j}\sum_k(-1)^{k-1}\th_j^{n-k}(c_k+e_k)=-\frac{\th_j^n}{\chi_j}(c(-\th_j^{-1})+e(-\th_j^{-1})-1)\ .$$
Since $e(-\th_j^{-1})=0$, we obtain \eqref{znachc}. Another proof: evaluate $c(t)$ at $t=-\th_j^{-1}$.\hfill$\square$

\subsection{Subring of $\hn$-weight zero elements}
Let $\Diff^0(n)$ be the subring of $\Diff(n)$ formed by the elements of $\h(n)$-weight zero.

\begin{lemm}\label{zeroweightelementscenter}
$\Diff^0(n)$ is freely generated over $\Un$ by the (commutative) elements $c_1,\dots,c_n$, 
\begin{equation}\label{zeroweightsubring}
\Diff^0(n) = \Un[c_1,\dots,c_n] \ .
\end{equation}
\end{lemm}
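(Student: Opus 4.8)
The plan is to describe $\Diff^0(n)$ as an explicit free $\Un$-module, to exhibit the commuting family $\{\Gamma^{\mathbf k}\}$ as a second $\Un$-basis, and to transport the conclusion to the $c_k$ by Lemma \ref{gammaandc}.

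First I would combine the $\h(n)$-weight grading with the PBW property. By \eqref{relsweights} the generator $\Z^j$ has weight $\varepsilon_j$ (i.e. $\th_i\Z^j=\Z^j(\th_i+\delta_i^j)$) and $\der_j$ has weight $-\varepsilon_j$, so the PBW monomial \eqref{UhbasisDiff} with $\der$-exponents $\mathbf l$ and $\Z$-exponents $\mathbf k$ has weight $\sum_i(k_i-l_i)\varepsilon_i$; since the $\varepsilon_i$ are linearly independent it has weight zero precisely when $\mathbf l=\mathbf k$. Thus $\Diff^0(n)$ is free over $\Un$ with basis $\{m_{\mathbf k}\}_{\mathbf k\in\mathZ^n}$, where $m_{\mathbf k}=(\der_n)^{k_n}\cdots(\der_1)^{k_1}(\Z^1)^{k_1}\cdots(\Z^n)^{k_n}$ has filtration degree $2|\mathbf k|$, $|\mathbf k|:=\sum_i k_i$. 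The elements $\Gamma_i=\der_i\Z^i$ lie in $\Diff^0(n)$ and commute by Lemma \ref{qga}(iii), so the monomials $\Gamma^{\mathbf k}:=\Gamma_1^{k_1}\cdots\Gamma_n^{k_n}$ are well defined and belong to $\Diff^0(n)$. Formula \eqref{znachc} expresses the $\Gamma_j$ as an invertible affine-linear combination of the $c_k$ over $\Un$ (the matrix $V$ being invertible), whence $\Un[\Gamma_1,\dots,\Gamma_n]=\Un[c_1,\dots,c_n]$ and the $c_k$ are algebraically independent over $\Un$ if and only if the $\Gamma_j$ are. So the whole statement reduces to proving that $\{\Gamma^{\mathbf k}\}_{\mathbf k}$ is a $\Un$-basis of $\Diff^0(n)$.

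To prove the latter I would pass to the associated graded $\mathrm{gr}\,\Diff(n)$ of the filtration \eqref{filtrationDiff}, in which the inhomogeneous $-1$ of \eqref{relshdiffcompdx2} is discarded. The weight-zero part in degree $2d$ is then free over $\Un$ on $\{\bar m_{\mathbf k}:|\mathbf k|=d\}$, of rank $\binom{d+n-1}{n-1}$, which is exactly the number of $\Gamma^{\mathbf k}$ with $|\mathbf k|=d$. Hence it is enough to show that the transition matrix expressing the leading symbols $\bar\Gamma^{\mathbf k}$ in the basis $\bar m_{\mathbf k}$ (degree by degree) is invertible over $\Un$: invertibility gives simultaneously that the $\bar\Gamma^{\mathbf k}$ span (so $\Diff^0(n)=\Un[\Gamma]$ after the standard lift from $\mathrm{gr}$) and that they are linearly independent (so the $\Gamma_j$, hence the $c_k$, are free). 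The relevant reordering rules in $\mathrm{gr}\,\Diff(n)$ are \eqref{relshdiffcompx}, \eqref{relshdiffcompd}, \eqref{relshdiffcompdx} and, crucially, the homogenised diagonal relation $\bar\Z^i\bar\der_i=\sum_j\frac{1}{1-\th_{ij}}\bar\der_j\bar\Z^j$.

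The main obstacle is precisely this invertibility within each fixed total degree. For distinct indices the computation is clean: for $i<j$ one gets $\bar\Gamma_i\bar\Gamma_j=\frac{\th_{ij}-1}{\th_{ij}}\bar m_{\mathbf k}$ for the multi-index $\mathbf k$ equal to $1$ in slots $i,j$, a single basis vector with invertible coefficient. The mixing arises only from colliding equal indices through the diagonal relation, whose $j=i$ coefficient equals $1$; already in degree two it is governed by the Cauchy-type matrix $\bigl(\tfrac{1}{1-\th_{ij}}\bigr)_{i,j=1}^{n}$, whose determinant is a product of the invertible elements $\th_{ij}$ and $1-\th_{ij}$ of $\Un$, hence invertible. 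The remaining work is to control the higher collisions: since the defect $|\mathbf k|-\#\{i:k_i\neq 0\}$ is not strictly decreased by every mixing term, a naive triangularity fails, and one must instead check, by an induction on total degree that folds the distinct-index identities back into the collision terms (or, in the spirit of the generating-function computation of Proposition \ref{quceel}, by packaging the $\Gamma_i$ into the series $u(t)$), that the determinant in each degree never acquires a vanishing factor. Granting this, the filtered lift from $\mathrm{gr}\,\Diff(n)$ back to $\Diff(n)$ yields $\Diff^0(n)=\Un[c_1,\dots,c_n]$ freely.
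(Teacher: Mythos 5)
Your preparatory reductions are correct and match the paper's setup: the weight-zero part $\Diff^0(n)$ is free over $\Un$ on the PBW monomials $m_{\mathbf k}=(\der_n)^{k_n}\cdots(\der_1)^{k_1}(\Z^1)^{k_1}\cdots(\Z^n)^{k_n}$, and by \eqref{znachc} the invertibility of $V$ lets you trade the $c_k$ for the commuting $\Gamma_j$, so the lemma is equivalent to $\{\Gamma^{\mathbf k}\}$ being a $\Un$-basis. But the step to which you have reduced everything --- invertibility, degree by degree, of the transition matrix from $\{\bar\Gamma^{\mathbf k}\}_{|\mathbf k|=d}$ to $\{\bar m_{\mathbf k}\}_{|\mathbf k|=d}$ in $\mathrm{gr}\,\Diff(n)$ --- is exactly the hard content of the lemma, and you do not prove it: you correctly observe that the homogenized relation $\bar\Z^i\bar\der_i=\sum_j\frac{1}{1-\th_{ij}}\bar\der_j\bar\Z^j$ creates mixing \emph{within} a fixed degree (so no naive triangularity is available), and then write ``granting this''. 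Your Cauchy-matrix remark only concerns the matrix relating the two orderings $\Z^i\der_i$ and $\der_j\Z^j$, i.e.\ the lowest degree; already in $\Gamma$-degree $3$ the monomials $\Gamma_i^2\Gamma_j$ and $\Gamma_i\Gamma_j^2$ genuinely feed into each other's PBW expansions, and no mechanism for controlling the resulting determinants is offered. So the proposal is incomplete precisely at its crux.

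The paper avoids this problem by running the expansion in the opposite direction and exploiting \emph{centrality}. It shows by induction that every weight-zero PBW monomial $B=\der_n^{\alpha_n}\cdots\der_1^{\alpha_1}(\Z^1)^{\alpha_1}\cdots(\Z^n)^{\alpha_n}$ is a $\Un$-polynomial in $c_1,\dots,c_n$ of total degree at most $\alpha_1+\dots+\alpha_n$: one replaces an innermost $\der_i\Z^i=\Gamma_i$ by the affine combination \eqref{znachc} of the $c_k$, and because the $c_k$ are central they pass freely through the remaining $\der$'s and $\Z$'s (only their $\Un$-coefficients get shifted by the weight relations), so the induction never produces the same-degree cross terms that obstruct your graded computation. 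This yields the triangularity \eqref{tensornonzero} of the tensor $T$; since the $c$-monomials and the PBW monomials of degree at most $d$ are indexed by the same finite set, the PBW property forces each finite triangular block of $T$ to be invertible over the commutative ring $\Un$ (a square matrix with a one-sided inverse over a commutative ring is invertible), which gives spanning and $\Un$-independence simultaneously. If you want to keep your architecture, the cleanest repair is to import this idea: perform the induction with the central elements $c_k$ (equivalently, with each $\Gamma_i$ immediately rewritten through \eqref{znachc}) rather than with the $\Gamma_i$ themselves, so that no reordering inside a fixed degree is ever needed.
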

\begin{proof}
Clearly, the elements $c_1,\dots,c_n$ are in $\Diff^0(n)$. For any $i \in \{1,\dots,n\}$ and $\alpha_i \in \mathZ$ the element $A = \der_i^{\alpha_i} (\Z^i)^{\alpha_i}$ is in $\Diff^0(n)$.  Lemma \ref{gammaandc} implies that $A$ is in the $\Un$-module generated by the elements $c_1,\dots,c_n$. By induction, there exists a tensor with components $T^{\alpha_1,\dots,\alpha_n}_{\beta_1,\dots,\beta_n} \in \Un$, $(\alpha_1,\dots,\alpha_n,\beta_1,\dots,\beta_n) \in \mathZ^{2n}$ such that for any $(\alpha_1,\dots,\alpha_n) \in \mathZ^n$, the element
$$B = \der_n^{\alpha_n} \dots \der_1^{\alpha_1} (\Z^1)^{\alpha_1} \dots (\Z^n)^{\alpha_n}$$
can be written as $B = T^{\alpha_1,\dots,\alpha_n}_{\beta_1,\dots,\beta_n} c_1^{\beta_1} \dots c_n^{\beta_n} $.
We see from equation \eqref{znachc} that
\begin{equation}\label{tensornonzero}
T^{\alpha_1,\dots,\alpha_n}_{\beta_1,\dots,\beta_n} \neq 0 \ \Longrightarrow \ \alpha_1+\dots+\alpha_n \geq \beta_1+\dots+\beta_n \ .
\end{equation}
The ring $\Diff(n)$ has the PBW property in the form \eqref{UhbasisDiff}. It implies together with \eqref{tensornonzero} that the tensor $T$ is invertible. This implies the validity of \eqref{zeroweightsubring}.
\end{proof}

\section{Isomorphism between rings of fractions}\label{isowe}

It follows from the results of \cite{KO4} that the ring $\Diff(n)$ has no zero divisors. Let $\text{S}_n$ be the multiplicative set generated by $\Z^j$ and $\th_{ij}+k$, $i,j=1,\dots,n$ and $k \in \mathbb{Z}$.

\subsection{Localizations of $\Diff(n)$ and $W_n$}\label{secisoloc1}
The (left) Ore condition for a multiplicative set $S$ of a domain $\mathcal{D}$ reads:  
$S \, r \cap \mathcal{D} \, s \neq \emptyset$, $(r,s) \in \mathcal{D} \times S$.
The set $\text{S}_\Z$ satisfies both left and right Ore conditions (see, e.g., \cite{A} for definitions): say, for the 
left Ore conditions 
we have to check only that for any $\Z^k$ and a monomial $m=\der_{i_1}\dots \der_{i_A}\Z^{j_1}\dots \Z^{j_B}$ there exist $\tilde{s}\in\text{S}_\Z$ and $\tilde{m}\in\Diff(n)$ such that 
$\tilde{s}m=\tilde{m}\Z^k$. The structure of the commutation relations \eqref{relshdiffcompx}-\eqref{relshdiffcompdx2}) shows that one can choose 
$\tilde{s}=(\Z^k)^{\nu}$ with sufficiently large $\nu$. Denote by $\text{S}_\Z^{-1}\Diff (n)$ the localization of the ring $\Diff (n)$ with respect to the set $\text{S}_\Z$. 

\vskip .2cm
\noindent{\bf The isomorphism.} Let $\text{W}_n$ be the Weyl algebra of rank $n$ (see \textbf{Section \ref{WeylAlgebraSec}}). Let $\text{T}$ be the multiplicative set generated by $X^jD_j-X^kD_k+\ell$, $1\leq j<k\leq n$, $\ell\in\mathbb{Z}$, and $X^j$, $j=1,\dots,n$. As above, one can check that the set $\text{T}$ is a left and right Ore set in $\W_n$.
Denote by $\text{T}^{-1}\text{W}_n$ the localization of $\text{W}_n$ relative to the set $\text{T}$.

Let $a_1,\dots,a_n$ be a family of commuting variables. We shall use the following notation : 
$$\begin{array}{c}\mathcal{H}_j:=D_jX^j\ ,\ \mathcal{H}_{jk}:=\mathcal{H}_j-\mathcal{H}_k\ ,\\[.1em]
\Psi'_j:=\prod_{k:k<j}\mathcal{H}_{jk}\ ,\ \Psi_j:=\prod_{k:k>j}\mathcal{H}_{jk}\ ,\\[.3em]
\mathbf{C}(t):=\sum_{k=1}^n a_kt^k\ ,\ \Upsilon_i:=\mathcal{H}_i^n \left(1-\mathbf{C}(-\mathcal{H}_i^{-1})\right)\ .
\end{array}$$
Here $\mathbf{C}$ is a polynomial of degree $n$ so the elements $\Upsilon_i$ are polynomials in $\mathcal{H}_i$, $i=1,\dots,n$.

\begin{theo}\label{isoloc}
The ring $\text{S}_n^{-1}\Diff (n)$ is isomorphic to the ring $\K[a_1,\dots,a_n]\otimes \text{T}^{-1}\text{W}_n$.
\end{theo}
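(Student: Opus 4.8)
The plan is to exhibit the isomorphism explicitly together with its inverse, and then to check that the two maps are mutually inverse homomorphisms; the passage to localizations is handled afterwards through the Ore property recalled in Section \ref{secisoloc1}. The guiding dictionary is $\th_i\leftrightarrow\mathcal{H}_i=D_iX^i$, the central elements $c_k\leftrightarrow a_k$, and, crucially, the commuting family $\psi_i\Z^i\leftrightarrow X^i$ furnished by Lemma \ref{lecommfamilies}. Concretely, I would define $\xi\colon\K[a_1,\dots,a_n]\otimes\text{T}^{-1}\W_n\to\text{S}_n^{-1}\Diff(n)$ on generators by
\[
\xi(a_k)=c_k,\qquad \xi(X^i)=\psi_i\Z^i,\qquad \xi(D_i)=\th_i\,(\psi_i\Z^i)^{-1},
\]
and, in the opposite direction, $\phi\colon\text{S}_n^{-1}\Diff(n)\to\K[a_1,\dots,a_n]\otimes\text{T}^{-1}\W_n$ by
\[
\phi(\th_i)=\mathcal{H}_i,\qquad \phi(\Z^i)=\Psi_i^{-1}X^i,\qquad \phi(\der_i)=\frac{\Upsilon_i}{\Psi_i\Psi'_i}\,(X^i)^{-1}\Psi_i .
\]

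Second, I would verify that each assignment respects the defining relations. For $\phi$, the relations \eqref{relshdiffcompx} and \eqref{relshdiffcompd} reduce, \emph{via} Lemma \ref{lecommfamilies}, to the commutativity of the $X^i$ and of the $D_i$ in $\W_n$; the weight relations \eqref{relsweights} reduce to the elementary identity $\mathcal{H}_iX^i=X^i(\mathcal{H}_i+1)$; and the mixed relations \eqref{relshdiffcompdx}--\eqref{relshdiffcompdx2} are governed by $\Gamma_i=\der_i\Z^i$ together with the expression \eqref{znachc} of $\Gamma_i$ in terms of the central $c_k$. This last point is exactly where the polynomial $\Upsilon_i$ and the generating function $\mathbf{C}(t)$ enter: one computes $\phi(\Gamma_i)=\Upsilon_i/(\Psi_i\Psi'_i)$, and $\phi(c_k)=a_k$ then follows because the definition $\Upsilon_i=\mathcal{H}_i^n\bigl(1-\mathbf{C}(-\mathcal{H}_i^{-1})\bigr)$ is precisely the image of \eqref{znachc} under $\th_i\mapsto\mathcal{H}_i$, $c_k\mapsto a_k$ (Lemma \ref{gammaandc}). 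For $\xi$ one checks the Weyl relations directly: the $X^i$ commute and the $D_i$ commute by Lemma \ref{lecommfamilies}, $[D_i,X^i]=1$ follows from \eqref{relsweights}, the remaining commutators from \eqref{relshdiffcompx}, and the $a_k=\xi(a_k)=c_k$ are central by Proposition \ref{quceel}. Since $\phi$ sends the generators $\Z^j,\ \th_{ij}+k$ of $\text{S}_n$ to units (as $X^j$ is invertible in $\text{T}^{-1}\W_n$ and $X^jD_j-X^kD_k+\ell=\mathcal{H}_{jk}+\ell$ is invertible), and $\xi$ likewise sends $X^j$ and $\mathcal{H}_{ij}+\ell$ to units, both maps descend to the stated localizations.

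Third, I would check $\phi\circ\xi=\mathrm{id}$ and $\xi\circ\phi=\mathrm{id}$ on generators, which is immediate: for instance $\phi(\xi(X^i))=\phi(\psi_i\Z^i)=\Psi_i\Psi_i^{-1}X^i=X^i$, $\phi(\xi(D_i))=\mathcal{H}_i(X^i)^{-1}=D_i$, while $\xi(\phi(\Z^i))=\psi_i^{-1}\psi_i\Z^i=\Z^i$ and $\xi(\phi(\der_i))=\Gamma_i(\Z^i)^{-1}=\der_i$, using that $\Gamma_i$ has $\hn$-weight zero and hence commutes with $\chi_i=\psi_i\psi'_i$. As both $\phi$ and $\xi$ are ring homomorphisms agreeing with the identity on generating sets of the respective algebras, they are mutually inverse isomorphisms, which proves the theorem.

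I expect the main obstacle to be the single mixed relation \eqref{relshdiffcompdx2}, namely $\Z^i\der_i=\sum_j(1-\th_{ij})^{-1}\der_j\Z^j-1$: all the global information about the center is concentrated there, and forcing $\phi$ to respect it requires the exact form \eqref{znachc} of $\Gamma_j$ as a polynomial in the $c_k$, matched against the definition of $\Upsilon_i$ and $\mathbf{C}$, so that the generating-function computation underlying Proposition \ref{quceel} transfers verbatim to the Weyl side. A secondary, more bookkeeping difficulty is to confirm that $\text{S}_n$ and $\text{T}$ are genuine (two-sided) Ore sets and that the factors $\Psi_i^{-1}$, $(X^i)^{-1}$ and $(\psi_i\Z^i)^{-1}$ really lie in the respective localizations, so that the displayed formulas define honest elements and the universal property applies.
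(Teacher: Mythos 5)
Your proposal is correct, and the isomorphism you construct is exactly the paper's $\mu$ (Theorem \ref{isoloc} together with the explicit formulas of Lemma \ref{lemmaiso}), except that you use the commuting family $\psi_i\Z^i$ where the paper uses $\Z^i\psi_i'$ — both choices are legitimate by Lemma \ref{lecommfamilies}, and this is why your formula for $\phi(\der_i)$ differs from \eqref{fromdifftow} in the placement of the $\Psi$-factors. The organization of the argument, however, is genuinely different, and the difference is where the technical core sits. The paper never checks that any map respects the oscillator relations \eqref{relshdiffcompdx}--\eqref{relshdiffcompdx2}: it passes to the generating set $\{\th_i,\Z'^{\circ i},c_i\}$ of $\text{S}_n^{-1}\Diff(n)$, asserts that the \emph{complete} set of defining relations in these generators is \eqref{defrebd}, observes that $\{\mathcal{H}_i,X^i,a_i\}$ obeys the identical presentation on the Weyl side, and reads off the isomorphism; the oscillator relation is absorbed into the invertibility of the change of generators (Lemmas \ref{gammaandc} and \ref{zeroweightelementscenter}). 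Your two-sided verification avoids relying on that completeness-of-presentation claim (which the paper states without proof), so it is in this respect more self-contained; the price is the computation you yourself flag as the main obstacle, and there you must avoid a circularity: while proving that $\phi$ is a homomorphism you may not yet invoke $\phi(\Gamma_i)=\Upsilon_i/(\Psi_i\Psi_i')$ or $\phi(c_k)=a_k$, since these presuppose what is being proved. What must actually be verified is the identity in $\K[a_1,\dots,a_n]\otimes \text{T}^{-1}\W_n$ obtained by substituting your formulas into \eqref{relshdiffcompdx2}, namely
\[
\frac{\Upsilon_i[-\varepsilon_i]}{(\Psi_i\Psi_i')[-\varepsilon_i]}+1
=\sum_j\frac{1}{1-\mathcal{H}_{ij}}\,\frac{\Upsilon_j}{\Psi_j\Psi_j'}\ ,
\]
where $[-\varepsilon_i]$ denotes the substitution $\mathcal{H}_i\mapsto\mathcal{H}_i-1$. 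This identity does hold: it says that the residues of the rational function $\Upsilon(z)\bigl((z-\mathcal{H}_i+1)\prod_k(z-\mathcal{H}_k)\bigr)^{-1}$, with $\Upsilon(z):=z^n\bigl(1-\mathbf{C}(-z^{-1})\bigr)$ monic of degree $n$, sum to $1$ — the same partial-fraction mechanism as in Lemma \ref{idesy0} and Proposition \ref{quceel}, now on the Weyl side. Once this computation (and its easier analogues for \eqref{relshdiffcompdx}, which reduce to $\mathcal{H}_{ij}(1-\mathcal{H}_{ij})=-\mathcal{H}_{ij}(\mathcal{H}_{ij}-1)$-type cancellations) is written out, your proof is complete.
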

\begin{proof}
The knowledge of the central elements (Proposition \ref{quceel}) allows to exhibit a generating set of the ring $\text{S}_n^{-1}\Diff (n)$ in which the required isomorphism is quite transparent. 

\vskip .2cm
In the localized ring $\text{S}_n^{-1}\Diff (n)$ we can use the set of generators $\{\th_i,\Z^i,\Gamma_i\}_{i=1}^n$ instead of $\{\th_i,\Z^i,\der_i\}_{i=1}^n$. 
By Lemma \ref{gammaandc} (ii), the set $\{\th_i,\Z^i,c_i\}_{i=1}^n$ is also a generating set. Finally, $\b_{\text{D}}:=\{\th_i,\Z^{\circ i},c_i\}_{i=1}^n$, 
where $\Z'^{\circ i}:=\Z^i\psi_i'\ ,\ i=1,\dots,n$, is a generating set of the localized ring 
$\text{S}_n^{-1}\Diff (n)$ as well. It follows from Lemma \ref{lecommfamilies} that the family $ \{\Z'^{\circ i}\}_{i=1}^n$ is commutative. The complete set of the 
defining relations for the generators from the set $\b_{\text{D}}$ reads 
\begin{equation}\label{defrebd}\begin{array}{l}\th_i\th_j=\th_j\th_i\ ,\ \th_i \Z'^{\circ j}=\Z'^{\circ j}(\th_i+\delta_i^j)\ ,\ \Z'^{\circ i}\Z'^{\circ j}=\Z'^{\circ j}\Z'^{\circ i}\ ,\ i,j=1,\dots,n\ ,\\[.1em]
c_i\ \text{are central}\ ,\ i=1,\dots,n\ .\end{array}\end{equation}

In the localized ring $\K[a_1,\dots,a_n]\otimes \text{T}^{-1}\text{W}_n$ we can pass to the set of generators $\b_{\text{W}}:=\{\mathcal{H}_i,X^i,a_i\}_{i=1}^n$ with the defining relations
$$\begin{array}{l}\mathcal{H}_i\mathcal{H}_j=\mathcal{H}_j\mathcal{H}_i\ ,\ \mathcal{H}_i X^j=X^j(\mathcal{H}_i+\delta_i^j)\ ,\ X^{i}X^{j}=X^j X^i\ ,\ i,j=1,\dots,n\ ,\\[.1em]
a_i\ \text{are central}\ ,\ i=1,\dots,n\ .\end{array}$$
Therefore we have the isomorphism
$\mu\colon \K[a_1,\dots,a_n]\otimes \text{T}^{-1}\text{W}_n\to \text{S}_n^{-1}\Diff (n)$
given on our sets $\b_{\text{D}}$ and $\b_{\text{W}}$ of generators by
\begin{equation}\label{fromwtodiffnb}
\mu\colon  X^i\mapsto \Z'^{\circ i}\ ,\ \mathcal{H}_i\mapsto \th_i
\ ,\ a_i\mapsto c_i\ ,\ i=1,\dots,n\ .
\end{equation}
The proof is completed.
\end{proof}

\noindent We shall now rewrite the formulas for the isomorphism $\mu$ in terms of the original generators of the rings $\text{S}_n^{-1}\Diff (n)$ and 
$\K[a_1,\dots,a_n]\otimes \text{T}^{-1}\text{W}_n$. 

\begin{lemm}\label{lemmaiso}
We have
\begin{eqnarray}\label{fromwtodiff}
&\mu \colon  X^i\mapsto \Z^i\psi_i'\ ,\ D_i\mapsto (\psi_i')^{-1}\th_i(\Z^i)^{-1}\ ,\ a_i\mapsto c_i\ ,\ i=1,\dots,n\ ,&\\[.2em]
\label{fromdifftow}
&\mu^{-1}\colon  \th_i\mapsto \mathcal{H}_i\ ,\ \Z^i\mapsto X^i\frac{1}{\Psi_i'}\ ,\ \der_i\mapsto \frac{\Upsilon_i}{\Psi_i}
(X^i)^{-1}\ ,\ i=1,\dots,n \ .&
\end{eqnarray}
\end{lemm}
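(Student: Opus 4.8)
The plan is to read off both \eqref{fromwtodiff} and \eqref{fromdifftow} from the generator-level formulas \eqref{fromwtodiffnb} of Theorem \ref{isoloc}, exploiting only that $\mu$ is an isomorphism of the two localized rings. Since $\mu$ and $\mu^{-1}$ are algebra maps, it suffices to express each original generator of one ring as a word in the ``nice'' generating set $\b_{\text{W}}$ or $\b_{\text{D}}$ of the other, and then apply the map generator by generator. The recurring simplification is that every factor $\psi_i'$, $\chi_i$, $\Psi_i$, $\Psi_i'$ lies in the (commutative) Cartan subring -- $\Un$ on the $\Diff(n)$ side, the subalgebra generated by the $\mathcal{H}_j$ on the Weyl side -- so these factors commute among themselves and with $\th_i$, $\mathcal{H}_i$; the only genuine non-commutativity, handled by the weight relations \eqref{relsweights}, is between such a Cartan factor and $\Z^i$ or $X^i$. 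All the inverses that appear are legitimate in the localizations, since $X^i$ and the $\mathcal{H}_{jk}=X^jD_j-X^kD_k$ lie in $\text{T}$, while $\Z^i$ and $\psi_i'=\prod_{k<i}\th_{ik}$ lie in $\text{S}_n$.

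For the forward map \eqref{fromwtodiff}, the images $X^i\mapsto\Z^i\psi_i'$ and $a_i\mapsto c_i$ are already recorded in \eqref{fromwtodiffnb}. For $D_i$ I would use the defining relation $\mathcal{H}_i=D_iX^i$ to write $D_i=\mathcal{H}_i(X^i)^{-1}$ in $\text{T}^{-1}\text{W}_n$, apply $\mu$, and obtain $\mu(D_i)=\th_i(\Z^i\psi_i')^{-1}=\th_i(\psi_i')^{-1}(\Z^i)^{-1}$. Because $\th_i$ commutes with the polynomial $\psi_i'$ in the $\th$'s, this equals $(\psi_i')^{-1}\th_i(\Z^i)^{-1}$, which is the asserted formula.

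For the inverse map \eqref{fromdifftow}, $\mu^{-1}(\th_i)=\mathcal{H}_i$ is immediate from $\mu(\mathcal{H}_i)=\th_i$. Inverting $\mu(X^i)=\Z^i\psi_i'$ gives $\Z^i=\mu(X^i)(\psi_i')^{-1}$, and since $\mu^{-1}(\psi_i')=\Psi_i'$ (each $\th_{jk}$ becomes $\mathcal{H}_{jk}$) we get $\mu^{-1}(\Z^i)=X^i(\Psi_i')^{-1}$. The substantive step is $\der_i$: starting from $\der_i=\Gamma_i(\Z^i)^{-1}$ (Lemma \ref{qga}), I invoke the explicit expression \eqref{znachc}, $\Gamma_i=\frac{\th_i^n(1-c(-\th_i^{-1}))}{\chi_i}$. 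Applying $\mu^{-1}$ sends $\th_i\mapsto\mathcal{H}_i$, $c_k\mapsto a_k$ (so $c(t)\mapsto\mathbf{C}(t)$), and $\chi_i=\psi_i\psi_i'\mapsto\Psi_i\Psi_i'$, turning the numerator into exactly $\Upsilon_i=\mathcal{H}_i^n(1-\mathbf{C}(-\mathcal{H}_i^{-1}))$; hence $\mu^{-1}(\Gamma_i)=\frac{\Upsilon_i}{\Psi_i\Psi_i'}$. Combining with $\mu^{-1}((\Z^i)^{-1})=\Psi_i'(X^i)^{-1}$ and cancelling the $\Psi_i'$ factor (all these being commuting functions of the $\mathcal{H}$'s) yields $\mu^{-1}(\der_i)=\frac{\Upsilon_i}{\Psi_i}(X^i)^{-1}$.

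I expect the main obstacle to be purely bookkeeping rather than conceptual: tracking the side (left versus right) on which each Cartan factor sits as it passes through $\Z^i$, $X^i$ and their inverses, and verifying in each cancellation that the factor one wishes to commute past another really does belong to the relevant commutative Cartan subring. Once the manipulations are organized as above, every identity reduces to cancelling a $\psi$- or $\Psi$-type factor, with the single nontrivial input being the formula \eqref{znachc} for $\Gamma_i$.
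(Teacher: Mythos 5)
Your proposal is correct and follows essentially the same route as the paper: the only substantive step, the formula for $\mu^{-1}(\der_i)$, is obtained exactly as the paper does it, by combining Lemma \ref{gammaandc}(ii) (equation \eqref{znachc}), which gives $\mu^{-1}(\chi_i\Gamma_i)=\Upsilon_i$, with the relation $\der_i=\Gamma_i(\Z^i)^{-1}$. The remaining formulas, which the paper dismisses as immediate, are verified by you in the same straightforward way (inverting $\mu(X^i)=\Z^i\psi_i'$, using $D_i=\mathcal{H}_i(X^i)^{-1}$, and commuting Cartan factors), so the extra detail is consistent with, not different from, the paper's argument.
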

\begin{proof}
We shall comment only the last formula in \eqref{fromdifftow}. Lemma \ref{gammaandc} part (ii) implies that $\mu^{-1}(\chi_i\Gamma_i)=
\Upsilon_i$ and the formula for $\mu^{-1}(\der_i)$ follows since $\der_i=\Gamma_i(\Z^i)^{-1}$.
\end{proof}

\begin{prop}\label{isocenter}
The center of the ring $\Diff(n)$ is isomorphic to the polynomial ring $\K[t_1,\dots,t_n]$; the isomorphism is given by $t_j\mapsto c_j$, $j=1,\dots,n$.
\end{prop}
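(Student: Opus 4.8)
The plan is to deduce Proposition \ref{isocenter} from the structural results already established, rather than by a fresh computation. The key input is Lemma \ref{zeroweightelementscenter}, which identifies the subring $\Diff^0(n)$ of $\h(n)$-weight zero elements as the free polynomial ring $\Un[c_1,\dots,c_n]$. Since any central element necessarily has $\h(n)$-weight zero — it must commute with every $\th_i$, and by the weight relations \eqref{relsweights} this forces its weight $\omega$ to satisfy $\omega(\th_i)=0$ for all $i$ — the center of $\Diff(n)$ is contained in $\Diff^0(n)=\Un[c_1,\dots,c_n]$. By Proposition \ref{quceel} the elements $c_1,\dots,c_n$ are themselves central, so it remains only to determine which elements of $\Un[c_1,\dots,c_n]$ are central.

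First I would argue that a central element cannot involve the generators $\th_i$ nontrivially. Concretely, take $z\in\Diff^0(n)$ and write it uniquely as a polynomial $z=\sum_{\beta} p_\beta(\th)\, c_1^{\beta_1}\cdots c_n^{\beta_n}$ with coefficients $p_\beta\in\Un$, using the freeness from Lemma \ref{zeroweightelementscenter}. Imposing that $z$ commute with $\Z^i$ and using that each $c_k$ is already central, the condition becomes $\Z^i z = z\,\Z^i$, i.e. $\sum_\beta p_\beta(\th)\,\Z^i\,c^\beta = \sum_\beta p_\beta(\th)\,c^\beta\,\Z^i$ must match $\sum_\beta \Z^i p_\beta(\th)\,c^\beta$. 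By the weight relation \eqref{relsweights} we have $\Z^i p_\beta(\th)=p_\beta(\th+\varepsilon_i)\Z^i$, so centrality forces $p_\beta(\th+\varepsilon_i)=p_\beta(\th)$ for every $i$, hence $p_\beta$ is invariant under all elementary translations $\varepsilon_i$. A polynomial (or element of $\Un$) invariant under every $\varepsilon_i$ must be a constant in $\K$; therefore each $p_\beta\in\K$ and $z\in\K[c_1,\dots,c_n]$.

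This shows the center is exactly $\K[c_1,\dots,c_n]$. To conclude the stated isomorphism, I would invoke the freeness from Lemma \ref{zeroweightelementscenter} once more: since the $c_k$ form a free polynomial generating set of $\Diff^0(n)$ over $\Un$, they are in particular algebraically independent over $\K$, so the assignment $t_j\mapsto c_j$ defines an isomorphism $\K[t_1,\dots,t_n]\xrightarrow{\sim}\K[c_1,\dots,c_n]$ onto the center.

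I expect the one genuine point requiring care is the translation-invariance argument: justifying that an element of the localized ring $\Un$ fixed by all $\varepsilon_i$ is a scalar. Because $\Un$ is a localization of a polynomial ring by denominators of the form $\th_{ij}+k$, one must check that invariance under the full translation lattice $\{\sum m_i\varepsilon_i\}$ forces constancy even among rational expressions; this follows since a nonconstant rational function cannot be invariant under a Zariski-dense set of translations, but it is worth stating explicitly. Everything else is a direct application of Lemmas \ref{qga}, \ref{gammaandc}, \ref{zeroweightelementscenter} and Proposition \ref{quceel} already at hand.
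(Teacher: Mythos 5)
Your proposal is correct and follows essentially the same route as the paper's first proof: reduce to $\h(n)$-weight zero, invoke Lemma \ref{zeroweightelementscenter} to land in $\Un[c_1,\dots,c_n]$, and then show the coefficients must lie in $\K$ — your translation-invariance argument is exactly the justification behind the paper's terse claim that the intersection of $\Un$ with the center is $\K$. (The paper also sketches a second proof via the isomorphism $\mu$ and the triviality of the center of $\W_n$, but that is optional.)
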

\begin{proof}
There are two ways to prove this proposition.
First, a central element of $\Diff(n)$ is of $\hn$-weight zero, so from Lemma \ref{zeroweightelementscenter} it is in $\Un[c_1,\dots,c_n]$. But the intersection of $\Un$ and the center of $\Diff(n)$ is $\K$. This shows that the center of $\Diff(n)$ is $\K[c_1,\dots,c_n]$.

The other way is to use the isomorphism $\mu$ and the triviality of the center of the Weyl algebra $\W_n$. Using the isomorphism $\mu$, we see that the center of $\text{S}_n^{-1} \Diff(n)$ is $\K[c_1,\dots,c_n]$ which is also the center of $\Diff(n)$.
\end{proof}

\chapter{Generalized $\h$-deformed differential operators}\label{GeneralizedDiff}

\vskip -1cm
In previous sections we constructed the ring $\Diff(n)$ and gave its R-matrix description. Similarly to the ring of $q$-differential operators, the ring $\Diff(n)$ realizes the consistent pairing of two $\h$-deformed coordinate rings in $n$ variables. The consistency condition is given by  oscillator like relations \eqref{relshdiffcompdx2}. In this chapter we define and study general consistent pairings of $\h$-deformed coordinate rings. 

\vskip .1cm
In \textbf{Section \ref{Motivations}}, we give the motivation for the study of these pairings. We define two coordinate rings in $nN$ variables with the help of the R-matrix \eqref{dynRcompb}. 
These two rings are the $\h$-deformed coordinate ring $\V(n,N)$ and its ``dual", the $\h$-deformed coordinate ring $\V^\star(n,N)$ of derivatives. The ring $\Diff(n)$ and the reduction algebra $AZ_n$ of \cite{Zh1} provides examples of different pairings. A general pairing between $\V(n,1)$ and $\V^\star(n,1)$ involves an $n$-tuple  of elements $\sigma_1,\dots,\sigma_n \in \Un$. The corresponding rings are denoted $\Diff(\sigma_1,\dots,\sigma_n)$.

\vskip .1cm
In \textbf{Section \ref{PBWPorpertySection}} we investigate the PBW property of $\Diff(\sigma_1,\dots,\sigma_n)$ which is the consistency condition for the pairing. 
We derive an over-determined system of finite-difference equations, called $\Delta$-system, for $\sigma_1,\dots,\sigma_n$ and give its general solution in the field of fractions $\K(n)$ of $\U(n)$ and then in $\Un$ (in \cite{HO2} we used another technique avoiding the field $\K(n)$). 

\section{Motivations and definitions}\label{Motivations}

\subsection{Coordinate rings of $\h$-deformed vector spaces}\label{corivvz}

Let $\F(n,N)$ be the ring with the generators $\Z^{i\alpha}$, $i=1,\dots,n$, $\alpha=1,\dots,N$, and $\th_i$, $i=1,\dots,n$, with the defining relations 
\begin{equation}\label{hdesp1a}
\th_i\th_j=\th_j\th_i\ ,\ i,j=1,\dots,n\ ,
\end{equation}
\begin{equation}\label{hdesp1nn}
\th_i\Z^{j\alpha}=\Z^{j\alpha}(\th_i+\delta_i^j)\ ,\ i,j=1,\dots,n\ ,\ \alpha=1,\dots,N\ .
\end{equation}
The ring $\U(n)$ is naturally a subring of $\F(n,N)$. Let $\bar{\F}(n,N):=\Ub(n)\otimes_{\U(n)}\F(n,N)$.
The coordinate ring $\V(n,N)$ of $N$ copies of the $\h$-deformed vector space is the factor-ring of $\bar{\F}(n,N)$ by the relations
\begin{equation}\label{hdesp2}
\Z^{i\alpha}\Z^{j\beta}=\sum_{k,l}\RR^{ij}_{kl} \Z^{k\beta} \Z^{l\alpha}\ ,\ i,j=1,\dots,n\ ,\ \alpha,\beta=1,\dots,N \ .
\end{equation}
The ring $\V(n,N)$ is the reduction algebra, with respect to $\gl_n$, of the semi-direct product of $\gl_n$ and the abelian Lie algebra $V\oplus V\oplus\dots\oplus V$
($N$ times) where $V$ is the (tautological) $n$-dimensional $\gl_n$-module. According to the general theory of reduction algebras \cite{Zh1,Zh3,KO1,KO4}, 
$\V(n,N)$ is a free left (or right) $\Ub(n)$-module; the ring $\V(n,N)$ has the following PBW property: 
\begin{equation}\label{hdesppbw}
\begin{array}{l}
\text{given an arbitrary order on the set $\Z^{i\alpha}$, $i=1,\dots,n$, $\alpha=1,\dots,N$, the set}\\[.2em]
\text{of all ordered monomials in $\Z^{i\alpha}$ is a basis of the left $\Ub(n)$-module $\V(n,N)$.}
\end{array}
\end{equation} 
Moreover, if $\{\RR_{ij}^{kl}\}_{i,j,k,l=1}^n$ is an arbitrary array 
of functions in $\th_i$, $i=1,\dots,n$, then the above PBW property implies that $\RR$ satisfies the dynamical Yang--Baxter equation when $N\geq 3$.

\vskip .1cm
Similarly, let $\F^*(n,N)$ be the ring with the generators $\der_{i\alpha}$, $i=1,\dots,n$, $\alpha=1,\dots,N$, and $\th_i$, $i=1,\dots,n$, with the defining relations \eqref{hdesp1a}
and 
\begin{equation}\label{hdesp3}
\th_i\der_{j\alpha}=\der_{j\alpha}(\th_i-\delta_i^j)\ ,\ i,j=1,\dots,n\ ,\ \alpha=1,\dots,N\ .
\end{equation}
Let $\bar{\F}^*(n,N):=\Ub(n)\otimes_{\U(n)}\F^*(n,N)$. The $\h(n)$-weights are defined by the same equation \eqref{hdesp1aw}.
The coordinate ring $\V^*(n,N)$ of $N$ copies of the ``dual" $\h$-deformed vector space is the factor-ring of $\bar{\F}^*(n,N)$ by the relations
\begin{equation}\label{hdesp4}
\der_{l\alpha}\der_{k\beta}=\sum_{i,j}\der_{j\beta}\der_{i\alpha}\RR^{ij}_{kl}\ ,\ k,l=1,\dots,n\ ,\ \alpha,\beta=1,\dots,N\ .
\end{equation}
Again, the ring $\V^*(n,N)$ is the reduction algebra, with respect to $\gl_n$, of the semi-direct product of $\gl_n$ and the abelian Lie algebra $V^*\oplus V^*\oplus\dots\oplus V^*$
($N$ times) where $V^*$ is the $\gl_n$-module, dual to $V$. The ring  
$\V^*(n,N)$ is a free left (or right) $\Ub(n)$-module; it has a similar to $\V(n,N)$ PBW property: 
\begin{equation}\label{hdesppbwd}
\begin{array}{l}\text{given an arbitrary order on the set $\der_{i\alpha}$ , $i=1,\dots,n$, $\alpha=1,\dots,N$, the set}\\[.2em] 
\text{of all ordered monomials in $\der_{i\alpha}$ is a basis of the left $\Ub(n)$-module $\V^*(n,N)$.}
\end{array}
\end{equation} 
Again, the PBW property of the algebra defined by the relations \eqref{hdesp4}, 
together with the weight prescriptions \eqref{hdesp3}, implies that $\RR$ satisfies the dynamical Yang--Baxter equation 
when $N\geq 3$.
The matrix algebras with the defining relations of the type  \eqref{hdesp2} appear in the study of the
chiral zero modes of the Wess--Zumino--Novikov--Witten model \cite{FHIOPT,HIOPT}.

\vskip .1cm
For $N=1$ we shall write $\V(n)$ and $\V^*(n)$ instead of $\V(n,1)$ and $\V^*(n,1)$. 

\subsection{Two examples}\label{twoexse}
Before presenting the main question we consider two examples. 

\noindent{\bf 1.} The reduction algebra $\Diff(n)$ is defined in \textbf{Section \ref{ReducAlgeDiff}}. It is generated, over $\Ub(n)$, by $\Z^i$ and $\der_i$, $i=1,\dots,n$. The $\h(n)$-weights of the generators are given by \eqref{hdesp1nn} and \eqref{hdesp3}. The remaining set of the defining relations, over $\Ub(n)$, consists of \eqref{hdesp2}, \eqref{hdesp4} (with $N=1$) and
\begin{equation}\label{stdi2}\Z^{i}\der_{j}=\sum_{k,l}\der_{k} \RR_{lj}^{ki} \Z^{l}-\delta_{j}^i\sigma_i^{(\text{Diff})} \ ,\ \ \text{where $\sigma_i^{(\text{Diff})}=1$, $i=1,\dots,n$.}
\end{equation}
\noindent{\bf 2.} This example is a generalization of the reduction algebra $\mathcal{R}^{\gl_3}_{\gl_2}$ we constructed in \textbf{Section \ref{FirstExempleRedAlg}}. Identifying each $n\times n$ matrix $a$ with the larger matrix $\left(\begin{array}{cc}a&0\\0&0\end{array}\right)$ gives an embedding of $\gl_n$ into ${\bf gl}_{n+1}$.
The resulting reduction algebra $\text{R}^{\gl_{n+1}}_{\gl_n}$ was denoted by $AZ_n$ in \cite{Zh1}. 
It is generated, over $\Ub(n)$, by $x^i$ and $y_i$, the images of the standard generators $e_{i,n+1}$ and $e_{n+1,i}$, $i=1,\dots,n$, of $\U(\gl_{n+1})$ and $\th_{n+1}=h_{n+1}-(n+1)$ where $h_{n+1}$ is the image of $e_{n+1,n+1}$ in $AZ_n$. Let 
\[\der_{i}:=y_{i}\frac{\psi_i}{\psi_i[-\varepsilon_i]}\ ,\]
where the elements $\psi_i$ are defined in \eqref{defphichi} (they depend on $\th_1,\dots,\th_n$ only). 
The $\h(n)$-weights of the generators are given by \eqref{hdesp1nn} and \eqref{hdesp3} while
\[\th_{n+1}\Z^i=\Z^i(\th_{n+1}-1)\ ,\ \th_{n+1}\der_i=\der_i(\th_{n+1}+1)\ ,\ i=1,\dots,n\ .\]
The set of the remaining defining relations consists of \eqref{hdesp2}, \eqref{hdesp4} (with $N=1$) and 
\begin{equation}\label{stdi2a}\Z^{i}\der_{j}=\sum_{k,l}\der_{k} \RR_{lj}^{ki} \Z^{l}-\delta_{j}^i \sigma_i^{(AZ)} \ ,\end{equation}
where 
\begin{equation}\label{stdisiDiff}\sigma_i^{(AZ)}=-\th_i+\th_{n+1}+1\ , \ i=1,\dots,n\ .\end{equation}

The algebra $AZ_n$ was used in \cite{Hom1} for the study of Harish-Chandra modules and in \cite{Zh1} for the construction of the Gelfand--Tsetlin bases \cite{GT}.

The algebra $AZ_n$ has a central element 
$c_1:=\th_1+\dots+\th_n+\th_{n+1} $.
In the factor-algebra $\overline{AZ}_n$ of $AZ_n$ by the ideal, generated by the element $c_1$, the relation \eqref{stdi2a} is replaced by
\begin{equation}\label{stdi2bb}\Z^{i}\der_{j}=\sum_{k,l}\der_{k} \RR_{lj}^{ki} \Z^{l}-\delta_{j}^i \sigma_i^{(\overline{AZ})} \ ,
\end{equation}
with 
\begin{equation}\label{stdisiAZ} \sigma_i^{(\overline{AZ})}=-\th_i-\sum_{k=1}^n \th_k+1\ ,\ i=1,\dots,n.\end{equation}

\subsection{Main question}\label{maquare}

Both rings, $\Diff(n)$ and $\overline{AZ}_n$, satisfy the PBW property. The only difference between these rings is in the form of the zero-order terms $\sigma_i^{(\text{Diff})}$ and $\sigma_i^{(\overline{AZ})}$ in the cross-commutation relations \eqref{stdi2} and \eqref{stdi2bb} (compare to the ring of
$q$-differential operators \cite{WZ} where the zero-order term is essentially - up to redefinitions - unique). 
It is therefore natural to investigate possible generalizations of the rings $\Diff(n)$ and $\overline{AZ}_n$. More precisely, given $n$ elements $\sigma_1,\dots,\sigma_n$ of 
$\Ub(n)$, we let $\Diff(\sigma_1,\dots,\sigma_n)$ be the ring, over 
$\Ub(n)$, with the generators $\Z^{i}$ and $\der_{i}$, $i=1,\dots,n$, subject to the defining relations \eqref{hdesp2}, \eqref{hdesp4} (with $N=1$) and the oscillator-like 
relations 
\begin{equation}\label{stdi2age}
\Z^{i}\der_{j}=\sum_{k,l}\der_{k} \RR_{lj}^{ki} \Z^{l}-\delta_{j}^i \sigma_i\ .
\end{equation}
The weight prescriptions for the generators are given by \eqref{hdesp1nn} and \eqref{hdesp3}. The diagonal form of the zero-order term (the Kronecker symbol $\delta_{j}^i$ in the right hand side of \eqref{stdi2age}) is dictated by the $\h(n)$-weight considerations.

\vskip .2cm
We shall study conditions under which the ring $\Diff(\sigma_1,\dots,\sigma_n)$ satisfies the PBW property. More specifically, since the rings $\V(n)$
and $\V^*(n)$ both satisfy the PBW property, our aim is to study conditions under which $\Diff(\sigma_1,\dots,\sigma_n)$ is isomorphic, as a 
$\Ub(n)$-module, to $\V^*(n)\otimes_{\Ub(n)} \V(n)$. 
Similarly to $\Diff(n)$, the assignment \eqref{filtrationDiff} defines a structure of a filtered algebra on $\Diff(\sigma_1,\dots,\sigma_n)$. The associated graded algebra 
is the homogeneous algebra $\Diff(0,\dots,0)$. This homogeneous algebra has the desired PBW property because it is the reduction algebra, 
with respect to $\gl_n$, of the semi-direct product of $\gl_n$ and the abelian Lie algebra $V\oplus V^*$. 

\vskip .2cm
The standard argument shows that the ring $\Diff(\sigma_1,\dots,\sigma_n)$ can be viewed as
a deformation of the homogeneous ring $\Diff(0,\dots,0)$: we may replace $\Z^i$ by $\hbar \Z^i$ and consider $\hbar$ as the deformation parameter; if $\hbar\neq 0$, 
the renormalization $\sigma\mapsto\hbar\sigma$ establishes the isomorphism of 
the rings $\Diff(\hbar\sigma_1,\dots,\hbar\sigma_n)$ and $\Diff(\sigma_1,\dots,\sigma_n)$. Thus our aim is to study the conditions  under which this deformation is flat. 

\section{Poincar\'e$-$Birkhoff$-$Witt property}\label{PBWPorpertySection}
The explicit defining relations for the ring $\Diff(\sigma_1,\dots,\sigma_n)$ are \eqref{relshdiffcompx}, \eqref{relshdiffcompd}, \eqref{relshdiffcompdx} and
\begin{equation}\label{hdesp10c} 
\Z^i \der_i = \sum_j\displaystyle{\frac{1}{1-\th_{ij}}}\,\der_j \Z^j - \sigma_i\ ,i=1,\dots,n\ .
\end{equation}
It turns out that the PBW property is equivalent to a system of finite-difference equations for the elements $\sigma_1,\dots,\sigma_n\in\Ub(n)$.

\vskip .2cm
\begin{prop}\label{equasigmasb}
The ring $\Diff(\sigma_1,\dots,\sigma_n)$ satisfies the PBW property if and only if the elements $\sigma_1,\dots,\sigma_n\in\Ub(n)$ satisfy the following linear system of finite-difference equations
\begin{equation}\label{eqsigib}
\th_{ij} \Delta_j \sigma_i = \sigma_i - \sigma_j\ ,\ i,j=1,\dots,n \ .
\end{equation}
\end{prop}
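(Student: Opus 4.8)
The plan is to test the PBW property directly by reordering the degree-three monomials in which a reordering ambiguity can occur, and to show that the single obstruction to consistency is exactly the system \eqref{eqsigib}. Since the relations \eqref{relshdiffcompx}--\eqref{relshdiffcompd}, \eqref{relshdiffcompdx} among the $\Z$'s alone and among the $\der$'s alone already define the rings $\V(n)$ and $\V^\star(n)$ which have the PBW property, the new relation is only \eqref{hdesp10c}. Therefore the only way the PBW property can fail is through an inconsistency produced by commuting a $\der$ past two $\Z$'s (or, dually, a $\Z$ past two $\der$'s, which the anti-automorphism \eqref{antiautom} will handle for free). So the first step is to single out the critical monomials of the form $\Z^i\Z^j\der_k$ and reorder them in the two possible ways: either move $\der_k$ to the left first, or reorder $\Z^i\Z^j$ first.

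The heart of the argument is the following diamond-type computation. Starting from $\Z^i\Z^j\der_k$, one reduction path uses \eqref{hdesp10c} (together with the cross relations \eqref{relshdiffcompdx}) to push $\der_k$ leftward through each $\Z$, collecting the $\sigma$-terms; the other path first applies \eqref{relshdiffcompx} to swap $\Z^i\Z^j$ and then pushes $\der_k$ through. Comparing the two normal forms, the quadratic ($\der\Z\Z$) parts must agree automatically because they are governed by the R-matrix relations, whose consistency is the dynamical Yang--Baxter equation already recorded in the excerpt. What survives is a linear relation in the $\Z$'s whose coefficients involve the shifted elements $\sigma_i$, $\sigma_j$ and the finite-difference operator $\Delta_j$ coming from the weight shift \eqref{relsweights} that $\der_k$ induces when it passes a $\Z^j$. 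Collecting the coefficient of each independent monomial $\Z^l$ and using the explicit R-matrix components \eqref{dynRcompb} together with the weight-shift property \eqref{weze}, I expect the vanishing of these coefficients to reduce precisely to $\th_{ij}\Delta_j\sigma_i=\sigma_i-\sigma_j$.

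Concretely, I would fix the PBW ordering \eqref{partialordergenerators} in which all $\der$'s stand to the left of all $\Z$'s, treat the relations \eqref{relshdiffcompx}--\eqref{relshdiffcompdx} and \eqref{hdesp10c} as a rewriting system, and invoke the diamond lemma: the associated graded algebra $\Diff(0,\dots,0)$ already has the PBW property (as noted in the excerpt, being a genuine reduction algebra), so flatness of the deformation is equivalent to the vanishing of the single obstruction class living in the degree-three component. This reduces the whole question to one family of overlap ambiguities, namely $\Z^i\Z^j\der_k$, and its image $\der_k\Z^i\Z^j$ under $\epsilon$. Computing this one overlap and extracting the coefficient of each surviving $\Z$-monomial is the entire content of the proposition.

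The main obstacle I anticipate is purely bookkeeping: the finite-difference shifts must be tracked with care, because moving $\der_k$ past $\Z^j$ shifts the weight argument of every subsequent $\sigma$, and it is exactly this shift that produces the operator $\Delta_j$ rather than a plain difference $\sigma_i-\sigma_j$. I would therefore use the identity $\Delta_j(fg)=f\Delta_j(g)+\Delta_j(f)g-\Delta_j(f)\Delta_j(g)$ recorded after the definition of $\Delta_j$, together with the R-matrix identities \eqref{weze} and \eqref{Q5do}, to keep the shifted and unshifted occurrences of $\sigma_i$ organized. Once the coefficient of the off-diagonal monomial $\Z^j\Z^i$ (for the critical case $k=i$ or $k=j$) is isolated, equating it to zero yields \eqref{eqsigib}; conversely, assuming \eqref{eqsigib} makes that coefficient vanish, so the rewriting is confluent and the PBW property holds. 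This two-way reading gives the ``if and only if.''
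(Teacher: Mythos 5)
Your proposal is correct and takes essentially the same approach as the paper: the diamond lemma with the ordering relations, flatness of the homogeneous ring $\Diff(0,\dots,0)$ to restrict attention to the lower-degree terms of the one remaining family of overlaps, the anti-automorphism $\epsilon$ (in its form \eqref{antiautomb}, valid for arbitrary $\sigma_1,\dots,\sigma_n$, not only for $\Diff(n)$) to dispose of the dual family, and extraction of the coefficients of the surviving linear terms via \eqref{dynRcompb}, \eqref{weze} and the ice condition \eqref{iceco} to arrive at \eqref{eqsigib}. The only cosmetic difference is that the paper explicitly resolves the overlap $\Z^i\der_j\der_k$ and treats $\Z^j\Z^k\der_i$ by $\epsilon$, while you propose the mirror image.
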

\noindent{\it Proof.} We can consider \eqref{relsweights}, \eqref{relshdiffcompx}, \eqref{relshdiffcompd}, \eqref{relshdiffcompdx}, \eqref{hdesp10c} as the set of ordering relations and use the diamond lemma, see \cite{Bo,B,Ne}, for the investigation of the PBW property. The relations \eqref{relshdiffcompx}, \eqref{relshdiffcompd} and \eqref{relshdiffcompdx} are compatible with the $\h(n)$-weights of the generators 
$\Z^i$ and $\der_i$, $i=1,\dots,n$, so we have to check the possible ambiguities involving the generators $\Z^i$ and $\der_i$, $i=1,\dots,n$, only. 
The properties \eqref{hdesppbw} and \eqref{hdesppbwd} show that the ambiguities of the forms $\Z\Z\Z$ and $\der\der\der$ are resolvable. It remains to check the ambiguities  
\begin{equation}\label{lefam}
\Z^i \der_j \der_k\ \ \text{and}\ \ \Z^j \Z^k \der_i\  .
\end{equation}
It follows from the properties \eqref{hdesppbw} and \eqref{hdesppbwd} that the choice of the order for the generators with indices $j$ and $k$ in \eqref{lefam} is irrelevant.
Besides, it can be verified directly that the ring $\Diff(\sigma_1,\dots,\sigma_n)$, with arbitrary $\sigma_1,\dots,\sigma_n\in\Ub(n)$ admits an involutive 
anti-automorphism $\epsilon$, defined by 
\begin{equation}\label{antiautomb}
\!\epsilon(\th_i)=\th_i\ ,\ \epsilon(\der_i)=\varphi_i\Z^i\ ,\ \epsilon(\Z^i)=\der_i\varphi_i^{-1}\ ,
\end{equation} 
where $\displaystyle{\varphi_i:=\frac{\psi_i}{\psi_i[-\varepsilon_i]}=\prod_{k:k>i}\frac{\th_{ik}}{\th_{ik}-1}\ ,\ i=1,\dots,n}$.
By using the anti-automorphism $\epsilon$ we reduce the check of the ambiguity $\Z^j \Z^k \der_i$ to the check of the ambiguity $\Z^i \der_j \der_k$.

\vskip .2cm
Since the associated graded algebra with respect to the filtration \eqref{filtrationDiff} has the PBW property, we have, in the check of the ambiguity 
$\Z^i \der_j \der_k$, to track only those ordered terms whose degree is smaller than 3. 
We use the symbol $u\;\rule[-3.5mm]{.28mm}{7mm}_{\;\text{l.d.t.}}$ to denote the part of the ordered expression for $u$ containing these lower degree terms.

\vskip .2cm
\noindent{\bf Check of the ambiguity $\Z^i \der_j \der_k$.} We calculate, for $i,j,k=1,\dots,n$, 
\begin{equation}\label{fiwa}\left( x^{i}\der_{j}\right) \der_{k}\;\rule[-3.5mm]{.28mm}{7mm}_{\;\text{l.d.t.}}=\left( \sum_{u,v}\RR^{ui}_{vj}[\varepsilon_u]\der_{u}x^{v}-
\delta^i_j\sigma_{i}\right)\der_{k}\;\rule[-3.5mm]{.28mm}{7mm}_{\;\text{l.d.t.}}=
-\sum_{u}\RR^{ui}_{kj}[\varepsilon_u]\der_{u}\sigma_{k}-\delta^i_j\sigma_{i}\der_{k}\ ,\end{equation} 
\begin{equation}\label{sewa}\begin{array}{rcl}
x^{i}\left( \der_{j} \der_{k}\right)\;\rule[-3.5mm]{.28mm}{7mm}_{\;\text{l.d.t.}}&=&\displaystyle{ x^{i}\sum_{a,b}\RR^{ab}_{kj}
\der_{b}\der_{a}\;\rule[-3.5mm]{.28mm}{7mm}_{\;\text{l.d.t.}}=
\sum_{a,b}\RR^{ab}_{kj}[-\varepsilon_i]\left( \sum_{c,d}\RR^{ci}_{db}[\varepsilon_c]\der_{c}x^{d}-\delta^i_b
\sigma_{i}\right)\der_{a}\;\rule[-3.5mm]{.28mm}{7mm}_{\;\text{l.d.t.}} }\\[2em]
&=&\displaystyle{ 
-\sum_{a,b,c}\RR^{ab}_{kj}[-\varepsilon_i] \RR^{ci}_{ab}[\varepsilon_c]\der_{c}\sigma_{a}-\sum_{a}\RR^{ai}_{kj}[-\varepsilon_i] 
\sigma_{i}\der_{a} }\ .\end{array}\end{equation} 
Comparing the resulting expressions in \eqref{fiwa} and \eqref{sewa} and collecting coefficients in $\der_u$, we find the necessary and sufficient condition for 
the resolvability of the ambiguity $\Z^i \der_j \der_k$:  
\begin{equation}\label{cocoon}
\RR^{ui}_{kj}[\varepsilon_u]\sigma_k[\varepsilon_u]+\delta^i_j\delta^u_k\sigma_i=\sum_{a,b} \RR^{ab}_{kj}[-\varepsilon_i]\RR^{ui}_{ab}[\varepsilon_u]
\sigma_a[\varepsilon_u]+\RR^{ui}_{kj}[-\varepsilon_i]\sigma_i\ ,\ i,k,j,u=1,\dots,n\ .
\end{equation} 
Shifting by $-\varepsilon_u$ and using the property \eqref{weze} together with the ice condition \eqref{iceco}, we rewrite \eqref{cocoon} in the form 
\begin{equation}\label{cocoon2}
\RR^{ui}_{kj}\left(\sigma_k-\sigma_i[-\varepsilon_u]\right)+\delta^i_j\delta^u_k\sigma_i[-\varepsilon_u]=\sum_{a,b} \RR^{ab}_{kj}\RR^{ui}_{ab}\sigma_a\ .
\end{equation} 
For $j=k$ the system \eqref{cocoon2} contains no equations. For $j\neq k$ we have two cases:

\noindent (i) $u=j$ and $i=k$. This part of the system \eqref{cocoon2} reads explicitly (see \eqref{dynRcompb})
$\sigma_k-\sigma_k[-\varepsilon_j]=\frac{1}{\th_{kj}}\left(\sigma_k-\sigma_j\right)$.
This is the system \eqref{eqsigib}.

\noindent (ii) $u=k$ and $i=j$. This part of the system \eqref{cocoon2} reads explicitly 
$\frac{1}{\th_{kj}}\left(\sigma_k-\sigma_j[-\varepsilon_k]\right)+\sigma_j[-\varepsilon_k]=\frac{1}{\th_{kj}^2}\sigma_k+\frac{\th_{kj}^2-1}{\th_{kj}^2}\sigma_j$, 
which reproduces the same system \eqref{eqsigib}.\hfill$\square$

\vskip .1cm
In the next proposition, we rewrite the system \eqref{eqsigib} in a matrix form.

\begin{prop}\label{{propHecketyperelasigmaR}}
Let $\Sigma$ and $X$ be the diagonal $n \times n$ matrices of coefficients $\Sigma^i_j = \delta^i_j \sigma_i$ and $X^i_j = \delta^i_j \Z^i$. In the localized ring $S_n^{-1} \Diff(n)$, the system \eqref{eqsigib}, can be rewritten in the form
$$\left( \RR_{12} \Sigma_1 + X_1 \Sigma_2 X_1^{-1} \right) \left( \RR_{12} - \mathbbm{1} \right) = 0 \ .$$
\end{prop}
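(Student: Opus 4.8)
The plan is to evaluate both factors of the claimed identity entrywise in $S_n^{-1}\Diff(n)$ and to check that the vanishing of the product is equivalent, component by component, to the finite-difference system \eqref{eqsigib}. First I would record the entries of $A:=\RR_{12}\Sigma_1+X_1\Sigma_2X_1^{-1}$. Since $\Sigma_1,\Sigma_2$ are diagonal with $(\Sigma_1)^{ij}_{ij}=\sigma_i$ and $(\Sigma_2)^{ij}_{ij}=\sigma_j$, the first summand is $(\RR_{12}\Sigma_1)^{ij}_{mn}=\RR^{ij}_{mn}\sigma_m$. The only nontrivial input for the second summand is the conjugation by $X_1$: the weight relation \eqref{relsweights} gives $\Z^i p(\th)=(p[-\varepsilon_i])\Z^i$, hence $\Z^i\sigma_j(\Z^i)^{-1}=\sigma_j[-\varepsilon_i]$, so $X_1\Sigma_2X_1^{-1}$ is again diagonal with $(i,j)$-entry $\sigma_j[-\varepsilon_i]$. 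Thus
\[
A^{ij}_{mn}=\RR^{ij}_{mn}\,\sigma_m+\delta^i_m\delta^j_n\,\sigma_j[-\varepsilon_i],
\]
and every entry lies in the commutative ring $\Ub(n)$, which removes all ordering issues in the products that follow. Note that $X_1^{-1}$ is available precisely because we have localized at the $\Z^j$.

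Next I would invoke the ice condition \eqref{iceco}: both $A$ and $B:=\RR_{12}-\mathbbm{1}$ preserve the decomposition of $\Ub(n)^n\otimes_{\Ub(n)}\Ub(n)^n$ into the lines $\langle e_i\otimes e_i\rangle$ and the two-dimensional blocks $\langle e_i\otimes e_j,\,e_j\otimes e_i\rangle$ for $i\neq j$. Consequently $AB$ is block-diagonal. On each line $B=\RR^{ii}_{ii}-1=0$, so the product vanishes there automatically, matching the trivial $i=j$ instance of \eqref{eqsigib}. It then remains to treat one $2\times2$ block for a fixed pair $i<j$; writing $h:=\th_{ij}$ and substituting \eqref{dynRcompb}, the block of $B$ and the block of $A$ become explicit $2\times2$ matrices over $\Ub(n)$.

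The core step is the $2\times2$ product on this block. I expect each of its four entries to factor as an invertible scalar from the localization times one of the two brackets $\sigma_i-(h+1)\sigma_j+h\,\sigma_j[-\varepsilon_i]$ or $(h-1)\sigma_i+\sigma_j-h\,\sigma_i[-\varepsilon_j]$. Using $\Delta_i\sigma_j=\sigma_j-\sigma_j[-\varepsilon_i]$ and $\Delta_j\sigma_i=\sigma_i-\sigma_i[-\varepsilon_j]$, the vanishing of these brackets is exactly $\th_{ij}\Delta_i\sigma_j=\sigma_i-\sigma_j$ and $\th_{ij}\Delta_j\sigma_i=\sigma_i-\sigma_j$, that is, the $(j,i)$ and $(i,j)$ instances of \eqref{eqsigib}. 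Dividing out the scalar prefactors (involving $h$, $h+1$, $h-1$) is legitimate since these are invertible in $S_n^{-1}\Diff(n)$. Collecting the blocks over all pairs yields the asserted equivalence.

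The main obstacle — indeed essentially the only genuine content — is the correct bookkeeping of the finite-difference shift produced by $X_1\Sigma_2X_1^{-1}$: it is exactly the replacement $\sigma_j\mapsto\sigma_j[-\varepsilon_i]$ that converts an otherwise purely algebraic matrix identity into the difference system \eqref{eqsigib}. As a sanity check, for constant $\sigma_i\equiv\sigma$ the shift is trivial, $A=\sigma(\RR_{12}+\mathbbm{1})$, and $AB=\sigma(\RR_{12}^2-\mathbbm{1})=0$ by \eqref{Q5do}, consistent with the fact that constants solve \eqref{eqsigib}. Everything past the identification of the shift is the routine $2\times2$ computation, where the only care needed is to align the $i\leftrightarrow j$ asymmetry of the R-matrix components \eqref{dynRcompb} with the two instances of the system.
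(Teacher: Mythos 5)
Your proposal is correct, and in substance it is the paper's own argument: the paper's one-line proof observes that the matrix identity, expanded entrywise, is exactly the system \eqref{cocoon2}, and the reduction of \eqref{cocoon2} to \eqref{eqsigib} — via the ice condition and the two cases (i) $u=j,i=k$ and (ii) $u=k,i=j$ in the proof of Proposition \ref{equasigmasb} — is precisely your $2\times 2$ block computation with the two brackets $\sigma_i-(\th_{ij}+1)\sigma_j+\th_{ij}\sigma_j[-\varepsilon_i]$ and $(\th_{ij}-1)\sigma_i+\sigma_j-\th_{ij}\sigma_i[-\varepsilon_j]$. The only difference is presentational: you re-derive the equivalence self-containedly (correctly identifying the shift $\Z^i\sigma_j(\Z^i)^{-1}=\sigma_j[-\varepsilon_i]$ as the key point), whereas the paper delegates the computation to the already-established \eqref{cocoon2}.
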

\begin{proof}
It is a rewriting of the system of equations \eqref{cocoon2}.
\end{proof}

\subsection{$\Delta$-system}

We have considered the algebra $\Diff(\sigma_1,\dots,\sigma_n)$ with  elements $(\sigma_1,\dots,\sigma_n)$ in $\Un$. In this section, we rather study the system of equations \eqref{eqsigib} in the field of fractions $\K(n)$ of $\U(n)$.

\vskip .1cm
For $f \in \K(n)$, $i \in \{1,\dots,n\}$, and $\lambda_i \in \K$, $f \vert_{\th_i = \lambda_i}$ denotes the evaluation of $f$ at  
$\th_i=\lambda_i$. Define $\text{Dom}(f)$ to be the subset of $\K^n$ of elements $(\lambda_1,\dots,\lambda_n)$ on which values of $f$ are finite. 

\vskip .1cm 
The system \eqref{eqsigib} is closely related to the following linear system of finite-difference equations for one element $\sigma\in \K(n)$:  
\begin{equation}\label{eqsigibfopo2}
\Delta_i\Delta_j \left( \th_{ij}\sigma\right) =0\ ,\ i,j=1,\dots,n \ ,
\end{equation}
or, equivalently, 
\begin{equation}\label{eqsigibfopo}
\th_{ij} \Delta_j\Delta_i \sigma =\Delta_i \sigma -\Delta_j \sigma\ ,\ i,j=1,\dots,n \ .
\end{equation}
We call this system the $\Delta$-system in $\K(n)$. 

\vskip .3cm
{\bf Complete symmetric polynomials}
\begin{defi}\label{vespcoh}
Let $\mathcal{H}=\oplus_{j=0}^\infty\K H_j$ and $\mathcal{H}'=\oplus_{j=1}^\infty\K H_j$ where $H_j$ are the complete symmetric polynomials in the variables 
$\th_1,\dots,\th_n$,
\begin{equation}\label{homogeneouspoly11}
H_j=\sum_{1 \leq i_1\leq \dots\leq i_j\leq n}\th_{i_1}\dots \th_{i_j}\ .
\end{equation}  
\end{defi}

\begin{lemm}\label{idesy0}
Let $k \in \mathbb{Z}_{\geq 0}$. We have
\begin{equation}\label{equidesy0}
\sum_{j=1}^n\frac{\th_j^k}{\chi_j} = \left\{\begin{array}{ll}0\ ,& k=0,1,\dots,n-2\ ,\\[.4em]
H_{k-n+1}\ ,&k \geq n-1 \ .\end{array}\right.
\end{equation}
\end{lemm}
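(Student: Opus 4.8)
The plan is to package the left-hand side into a generating function in an auxiliary variable $t$ and to recognize the closed form as the generating function of the complete symmetric polynomials $H_j$. Concretely, set
$$G(t):=\sum_{k\geq 0}\Bigl(\sum_{j=1}^n\frac{\th_j^k}{\chi_j}\Bigr)t^k=\sum_{j=1}^n\frac{1}{\chi_j}\sum_{k\geq 0}(\th_j t)^k=\sum_{j=1}^n\frac{1}{\chi_j\,(1-\th_j t)}\ ,$$
where the inner geometric series is summed as a formal power series in $t$ with coefficients in $\K(n)$. Since the coefficient of $t^k$ in $G(t)$ is precisely the sum to be evaluated, it suffices to identify $G(t)$ in closed form. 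I claim that $G(t)=t^{n-1}\big/\prod_{l=1}^n(1-\th_l t)$.

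To prove this, I would clear the common denominator $\prod_{l}(1-\th_l t)$ and reduce the claim to the polynomial identity
$$R(t):=\sum_{j=1}^n\frac{1}{\chi_j}\prod_{l:l\neq j}(1-\th_l t)=t^{n-1}\ ,$$
an equality between polynomials in $t$ of degree at most $n-1$. Recalling that $\chi_m=\prod_{l:l\neq m}(\th_m-\th_l)$ by \eqref{defphichi}, I evaluate $R$ at the $n$ points $t=\th_m^{-1}$. For $j\neq m$ the factor with $l=m$ in $\prod_{l\neq j}(1-\th_l t)$ vanishes at $t=\th_m^{-1}$, so only the term $j=m$ survives, giving
$$R(\th_m^{-1})=\frac{1}{\chi_m}\prod_{l:l\neq m}\Bigl(1-\frac{\th_l}{\th_m}\Bigr)=\frac{1}{\chi_m}\cdot\frac{\prod_{l:l\neq m}(\th_m-\th_l)}{\th_m^{\,n-1}}=\frac{1}{\th_m^{\,n-1}}=(\th_m^{-1})^{n-1}\ .$$
As the $\th_m$ are pairwise distinct in $\K(n)$, the values $\th_1^{-1},\dots,\th_n^{-1}$ are $n$ distinct points, and a polynomial of degree at most $n-1$ is determined by its values at $n$ distinct points; hence $R(t)-t^{n-1}=0$. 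This Lagrange-interpolation step is the one genuine piece of the argument, and its only subtlety is bookkeeping: justifying the geometric-series manipulation at the level of formal power series and using that every $\chi_j$ is invertible in $\K(n)$.

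To conclude, I would invoke the standard generating function $\prod_{l=1}^n(1-\th_l t)^{-1}=\sum_{m\geq 0}H_m t^m$ for the complete symmetric polynomials \eqref{homogeneouspoly11}, obtaining
$$G(t)=t^{n-1}\sum_{m\geq 0}H_m t^m=\sum_{m\geq 0}H_m\,t^{m+n-1}\ .$$
Extracting the coefficient of $t^k$ then yields $\sum_{j}\th_j^k/\chi_j=0$ for $k\leq n-2$, since no nonnegative $m$ satisfies $m+n-1=k$, and $\sum_{j}\th_j^k/\chi_j=H_{k-n+1}$ for $k\geq n-1$, taking $m=k-n+1$; this is exactly \eqref{equidesy0}. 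As an alternative route worth mentioning, one may instead read $\sum_{j}\th_j^k/\chi_j$ as the sum of the residues of $x^k\prod_{l}(x-\th_l)^{-1}$ at its finite poles and evaluate it through the residue at infinity, which produces the same $H_{k-n+1}$.
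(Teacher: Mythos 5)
Your proof is correct and follows essentially the same route as the paper: both package the sums into the generating function $\sum_j \chi_j^{-1}(1-\th_j t)^{-1}$, identify it with $t^{n-1}\prod_l(1-\th_l t)^{-1}$, and read off the coefficients against the generating function of the $H_m$. The only difference is that the paper simply invokes this identity as the standard partial fraction decomposition of $\prod_j(t^{-1}-\th_j)^{-1}$, whereas you prove it from scratch by Lagrange interpolation at the points $t=\th_m^{-1}$ — a self-contained justification of the same key step.
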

\begin{proof}
For any $k \in \mathbb{Z}_{\geq 0}$ and $t$ an auxiliary indeterminate, we define
$P_k = \sum_{j=1}^n \frac{\th_j^k}{\chi_j}$ and $P(t) = t \sum_{k=0}^{\infty} P_k t^k $.
We have
$$P(t) = t \sum_{k=0}^{+\infty} \sum_{j=1}^n \frac{(\th_j t)^k}{\chi_j} = t \sum_{j=1}^n \frac{\sum_{k=0}^{+\infty} (\th_j t)^k}{\chi_j} = \sum_{j=1}^n \frac{t}{1-\th_j t}\frac{1}{\chi_j} = \sum_{j=1}^n \frac{1}{t^{-1}-\th_j }\frac{1}{\chi_j} \ .$$
The last term expression is the partial fraction decomposition of $\prod_{j=1}^n (t^{-1}-\th_j)^{-1}$ with respect to $t^{-1}$. Therefore, 
$P(t) = \prod_{j=1}^n \frac{1}{t^{-1}-\th_j} = t \sum_{k=n-1}^{+\infty} H_{k-n+1} t^k$
which concludes the proof.
\end{proof}
{\bf Subspace $\mathcal{W}$ of $\Un$}
\begin{defi}\label{vespmj}
Let $\Wc_j$, respectively, $\Wb_j$, $j=1,\dots,n$, be vector spaces of elements of $\K(n)$ of the form
\begin{equation}
\frac{\pi(\th_j)}{\chi_j}\ \ \text{where} \ \pi(t) \in \K[t] \ ,\ \ \text{respectively}\ ,\ \pi(t) \in \K(t)\ , 
\end{equation}
and $\chi_j$ is defined in \eqref{defphichi}. Let $\Wc=\sum\Wc_j$ and $\Wb=\sum\Wb_j$. 
\end{defi}

\begin{prop}\label{WbandWclink}
We have
$\Wc = \Wb \cap \Un$.
\end{prop}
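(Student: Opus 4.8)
The plan is to establish the two inclusions $\Wc\subseteq\Wb\cap\Un$ and $\Wb\cap\Un\subseteq\Wc$ separately. The first is routine: $\Wc_j\subseteq\Wb_j$ because $\K[t]\subseteq\K(t)$, so $\Wc\subseteq\Wb$; and since $\chi_j=\prod_{k\neq j}\th_{jk}$ is a product of the very elements inverted in passing from $\U(n)$ to $\Un$, we have $1/\chi_j\in\Un$, whence $\pi(\th_j)/\chi_j\in\Un$ for every $\pi\in\K[t]$. Summing over $j$ gives $\Wc\subseteq\Un$, and the first inclusion follows.

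For the converse, take $w\in\Wb\cap\Un$ and fix any representation $w=\sum_{i=1}^n r_i(\th_i)/\chi_i$ with $r_i\in\K(t)$, afforded by $w\in\Wb$. Fix an index $j$ and regard $w$ as a rational function of the single variable $\th_j$ with coefficients in the field $F$ of rational functions in the remaining variables $\th_i$, $i\neq j$; I would then compare two descriptions of its poles. On one hand, for $i\neq j$ the product $\chi_i$ involves $\th_j$ only through the factor $\th_i-\th_j$, so the $i$-th summand has in $\th_j$ only the moving pole $\th_j=\th_i$; the summand $i=j$ is the only one that can have poles at constant values $\th_j=c$ with $c\in\K$, and these occur exactly at the finite poles of $r_j$. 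On the other hand, $w\in\Un$ means $w=P/D$ with $D$ a product of factors $\th_{ik}+m$; each such factor that involves $\th_j$ has, as a polynomial in $\th_j$, its root at a point of $F\setminus\K$ (a moving pole), so $w$ has no pole at any constant $\th_j=c$.

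Combining the two descriptions, the part of the partial-fraction decomposition of $w$ in $\th_j$ supported at constant poles both vanishes (because $w\in\Un$) and coincides with the corresponding part of $r_j(\th_j)/\chi_j$ (because no other summand contributes constant poles). Hence $r_j$ has no finite poles, i.e. $r_j\in\K[t]$. As $j$ is arbitrary, all the $r_i$ are polynomials and $w\in\sum_i\Wc_i=\Wc$, which is the second inclusion.

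The main obstacle is that the representation $w=\sum_i r_i/\chi_i$ is not unique, so one cannot simply read off the $r_i$; the pole-separation argument circumvents this by forcing polynomiality in \emph{every} such representation. The one point demanding care is the distinction, inside $F[\th_j]$, between the factor $\th_j-c$ with $c\in\K$ and the factors $\th_j-(\th_k\mp m)$ coming from $D$: their constant terms $\th_k\mp m$ are transcendental over $\K$, so, $\K$ being algebraically closed, these are genuinely distinct irreducible factors and the partial-fraction decomposition cleanly separates the constant poles from the moving ones.
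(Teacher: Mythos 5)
Your proof is correct and follows essentially the same route as the paper's: fix $j$, view the element as a rational function of the single variable $\th_j$ over the field of rational functions in the remaining variables, and observe that membership in $\Un$ excludes poles at constant values $\th_j=c$, $c\in\K$, while only the $j$-th summand $r_j(\th_j)/\chi_j$ can produce such poles --- forcing each $r_j$ to be a polynomial. The paper states this pole-separation argument in three lines; you have merely supplied the details (the easy inclusion, the partial-fraction bookkeeping, and the distinction between constant and moving linear factors) that it leaves implicit.
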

\begin{proof}
Fix $j=1,\dots,n$. Consider $f\in\Wb$ as a rational function in one variable $\th_j$, with values in the field of rational functions in other variables.
Possible poles of $f$ are at $\th_j=\th_k$, $k>1$, or at $\th_j=c$, $c\in\K$. For $f\in  \Wb \cap \Un$ the latter possibility is excluded,
whence the result.
\end{proof}

\begin{lemm}\label{idesy}
(i) Select $j\in\{1,\dots,n\}$. Then we have a direct sum decomposition 
\begin{equation}\label{cosyinw2}
\mathcal{W} = \bigoplus_{k:k\neq j}\mathcal{W}_j\oplus\mathcal{H}\ .
\end{equation}
(ii) The space $\mathcal{H}$ is a subspace of $\mathcal{W}$. Moreover,
\begin{equation}\label{cosyinw}
\mathcal{H}=\mathcal{W}\cap \U(n)\ .
\end{equation}
(iii) The symmetric group $\mathbb{S}_n$ acts on the ring $\Un$ and on the space $\mathcal{W}$ by permutations of the variables $\th_1,\dots,\th_n$ and
\begin{equation}\label{cosyinwz}
\mathcal{H}=\mathcal{W}^{\, \mathbb{S}_n}\ ,
\end{equation}
where $\mathcal{W}^{\, \mathbb{S}_n}$ denotes the subspace of $\mathbb{S}_n$-invariants in $\mathcal{W}$.
\end{lemm}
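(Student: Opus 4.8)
The plan is to realize every element of $\mathcal{W}$ as a rational function in $\th_1,\dots,\th_n$ whose only singularities are simple poles along the hyperplanes $\th_a=\th_b$, and to control such functions by their residues. For $a\neq b$ I would write $\varrho_{ab}(w)$ for the residue of $w$ in the variable $\th_a$ along $\th_a=\th_b$; this is a rational function of the remaining variables. A one-line computation on a generator $\pi(\th_k)/\chi_k$ of $\mathcal{W}_k$ (recall $\chi_k=\prod_{l\neq k}(\th_k-\th_l)$) shows that $\varrho_{ab}$ annihilates it unless $k\in\{a,b\}$, and that $\varrho_{ab}\bigl(\pi(\th_a)/\chi_a\bigr)=\pi(\th_b)/\prod_{l\neq a,b}(\th_b-\th_l)$, which vanishes precisely when $\pi=0$. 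Since $\mathcal{H}\subset\U(n)$ consists of polynomials, $\varrho_{ab}$ also kills $\mathcal{H}$. These functionals $\varrho_{ab}$, together with the identities of Lemma~\ref{idesy0}, are the only tools I would use.

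For part (i), I would first record the two easy inclusions. Lemma~\ref{idesy0} gives $H_{\ell}=\sum_k\th_k^{\ell+n-1}/\chi_k\in\mathcal{W}$, so $\mathcal{H}\subseteq\mathcal{W}$; and rewriting $\th_j^m/\chi_j=\bigl(\sum_k\th_k^m/\chi_k\bigr)-\sum_{k\neq j}\th_k^m/\chi_k$, where the first sum is $0$ or $H_{m-n+1}\in\mathcal{H}$, shows $\mathcal{W}=\sum_{k\neq j}\mathcal{W}_k+\mathcal{H}$. Directness is then immediate from the residues: applying $\varrho_{aj}$ (for each fixed $a\neq j$) to a relation $\sum_{k\neq j}w_k+h=0$ leaves only the term $\varrho_{aj}(w_a)$, forcing the numerator of $w_a$ to vanish, hence $w_a=0$ for all $a\neq j$ and then $h=0$.

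Part (ii) is now a corollary. The inclusion $\mathcal{H}\subseteq\mathcal{W}\cap\U(n)$ is clear. Conversely, writing $w\in\mathcal{W}\cap\U(n)$ through part (i) as $\sum_{k\neq j}w_k+h$, the fact that $w$ is a polynomial gives $\varrho_{aj}(w)=0$; but $\varrho_{aj}(w)=\varrho_{aj}(w_a)$, which forces $w_a=0$ for every $a\neq j$, so $w=h\in\mathcal{H}$.

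For part (iii), the action is well defined because $\sigma(\chi_k)=\chi_{\sigma(k)}$ for $\sigma\in\mathbb{S}_n$, so $\sigma(\mathcal{W}_k)=\mathcal{W}_{\sigma(k)}$ and $\mathbb{S}_n$ permutes the summands of $\mathcal{W}$; as the $H_j$ are symmetric, $\mathcal{H}\subseteq\mathcal{W}^{\mathbb{S}_n}$. The reverse inclusion is the main obstacle, and I would handle it through the transformation law of residues under a transposition $\tau=(a\,b)$: since $\tau$ sends $\th_a-\th_b$ to its negative, one gets $\varrho_{ab}(\tau w)=-\varrho_{ab}(w)$. Hence an $\mathbb{S}_n$-invariant $w$ satisfies $\varrho_{ab}(w)=-\varrho_{ab}(w)$, so every residue vanishes; as the poles of any element of $\mathcal{W}$ are simple and confined to the hyperplanes $\th_a=\th_b$, vanishing of all $\varrho_{ab}$ means $w\in\U(n)$, and part (ii) then gives $w\in\mathcal{H}$. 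The two points requiring care are the justification that an element of $\mathcal{W}$ is polynomial exactly when all $\varrho_{ab}$ vanish, and the precise bookkeeping of the sign in the transposition law.
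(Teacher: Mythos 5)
Your proposal is correct. For part (i) it is essentially the paper's argument in different clothing: the paper resolves directness by invoking uniqueness of the partial fraction decomposition with respect to $\th_j$, and your residue functionals $\varrho_{aj}$ are exactly the linear forms by which that uniqueness is proved. Where you genuinely diverge is in parts (ii) and (iii): the paper dismisses both with the single sentence that they ``can be deduced from (i)'', whereas you supply complete arguments — for (ii), the observation that a polynomial has vanishing residues kills every component $w_a$ of the decomposition; for (iii), the antisymmetry $\varrho_{ab}(\tau w)=-\varrho_{ab}(w)$ under the transposition $\tau=(a\,b)$ forces all residues of an $\mathbb{S}_n$-invariant element to vanish, reducing (iii) to (ii). This is arguably cleaner and more self-contained than the deduction the paper presumably intends (decomposing an invariant element via (i) and chasing components under the $\mathbb{S}_n$-action, which permutes the spaces $\mathcal{W}_k$ and requires re-expansion through Lemma~\ref{idesy0}). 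The two points you flag as needing care are indeed the crux, and both hold as you state them: vanishing of all $\varrho_{ab}$ implies membership in $\U(n)$ because any element of $\mathcal{W}$, written in lowest terms, has a square-free denominator dividing $\prod_{a<b}(\th_a-\th_b)$, so any surviving factor would produce a nonzero residue along the corresponding hyperplane; and the sign in the transposition law comes from $\tau(\th_a-\th_b)=-(\th_a-\th_b)$ together with the fact that restriction to the hyperplane $\th_a=\th_b$ is $\tau$-invariant.
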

\begin{proof}
(i) Select $j\in\{1,\dots,n\}$ and suppose that
\begin{equation}\label{interWj}
\sum_{k \neq j} \frac{\pi_k(\th_k)}{\chi_k} + p= \sum_{k \neq j} \frac{\tilde{\pi}_k(\th_k)}{\chi_k} + \tilde{p}
\end{equation}
with $\pi_k(t), \tilde{\pi}_k(t) \in \K[t], k \neq j$ and $p, \tilde{p} \in \mathcal{H}$. The equation \eqref{interWj} is equivalent to
\begin{equation}\label{equaproofidesy1}
\sum_{k \neq j} \frac{\pi_k(\th_k)-\tilde{\pi}_k(\th_k)}{\chi_k} + (p-\tilde{p}) = 0 \ .
\end{equation}
The left hand side of \eqref{equaproofidesy1} is a partial fraction decomposition with respect to $\th_j$. It implies that $\pi_k(\th_k) = \tilde{\pi}_k(\th_k)$ for all $k \neq j$ and $p = \tilde{p}$. This implies the directness of the sum
$\bigoplus_{k:k\neq j}\mathcal{W}_j\oplus\mathcal{H}$
which by Lemma \ref{idesy0} is equal to $\mathcal{W}$. The propositions (ii) and (iii) can be deduced from (i).
\end{proof}

Let $X$ be an auxiliary indeterminate. We have a linear map of vector spaces
\begin{equation}\label{mappoln}\K[t]^n \to \mathcal{W}\end{equation}
defined by
\begin{equation}\label{formlinearmapK[t]W12}
(\pi_1(t),\dots,\pi_n(t))\mapsto \sum_{j=1}^n\, \frac{\pi_j(\th_j)}{\chi_j} \ .
\end{equation}
This map is surjective by definition of $\mathcal{W}$. It follows from Corollary \ref{coroidesy} that its kernel is the vector subspace of $\K[t]^n$ spanned by $n$-tuples $(t^j,\dots,t^j)$ for $j=0,1,\dots,n-2$. Lemma \ref{idesy0} implies that the image of the diagonal in $\K[t]^n$, formed by $n$-tuples $(\pi(t),\dots,\pi(t))$, is the space $\mathcal{H}$. 
We have the similar surjective map $\K(t)^n \to \Wb$, given by the same formula \eqref{formlinearmapK[t]W12}. 
The next corollary says that it has the same kernel as the map \eqref{mappoln}.

\begin{coro}\label{coroidesy}
For any $(\pi_1(t),\dots,\pi_n(t)) \in \K(t)^n$,
\begin{equation}\label{kernelunivratiofunct12}
\sum_{j=1}^n \, \frac{\pi_j(\th_j)}{\chi_j} = 0
\end{equation}
if and only if all $\pi_j(t)$ are equal to the same polynomial of degree less or equal to $n-2$.
\end{coro}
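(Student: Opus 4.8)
The plan is to prove the two implications separately; the backward (``if'') direction is immediate, and essentially all the work lies in the forward direction, whose main point is to reduce the rational data $\pi_j\in\K(t)$ to a single polynomial and then bound its degree. Throughout I use that $\chi_j=\prod_{k\neq j}(\th_j-\th_k)$, see \eqref{defphichi}, and that $\K$ is algebraically closed.

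For the ``if'' direction, suppose all $\pi_j$ equal a common polynomial $\pi(t)=\sum_{k=0}^{n-2}a_k t^k$ of degree at most $n-2$. By linearity $\sum_{j}\pi(\th_j)/\chi_j=\sum_{k=0}^{n-2}a_k\sum_j \th_j^k/\chi_j$, and each inner sum vanishes by Lemma \ref{idesy0}; hence the left-hand side of \eqref{kernelunivratiofunct12} is $0$.

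For the ``only if'' direction I would read the identity $\sum_{j=1}^n\pi_j(\th_j)/\chi_j=0$ as an equality of rational functions in the single variable $\th_1$ over the field $\K(\th_2,\dots,\th_n)$, and extract its polar data. As a function of $\th_1$, the $j=1$ term has simple poles at each $\th_1=\th_\ell$ ($\ell\geq2$) coming from $\chi_1$, together with any poles contributed by $\pi_1$ at constant points $\th_1=c$; each term $j=\ell\geq2$ has a simple pole only at $\th_1=\th_\ell$, arising from the factor $\th_\ell-\th_1$ in $\chi_\ell$. Since the total is the zero function, every polar part must vanish. First, the residue at $\th_1=\th_\ell$ receives contributions only from the $j=1$ and $j=\ell$ terms and equals $\bigl(\pi_1(\th_\ell)-\pi_\ell(\th_\ell)\bigr)$ times the invertible factor $\prod_{k\geq2,\,k\neq\ell}(\th_\ell-\th_k)^{-1}$; its vanishing forces $\pi_1(t)=\pi_\ell(t)$ in $\K(t)$ for every $\ell$, so $\pi_1=\dots=\pi_n=:\pi$. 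Second, if this common $\pi$ were not a polynomial its denominator would have a root $c\in\K$, and the pole at $\th_1=c$ would appear only in the $j=1$ term $\pi(\th_1)/\chi_1$, the other terms being regular there because the constant $c$ differs from the indeterminates $\th_k$; since $\chi_1$ does not vanish at $\th_1=c$ this leaves an uncancelled pole, a contradiction. Hence $\pi$ is a polynomial.

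It remains to bound the degree. With $\pi=\sum_k a_k t^k$ a genuine polynomial and $\sum_j\pi(\th_j)/\chi_j=0$, Lemma \ref{idesy0} kills the terms with $k\leq n-2$ and leaves $\sum_{k\geq n-1}a_k H_{k-n+1}=0$; since the complete symmetric polynomials $H_0,H_1,H_2,\dots$ are homogeneous of pairwise distinct degrees, they are linearly independent over $\K$, so $a_k=0$ for all $k\geq n-1$, i.e. $\deg\pi\leq n-2$. The main obstacle is precisely the passage from polynomial to rational inputs: Lemma \ref{idesy0} only controls polynomial arguments, so one must first exploit the pole structure in the single variable $\th_1$ to force the $\pi_j$ to coincide and then to be polynomial, after which Lemma \ref{idesy0} together with the linear independence of the $H_m$ completes the degree estimate.
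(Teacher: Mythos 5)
Your proof is correct and follows essentially the same route as the paper: the paper's one-line proof cites Proposition \ref{WbandWclink}, Lemma \ref{idesy} and Lemma \ref{idesy0}, whose content is precisely the single-variable pole/partial-fraction analysis (residues at $\th_1=\th_\ell$ forcing the $\pi_j$ to coincide, poles at constants $c\in\K$ forcing polynomiality) together with the power-sum identity that you carry out by hand. The only difference is organizational: you inline these arguments and invoke the linear independence of the $H_m$ directly, where the paper appeals to the already-established decomposition results.
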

\begin{proof}
We use Proposition \ref{WbandWclink}, Lemma \ref{idesy} (iii) and Lemma \ref{idesy0}. 
\end{proof}

\paragraph{Potential.}
We give a general solution of the system \eqref{eqsigib}. We need the following lemma.

\begin{lemm}\label{lemDDf=0}
For any $i,j = 1,\dots,n$, $i\neq j$, and any $f \in \K(n)$,
$$\Delta_i \Delta_j (f) = 0 \quad \Longleftrightarrow \quad f = f_i + f_j$$
where $f_i, f_j \in \K(n)$ are respectively independent of $\th_i$ and $\th_j$.
\end{lemm}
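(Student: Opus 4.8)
The plan is to prove the two implications separately, after first isolating the elementary fact that drives everything. That fact is a \emph{one-variable sub-lemma}: for a fixed index $k$ and any $\phi\in\K(n)$, one has $\Delta_k\phi=0$ if and only if $\phi$ is independent of $\th_k$. The implication $\Leftarrow$ is immediate, since $\phi[-\varepsilon_k]=\phi$ when $\phi$ does not involve $\th_k$. For $\Rightarrow$ I would regard $\phi$ as a rational function of the single variable $\th_k$ over the field $\K(\th_l:l\neq k)$; the relation $\Delta_k\phi=0$ says $\phi(\th_k)=\phi(\th_k-1)$, i.e. $\phi$ is invariant under $\th_k\mapsto\th_k-1$. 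Choosing a value $a\in\K$ that is not a pole of $\phi$ (possible because $\K$ is infinite and $\phi$ has finitely many poles) and iterating the invariance gives $\phi(a)=\phi(a-1)=\phi(a-2)=\cdots$, so the rational function $\phi-\phi(a)$ has infinitely many zeros and must vanish; hence $\phi$ is constant in $\th_k$.

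With the sub-lemma in hand the direction $\Leftarrow$ of the Lemma is formal. Since $i\neq j$, the operators $\Delta_i$ and $\Delta_j$ act on disjoint variables and therefore commute, so $\Delta_i\Delta_j(f_i+f_j)=\Delta_j(\Delta_i f_i)+\Delta_i(\Delta_j f_j)=0$, using $\Delta_i f_i=0$ and $\Delta_j f_j=0$.

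For the forward direction I would set $u:=\Delta_j f$; the hypothesis $\Delta_i\Delta_j f=0$ reads $\Delta_i u=0$, so by the sub-lemma $u$ is independent of $\th_i$. I then fix a generic $c\in\K$, chosen to avoid the finitely many values at which the denominators of $f$ and of $f[-\varepsilon_j]$ specialize to the zero polynomial in the remaining variables, and define $f_i:=f\vert_{\th_i=c}$ and $f_j:=f-f_i$. By construction $f_i$ is independent of $\th_i$. Because $i\neq j$, the shift $[-\varepsilon_j]$ commutes with the evaluation $\vert_{\th_i=c}$, whence $\Delta_j f_i=(\Delta_j f)\vert_{\th_i=c}=u\vert_{\th_i=c}=u$, the last step using that $u$ is independent of $\th_i$. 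Therefore $\Delta_j f_j=\Delta_j f-\Delta_j f_i=u-u=0$, and the sub-lemma gives that $f_j$ is independent of $\th_j$. This produces exactly the decomposition $f=f_i+f_j$.

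The hard part will be the sub-lemma: the passage from finite-difference invariance to genuine independence of a variable genuinely uses that $\K$ is infinite (the statement fails over finite fields) together with the fact that a nonzero rational function has only finitely many zeros. The only other point requiring care is the choice of the generic specialization $c$, ensuring that $f\vert_{\th_i=c}\in\K(n)$ is well defined and that evaluation commutes with $\Delta_j$; once these are settled, the remaining manipulations are routine.
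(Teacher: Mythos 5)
Your proposal is correct and follows essentially the same route as the paper: set $u=\Delta_j f$, observe that the hypothesis makes $u$ independent of $\th_i$, then put $f_i:=f\vert_{\th_i=c}$ for a well-chosen $c\in\K$ and $f_j:=f-f_i$, using that evaluation at $\th_i=c$ commutes with $\Delta_j$. The only difference is that you explicitly isolate and prove the one-variable sub-lemma (that $\Delta_k\phi=0$ forces $\phi$ to be independent of $\th_k$, via the infinitely-many-zeros argument), a fact the paper's proof uses silently.
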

\begin{proof}
If $\Delta_i \Delta_j (f) = 0$ then $g = \Delta_j (f)$ is independent of $\th_i$. Consider $\mu_i \in \K$ such that the evaluation $f \vert_{\th_i=\mu_i}$ is well defined, we write $f_i = f \vert_{\th_i=\mu_i}$. Then $g = g\vert_{\th_i=\mu_i} = \Delta_j (f \vert_{\th_i=\mu_i}) = \Delta_j (f_i)$ so
$\Delta_j (f - f_i) = g-g = 0$
and writing $f_j = f - f_i$ gives the result.
\end{proof}

If $f\in\Un$ and $\Delta_i \Delta_j (f) = 0$ then one can choose $f_i, f_j \in \Un$, see Lemma 17 of \cite{HO2}. The proof uses the partial decomposition in $\Un$ following  from the formula
\begin{equation}\label{partifracdeco112}
\frac{1}{(\th_{il}+k)(\th_{jl}+k')} = \frac{1}{\th_{ij}+k-k'} \left( \frac{1}{\th_{jl}+k'} - \frac{1}{\th_{il}+k} \right) \ , \ i,j,l=1,\dots,n \ , \ k,k' \in \mathbb{Z} \ .
\end{equation}

\begin{lemm}\label{exiopo}
Let $\sigma_1,\dots,\sigma_k$, $k\leq n$, be a $k$-tuple of elements in $\K(n)$ such that 
\begin{equation}\label{sepr4}
\Delta_i(\sigma_j)=\Delta_j (\sigma_i)\ ,\ i,j=1,\dots,k\ .\end{equation}
Assume that $\sigma_i$ belongs to the image of $\Delta_i\colon\K(n)\to\K(n)$ for all $i=1,\dots,k$. Then there exists a potential $f\in \K(n)$ such that 
\begin{equation}\label{sepr5}
\sigma_i = \Delta_i (f)\ ,\ i=1,\dots,k\ .
\end{equation}  
\end{lemm}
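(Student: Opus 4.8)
The plan is to prove this by induction on $k$, viewing the statement as a discrete analogue of the Poincar\'e lemma: the ``closedness'' condition \eqref{sepr4}, together with the assumption that each $\sigma_i$ lies in the image of $\Delta_i$, should guarantee the existence of a global ``potential'' $f$. For the base case $k=1$ there is nothing to do: by hypothesis $\sigma_1\in\operatorname{im}\Delta_1$, so some $f\in\K(n)$ satisfies $\Delta_1(f)=\sigma_1$.

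For the inductive step I would assume the lemma for $k-1$, producing $g\in\K(n)$ with $\Delta_i(g)=\sigma_i$ for $i=1,\dots,k-1$ (the first $k-1$ elements clearly still satisfy \eqref{sepr4} and the image hypothesis). The idea is to correct $g$ by a term annihilated by $\Delta_1,\dots,\Delta_{k-1}$, so as not to disturb the equations already solved, while fixing the $k$-th equation. Set $\tau:=\sigma_k-\Delta_k(g)$. Using that the operators $\Delta_i$ pairwise commute and the closedness relation $\Delta_i(\sigma_k)=\Delta_k(\sigma_i)$, for each $i<k$ one computes
\[
\Delta_i(\tau)=\Delta_i(\sigma_k)-\Delta_i\Delta_k(g)=\Delta_i(\sigma_k)-\Delta_k\Delta_i(g)=\Delta_i(\sigma_k)-\Delta_k(\sigma_i)=0\ .
\]
Since a rational function in $\th_i$ that is invariant under $\th_i\mapsto\th_i-1$ must be independent of $\th_i$ (the one-variable periodicity fact already used in the proof of Lemma \ref{lemDDf=0}), this shows $\tau$ is independent of $\th_1,\dots,\th_{k-1}$.

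It then remains to solve $\Delta_k(h)=\tau$ with $h$ itself independent of $\th_1,\dots,\th_{k-1}$, for then $f:=g+h$ satisfies $\Delta_i(f)=\sigma_i$ for $i<k$ (because $\Delta_i(h)=0$) and $\Delta_k(f)=\Delta_k(g)+\tau=\sigma_k$. Because $\sigma_k\in\operatorname{im}\Delta_k$ and $\Delta_k(g)\in\operatorname{im}\Delta_k$ trivially, we have $\tau\in\operatorname{im}\Delta_k$, so there is $p\in\K(n)$ with $\Delta_k(p)=\tau$. I would then borrow the evaluation trick of Lemma \ref{lemDDf=0}: choosing generic constants $\mu_1,\dots,\mu_{k-1}\in\K$ in $\operatorname{Dom}(p)$ and putting $h:=p\vert_{\th_1=\mu_1,\dots,\th_{k-1}=\mu_{k-1}}$, the function $h$ is by construction independent of $\th_1,\dots,\th_{k-1}$; since $\Delta_k$ acts only on $\th_k$ it commutes with this evaluation, so $\Delta_k(h)=\tau\vert_{\th_1=\mu_1,\dots,\th_{k-1}=\mu_{k-1}}=\tau$, the last equality because $\tau$ does not depend on $\th_1,\dots,\th_{k-1}$.

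The main point to get right --- and the only place where both hypotheses are genuinely used together --- is this inductive step: the closedness condition \eqref{sepr4} is exactly what forces $\tau$ to be annihilated by $\Delta_1,\dots,\Delta_{k-1}$, while the image hypothesis is what lets us find a $\Delta_k$-preimage at all; the evaluation then upgrades this preimage to one living in the subfield of functions independent of $\th_1,\dots,\th_{k-1}$, which is what keeps the correction compatible with the previously solved equations. I expect the most delicate bookkeeping to be the distinction between ``being independent of $\th_i$'' and ``lying in the kernel of $\Delta_i$'', but this is settled by the same one-variable periodicity argument already invoked in Lemma \ref{lemDDf=0}.
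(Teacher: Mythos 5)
Your proof is correct, but its inductive step is genuinely different from the one in the paper. The paper also inducts on $k$ with the same trivial base case, but instead of solving the first $k-1$ equations once and then correcting the last one, it applies the inductive hypothesis to the \emph{two overlapping} $(k-1)$-subsets $\{1,3,\dots,k\}$ and $\{2,3,\dots,k\}$, producing two potentials $F$ and $G$; since $\Delta_l(F)=\Delta_l(G)=\sigma_l$ for $l=3,\dots,k$ and $\Delta_1\Delta_2(F-G)=\Delta_2(\sigma_1)-\Delta_1(\sigma_2)=0$ by \eqref{sepr4}, the element $F-G$ is independent of $\th_3,\dots,\th_k$ and Lemma \ref{lemDDf=0} splits it as $F-G=u-v$ with $\Delta_2(u)=\Delta_1(v)=0$, so that $f:=F+v=G+u$ glues the two partial potentials. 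The trade-off is this: in the paper's argument all the ``analytic'' content (periodicity of rational functions, evaluation at generic points) is packaged once and for all inside Lemma \ref{lemDDf=0}, the induction itself is purely formal, and the image hypothesis is consumed only through the base case; in your argument you bypass Lemma \ref{lemDDf=0} entirely --- only the one-variable periodicity fact is needed --- but you must redo the evaluation/genericity trick by hand to force $h$ to be independent of $\th_1,\dots,\th_{k-1}$, and you invoke the hypothesis $\sigma_k\in\operatorname{im}\Delta_k$ afresh at every inductive step. Your version is the standard inductive proof of a discrete Poincar\'e lemma (``correct the last equation by something killed by the previous $\Delta_i$''), while the paper's is a gluing of two partial potentials; both are complete, and your supporting steps (pairwise commutativity of the $\Delta_i$, the fact that $\tau\in\operatorname{im}\Delta_k$ because the image is a linear subspace, and the compatibility of $\Delta_k$ with evaluation in the other variables) all hold.
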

\begin{proof} For $k=1$ there is nothing to prove. Let now $k>1$. We use the induction in $k$. By the induction hypothesis, there exist elements $F,G \in \K(n)$ 
such that
$$\sigma_i = \Delta_i(F)\ \text{for}\ i=1,3,\dots,n\ \text{and}\ \sigma_j = \Delta_j(G)\ \text{for}\ j=2,3,\dots,n\ .$$
Then 
\[\Delta_l(F)=\Delta_l(G) \ \text{for}\ l=3,\dots,n\ \text{and}\ \Delta_1\Delta_2(G)=\Delta_2\Delta_1(F)\ .\]
The element $F-G$ does not depend on $\th_l$, $l=3,\dots,n$, and $\Delta_1\Delta_2(F-G)=0$. According to Lemma \ref{lemDDf=0}, there exist 
two elements $u$ and $v$ such that $u$ depends only on $\th_1$,  $v$ depends only on $\th_2$, and $F-G=u-v$. Then $f:=F+v=G+u$
is the desired potential.
\end{proof}

Lemma \ref{exiopo} holds for $\sigma_i\in\Un$ instead of $\K(n)$, see Lemma 19 of \cite{HO2}.
 
\begin{prop}\label{propotentialb}
Assume that the elements $\sigma_1,\dots,\sigma_n \in \K(n)$ satisfy the system \eqref{eqsigib}. Then there exists 
an element $\sigma \in \K(n)$ such that 
\begin{equation}\label{potentialb}
\sigma_i = \Delta_i \sigma\ , \ i=1,\dots,n \ .
\end{equation}
\end{prop}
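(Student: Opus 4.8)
The plan is to reduce the statement to Lemma \ref{exiopo} applied with $k=n$. That lemma produces a potential $\sigma$ with $\sigma_i=\Delta_i\sigma$ as soon as the family $(\sigma_1,\dots,\sigma_n)$ satisfies the compatibility relations $\Delta_i(\sigma_j)=\Delta_j(\sigma_i)$ and each $\sigma_i$ lies in the image of $\Delta_i\colon\K(n)\to\K(n)$. So the whole task splits into verifying these two hypotheses directly from the system \eqref{eqsigib}, after which the conclusion is immediate.

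The compatibility relations come for free from the antisymmetry of \eqref{eqsigib}. Writing \eqref{eqsigib} for the pair $(i,j)$ gives $\th_{ij}\Delta_j\sigma_i=\sigma_i-\sigma_j$, while writing it for $(j,i)$ and using $\th_{ji}=-\th_{ij}$ gives $\th_{ij}\Delta_i\sigma_j=\sigma_i-\sigma_j$. Comparing the two and cancelling the factor $\th_{ij}$, which is invertible in $\K(n)$ for $i\neq j$, yields $\Delta_j\sigma_i=\Delta_i\sigma_j$, exactly the first hypothesis needed (the case $i=j$ being trivial).

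The step I expect to require the decisive observation is the image condition, since membership in the image of $\Delta_i$ is a genuine constraint on a rational function: not every element of $\K(n)$ is a finite difference in the $\th_i$-direction, so this cannot hold for formal reasons and must exploit the system. The idea is to exhibit an explicit preimage built from one of the other $\sigma_j$'s. Fix $i$ and choose any index $j\neq i$. Using $(\th_{ij}+1)[-\varepsilon_i]=\th_{ij}$ one computes
\[\Delta_i\big((\th_{ij}+1)\sigma_j\big)=(\th_{ij}+1)\sigma_j-\th_{ij}\,\sigma_j[-\varepsilon_i]=\th_{ij}\Delta_i\sigma_j+\sigma_j\ .\]
By the relation $\th_{ij}\Delta_i\sigma_j=\sigma_i-\sigma_j$ established in the previous paragraph, the right-hand side equals $\sigma_i$, whence $\sigma_i=\Delta_i\big((\th_{ij}+1)\sigma_j\big)$ and thus $\sigma_i$ lies in the image of $\Delta_i$ for every $i$. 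With both hypotheses verified, Lemma \ref{exiopo} with $k=n$ furnishes the potential $\sigma$ and completes the argument.
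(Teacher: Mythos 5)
Your proposal is correct and follows essentially the same route as the paper: both verify the two hypotheses of Lemma \ref{exiopo} by noting that $\Delta_i\sigma_j=\Delta_j\sigma_i$ follows from the antisymmetry of the system \eqref{eqsigib}, and that $\sigma_i=\Delta_i\bigl((\th_{ij}+1)\sigma_j\bigr)$ exhibits an explicit preimage, then invoke the lemma. Your write-up merely spells out the finite-difference computation that the paper states as an equivalence without detail.
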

\begin{proof}
The system \eqref{eqsigib} is equivalent to
$\sigma_j = \Delta_j \left[ (\th_{ji}+1) \sigma_i \right]$, $i \neq j \in \{1,\dots,n\}$.
So $\sigma_j$ is in the image of $\Delta_j$. Also, the system \eqref{eqsigib} implies that
$\Delta_i \sigma_j = \frac{\sigma_i-\sigma_j}{\th_{ij}} = \frac{\sigma_j-\sigma_i}{\th_{ji}} = \Delta_j \sigma_i$, $i \neq j \in \{1,\dots,n\}$.
Lemma \ref{exiopo} concludes the proof. 
\end{proof}

By Propositions \ref{equasigmasb} and \ref{propotentialb}, if the algebra $\Diff(\sigma_1,\dots,\sigma_n)$ has the PBW property then an element 
$\sigma$ such that $\sigma_i = \Delta_i \sigma , i=1,\dots,n$, exists.
We shall call the element $\sigma$ the ``potential"
and write $\Diffs(n) := \Diff(\sigma_1,\dots,\sigma_n)$.

\subsection{Solutions of $\Delta$-system in $\K(n)$}
In this section, we use the notation $\lb$ for any $(\lambda_1,\dots,\lambda_n) \in \K^n$ and
$$\chi_k := \prod_{\mtop{s=1}{s\neq k}}^n \th_{ks} \quad , \quad \chim{k}{i} := \prod_{\mtop{s=1}{s\neq k,i}}^n \th_{ks} \quad \text{and} \quad \chim{k}{i,j} := \prod_{\mtop{s=1}{s\neq k,i,j}}^n \th_{ks}$$
for any $i,j \in \nb$. If not otherwise specified, all sums and products run from $1$ to $n$.

\vskip .1cm
The solutions $\sigma \in \K(n)$ of the $\Delta$-system are given in Theorem \ref{theosolutionsystemequation}. The proof is done by induction on $n$. First, we prove three lemmas which link a solution $\sigma \in \K(n)$ of the $\Delta$-system to its evaluations $\sigma |_{\th_i=\lambda_i}, \sigma |_{\th_j=\lambda_j}$ and $\sigma |_{\mtop{\th_i=\lambda_i}{\th_j=\lambda_j}}$ with $(\lambda_1,\dots,\lambda_n) \in \text{Dom}(\sigma)$. After proving Theorem \ref{theosolutionsystemequation}, the Corollary \ref{corosolsystrestrict} gives all the solutions of the $\Delta$-system which are in $\Un$.

\vskip .1cm
\begin{lemm}\label{sigmainduction}
Let $\sigma \in \K(n)$ be a solution of the $\Delta$-system. For any $\lb \in \dom(\sigma)$ and $i \in \nb$, the element $\sigma |_{\th_i=\lambda_i} \in \K(n)$ is a solution of the $\Delta$-system in $\K(n-1)$.
\end{lemm}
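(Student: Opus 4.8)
The plan is to use that, for indices different from $i$, the finite-difference operators appearing in the $\Delta$-system do not involve the variable $\th_i$ and therefore commute with the partial evaluation $\sigma\mapsto\sigma|_{\th_i=\lambda_i}$. The $\Delta$-system in $\K(n-1)$ consists precisely of the equations \eqref{eqsigibfopo} indexed by pairs $j,k\in\nb\setminus\{i\}$ (in the variables $\th_k$, $k\neq i$), so it will suffice to evaluate each such equation for $\sigma$ at $\th_i=\lambda_i$ and to recognize the result as the corresponding equation for $\sigma|_{\th_i=\lambda_i}$.

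First I would fix $i\in\nb$ and $\lb\in\dom(\sigma)$ and check that all the evaluations used below are well defined. Writing $\sigma=P/Q$ with $P,Q\in\U(n)$, the hypothesis $\lb\in\dom(\sigma)$ guarantees that $Q$ does not vanish at $\lb$, so $Q|_{\th_i=\lambda_i}$ is a nonzero polynomial in the remaining variables and $\sigma|_{\th_i=\lambda_i}\in\K(n-1)$ is defined. Since a shift $-\varepsilon_j$ with $j\neq i$ affects only $\th_j$, it commutes with the substitution $\th_i=\lambda_i$ and sends nonzero polynomials to nonzero polynomials; hence the evaluations of $\sigma[-\varepsilon_j]$ and $\sigma[-\varepsilon_j-\varepsilon_k]$ ($j,k\neq i$) are defined as well. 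This yields, for $j\neq i$, the commutation identity
\begin{equation*}
\bigl(\Delta_j\sigma\bigr)\big|_{\th_i=\lambda_i}=\Delta_j\bigl(\sigma|_{\th_i=\lambda_i}\bigr)
\end{equation*}
of rational functions in the variables $\th_k$, $k\neq i$, and, by iteration, the analogue for $\Delta_k\Delta_j$ with $j,k\neq i$.

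The main step is then immediate: taking \eqref{eqsigibfopo} for $\sigma$ with indices $j,k\in\nb\setminus\{i\}$,
\begin{equation*}
\th_{jk}\,\Delta_k\Delta_j\sigma=\Delta_j\sigma-\Delta_k\sigma\ ,
\end{equation*}
evaluating at $\th_i=\lambda_i$, and using the commutation identity together with $\th_{jk}|_{\th_i=\lambda_i}=\th_{jk}$ (as $j,k\neq i$), I obtain
\begin{equation*}
\th_{jk}\,\Delta_k\Delta_j\bigl(\sigma|_{\th_i=\lambda_i}\bigr)=\Delta_j\bigl(\sigma|_{\th_i=\lambda_i}\bigr)-\Delta_k\bigl(\sigma|_{\th_i=\lambda_i}\bigr)\ ,
\end{equation*}
which is exactly the $\Delta$-system in $\K(n-1)$ for $\sigma|_{\th_i=\lambda_i}$. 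Note that the equations of the original system that involve the index $i$ simply do not appear in the $\K(n-1)$ system and need not be considered. The only genuinely delicate point is the well-definedness of the partial evaluation and its commutation with the shift operators $-\varepsilon_j$, $j\neq i$; once this bookkeeping is in place the conclusion follows at once.
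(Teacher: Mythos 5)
Your proof is correct and follows exactly the paper's argument: the paper's entire proof is the observation that $\left(\Delta_k f\right)\vert_{\th_i=\lambda_i}=\Delta_k\left(f\vert_{\th_i=\lambda_i}\right)$ for $k\neq i$, which is your commutation identity. The additional bookkeeping you supply (well-definedness of the partial evaluation and of the shifted evaluations) is a harmless elaboration of the same idea.
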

\begin{proof}
Since for any $f \in \K(n)$ and $k \in \nb \setminus \{i\}$ : $\left( \Delta_k f \right)\vert_{\th_i=\lambda_i} =  \Delta_k \left(f\vert_{\th_i=\lambda_i} \right)$.
\end{proof}

\begin{lemm}\label{sigmaforminduction}
For any $\sigma \in \K(n)$, $\lb \in \dom(\sigma)$ and $i \neq j$ in $\nb$, the equation $\Delta_i \Delta_j (\th_{ij}\sigma) = 0$ is equivalent to
$$\sigma = \frac{1}{\th_{ij}} \left[ (\lambda_i-\th_j) \sigma |_{\th_i=\lambda_i} + (\th_i-\lambda_j) \sigma |_{\th_j=\lambda_j} - (\lambda_i-\lambda_j) \sigma \vv{\mtop{\th_i=\lambda_i}{\th_j=\lambda_j}} \right] \ .$$
\end{lemm}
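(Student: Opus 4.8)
The plan is to recognize the condition $\Delta_i \Delta_j(\th_{ij}\sigma) = 0$ as a direct instance of Lemma \ref{lemDDf=0} applied to the single element $h := \th_{ij}\sigma \in \K(n)$. By that lemma, $\Delta_i\Delta_j(h) = 0$ holds if and only if $h = g_i + g_j$, where $g_i, g_j \in \K(n)$ are respectively independent of $\th_i$ and $\th_j$. Thus the whole statement reduces to showing that, for elements $h$ of this additively separated form, $h$ (and hence $\sigma = h/\th_{ij}$) is reconstructed from its three evaluations by the asserted formula, and conversely that the asserted formula always produces an $h$ of separated form.

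First I would establish the two-point interpolation identity
$$h = h|_{\th_i=\lambda_i} + h|_{\th_j=\lambda_j} - h|_{\th_i=\lambda_i,\th_j=\lambda_j}$$
valid for every separated $h = g_i + g_j$ and every $\lb \in \dom(h)$. This is a short computation: since $g_i$ does not involve $\th_i$ and $g_j$ does not involve $\th_j$, the substitution $\th_i=\lambda_i$ fixes $g_i$ and only acts on $g_j$, and symmetrically for $\th_j=\lambda_j$, so the three evaluations telescope back to $g_i + g_j$. The one point deserving care is that all three evaluations be defined, which is exactly what the hypothesis $\lb \in \dom(\sigma)$ guarantees (and then $\lb \in \dom(h)$ as well, since $\th_{ij}$ is regular at finite $\lambda$).

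Next I would substitute $h = \th_{ij}\sigma$ into this identity and rewrite the three evaluations of $h$ through those of $\sigma$, using $h|_{\th_i=\lambda_i} = (\lambda_i - \th_j)\,\sigma|_{\th_i=\lambda_i}$, $h|_{\th_j=\lambda_j} = (\th_i - \lambda_j)\,\sigma|_{\th_j=\lambda_j}$, and $h|_{\th_i=\lambda_i,\th_j=\lambda_j} = (\lambda_i - \lambda_j)\,\sigma|_{\th_i=\lambda_i,\th_j=\lambda_j}$. Dividing by $\th_{ij}$, which is invertible in $\K(n)$ because $i \neq j$, yields precisely the claimed expression for $\sigma$, which proves the forward implication.

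For the converse, I would simply note that if $\sigma$ equals the right-hand side then $\th_{ij}\sigma = (\lambda_i - \th_j)\,\sigma|_{\th_i=\lambda_i} + (\th_i - \lambda_j)\,\sigma|_{\th_j=\lambda_j} - (\lambda_i - \lambda_j)\,\sigma|_{\th_i=\lambda_i,\th_j=\lambda_j}$: the first summand is independent of $\th_i$, the second is independent of $\th_j$, and the constant third summand can be absorbed into either, so $\th_{ij}\sigma$ is of the separated form $g_i + g_j$ and Lemma \ref{lemDDf=0} returns $\Delta_i\Delta_j(\th_{ij}\sigma) = 0$. I expect the only genuine obstacle to be the bookkeeping of which evaluation kills which variable and the domain condition on $\lb$; the remaining algebra is routine.
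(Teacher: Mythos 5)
Your proof is correct, and it takes a somewhat different route from the paper's. The paper does not invoke Lemma \ref{lemDDf=0} at all: it proves the equivalence in a single chain, ``integrating'' the two difference operators successively at the point $\lb$ --- first, $\Delta_i\Delta_j(\th_{ij}\sigma)=0$ iff $\Delta_j(\th_{ij}\sigma)$ equals its evaluation at $\th_i=\lambda_i$, i.e.\ $\Delta_j\bigl[(\lambda_i-\th_j)\,\sigma|_{\th_i=\lambda_i}\bigr]$; then, iff $\th_{ij}\sigma-(\lambda_i-\th_j)\,\sigma|_{\th_i=\lambda_i}$ equals its own evaluation at $\th_j=\lambda_j$, and unpacking that evaluation gives exactly the displayed formula. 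Both implications fall out of the same chain, and only explicit functions built from $\sigma$ ever get evaluated. Your version is more modular: you quote Lemma \ref{lemDDf=0} to obtain a separated decomposition $\th_{ij}\sigma=g_i+g_j$, reconstruct $h=\th_{ij}\sigma$ via the inclusion--exclusion identity $h=h|_{\th_i=\lambda_i}+h|_{\th_j=\lambda_j}-h|_{\th_i=\lambda_i,\,\th_j=\lambda_j}$, and get the converse from the easy direction of the same lemma; this isolates nicely the structural reason the formula holds (a separated function is determined by two slices). The one wrinkle is that your telescoping computation evaluates $g_i$ and $g_j$ separately, while the hypothesis $\lb\in\dom(\sigma)$ only controls their sum $h$: a priori the summands supplied by Lemma \ref{lemDDf=0} could have poles at the evaluation points. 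This is easily patched --- a pole of $g_j$ along $\th_i=\lambda_i$ cannot be cancelled by $g_i$, which is independent of $\th_i$, so it would be a pole of $h$, contradicting $\lb\in\dom(h)$; or, cleaner, take the canonical decomposition $g_i=h|_{\th_i=\lambda_i}$, $g_j=h-g_i$ that the proof of Lemma \ref{lemDDf=0} actually produces, whose relevant evaluations are manifestly defined. With that remark your argument is complete; in substance it repackages the paper's two evaluation steps through the earlier lemma.
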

\begin{proof}
By successive evaluations at $\th_i=\lambda_i$ and $\th_j=\lambda_j$, we have
$$\begin{array}{rcl}
\Delta_i \Delta_j (\th_{ij} \sigma) = 0 &\Longleftrightarrow & \Delta_j (\th_{ij} \sigma) = \left. \left[ \Delta_j (\th_{ij} \sigma) \right] \right\vert_{\th_i=\lambda_i} = \Delta_j \left[(\lambda_i-\th_j) \left. \sigma  \right\vert_{\th_i=\lambda_i} \right] \\
& \Longleftrightarrow & \Delta_j \left[\th_{ij} \sigma - (\lambda_i-\th_j) \left. \sigma  \right\vert_{\th_i=\lambda_i} \right] = 0 \\
& \Longleftrightarrow & \th_{ij} \sigma - (\lambda_i-\th_j) \left. \sigma \right\vert_{\th_i=\lambda_i} = \left. \left[\th_{ij} \sigma - (\lambda_i-\th_j) \left. \sigma  \right\vert_{\th_i=\lambda_i} \right] \right\vert_{\th_j=\lambda_j} \\
& \Longleftrightarrow & \th_{ij} \sigma - (\lambda_i-\th_j) \left. \sigma  \right\vert_{\th_i=\lambda_i} = (\th_i-\lambda_j) \left. \sigma \right\vert_{\th_j=\lambda_j} - (\lambda_i-\lambda_j) \sigma \vv{\mtop{\th_i=\lambda_i}{\th_j=\lambda_j}} 
\end{array}$$
which gives the result.
\end{proof}
\begin{lemm}\label{sigmaforminduction2}
We fix $i \neq j$ in $\nb$. Let $\sigma \in \K(n)$ be such that 
$$\sigma|_{\th_i=\lambda_i} = \sum_{k \neq i} \frac{\alpha_k(\th_k)}{\chim{k}{i}} \ \  \text{and} \ \  \sigma|_{\th_j=\lambda_j} = \sum_{k \neq j} \frac{\beta_k(\th_k)}{\chim{k}{j}}\ \ \text{for some}\ \ \alpha_k(t),\beta_k(t)\in\K(t) \ .$$
Then one can rewrite $\sigma|_{\th_j=\lambda_j} $ in the form 
\begin{equation}\label{formsigmaeval2}
\sigma|_{\th_j=\lambda_j} = \sum_{k \neq j} \frac{\beta'_k(\th_k)}{\chim{k}{j}}\ ,
\end{equation}
where $ \beta'_k(t)\in\K(t) $ are such that 
\begin{equation}\label{formsigmaeval3}
\sigma\vv{\mtop{\th_i=\lambda_i}{\th_j=\lambda_j}} = \sum_{k \neq i,j} \frac{\gamma_k(\th_k)}{\chim{k}{i,j}} \ \ \text{with} \ \ \gamma_k(\th_k) = \frac{\alpha_k(\th_k)-\alpha_j(\lambda_j)}{\th_k-\lambda_j} =  \frac{\beta'_k(\th_k)-\beta'_i(\lambda_i)}{\th_k-\lambda_i} \ ,k\neq i,j\ .
\end{equation}
\end{lemm}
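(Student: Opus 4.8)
The plan is to prove the two expressions for $\gamma_k$ one at a time: the $\alpha$-formula follows from a direct evaluation combined with the partial-fraction identity of Lemma \ref{idesy0}, and the $\beta'$-formula then follows from the near-uniqueness of $\Wc$-expansions supplied by Corollary \ref{coroidesy}.

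First I would evaluate the hypothesis $\sigma|_{\th_i=\lambda_i}=\sum_{k\neq i}\alpha_k(\th_k)/\chim{k}{i}$ at $\th_j=\lambda_j$, separating off the term $k=j$. Using $\chim{j}{i}\vv{\th_j=\lambda_j}=\prod_{s\neq i,j}(\lambda_j-\th_s)$ and, for $k\neq i,j$, the factorization $\chim{k}{i}=\th_{kj}\chim{k}{i,j}$ (which becomes $(\th_k-\lambda_j)\chim{k}{i,j}$ after evaluation), this gives
\[\sigma\vv{\mtop{\th_i=\lambda_i}{\th_j=\lambda_j}}=\frac{\alpha_j(\lambda_j)}{\prod_{s\neq i,j}(\lambda_j-\th_s)}+\sum_{k\neq i,j}\frac{\alpha_k(\th_k)}{(\th_k-\lambda_j)\,\chim{k}{i,j}}\ .\]
Subtracting $\sum_{k\neq i,j}\gamma_k(\th_k)/\chim{k}{i,j}$ with $\gamma_k=(\alpha_k(\th_k)-\alpha_j(\lambda_j))/(\th_k-\lambda_j)$ leaves $-\alpha_j(\lambda_j)$ times the sum $\sum_{k\neq i,j}((\th_k-\lambda_j)\chim{k}{i,j})^{-1}+(\prod_{s\neq i,j}(\lambda_j-\th_s))^{-1}$, which is precisely the lowest-order case of Lemma \ref{idesy0} applied to the $n-1$ points $\{\th_k:k\neq i,j\}\cup\{\lambda_j\}$ and hence vanishes. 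This establishes the first equality.

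Next I would run exactly the same computation with $i$ and $j$ interchanged, now starting from $\sigma|_{\th_j=\lambda_j}=\sum_{k\neq j}\beta_k(\th_k)/\chim{k}{j}$; it yields the alternative expansion $\sigma\vv{\mtop{\th_i=\lambda_i}{\th_j=\lambda_j}}=\sum_{k\neq i,j}\gamma'_k(\th_k)/\chim{k}{i,j}$ with $\gamma'_k=(\beta_k(\th_k)-\beta_i(\lambda_i))/(\th_k-\lambda_i)$. Since $\{\gamma_k\}$ and $\{\gamma'_k\}$ expand the same element over the $n-2$ variables $\{\th_k:k\neq i,j\}$, Corollary \ref{coroidesy} forces $\gamma_k-\gamma'_k=\pi$ for one common polynomial $\pi$, of degree at most $n-4$, independent of $k$.

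Finally I would absorb $\pi$ by correcting the $\beta$'s: set $\beta'_k(t):=\beta_k(t)+(t-\lambda_i)\pi(t)$ for $k\neq j$. The correction terms $(\th_k-\lambda_i)\pi(\th_k)$ form a common polynomial of degree at most $n-3$ in the $n-1$ variables $\{\th_k:k\neq j\}$, so by Corollary \ref{coroidesy} they drop out of $\sum_{k\neq j}\,\cdot/\chim{k}{j}$; thus $\beta'_k$ still represents $\sigma|_{\th_j=\lambda_j}$, which is \eqref{formsigmaeval2}. Since $\beta'_i(\lambda_i)=\beta_i(\lambda_i)$, one computes $(\beta'_k(\th_k)-\beta'_i(\lambda_i))/(\th_k-\lambda_i)=\gamma'_k+\pi=\gamma_k$ for $k\neq i,j$, giving the second equality in \eqref{formsigmaeval3}. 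The only delicate point is this last matching: because $\Wc$-expansions are unique only up to the relations of Corollary \ref{coroidesy}, the $\gamma_k$ arising from the two sides need not coincide on the nose, and the content of the lemma is exactly that the degree bounds ($\deg\pi\leq n-4$ against the allowed degree $n-3$ for $\{\th_k:k\neq j\}$) are compatible, so that the single shift $\beta_k\mapsto\beta_k+(t-\lambda_i)\pi$ simultaneously preserves the expansion of $\sigma|_{\th_j=\lambda_j}$ and repairs the $\gamma$-formula.
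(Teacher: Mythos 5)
Your proof is correct and takes essentially the same route as the paper's: both produce the two expansions of $\sigma\vv{\mtop{\th_i=\lambda_i}{\th_j=\lambda_j}}$ over the variables $\{\th_k\}_{k\neq i,j}$, invoke Corollary \ref{coroidesy} to extract the common discrepancy polynomial of degree at most $n-4$, absorb it by setting $\beta'_k(t)=\beta_k(t)+(t-\lambda_i)p(t)$, and use the degree bound $n-3$ (allowed for the $n-1$ variables $\{\th_k\}_{k\neq j}$) to see that \eqref{formsigmaeval2} still holds. The only cosmetic difference is the order of operations in the first step: you evaluate at $\th_j=\lambda_j$ first and then apply the partial-fraction identity of Lemma \ref{idesy0} with one point specialized to $\lambda_j$, whereas the paper uses the identity to absorb the $k=j$ term before evaluating.
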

\begin{proof}
From Lemma \ref{idesy0}, we have
$$\sigma|_{\th_i=\lambda_i} = \sum_{k \neq i,j} \frac{\alpha_k(\th_k)-\alpha_j(\th_j)}{\chim{k}{i}} \quad \text{and} \quad \sigma|_{\th_j=\lambda_j} = \sum_{k \neq i,j} \frac{\beta_k(\th_k)-\beta_i(\th_i)}{\chim{k}{j}} \ .$$
It follows that
$$\sigma\vv{\mtop{\th_i=\lambda_i}{\th_j=\lambda_j}} = \sum_{k \neq i,j} \left[ \frac{\alpha_k(\th_k)-\alpha_j(\lambda_j)}{\th_k-\lambda_j}\right] \frac{1}{\chim{k}{i,j}}  = \sum_{k \neq i,j} \left[ \frac{\beta_k(\th_k)-\beta_i(\lambda_i)}{\th_k-\lambda_i}\right] \frac{1}{\chim{k}{i,j}} \ .$$
From Corollary \ref{coroidesy}, there exists a polynomial $p$ of degree less or equal to $n-4$ such that
$$\frac{\alpha_k(\th_k)-\alpha_j(\lambda_j)}{\th_k-\lambda_j}
 = \frac{\beta_k(\th_k)-\beta_i(\lambda_i)}{\th_k-\lambda_i} + p(\th_k) \quad , \quad k \neq i,j \ .$$
By defining $\beta'_k(\th_k) = \beta_k(\th_k) + (\th_k-\lambda_i) p(\th_k)$, $k \neq j$, we get
\begin{equation}\label{elementpibar11}
\frac{\alpha_k(\th_k)-\alpha_j(\lambda_j)}{\th_k-\lambda_j}
 = \frac{\beta'_k(\th_k)-\beta'_i(\lambda_i)}{\th_k-\lambda_i} \quad , \quad k \neq i,j \ .
\end{equation}
By defining $\gamma_k(\th_k)$ as the element \eqref{elementpibar11}, $k \neq i,j$, we get \eqref{formsigmaeval3}. The polynomial $(t-\lambda_i)p(t)$ is of degree at most $n-3$ so Corollary \ref{coroidesy} implies equation \eqref{formsigmaeval2}.
\end{proof}

\begin{theo}\label{theosolutionsystemequation}
The space of solutions of the $\Delta$-system in $\K(n)$ is $\Wb$.
\end{theo}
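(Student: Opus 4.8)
The plan is to establish the two inclusions separately. The inclusion $\Wb\subseteq\{\text{solutions}\}$ is the routine one: by linearity it suffices to treat a single summand $\sigma=\pi(\th_l)/\chi_l$ with $\pi\in\K(t)$ and to verify $\Delta_i\Delta_j(\th_{ij}\sigma)=0$ for all $i,j$. If $l\in\{i,j\}$, say $l=i$, then $\chi_i$ carries the factor $\th_{ij}$, so $\th_{ij}\sigma=\pi(\th_i)/\prod_{s\neq i,j}\th_{is}$ is free of $\th_j$ and the equation holds at once. If $l\notin\{i,j\}$, then $\chi_l$ carries both $\th_{li}$ and $\th_{lj}$, and the partial-fraction identity $\th_{ij}/(\th_{li}\th_{lj})=1/\th_{li}-1/\th_{lj}$ exhibits $\th_{ij}\sigma$ as a sum of a term free of $\th_j$ and a term free of $\th_i$; Lemma \ref{lemDDf=0} then gives the claim.

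For the reverse inclusion $\{\text{solutions}\}\subseteq\Wb$ I would induct on $n$, the base cases $n\leq 2$ following directly from Lemma \ref{lemDDf=0} (for $n=2$, $\th_{12}\sigma=g_1+g_2$ forces $\sigma\in\Wb_1+\Wb_2$). For the step, let $\sigma\in\K(n)$ be a solution, fix $i\neq j$ in $\nb$ and $\lb\in\dom(\sigma)$. By Lemma \ref{sigmainduction} the evaluations $\sigma|_{\th_i=\lambda_i}$ and $\sigma|_{\th_j=\lambda_j}$ solve the $\Delta$-system in the remaining $n-1$ variables, so the induction hypothesis puts them in the corresponding $\Wb$: $\sigma|_{\th_i=\lambda_i}=\sum_{k\neq i}\alpha_k(\th_k)/\chim{k}{i}$ and $\sigma|_{\th_j=\lambda_j}=\sum_{k\neq j}\beta_k(\th_k)/\chim{k}{j}$. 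Lemma \ref{sigmaforminduction2} then lets me replace $\beta_k$ by $\beta'_k$ so that the double evaluation reads $\sum_{k\neq i,j}\gamma_k(\th_k)/\chim{k}{i,j}$ with the compatibility relations $\gamma_k=(\alpha_k-\alpha_j(\lambda_j))/(\th_k-\lambda_j)=(\beta'_k-\beta'_i(\lambda_i))/(\th_k-\lambda_i)$. As $\sigma$ solves \eqref{eqsigibfopo2}, the reconstruction formula of Lemma \ref{sigmaforminduction} writes $\sigma$ as $\th_{ij}^{-1}$ times the prescribed combination of these three evaluations.

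It then remains to substitute and collect over the denominators $\chi_k$. The summands indexed by $k=i$ and $k=j$ simplify immediately, the spurious linear factor cancelling against $\chi_i$ resp. $\chi_j$, and they land in $\Wb_i$ and $\Wb_j$. For $k\neq i,j$ the part carrying $\gamma_k$ collapses, via the polynomial identity $(\lambda_i-\th_j)(\th_k-\lambda_j)\th_{ki}+(\th_i-\lambda_j)(\th_k-\lambda_i)\th_{kj}-(\lambda_i-\lambda_j)\th_{ki}\th_{kj}=\th_{ij}(\th_k-\lambda_i)(\th_k-\lambda_j)$, into $\gamma_k(\th_k)(\th_k-\lambda_i)(\th_k-\lambda_j)/\chi_k\in\Wb_k$. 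The step I expect to be the main obstacle is the fate of the leftover constant contributions carrying the scalars $\alpha_j(\lambda_j)$ and $\beta'_i(\lambda_i)$: these do not individually lie in any $\Wb_k$, and one must sum them over $k\neq i,j$ and apply the vanishing identities $\sum_{k\neq i}1/\chim{k}{i}=0$ (and its counterpart for $j$) — the exponent-zero case of Lemma \ref{idesy0} in $n-1$ variables, valid exactly because $n\geq 3$ — to collapse them back into $\Wb_j$ and $\Wb_i$. Once this is done every summand of $\sigma$ sits in some $\Wb_k$, so $\sigma\in\Wb$ and the induction closes.
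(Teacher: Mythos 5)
Your proposal is correct and follows essentially the same route as the paper's own proof: induction on $n$ using Lemmas \ref{sigmainduction}, \ref{sigmaforminduction} and \ref{sigmaforminduction2}, the same collapsing polynomial identity (yours is the paper's first identity cleared of the denominator $\th_k-\lambda_j$), and the same use of the vanishing sums $\sum_{k\neq i}1/\chim{k}{i}=0$ to absorb the leftover constants $\alpha_j(\lambda_j)$, $\beta'_i(\lambda_i)$ into $\Wb_j$ and $\Wb_i$, with the converse inclusion done by the same partial-fraction splitting of $\th_{ij}\sigma$ into a $\th_i$-free plus a $\th_j$-free part. The only cosmetic differences are that you verify the easy inclusion summand by summand and start the induction at $n\leq 2$, which cleanly isolates where $n\geq 3$ is needed.
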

\begin{proof}
We prove by induction on $n$ that a solution of the $\Delta$-system in $\K(n)$ is in $\Wb$. The $\Delta$-system in $\K(1)$ is trivial.
Consider a solution $\sigma \in \K(n)$ of the $\Delta$-system. We fix $i$ and $j$, $i \neq j$, in $\nb$. Lemma \ref{sigmaforminduction} implies that for any $\lb \in \dom(\sigma)$, we have
$$\sigma = \frac{1}{\th_{ij}} \left[ (\lambda_i-\th_j) \sigma |_{\th_i=\lambda_i} + (\th_i-\lambda_j) \sigma |_{\th_j=\lambda_j} - (\lambda_i-\lambda_j) \sigma \vv{\mtop{\th_i=\lambda_i}{\th_j=\lambda_j}} \right] \ .$$
From Lemma \ref{sigmainduction}, $\sigma |_{\th_i=\lambda_i}$ and $\sigma |_{\th_j=\lambda_j}$ are solutions of $\Delta$-systems in $\K(n-1)$. By the induction hypothesis and Lemma \ref{sigmaforminduction2}, we have
$$\sigma |_{\th_i=\lambda_i} = \sum_{k \neq i} \frac{\alpha_k(\th_k)}{\chim{k}{i}} \quad , \quad \sigma |_{\th_j=\lambda_j} = \sum_{k \neq j} \frac{\beta'_k(\th_k)}{\chim{k}{j}} \quad \text{and} \quad \sigma \vv{\mtop{\th_i=\lambda_i}{\th_j=\lambda_j}} = \sum_{k \neq i,j} \frac{\gamma_k(\th_k)}{\chim{k}{i,j}}$$
for some $\alpha_k(t),\beta'_k(t),\gamma_k(t)\in\K(t)$ such that
$$\gamma_k(\th_k) = \frac{\alpha_k(\th_k)-\alpha_j(\lambda_j)}{\th_k-\lambda_j} \ \text{and} \ \beta'_k(\th_k) = \frac{\th_k-\lambda_i}{\th_k-\lambda_j} \left(\alpha_k(\th_k)-\alpha_j(\lambda_j)\right) + \beta'_i(\lambda_i) \ , \ k \neq i,j \ .$$
It implies that
\begin{align*}
\sigma & = \frac{1}{\th_{ij}} \left[ (\lambda_i-\th_j) \sum_{k \neq i} \frac{\alpha_k(\th_k)}{\chim{k}{i}} + (\th_i-\lambda_j) \sum_{k \neq j} \frac{\beta'_k(\th_k)}{\chim{k}{j}} - (\lambda_i-\lambda_j) \sum_{k \neq i,j} \frac{\gamma_k(\th_k)}{\chim{k}{i,j}} \right] \\
& = \frac{1}{\th_{ij}} \left[ (\lambda_i-\th_j) \sum_{k \neq i} \frac{\alpha_k(\th_k)}{\chim{k}{i}} +(\th_i-\lambda_j) \frac{\beta'_i(\th_i)}{\chim{i}{j}} \right. \\
 & ~~~~~~~~~~~~~~~~ + (\th_i-\lambda_j) \sum_{k \neq i,j} \frac{1}{\chim{k}{j}} \left( \frac{\th_k-\lambda_i}{\th_k-\lambda_j} \left(\alpha_k(\th_k)-\alpha_j(\lambda_j)\right) + \beta'_i(\lambda_i) \right) \\
 & ~~~~~~~~~~~~~~~~~~~~~~~~~~~~~~~~ \left. - (\lambda_i-\lambda_j) \sum_{k \neq i,j} \frac{1}{\chim{k}{i,j}} \left( \frac{\alpha_k(\th_k)-\alpha_j(\lambda_j)}{\th_k-\lambda_j} \right) \right] \\
\quad & = \frac{1}{\th_{ij}} \left[ (\lambda_i-\th_j) \frac{\alpha_j(\th_j)}{\chim{j}{i}} +(\th_i-\lambda_j) \frac{\beta'_i(\th_i)}{\chim{i}{j}} + (\th_i-\lambda_j) \beta'_i(\lambda_i) \sum_{k \neq i,j} \frac{1}{\chim{k}{j}} \right. \\
 & ~~~~~~~~~~~~~~~~ + \sum_{k \neq i,j} \frac{\alpha_k(\th_k)}{\chi_k} \left( (\lambda_i-\th_j)\th_{ki}+(\th_i-\lambda_j)\th_{kj} \frac{\th_k-\lambda_i}{\th_k-\lambda_j} - \th_{ki} \th_{kj} \frac{\lambda_i-\lambda_j}{\th_k-\lambda_j} \right) \\
 & ~~~~~~~~~~~~~~~~~~~~~~~~~~~~~~~~ + \left. \left( \sum_{k \neq i,j} \left( \frac{\lambda_i-\lambda_j}{\th_k-\lambda_j} \th_{ki} - (\th_i-\lambda_j) \frac{\th_k-\lambda_i}{\th_k-\lambda_j} \right) \frac{1}{\chim{k}{j}} \right) \alpha_j(\lambda_j) \right] \ .
\end{align*}
Using Lemma \ref{idesy} and the equalities
$$(\lambda_i-\th_j)\th_{ki}+(\th_i-\lambda_j)\th_{kj} \frac{\th_k-\lambda_i}{\th_k-\lambda_j} - \th_{ki} \th_{kj} \frac{\lambda_i-\lambda_j}{\th_k-\lambda_j} = \th_{ij}(\th_k-\lambda_i) \ ,$$
$$\frac{\lambda_i-\lambda_j}{\th_k-\lambda_j} \th_{ki} - (\th_i-\lambda_j) \frac{\th_k-\lambda_i}{\th_k-\lambda_j} = \lambda_i-\th_i \ ,$$
we get
$$\sigma = \sum_{k \neq i} (\th_k-\lambda_i)\frac{\alpha_k(\th_k)}{\chi_k} + \frac{1}{\chi_i} \left[(\th_i-\lambda_j)\left(\beta'_i(\th_i)-\beta'_i(\lambda_i)\right) + (\th_i-\lambda_i) \alpha_j(\lambda_j) \right] \ .$$
Therefore $\sigma$ is in $\Wb$.

If $\sigma \in \Wb$ then for any $i \neq j$ in $\nb$, we have
$$\th_{ij} \sigma = \sum_{k=1}^n \frac{(\th_{ik}+\th_{kj}) \pi_k(\th_k)}{\chi_k} = - \underbrace{\left( \frac{\pi_j(\th_j)}{\chim{j}{i}} + \sum_{k \neq i,j} \frac{\pi_k(\th_k)}{\chim{k}{i}} \right)}_{\text{independent of}\ \th_i} + \underbrace{\left( \frac{\pi_i(\th_i)}{\chim{i}{j}} + \sum_{k \neq i,j} \frac{\pi_k(\th_k)}{\chim{k}{j}} \right)}_{\text{independent of}\ \th_j}$$
which concludes the proof.
\end{proof}

\begin{coro}\label{corosolsystrestrict}
In $\Un$, the subspace of solutions of \eqref{eqsigibfopo2} is $\Wc$ (of Definition \ref{vespmj}).
\end{coro}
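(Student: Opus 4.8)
The plan is to obtain this as an immediate consequence of Theorem \ref{theosolutionsystemequation} together with Proposition \ref{WbandWclink}. First I would note that $\Un$ is a subring of $\K(n)$ and that the finite-difference operators $\Delta_i$ are defined by the very same elementary translations $\varepsilon_i$ on both rings; the translation $\varepsilon_i$ preserves $\Un$ and extends to $\K(n)$, so the inclusion $\Un \hookrightarrow \K(n)$ intertwines the two copies of $\Delta_i$. Consequently an element $\sigma \in \Un$ satisfies the $\Delta$-system \eqref{eqsigibfopo2} if and only if it satisfies it when regarded as an element of $\K(n)$. Hence the space of solutions of \eqref{eqsigibfopo2} that lie in $\Un$ is exactly the intersection of $\Un$ with the full solution space computed over $\K(n)$.

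By Theorem \ref{theosolutionsystemequation} that full solution space is $\Wb$, so the solutions lying in $\Un$ form precisely $\Wb \cap \Un$; the reverse inclusion is automatic since $\Wc \subseteq \Wb$ and every element of $\Wb$ solves the system. Proposition \ref{WbandWclink} identifies $\Wb \cap \Un$ with $\Wc$, which yields the claim. I do not anticipate any genuine obstacle here: the entire content has already been established in the two cited results, and the only point meriting a line of justification is that solving the system is insensitive to whether one works over $\Un$ or over $\K(n)$, which follows at once from the compatibility of the $\Delta_i$ with the inclusion $\Un \hookrightarrow \K(n)$.
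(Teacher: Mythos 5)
Your proof is correct and follows exactly the paper's own argument: the paper proves this corollary by citing Theorem \ref{theosolutionsystemequation} and Proposition \ref{WbandWclink}, just as you do. Your additional remark that the inclusion $\Un \hookrightarrow \K(n)$ intertwines the operators $\Delta_i$ is a point the paper leaves implicit, and it is a reasonable line of justification to make explicit.
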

\begin{proof}
It follows from Theorem \ref{theosolutionsystemequation} and Proposition \ref{WbandWclink}.
\end{proof}

\subsubsection{A characterization of polynomial potentials}
The polynomial potentials $\sigma\in\mathcal{W}$ can be characterized in different terms.  
The rings $\Diff(n)$ and $\overline{AZ}_n$ admit the action of Zhelobenko automorphisms $\q_1,\dots,\q_{n-1}$ \cite{Zh2,KO1}. Their action on the generators 
$\Z^i$ and $\der_i$, $i=1,\dots,n$, is given by equations \eqref{automq}.

\begin{lemm}\label{posoazhe} The ring $\Diffs(n)$ admits the action of Zhelobenko automorphisms if and only if $\sigma$ is a polynomial, 
$\sigma\in\mathcal{H}$.
\end{lemm}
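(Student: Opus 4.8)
The plan is to use the fact that, among the defining relations of $\Diffs(n)$, only the oscillator relations \eqref{hdesp10c} involve $\sigma$: the relations \eqref{relshdiffcompx}, \eqref{relshdiffcompd}, \eqref{relshdiffcompdx} and the weight relations \eqref{relsweights} are $\sigma$-independent and homogeneous, and they are precisely (part of) the relations of the graded ring $\Diff(0,\dots,0)$, which is a reduction algebra and therefore carries each $\q_m$ as an automorphism. Hence $\q_m$ descends to $\Diffs(n)$ if and only if it preserves the oscillator relations. At the outset I would also record two facts: that $\q_m$ restricts on $\Un$ to the transposition $s_m=(m,m+1)$ of the variables $\th_1,\dots,\th_n$, so that $\q_m(\th_{ij})=\th_{s_m(i)s_m(j)}$ and $\q_m(\sigma_i)=s_m(\sigma_i)$; and that, by Proposition \ref{propotentialb} and Corollary \ref{corosolsystrestrict}, the potential exists and may be taken in $\Wc$, with $\sigma_i=\Delta_i\sigma$.

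Next I would apply $\q_m$ to \eqref{hdesp10c}, written as $\Z^i\der_i=\sum_j\frac{1}{1-\th_{ij}}\Gamma_j-\sigma_i$ with $\Gamma_j=\der_j\Z^j$. On the right-hand side I use $\q_m(\Gamma_j)=\Gamma_{s_m(j)}$ from Lemma \ref{qga}(ii) together with the formulas above; after reindexing the sum by $k=s_m(j)$ this becomes $\sum_k\frac{1}{1-\th_{s_m(i)k}}\Gamma_k-s_m(\sigma_i)$. On the left-hand side, for $i\neq m,m+1$ one has $\q_m(\Z^i\der_i)=\Z^i\der_i$, whereas for $i\in\{m,m+1\}$ the element $\q_m(\Z^i\der_i)$ is computed directly from \eqref{automq} and the weight relations \eqref{relsweights} alone; being $\sigma$-independent, it must equal its value in $\Diff(0,\dots,0)$ (where $\q_m$ is an automorphism), namely $\Z^{s_m(i)}\der_{s_m(i)}$. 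Comparing the transformed equation with the relation \eqref{hdesp10c} of index $s_m(i)$, I conclude that $\q_m$ preserves the oscillator relations precisely when
\[
s_m(\sigma_i)=\sigma_{s_m(i)}\ ,\quad i=1,\dots,n\ .
\]

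Finally I translate this system into a condition on the potential. Since $s_m$ conjugates $\Delta_i$ into $\Delta_{s_m(i)}$, the displayed identities read $\Delta_{s_m(i)}(s_m\sigma)=\Delta_{s_m(i)}\sigma$ for all $i$, i.e. $\Delta_k(s_m\sigma-\sigma)=0$ for every $k$; hence $s_m\sigma-\sigma$ is a constant, and applying $s_m$ once more with $s_m^2=\mathrm{id}$ forces $s_m\sigma=\sigma$. Thus $\q_m$ acts on $\Diffs(n)$ if and only if $\sigma$ is $s_m$-invariant. Requiring all of $\q_1,\dots,\q_{n-1}$ to act, and recalling that $s_1,\dots,s_{n-1}$ generate $\mathbb{S}_n$, this is equivalent to $\sigma\in\Wc^{\,\mathbb{S}_n}$, which by Lemma \ref{idesy}(iii) (equation \eqref{cosyinwz}) is exactly $\mathcal{H}$; that is, $\sigma$ is a symmetric polynomial. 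I expect the main obstacle to be the explicit verification that $\q_m(\Z^m\der_m)=\Z^{m+1}\der_{m+1}$, i.e. controlling the action of $\q_m$ on the ``diagonal'' quadratic monomials via the weight relations; once this is in hand the remainder is routine manipulation with the difference operators $\Delta_i$.
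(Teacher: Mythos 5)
Your proof is correct and takes essentially the same approach as the paper's: the paper's (one-sentence) proof likewise rests on the facts that $\sigma$ must lie in $\mathcal{W}$, that the Zhelobenko automorphisms act on potentials through the symmetric group, and that $\mathcal{W}^{\,\mathbb{S}_n}=\mathcal{H}$ by Lemma \ref{idesy}(iii). Your derivation of $s_m(\sigma_i)=\sigma_{s_m(i)}$ from preservation of the oscillator relations, and the passage from this to $s_m\sigma=\sigma$, simply makes explicit what the paper compresses into the phrase ``the action of the Zhelobenko automorphisms on $\mathcal{W}$ is the action of the symmetric group.''
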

\begin{proof}
By definition of $\Diffs(n)$, the element $\sigma \in \Un$ is in $\mathcal{W}$. The action of the Zhelobenko automorphisms on $\mathcal{W}$ is the action of the symmetric group $\mathbb{S}_n$ therefore by (iii) of Lemma \ref{idesy}, it implies that $\sigma$ is in $\mathcal{H}$.
\end{proof}

In the examples discussed in \textbf{Section \ref{twoexse}}, the ring $\Diff(n)$ corresponds to $\sigma=H_1$ and the ring $\overline{AZ}_n$ corresponds 
to $\sigma=-H_2=-\sum_{i,j:i\leq j}\th_i\th_j$, 
$\Delta_i H_2 =\th_i+\sum_{k=1}^n \th_k-1$.

\chapter{Rings \texorpdfstring{$\Diffs(n)$}{section1}}\label{ChapterRepreDiff}

\vskip -1.16cm
In \textbf{Chapter \ref{GeneralizedDiff}} we described the rings $\Diffs(n)$ satisfying the PBW property. The associated graded ring gr$\Diffs(n)$, relative to the filtration \eqref{filtrationDiff}, is a Noetherian Ore domain, see  \cite{KO4}. and so is the ring gr$\Diff(n)$. 
It follows, see \cite{GK}, that the ring $\Diffs(n)$ is a Noetherian Ore domain. We construct an analogue of the isomorphism $\mu$ of \textbf{Chapter \ref{ChapterRinghdef}} for the ring $\Diffs(n)$. 

\vskip .04cm
In \textbf{Section \ref{StructureofDiff}} we describe the analogue of the isomorphism $\mu$ \eqref{fromwtodiff}, determine the center of $\Diffs(n)$ and translate, with the help of $\mu$, the action of the symmetric group and, respectively, the braid group  to the ring $\Diffs(n)$ and to a localization of the ring $\text{W}_n$.

\vskip .04cm
In \textbf{Section \ref{InfiDimRepre}} we describe two families of infinite dimensional representations of $\Diffs(n)$. We then compute the 
central characters of the lowest weight representations.

\vskip .04cm
\textbf{Section \ref{FiniteDimModules}} is an introduction to finite dimensional $\Diffs(n)$-modules. We show that an irreducible $\Diffs(n)$-module has a highest weight $\vec{\lambda}$ and derive necessary and sufficient conditions for the irreducibility. We conjecture that irreducible $\Diffv{H_m}(n)$-modules exist if only if $m > n$
and check it  for $n = 1$ and $n=2$. Finally, we present examples of indecomposable $\Diffs(n)$-modules. 

\section{Structure of $\Diffs(n)$}\label{StructureofDiff}
\subsection{Quadratic central elements}

In \cite{HO1}, we have described the center of the ring $\Diff(n)$. The center of the ring $\Diffs(n)$, admits a similar description. Let
$\Gamma_i:=\der_i\Z^i\ \ \text{for}\ \ i=1,\dots,n$, as in Lemma \ref{qga} 
and
\begin{equation}\label{gefufoce}
c(t)=\sum_{i=1}^n \frac{e(t)}{1+\th_{i}t} \Gamma_i - \rho(t)=\sum_{k=1}^n c_k t^{k-1} \ ,
\end{equation}
where $t$ is an auxiliary variable and $\rho(t)$ a polynomial of degree $n-1$ with coefficients in $\Ub(n)$. 
\begin{prop}\label{lece} (i) Let $\sigma \in \bar{\mathcal{W}}$ and $\sigma_j=\Delta_j\sigma$, $j=1,\dots,n$. 
The elements $c_1,\dots,c_n$ are central in the ring $\Diffs(n)$ if and only if the polynomial $\rho(t)$ satisfies the system of finite-difference equations
\begin{equation}\label{corho}
\Delta_j \rho(t) = \frac{e(t)}{1+\th_j t} \sigma_j \ .
\end{equation}
(ii) For an arbitrary $\sigma \in \Wc$, the system \eqref{corho} admits a solution, unique modulo an element of $\K[t]$. 
The solutions of the system \eqref{corho} are of the form
\begin{equation}\label{soforho}
\rho(t)= \sum_{k=1}^n \frac{e(t)}{1+\th_k t} \frac{\pi_k(\th_k)}{\chi_k} + p(t)\ \ \text{where $p(t) \in \K[t]$ of degree $n-1$.}
\end{equation}
\end{prop}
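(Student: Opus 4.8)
The plan is to handle the two parts separately, reducing (i) to a single commutation computation with $\Z^j$ and (ii) to a direct verification together with a shift-invariance argument. For part (i), I first observe that $c(t)$ in \eqref{gefufoce} has $\h(n)$-weight zero: each $\Gamma_i=\der_i\Z^i$ is weight zero and $\rho(t)$ has coefficients in $\Un$, so $c(t)$ commutes with $\Un$ automatically. I then note that the anti-involution $\epsilon$ of \eqref{antiautomb} fixes $c(t)$: indeed $\epsilon(\Gamma_i)=\epsilon(\Z^i)\epsilon(\der_i)=\der_i\varphi_i^{-1}\varphi_i\Z^i=\Gamma_i$ and $\epsilon$ is the identity on $\Un[t]$, so it fixes every summand of \eqref{gefufoce}. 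Hence, once I show $c(t)$ commutes with each $\Z^j$, applying $\epsilon$ shows it commutes with each $\der_j$ too, so it is central; thus all of (i) reduces to computing the commutator of $c(t)$ with $\Z^j$.

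For that computation I write $c(t)=U(t)-\rho(t)$ with $U(t)=e(t)V(t)$ and $V(t)=\sum_i\frac{1}{1+\th_i t}\Gamma_i$. Moving $\Z^j$ across $e(t)$ uses only the weight relation, as in \eqref{quacen1}; moving it across $\Gamma_i$ for $i\neq j$ uses Lemma \ref{qga}(i), whose derivation is $\sigma$-independent and hence valid in $\Diffs(n)$ verbatim; and the single term $i=j$ is handled by $\Z^j\Gamma_j=\Z^j\der_j\Z^j$ together with the oscillator relation \eqref{hdesp10c}, which is the only place the potential enters. Collecting the coefficient of each $\Gamma_k\Z^j$ in $\Z^j V(t)$, I expect them all to assemble into the single factor $\frac{1+\th_j t}{(1+\th_k t)(1+(\th_j-1)t)}$ (the formula being the same for $k=j$ and $k\neq j$), exactly as in the proof of Proposition \ref{quceel}, with the extra $\sigma_j$-contribution from \eqref{hdesp10c} surviving. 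This gives $\Z^j U(t)-U(t)\Z^j=-\frac{e(t)}{1+\th_j t}\sigma_j\,\Z^j$. Since the weight relation yields $\Z^j\rho(t)-\rho(t)\Z^j=-\Delta_j\rho(t)\,\Z^j$, the commutator of $c(t)$ with $\Z^j$ equals $\bigl(\Delta_j\rho(t)-\frac{e(t)}{1+\th_j t}\sigma_j\bigr)\Z^j$, vanishing for all $j$ precisely when \eqref{corho} holds. The main obstacle is the bookkeeping in the coefficient collection, but the $\sigma$-free part is identical to the already established $\Diff(n)$ case, so only the one $\sigma_j$-term is genuinely new.

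For part (ii), uniqueness is immediate: the difference of two solutions is annihilated by every $\Delta_j$, hence invariant under all elementary shifts $\varepsilon_j$, and a rational function in $\Un$ invariant under all integer shifts is a constant, so the difference lies in $\K[t]$. For existence I verify directly that $\rho(t)=\sum_k\frac{e(t)}{1+\th_k t}\frac{\pi_k(\th_k)}{\chi_k}$ solves \eqref{corho}, where $\sigma=\sum_k\frac{\pi_k(\th_k)}{\chi_k}$ is the expansion of $\sigma\in\Wc$ from Definition \ref{vespmj}. Writing $\phi_k:=\frac{e(t)}{1+\th_k t}=\prod_{l\neq k}(1+\th_l t)$ and $g_k:=\frac{\pi_k(\th_k)}{\chi_k}$, I split $\Delta_j\rho$ into the term $k=j$ (where $\phi_j$ is independent of $\th_j$, giving $\phi_j\Delta_j g_j$) and the terms $k\neq j$. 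Using $\phi_k[-\varepsilon_j]=\frac{1+(\th_j-1)t}{1+\th_j t}\phi_k$, the relation $\phi_k=\frac{1+\th_j t}{1+\th_k t}\phi_j$, and $g_k[-\varepsilon_j]=\frac{\th_{kj}}{\th_{kj}+1}g_k$, each $k\neq j$ term collapses, after the elementary identity $\frac{1}{1+\th_k t}\bigl[(1+\th_j t)-(1+(\th_j-1)t)\frac{\th_{kj}}{\th_{kj}+1}\bigr]=\frac{1}{\th_{kj}+1}$, to exactly $\phi_j\Delta_j g_k$. Summing over all $k$ yields $\Delta_j\rho=\phi_j\Delta_j\sigma=\frac{e(t)}{1+\th_j t}\sigma_j$, as required. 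Each $\phi_k$ has degree $n-1$ in $t$, so this $\rho$ has degree $n-1$; together with uniqueness this gives the general solution \eqref{soforho}, the ambiguity $p(t)\in\K[t]$ being the common kernel of the operators $\Delta_j$.
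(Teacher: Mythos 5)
Your proof is correct and takes essentially the same route as the paper's: part (i) reduces centrality to the commutator with $\Z^j$ (where the sole new contribution beyond the computation for $\Diff(n)$ is the $\sigma_j$ term coming from \eqref{hdesp10c}), with the anti-involution $\epsilon$ handling the $\der_j$'s, and part (ii) is the same term-by-term verification that $\Delta_j\bigl(\frac{e(t)}{1+\th_k t}\frac{\pi_k(\th_k)}{\chi_k}\bigr)=\frac{e(t)}{1+\th_j t}\Delta_j\bigl(\frac{\pi_k(\th_k)}{\chi_k}\bigr)$ for both $k=j$ and $k\neq j$. The only cosmetic differences are that the paper shortens the bookkeeping in (i) by invoking the centrality of $c_0(t)$ in the graded ring $\Diffv{0}(n)$ and tracking only lower-degree terms, whereas you redo the full coefficient collection of Proposition \ref{quceel}, and that you make explicit the uniqueness-modulo-$\K[t]$ argument (shift-invariant rational functions are constants) which the paper leaves implicit.
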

\begin{proof}
(i) To analyze the relation $x^j c(t)-c(t)x^j=0$, we shall write the expression $x^j c(t)-c(t)x^j$ in the ordered form, in the order $\der\Z\Z$. 
The element
\begin{equation}\label{gefufoce2}
c_0(t)=\sum_i\frac{e(t)}{1+\th_{i}t }\Gamma_i
\end{equation}
is central in the homogeneous ring $\Diffv{0}(n)$, see the proof of Proposition \ref{quceel}. Hence we have to track only those ordered terms whose filtration degree is smaller than 3. As in \textbf{Chapter \ref{GeneralizedDiff}}, we use the symbol $u\;\rule[-3.5mm]{.28mm}{7mm}_{\;\text{l.d.t.}}$ to denote these lower degree terms in an expression $u$. We have
\[(x^j c(t)-c(t)x^j)\;\rule[-3.5mm]{.28mm}{7mm}_{\;\text{l.d.t.}}=\left(-\frac{e(t)}{1+\th_j t}\sigma_j-\rho(t)[-\varepsilon_j]+\rho(t)\right) x^j\ .\]
Thus the element $c(t)$ commutes with the generators $x^j$, $j=1,\dots,n$, if and only if the polynomial $\rho(t)$ satisfies the system \eqref{corho}. 
The use of the anti-automorphism \eqref{antiautomb} shows that the element $c(t)$ then commutes with the generators $\der_j$, $j=1,\dots,n$, as well.

\vskip .2cm
\noindent
(ii) Let $\sigma = \sum_{k=1}^n A_k$ where 
$A_k = \frac{\pi_k(\th_k)}{\chi_k}$ and $\rho_k(t) = \frac{e(t)}{1+\th_k t} A_k$.
For any $k \in \{1,\dots,n\}$,
$\Delta_k (\rho_k(t)) = \frac{e(t)}{1+\th_k t} \Delta_k (A_k)$
and for any $j \neq k$,
\begin{equation}\label{proofcorho1}
\Delta_j (A_k) = \Delta_j \left( \frac{\th_{kj} A_k}{\th_{kj}} \right) = \Delta_j \left( \frac{1}{\th_{kj}} \right) \th_{kj} A_k = \frac{A_k}{\th_{kj}+1} \ .
\end{equation}
According to the formula \eqref{proofcorho1}, we have
$$\begin{array}{rcl}
\Delta_j\rho_k(t) &=& \displaystyle{ 
\frac{e(t)}{(1+\th_k t)(1+\th_j t)} \Delta_j \left((1+\th_j t) A_k\right) } \\[1em]
&=& \displaystyle{ \frac{e(t)}{(1+\th_k t)(1+\th_j t)} \left( tA_k+(1+(\th_j-1)t) \Delta_j A_k \right) } = \displaystyle{ \frac{e(t)}{1+\th_j t}\Delta_j A_k \ .}
\end{array}\ ,$$
and \eqref{soforho} follows.
\end{proof}

\subsection{Center of $\Diffs(n)$}

Let $\sigma=\sum_k  \frac{\pi_k(\th_k)}{\chi_k}\in\Wc$ and 
$\rho(t) = \sum_{k=1}^n \rho_k(t)$ with $\rho_k(t) = \frac{e(t)}{1+\th_k t} \frac{\pi_k(\th_k)}{\chi_k}$, as in Proposition \ref{corho}.
Similarly to \textbf{Section \ref{quadce}}, we have the following lemmas.

\begin{lemm}\label{gammaandc2}
In $\Diffs(n)$, for any $j=1,\dots,n$, we have
$$\Gamma_j = \frac{1}{\chi_j} \left( \th_j^{n-1} c(-\th_j^{-1}) + \pi_j(\th_j) \right) = \sum_{k=1}^n (V^{-1})_j^k c_k + \frac{\pi_j(\th_j)}{\chi_j}$$
with $V^{-1}$ given by \eqref{defV-1matrix11} and $c(t)$ by \eqref{gefufoce}.
\end{lemm}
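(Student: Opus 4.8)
The plan is to establish the two claimed equalities separately: the first, $\Gamma_j=\frac{1}{\chi_j}\bigl(\th_j^{n-1}c(-\th_j^{-1})+\pi_j(\th_j)\bigr)$, by evaluating the generating function $c(t)$ of \eqref{gefufoce} at $t=-\th_j^{-1}$; the second by a direct comparison of coefficients with \eqref{defV-1matrix11}. This mirrors the proof of Lemma~\ref{gammaandc}(ii) in the homogeneous case.

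First I would record the two elementary facts about $e(t)=\prod_i(1+\th_i t)$ on which everything hinges. Since $\frac{e(t)}{1+\th_i t}=\prod_{k\neq i}(1+\th_k t)$ is a genuine polynomial in $t$ with coefficients in $\Un$, and since each $\Gamma_i$ has $\h(n)$-weight zero and therefore commutes with every $\th_j$, the expression $c(t)=\sum_i \frac{e(t)}{1+\th_i t}\Gamma_i-\rho(t)$, with $\rho(t)=\sum_k \frac{e(t)}{1+\th_k t}\frac{\pi_k(\th_k)}{\chi_k}$ as fixed before the lemma, may be substituted coefficientwise. Setting $t=-\th_j^{-1}$, the factor $(1+\th_j t)$ vanishes, so every summand with $i\neq j$ (respectively $k\neq j$) drops out, while for the surviving index $i=j$ one computes $\frac{e(t)}{1+\th_j t}\big|_{t=-\th_j^{-1}}=\prod_{k\neq j}(1-\th_k\th_j^{-1})=\chi_j/\th_j^{n-1}$. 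Hence $c(-\th_j^{-1})=\frac{\chi_j}{\th_j^{n-1}}\Gamma_j-\frac{\pi_j(\th_j)}{\th_j^{n-1}}$, and multiplying by $\th_j^{n-1}$ yields $\chi_j\Gamma_j=\th_j^{n-1}c(-\th_j^{-1})+\pi_j(\th_j)$, which is the first equality.

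For the second equality I would expand $c(t)=\sum_{k=1}^n c_k t^{k-1}$ directly, so that $\th_j^{n-1}c(-\th_j^{-1})=\sum_{k=1}^n(-1)^{k-1}\th_j^{\,n-k}\,c_k$; dividing by $\chi_j$ and reading off the formula $(V^{-1})_j^k=(-1)^{k-1}\th_j^{\,n-k}/\chi_j$ from \eqref{defV-1matrix11} identifies $\frac{1}{\chi_j}\th_j^{n-1}c(-\th_j^{-1})$ with $\sum_k (V^{-1})_j^k c_k$. Adding $\pi_j(\th_j)/\chi_j$ to both sides gives the claimed identity.

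The one point requiring care, and the only genuine obstacle, is that $\th_j$ is not invertible in $\Un$, so the substitution $t=-\th_j^{-1}$ is a priori only formal. I would handle this exactly as in the proof of Proposition~\ref{quceel}, where the analogous denominator $1+\th_i t$ is undefined in the ring but cancels in the relevant product: here the prefactor $\th_j^{n-1}$ clears all negative powers, since $\th_j^{n-1}c(-\th_j^{-1})=\sum_k(-1)^{k-1}\th_j^{\,n-k}c_k$ is manifestly a polynomial in $\th_j$ with coefficients in $\Diffs(n)$. Equivalently, one may carry out the evaluation over the field of fractions $\K(n)$, in which $\th_j$ is invertible, and then observe that the resulting element $\chi_j\Gamma_j$ lies back in $\Diffs(n)$. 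Because $c_1,\dots,c_n$ are central in $\Diffs(n)$ by Proposition~\ref{lece}(i) (and are in any case of $\h(n)$-weight zero, hence commute with $\th_j$), no ordering ambiguity arises in the substitution, and the argument is complete.
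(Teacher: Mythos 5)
Your proof is correct and is essentially the paper's own argument: Lemma \ref{gammaandc2} is stated without explicit proof, as the analogue of Lemma \ref{gammaandc}(ii), whose proof inverts the linear relation between the $c_k$ and the $\Gamma_j$ and then explicitly records your route --- evaluating the generating function $c(t)$ at $t=-\th_j^{-1}$ --- as ``another proof''. Your handling of the formal substitution is also sound: the prefactor $\th_j^{n-1}$ clears all negative powers since $c(t)$ has degree $n-1$ in $t$, and every quantity involved lies in the commutative subring generated by $\Un$ and the weight-zero elements $\Gamma_i$, so the substitution creates no ordering ambiguity.
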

\begin{lemm}\label{zeroweightelementscenter2}
The ring $\Diffs^0(n)$ is freely generated over $\Un$ by the (commutative) elements $c_1,\dots,c_n$ defined by \eqref{gefufoce}, i.e.
\begin{equation}\label{zeroweightsubring2}
\Diffs^0(n) = \Un[c_1,\dots,c_n] \ .
\end{equation}
\end{lemm}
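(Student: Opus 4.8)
The plan is to repeat, almost verbatim, the proof of Lemma~\ref{zeroweightelementscenter}, with Lemma~\ref{gammaandc2} playing the role that equation~\eqref{znachc} played for $\Diff(n)$. First I would record that $c_1,\dots,c_n\in\Diffs^0(n)$: each $\Gamma_i=\der_i\Z^i$ has $\hn$-weight zero and $\rho(t)\in\Un[t]$, so every coefficient $c_k$ in \eqref{gefufoce} lies in $\Diffs^0(n)$; moreover, with $\rho$ chosen as in \eqref{soforho}, Proposition~\ref{lece}(i) guarantees that the $c_k$ are central, hence commute, which is what makes $\Un[c_1,\dots,c_n]$ a sensible polynomial ring.

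Next I would identify a convenient $\Un$-basis of $\Diffs^0(n)$. By the weight relations \eqref{hdesp1nn} and \eqref{hdesp3}, a PBW monomial \eqref{UhbasisDiff} has $\hn$-weight zero precisely when $l_i=k_i$ for all $i$. Since $\Diffs(n)$ has the PBW property (\textbf{Chapter~\ref{GeneralizedDiff}}), it follows that $\Diffs^0(n)$ is a free $\Un$-module with basis the monomials $B_{\vec\alpha}:=(\der_n)^{\alpha_n}\cdots(\der_1)^{\alpha_1}(\Z^1)^{\alpha_1}\cdots(\Z^n)^{\alpha_n}$, indexed by $\vec\alpha=(\alpha_1,\dots,\alpha_n)\in\mathZ^n$.

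I would then expand each $B_{\vec\alpha}$ in the $c_k$. Reordering inside $B_{\vec\alpha}$ with the commutation relations of Lemma~\ref{qga} (which, for $i\neq j$, are unaffected by $\sigma$), one obtains by induction on $|\vec\alpha|:=\alpha_1+\dots+\alpha_n$ a tensor $T^{\vec\alpha}_{\vec\beta}\in\Un$ with $B_{\vec\alpha}=\sum_{\vec\beta}T^{\vec\alpha}_{\vec\beta}\,c_1^{\beta_1}\cdots c_n^{\beta_n}$. The decisive input is Lemma~\ref{gammaandc2}, which writes $\Gamma_j=\sum_k(V^{-1})_j^k c_k+\pi_j(\th_j)/\chi_j$ as a $c$-linear term (with invertible coefficient matrix $V^{-1}$, by Lemma~\ref{gammaandc}(i)) plus a degree-zero term in $\Un$. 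Substituting this for the $\Gamma_i$ that arise when normal-ordering $B_{\vec\alpha}$, and using that the $\sigma$-term in \eqref{hdesp10c} only lowers degree, yields the triangularity $T^{\vec\alpha}_{\vec\beta}\neq0\Rightarrow|\vec\alpha|\geq|\vec\beta|$, with the top component $|\vec\beta|=|\vec\alpha|$ governed by an invertible transformation — exactly the analogue of \eqref{tensornonzero}.

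Finally, I would conclude as in Lemma~\ref{zeroweightelementscenter}: this triangularity together with the freeness of the $B_{\vec\alpha}$ forces the transition tensor $T$ to be invertible over $\Un$, so that $\{c_1^{\beta_1}\cdots c_n^{\beta_n}\}_{\vec\beta\in\mathZ^n}$ is itself a free $\Un$-basis of $\Diffs^0(n)$, which is \eqref{zeroweightsubring2}. The point needing the most care — the main obstacle — is the invertibility of $T$: it rests on combining the PBW basis of $\Diffs(n)$ with the precise filtered triangular structure coming from Lemma~\ref{gammaandc2}, namely that the top-degree part of the change of generators from $\{\Gamma_i\}$ to $\{c_k\}$ is controlled by the invertible matrix $V^{-1}$ of Lemma~\ref{gammaandc}(i).
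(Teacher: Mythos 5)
Your proposal is correct and follows exactly the paper's route: the paper's own proof is literally the one-line statement that one repeats the proof of Lemma \ref{zeroweightelementscenter}, using Lemma \ref{gammaandc2} in place of Lemma \ref{gammaandc}, which is precisely what you carry out. The details you supply — the weight-zero PBW monomials with $l_i=k_i$, the triangular transition tensor $T$ obtained from the $c$-linear expansion of the $\Gamma_j$, and its invertibility via the matrix $V^{-1}$ — are a faithful expansion of that argument.
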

\begin{proof}
It repeats the proof of Lemma \ref{zeroweightelementscenter} with the use of Lemma \ref{gammaandc2} instead of \ref{gammaandc}.
\end{proof}
\begin{prop}
The center of the ring 
$\Diffs(n)$ is $\K[c_1,\dots,c_n]$. 
\end{prop}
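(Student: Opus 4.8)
The plan is to follow the first argument used in the proof of Proposition~\ref{isocenter}, now with Lemmas~\ref{gammaandc2} and~\ref{zeroweightelementscenter2} playing the roles of Lemmas~\ref{gammaandc} and~\ref{zeroweightelementscenter}. First I fix the polynomial $\rho(t)$ as in \eqref{soforho}, so that by Proposition~\ref{lece}(i) the elements $c_1,\dots,c_n$ are genuinely central; this already gives the inclusion $\K[c_1,\dots,c_n]\subseteq \mathrm{Z}(\Diffs(n))$, where $\mathrm{Z}$ denotes the center, and it remains to prove the reverse inclusion.

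Let $z\in\mathrm{Z}(\Diffs(n))$. By its PBW property $\Diffs(n)$ is a free left $\Un$-module on the ordered monomials in the $\Z^i,\der_i$, each of which is homogeneous for the $\h(n)$-weight; write $z=\sum_\omega z_\omega$ for the resulting decomposition into weight components. Since $z$ commutes with every $\th_i$, the weight relation \eqref{hdesp1aw} yields $0=\th_i z-z\th_i=\sum_\omega \omega(\th_i)\,z_\omega$, and the directness of the weight decomposition forces $\omega(\th_i)z_\omega=0$ for all $i$; hence $z_\omega=0$ unless $\omega=0$. Thus $z\in\Diffs^0(n)$, and Lemma~\ref{zeroweightelementscenter2} lets me write $z=\sum_\beta T_\beta\,c_1^{\beta_1}\cdots c_n^{\beta_n}$ with $T_\beta\in\Un$, the monomials in the $c_k$ forming a free $\Un$-basis of $\Diffs^0(n)$.

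The next step transfers centrality from $z$ to its coefficients. Because the $c_k$ are central and $p\,\Z^i=\Z^i\,p[\varepsilon_i]$ for $p\in\Un$, a direct computation gives
\[ z\,\Z^i-\Z^i z=\Z^i\sum_\beta\bigl(T_\beta[\varepsilon_i]-T_\beta\bigr)c_1^{\beta_1}\cdots c_n^{\beta_n}\ . \]
As $\Diffs(n)$ is a domain (it is a Noetherian Ore domain) and $\Z^i\neq0$, and as the monomials $c_1^{\beta_1}\cdots c_n^{\beta_n}$ are $\Un$-linearly independent, the vanishing of this commutator forces $T_\beta[\varepsilon_i]=T_\beta$ for every $i$ and every $\beta$.

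Finally I would observe that an element of $\Un$ invariant under all the elementary shifts $\varepsilon_i$ must be constant: viewing such a $T_\beta$ as a rational function of $\th_i$ with the remaining variables as parameters, its invariance under $\th_i\mapsto\th_i+1$ precludes any pole (a pole would repeat along $\th_i+\mathbb{Z}$, giving infinitely many poles) and any nonconstant polynomial part, so $T_\beta$ does not depend on $\th_i$; running over all $i$ gives $T_\beta\in\K$. Equivalently, $\Un\cap\mathrm{Z}(\Diffs(n))=\K$. Hence $z\in\K[c_1,\dots,c_n]$, completing the reverse inclusion. The only delicate point is the third step — verifying that the commutator with $\Z^i$ collapses to $\Z^i$ times a single $\Un$-combination of the $c_1^{\beta_1}\cdots c_n^{\beta_n}$ — where the freeness from Lemma~\ref{zeroweightelementscenter2} and the domain property are exactly what make the argument work; as an alternative, the proposition also follows, just as in the second proof of Proposition~\ref{isocenter}, from the analogue of the isomorphism $\mu$ together with the triviality of the center of $\W_n$.
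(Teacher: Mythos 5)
Your proof is correct and takes essentially the same route as the paper: its proof of this proposition just says to repeat the argument of Proposition \ref{isocenter} with Lemma \ref{zeroweightelementscenter2} in place of Lemma \ref{zeroweightelementscenter}, namely that a central element has $\h(n)$-weight zero, hence lies in $\Un[c_1,\dots,c_n]$, while the intersection of $\Un$ with the center is $\K$. Your third and fourth steps merely supply details (shift-invariance of the coefficients $T_\beta$, hence $T_\beta\in\K$) that the paper leaves implicit, and your closing remark about the isomorphism $\mu$ and the triviality of the center of $\W_n$ is exactly the paper's alternative second proof.
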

\begin{proof}
It repeats the proof of Proposition \ref{isocenter} with the use of Lemma \ref{zeroweightelementscenter2} instead of \ref{zeroweightelementscenter}.
\end{proof}

\subsection{Ring of fractions of $\Diffs(n)$}

In \textbf{Section \ref{isowe}}, we have established an isomorphism between certain localizations of the ring $\Diff(n)$ and an extention of the Weyl algebra $\text{W}_n$. 
It turns out that when we pass to the analogous localization of the ring $\Diffs(n)$, we loose the information about the potential $\sigma$.

Similarly to \textbf{Section \ref{isowe}}, let $\text{S}_n$ be the multiplicative set generated by $\Z^j$, $1\leq j\leq n$, and $\th_{ij}+k$, $1\leq i<j\leq n$, $k \in \mathbb{Z}$. The set $\text{S}_n$ satisfies left and right Ore conditions. Denote by $\text{S}_n^{-1}\Diffs(n)$ the localization of the ring $\Diffs(n)$ with respect to $\text{S}_n$. Let $\{a_1,\dots,a_n\}$ be a set of commuting variables and $T^{-1}\text{W}_n$ the localization of the Weyl algebra defined in \textbf{Section \ref{WeylAlgebraSec}}. The following theorem is the analogue, for $\Diffs(n)$, of Theorem \ref{isoloc}.

\begin{prop}\label{isoloc2}
The ring $\text{S}_n^{-1}\Diffs(n)$ is isomorphic to the ring $\K[a_1,\dots,a_n] \otimes \text{T}^{-1}\text{W}_n$.
\end{prop}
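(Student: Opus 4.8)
The plan is to follow the proof of Theorem \ref{isoloc} almost verbatim, the point being that the dependence on the potential $\sigma$ disappears once one trades the generators $\der_i$ for the central elements $c_i$. Concretely, I would work in $\text{S}_n^{-1}\Diffs(n)$ and pass from the generating set $\{\th_i,\Z^i,\der_i\}_{i=1}^n$ to $\{\th_i,\Z^i,\Gamma_i\}_{i=1}^n$, which is legitimate since $\Z^i$ is invertible in the localization and $\der_i=\Gamma_i(\Z^i)^{-1}$. Lemma \ref{gammaandc2} then expresses each $\Gamma_i$ as $\sum_k (V^{-1})_i^k c_k+\pi_i(\th_i)/\chi_i$; since the correction term $\pi_i(\th_i)/\chi_i$ lies in $\Un$, the set $\{\th_i,\Z^i,c_i\}_{i=1}^n$ is again a generating set, and after the unit rescaling $\Z'^{\circ i}=\Z^i\psi_i'$ so is $\b_{\text{D}}=\{\th_i,\Z'^{\circ i},c_i\}_{i=1}^n$. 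This is exactly where the $\sigma$-information is absorbed: the potential enters only through the term $\pi_i(\th_i)/\chi_i$ of $\Gamma_i$, which is swallowed when one changes to the $c_i$.

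Next I would check that the generators $\b_{\text{D}}$ satisfy precisely the relations \eqref{defrebd}, with no reference to $\sigma$. The weight relations $\th_i\th_j=\th_j\th_i$ and $\th_i\Z'^{\circ j}=\Z'^{\circ j}(\th_i+\delta_i^j)$ are immediate from \eqref{relsweights} together with the fact that $\psi_j'\in\Un$ commutes with the $\th_i$; the commutativity $\Z'^{\circ i}\Z'^{\circ j}=\Z'^{\circ j}\Z'^{\circ i}$ follows from Lemma \ref{lecommfamilies}(i), whose hypothesis is the relation \eqref{relshdiffcompx} that holds in $\Diffs(n)$ independently of $\sigma$; and $c_1,\dots,c_n$ are central by Proposition \ref{lece}. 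Since these are literally the relations \eqref{defrebd} presenting the localized Weyl algebra side on its generators $\b_{\text{W}}=\{\mathcal{H}_i,X^i,a_i\}_{i=1}^n$, the assignment $\mu$ of \eqref{fromwtodiffnb}, namely $X^i\mapsto\Z'^{\circ i}$, $\mathcal{H}_i\mapsto\th_i$, $a_i\mapsto c_i$, defines a homomorphism $\mu\colon\K[a_1,\dots,a_n]\otimes\text{T}^{-1}\text{W}_n\to\text{S}_n^{-1}\Diffs(n)$.

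It remains to see that $\mu$ is bijective, and this is the step I expect to be the only genuine obstacle, since establishing that \eqref{defrebd} is the complete list of relations — rather than merely a list of valid ones — requires an independent count. I would resolve it exactly as in Theorem \ref{isoloc}: the PBW property of $\Diffs(n)$ (the main output of Chapter \ref{GeneralizedDiff}), transported through the unit rescalings $\Z^i\mapsto\Z'^{\circ i}$ and through Lemma \ref{gammaandc2}, shows that the ordered monomials in $\b_{\text{D}}$ form a free basis matching the monomial basis of $\K[a_1,\dots,a_n]\otimes\text{T}^{-1}\text{W}_n$ under $\mu$; hence $\mu$ carries a basis to a basis and is an isomorphism. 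Equivalently, one can exhibit the inverse explicitly by the analogue of \eqref{fromdifftow}, with the expression for $\der_i$ dictated by $\mu^{-1}(\chi_i\Gamma_i)$ read off from Lemma \ref{gammaandc2}, and verify $\mu\circ\mu^{-1}=\mathrm{id}$ on generators. Either way the potential $\sigma$ never reappears on the Weyl algebra side, which is precisely the content of the proposition.
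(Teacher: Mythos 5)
Your proposal is correct and is essentially the paper's intended argument: the paper states Proposition \ref{isoloc2} without proof, presenting it as the analogue of Theorem \ref{isoloc}, and your adaptation — replacing Lemma \ref{gammaandc} by Lemma \ref{gammaandc2} and Proposition \ref{quceel} by Proposition \ref{lece}, then absorbing the potential $\sigma$ into the central elements $c_i$ via the generating set $\{\th_i,\Z'^{\circ i},c_i\}_{i=1}^n$ — is exactly that analogue. Your observation that the only genuine issue is the completeness of the relations \eqref{defrebd}, settled by the PBW property of $\Diffs(n)$, makes explicit the same (implicit) step in the paper's proof of Theorem \ref{isoloc}.
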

By Proposition \ref{isoloc2}, the localizations $\text{S}_n^{-1}\Diffs(n)$ are isomorphic for all non-zero $\sigma \in \Wc$. Nevertheless, the rings $\Diffs(n)$ are not isomorphic as filtered rings over $\Un$.
Let $\Wc'=\bigoplus_{j\neq 1}\Wc_j \oplus\mathcal{H}'$, see Definition \ref{vespcoh}.
\begin{lemm}\label{leis2}
Let $\sigma,\sigma'\in\Wc'$.
The rings $\Diffs(n)$ and $\text{Diff}_{\h,\sigma'}(n)$ are isomorphic, as filtered rings over $\Ub(n)$ (the filtration is defined by \eqref{filtrationDiff}), 
if and only if 
$ \sigma'=\gamma\sigma\ \ \text{for some}\ \ \gamma\in\K^*$.
\end{lemm}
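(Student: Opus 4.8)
The plan is to exploit the fact that the potential enters the defining relations of $\Diffs(n)$ only through the oscillator relation \eqref{hdesp10c}, via $\sigma_i=\Delta_i\sigma$; the relations \eqref{relshdiffcompx}, \eqref{relshdiffcompd}, \eqref{relshdiffcompdx} do not see $\sigma$ at all. So I would first pin down the shape of an arbitrary filtered $\Ub(n)$-algebra isomorphism $\phi\colon\Diffs(n)\to\Diffv{\sigma'}(n)$, and then read off the constraint forced by \eqref{hdesp10c}. Being a $\Ub(n)$-algebra map, $\phi$ fixes $\Ub(n)$ pointwise, hence preserves $\h(n)$-weights: if $\th_i x=x(\th_i+\omega(\th_i))$ then the same holds for $\phi(x)$. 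Being a filtered isomorphism, it preserves the degree-$\le 1$ part, whose weight-$\varepsilon_i$ component is $\Ub(n)\Z^i$ and weight-$(-\varepsilon_i)$ component is $\Ub(n)\der_i$ (the weights $\pm\varepsilon_1,\dots,\pm\varepsilon_n$ being pairwise distinct and nonzero). Therefore $\phi(\Z^i)=a_i\Z^i$ and $\phi(\der_i)=b_i\der_i$ with $a_i,b_i\in\Ub(n)$, and applying the same reasoning to $\phi^{-1}$ shows $a_i,b_i$ are units of $\Ub(n)$.

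Next I would impose \eqref{hdesp10c}. Using the weight rules $\Z^i f=f[-\varepsilon_i]\Z^i$ and $\der_i f=f[\varepsilon_i]\der_i$, one computes $\phi(\der_j\Z^j)=b_j\,a_j[\varepsilon_j]\,\Gamma_j$ and $\phi(\Z^i\der_i)=a_i\,b_i[-\varepsilon_i]\,\Z^i\der_i$, with $\Gamma_j=\der_j\Z^j$. Reordering $\phi(\Z^i\der_i)$ inside $\Diffv{\sigma'}(n)$ by \eqref{hdesp10c} and comparing with $\phi$ of the right-hand side of \eqref{hdesp10c}, the PBW property (the $\Gamma_j$ and $1$ are $\Ub(n)$-independent) lets me match coefficients, giving for all $i,j$
\[ a_i\,b_i[-\varepsilon_i]=b_j\,a_j[\varepsilon_j]=:\kappa \qquad\text{and}\qquad \kappa\,\sigma'_i=\sigma_i. \]
Since the first expression is independent of $j$ and the second of $i$, $\kappa$ is a single element of $\Ub(n)$. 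Applying the shift $\varepsilon_i$ to $\kappa=a_i b_i[-\varepsilon_i]$ and comparing with the $j=i$ identity $\kappa=a_i[\varepsilon_i]\,b_i$ yields $\kappa[\varepsilon_i]=\kappa$ for every $i$. A nonzero rational function invariant under each shift $\th_i\mapsto\th_i+1$ is constant, and as $\kappa$ is a unit we conclude $\kappa\in\K^{*}$.

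For the ``only if'' conclusion, from $\kappa\sigma'_i=\sigma_i$ with $\kappa\in\K$ I get $\Delta_i(\sigma-\kappa\sigma')=0$ for all $i$, so $\sigma-\kappa\sigma'$ is shift-invariant, hence a constant in $\K$. By Lemma \ref{idesy}(i) with $j=1$ one has $\mathcal{W}=\mathcal{W}'\oplus\mathcal{H}=\Wc'\oplus\K H_0$, so $\Wc'\cap\K=0$; since $\sigma,\sigma'\in\Wc'$ and $\kappa\in\K$, the element $\sigma-\kappa\sigma'$ lies in $\Wc'\cap\K=0$, whence $\sigma'=\gamma\sigma$ with $\gamma=\kappa^{-1}\in\K^{*}$. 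Conversely, given $\sigma'=\gamma\sigma$, I exhibit the isomorphism directly: $\phi(\th_i)=\th_i$, $\phi(\Z^i)=\gamma^{-1}\Z^i$, $\phi(\der_i)=\der_i$, which is a filtered $\Ub(n)$-module bijection (inverse $\Z^i\mapsto\gamma\Z^i$). It respects \eqref{relshdiffcompx}, \eqref{relshdiffcompd}, \eqref{relshdiffcompdx} because the scalar $\gamma^{-1}$ simply factors through, and it respects \eqref{hdesp10c} because $\gamma^{-1}\sigma'_i=\gamma^{-1}\gamma\,\sigma_i=\sigma_i$.

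The main obstacle is the second step: upgrading the a priori proportionality $\sigma_i=\kappa\sigma'_i$ from a general unit $\kappa\in\Ub(n)$ to a genuine scalar. This hinges on the shift-invariance identity $\kappa[\varepsilon_i]=\kappa$, which requires carefully combining both families of coefficient identities produced by matching the $\Gamma_j$ terms (including the diagonal $j=i$ case). The second, lighter, point is recognizing, via Lemma \ref{idesy}(i), that $\Wc'$ is precisely a complement to the constants in $\Wc$; this is what removes the harmless additive-constant ambiguity in the potential and turns ``$\sigma$ equals $\kappa\sigma'$ up to a constant'' into the exact equality $\sigma'=\gamma\sigma$.
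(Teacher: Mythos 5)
Your proof is correct and follows essentially the same route as the paper: pin down the isomorphism as $\Z^i\mapsto a_i\Z^i$, $\der_i\mapsto b_i\der_i$ with unit coefficients via weight and filtration-degree considerations, match coefficients of the $\Gamma_j$ and the constant term in the oscillator relation \eqref{hdesp10c} to get the conditions $a_ib_i[-\varepsilon_i]=b_ja_j[\varepsilon_j]=\kappa$ and $\kappa\sigma'_i=\sigma_i$, and deduce that $\kappa$ is a scalar from its shift-invariance. Your write-up is in fact somewhat more complete than the paper's, since you make explicit both the step where $\Wc'\cap\K=0$ kills the additive-constant ambiguity and the converse construction of the isomorphism, both of which the paper leaves implicit.
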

\begin{proof} Assume that $\iota\colon \Diffs(n)\to\text{Diff}_{\h,\sigma'}(n)$ is an isomorphism of filtered rings over $\Ub(n)$. To distinguish the generators, we denote 
the generators of the ring $\text{Diff}_{\h,\sigma'}(n)$ by $\Z'^{i}$ and $\der'_i$. 

\vskip .1cm
The $\varepsilon_i$-weight subspace $\mathfrak{E}_i$ of the ring $\Diffs(n)$ consists of elements of the form $\theta \Z^i$ where $\theta$ is a polynomial in the elements $\Gamma_j$, $j=1,\dots,n$, with coefficients in $\Ub(n)$. 
Since the space of the elements of $\mathfrak{E}_i$ with filtration degree $\leq 1$ 
is $\Ub(n)\Z^i$, we must have
\be\label{priso}\iota\colon \Z^i\mapsto \mu_i\Z'^i\ ,\ \der_i\mapsto\der'_i\nu_i\ee
with some invertible elements $\mu_i,\nu_i\in \Ub(n)$, $i=1,\dots,n$. Let $\gamma_i:=\mu_i\nu_i$, $i=1,\dots,n$. The defining relation \eqref{hdesp10c} and the corresponding relation for the ring $\text{Diff}_{\h,\sigma'}(n)$ shows that the formulas \eqref{priso} may define an isomorphism only if  
\be\label{noha1}\gamma_i=\gamma_j[\varepsilon_j]\ ,\ i,j=1,\dots,n\ ,\ee
and
\be\label{noha2}\gamma_i\sigma_i'=\sigma_i\ ,\ i=1,\dots,n\ .\ee
The condition \eqref{noha1} implies that $\gamma_i=\gamma$ for some $\gamma\in\K$. The condition \eqref{noha2} then becomes $\gamma\sigma_i'=\sigma_i$
and the assertion follows.  
\end{proof}

\subsection{Actions of braid and symmetric groups}\label{Actionbraidsymmetric}
We use the isomorphism $\mu$ to translate the action of the symmetric group from the Weyl algebra $\text{W}_n$ to the ring $\Diffs(n)$ and the action of the braid group from $\Diff(n)$ to $T^{-1} \text{W}_n$.

\vskip .2cm
\noindent{\bf Action of the symmetric group on $\Diffs(n)$.}
The symmetric group $\mathbb{S}_n$ acts by automorphisms on the algebra $\text{W}_n$, $\pi(X^j)=X^{\pi(j)}$, $\pi(D_j)=D_{\pi(j)}$ for $\pi \in \mathbb{S}_n$. The isomorphism $\mu$ translates this action to the action of $\mathbb{S}_n$ on the ring $\text{S}_n^{-1}\Diffs(n)$. It turns out that the subring $\Diffs(n)$ is preserved by this action. We present the formulas for the action of the generators $s_i$, $i=1,\dots,n-1$ of $\mathbb{S}_n$.
\begin{equation}\label{acofsn1}
\begin{array}{l}{\displaystyle 
s_i(\Z^i) =- \Z^{i+1} \th_{i,i+1} \ ,\ s_i(\Z^{i+1}) = \Z^i \frac{1}{\th_{i,i+1}}\ ,\ s_i(\Z^j) =\Z^{j}\ \text{for}\ j\neq i,i+1\ ,}\\[.3em]
{\displaystyle s_i(\der_i) = - \frac{1}{\th_{i,i+1}} \der_{i+1}\ ,\ s_i(\der_{i+1}) = \th_{i,i+1} \der_i\ ,\ s_i(\der_j) =\der_{j}\ \text{for}\ j\neq i,i+1\ ,}\\[.8em]
s_i(\th_j)=\th_{s_i(j)} \ .
\end{array}
\end{equation}
One can also check by direct calculations that the automorphisms $s_i$, $i=1,\dots,n-1$ satisfy the Artin relations.

The operators $s_i':=\epsilon s_i\epsilon$, $i=1,\dots,n-1$ where $\epsilon$ is the anti-automorphism \eqref{antiautomb}, generate the action of the  
symmetric group $\mathbb{S}_n$ by automorphisms as well. The action of the automorphism $s_i'$, $i=1,\dots,n-1$, involves only the element $\th_{i,i+1}$ (as the action of the 
automorphism $s_i$) and is given by 
\begin{equation}
\begin{array}{l}{\displaystyle 
s_i'(\Z^{i,\alpha}) = -  \frac{1}{\th_{i,i+1}}\Z^{i+1,\alpha} \ ,\ s_i'(\Z^{i+1,\alpha}) = \th_{i,i+1}\Z^{i,\alpha}\ ,\ s_i'(\Z^{j,\alpha}) = \Z^{j,\alpha}\ 
\text{for}\ j\neq i,i+1\ ,}\\[.2em]
{\displaystyle 
s_i'(\der_{i,\alpha}) = - \der_{i+1,\alpha} \th_{i,i+1} \ ,\ s_i'(\der_{i+1,\alpha}) =\der_{i,\alpha}\frac{1}{\th_{i,i+1}} \ ,\ 
s_i'(\der_{j,\alpha}) =\der_{j,\alpha}\ \text{for}\ j\neq i,i+1\ ,}\\[.8em]
s_i(\th_j)=\th_{s_i(j)}\ .
\end{array}
\end{equation}

\noindent{\bf Action of the braid group on $T^{-1}\text{W}_n$.}
The isomorphism $\mu$ can also be used to translate the action \eqref{automq} of the braid group by Zhelobenko operators to the 
action of the braid group by automorphisms on the ring $\K[a_1,\dots,a_n] \otimes \text{T}^{-1}\text{W}_n$. It turns out that this action preserves 
the subring $\text{T}^{-1}\text{W}_n$. Moreover, let $\text{T}_0$ be the multiplicative subset of $\text{T}$, generated by $X^jD_j-X^kD_k+\ell$, $1\leq j<k\leq n$, $\ell\in\mathbb{Z}$. Then the action of the operators $\q_i$, $i=1,\dots,n-1$, preserves the subring $\text{T}_0^{-1}\text{W}_n$. We present the formulas for the action of the operators $\q_i$, $i=1,\dots,n-1$:
$$\begin{array}{l}{\displaystyle 
\q_i(X^i) = \frac{1}{\mathcal{H}_{i,i+1}} X^{i+1}\ ,\ \q_i(X^{i+1}) = X^i \mathcal{H}_{i,i+1}\ ,\ \q_i(X^j)=X^j\ \text{for}\ j\neq i,i+1\ ,}\\[.1em] 
{\displaystyle \q_i(D_i) = D_{i+1} \mathcal{H}_{i,i+1}\ ,\ \q_i(D_{i+1}) = \frac{1}{\mathcal{H}_{i,i+1}} D_i\ ,\ \q_i(D_j)=D_j\ \text{for}\ j\neq i,i+1\ .}
\end{array}$$

\section{Infinite dimensional representations}\label{InfiDimRepre}

\subsection{Lowest weight representations}\label{infhighwe} 

The ring $\Diffs(n)$ has an $n$-parametric family of lowest weight representations,
introduced in \cite{HO1,HO2}. 
Let $\mathfrak{D}_n$ be 
the $\Ub(n)$-subring of $\Diffs(n)$ generated by $\{\der_i\}_{i=1}^n$. 
Let $\vec{\lambda}:=\{\lambda_1,\dots,\lambda_n\}\in\K^n$ be such that $\lambda_i-\lambda_j
\notin\mathbb{Z}$ for all $i,j=1,\dots,n$, $i\neq j$. Denote by $M_{\vec{\lambda}}$ the 
one-dimensional $\K$-vector space
with the basis vector $v$. The formulas  
\begin{equation}\label{dmoac}
\th_i.v= \lambda_iv\ ,\ \der_i. v = 0 \ ,\ i=1,\dots,n\ ,
\end{equation}
define the $\mathfrak{D}_n$-module structure on $M_{\vec{\lambda}}$. The lowest weight representation of lowest weight 
$\vec{\lambda}$ is  the induced representation $\text{Ind}_{\mathfrak{D}_n}^{\Diffs(n)}M_{\vec{\lambda}}$.
We describe the values of the central polynomial $c(t)$, see \eqref{gefufoce}, on the lowest weight representations.   

\vskip .2cm 
\begin{prop}\label{anosege}
The element $c(t)$ acts on $\text{Ind}_{\mathfrak{D}_n}^{\Diffs(n)}M_{\vec{\lambda}}$ by multiplication by the scalar 
\begin{equation}\label{reoacec}
\rho(t)[-\varepsilon]\ ,\ \text{where}\ \varepsilon=\varepsilon_1+\dots+\varepsilon_n\ .
\end{equation}
\end{prop}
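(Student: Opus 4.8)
The plan is to exploit that $c(t)$ is central in $\Diffs(n)$ (Proposition \ref{lece}) and that $\mathrm{Ind}_{\mathfrak{D}_n}^{\Diffs(n)}M_{\vec\lambda}$ is cyclic, generated by the lowest weight vector $v$. A central element acts on a cyclic module by a scalar, so it suffices to compute $c(t).v$. By the PBW property in the form \eqref{UhbasisDiff211}, the module is spanned by $v$ together with the strictly weight-raising monomials $(\Z^1)^{k_1}\dots(\Z^n)^{k_n}v$; hence the $\h(n)$-weight $\vec\lambda$ subspace is exactly $\K v$, and since each $\Gamma_i=\der_i\Z^i$ has $\h(n)$-weight zero it preserves this line, $\Gamma_i.v=g_iv$ for scalars $g_i$. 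From the definition \eqref{gefufoce} the eigenvalue is therefore $\kappa(t)=\big(\sum_i\frac{e(t)}{1+\th_i t}g_i-\rho(t)\big)\big\vert_{\th=\vec\lambda}$. Equivalently, and more cleanly, $\kappa(t)$ is the $\Un$-component of $c(t)$ in the ordering \eqref{UhbasisDiff211} (with all $\der$'s on the right) evaluated at $\vec\lambda$, because every weight-zero monomial containing a $\der$ annihilates $v$.

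First I would determine the $g_i$. The key observation is that $\Z^i\der_i.v=\Z^i(\der_i.v)=0$, since $\der_i.v=0$. Rewriting $\Z^i\der_i$ by means of the oscillator relation \eqref{hdesp10c}, which reads $\Z^i\der_i=\sum_j\frac{1}{1-\th_{ij}}\Gamma_j-\sigma_i$, and applying it to $v$, I obtain the linear system $\sum_j\frac{1}{1-(\lambda_i-\lambda_j)}g_j=\sigma_i\vert_{\th=\vec\lambda}$ for $i=1,\dots,n$. This is the same system that governs the $\Un$-component of $c(t)$: extracting that component amounts to solving a Cauchy-type system with matrix $\big(\frac{1}{1-\th_{ij}}\big)$, whose inverse is expressible through the elements $\chi_i$ exactly as in the passage from $\Gamma_j$ to the $c_k$ in Lemma \ref{gammaandc2}.

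The heart of the proof, and the step I expect to be the main obstacle, is to show that feeding this solution into $\kappa(t)$ collapses to $\rho(t)[-\varepsilon]$. Here I would use the defining finite-difference equations \eqref{corho}, namely $\rho(t)-\rho(t)[-\varepsilon_j]=\frac{e(t)}{1+\th_j t}\sigma_j$, together with the telescoping identity $\rho(t)-\rho(t)[-\varepsilon]=\sum_{j}\big(\frac{e(t)}{1+\th_j t}\sigma_j\big)[-\varepsilon_1-\dots-\varepsilon_{j-1}]$, so as to match the ordered $\varepsilon$-shifts that arise when the operators $\der_i\Z^i$ are successively reduced on $v$. In practice I would proceed in one of two ways: either invert the Cauchy matrix explicitly and verify the resulting generating-function identity in $t$, or, more economically, substitute the expression for $\Gamma_j$ from Lemma \ref{gammaandc2} (which already encodes the inverse) and check that the candidate scalar $\rho(t)[-\varepsilon]$ is the unique solution compatible with both \eqref{hdesp10c} and \eqref{corho}. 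Finally, since the anti-automorphism \eqref{antiautomb} interchanges the $\Z$'s and $\der$'s, commutation of $c(t)$ with the $\der_j$ imposes no extra condition, so the scalar computed on $v$ is the full central character, completing the identification $c(t)\vert_{\mathrm{Ind}\,M_{\vec\lambda}}=\rho(t)[-\varepsilon]$.
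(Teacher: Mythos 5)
Your reduction is correct as far as it goes: $c(t)$ is central and of $\h(n)$-weight zero, the weight-$\vec\lambda$ subspace of $\text{Ind}_{\mathfrak{D}_n}^{\Diffs(n)}M_{\vec\lambda}$ is $\K v$, so it suffices to compute $c(t).v$; and applying \eqref{hdesp10c} to $\Z^i\der_i.v=0$ does give the linear system $\sum_j\frac{1}{1-\th_{ij}}\vert_{\vec\lambda}\,g_j=\sigma_i\vert_{\vec\lambda}$ for the scalars $g_j$ defined by $\Gamma_j.v=g_jv$. But the step you yourself call ``the heart of the proof'' is never performed, and it is precisely the content of the proposition: one must show that the solution of this Cauchy-type system, substituted into $\kappa(t)=\sum_i\frac{e(t)}{1+\th_it}\vert_{\vec\lambda}\,g_i-\rho(t)\vert_{\vec\lambda}$, collapses to the stated shift of $\rho$. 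Naming two possible strategies is a plan, not a proof; and your second strategy is close to circular as formulated, because Lemma \ref{gammaandc2} merely converts the system for the $g_j$ into a linear system for the eigenvalues of the $c_k$, and identifying its solution with the coefficients of $\rho(t)[-\varepsilon]$ (together with invertibility of the relevant matrix at generic $\vec\lambda$) is exactly the finite-difference/partial-fraction identity you set out to avoid. Nothing in your text establishes that identity for general $n$ and general $\sigma\in\Wc$.

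For comparison, the paper inverts no matrix: it introduces $\bar\Gamma_i=\Z^i\der_i\QQ^-_i$, which annihilate $v$, and proves the operator identity $\bar\Gamma_i=\Gamma_i[-\varepsilon]$ inside $\Diffs^0(n)=\Un[c_1,\dots,c_n]$ by transporting both sides through the isomorphism $\mu$ of Lemma \ref{lemmaiso} to the localized Weyl algebra, where they become expressions in $\mu(\Upsilon_i)$. Centrality of $c(t)$ (conjugation by $\Z^1\cdots\Z^n$) and linear independence of the $\Gamma_i$ over $\Un$ then yield $c(t)=\sum_i f_i(t)[-\varepsilon]\,\bar\Gamma_i-\rho(t)[-\varepsilon]$, and evaluation on $v$ finishes. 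This identity $\Z^i\der_i\QQ^-_i=\Gamma_i[-\varepsilon]$ is the structural input your outline lacks; importing it, or genuinely carrying out the Cauchy inversion, is what would close the gap. One further warning: a direct check at $n=1$, $\sigma=H_1$, gives $c_1.v=(1-\lambda_1)v$ while $\rho(t)[-\varepsilon]\vert_{\vec\lambda}=\lambda_1-1$, so the scalar your computation would produce is $-\rho(t)[-\varepsilon]\vert_{\vec\lambda}$ (this is also what the last line of the paper's own proof yields); completing your argument would therefore also force you to confront this sign relative to the statement as printed.
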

\noindent{\it Proof.} In this proof, we use the following notation. For any element $A$ in $\Un[c_1,\dots,c_n]$, we denote $A[-\varepsilon_i] := \Z^i A (\Z^i)^{-1}$, $i=1,\dots,n$. For any element $B$ in $\mu^{-1} \left(\Un[c_1,\dots,c_n]\right)$, we denote $B[-\varepsilon_i] := X^i B (X^i)^{-1}$, $i=1,\dots,n$. For any $i=1,\dots,n$, we define $\bar{\Gamma}_i = \Z^i \der_i Q^-_i$ with $Q^-_i$ given by \eqref{defqpm}. From \eqref{gefufoce}, for any $i,k=1,\dots,n$ there exist $f_i(t),\bar{f}_i(t),\bar{\rho}(t) \in \Un[t]$, such that
\begin{equation}\label{c_k2sums}
c(t) \ = \ \sum_i f_i(t) \Gamma_i - \rho(t) \ = \ \sum_i \bar{f}_i(t) \bar{\Gamma}_i - \bar{\rho}(t) \quad , \quad k=1,\dots,n \ .
\end{equation}
Using the formulas \eqref{fromwtodiff} and \eqref{fromdifftow}, we get
\begin{equation}\label{gammaimu11}
\Gamma_i = \mu \circ \mu^{-1} \left( \der_i \Z^i \right) = \frac{\mu(\Upsilon_i)}{\chi_i} \quad , \quad i=1,\dots,n
\end{equation}
and for any $i=1,\dots,n$, we have
\begin{align*}
\bar{\Gamma}_i & = \mu \circ \mu^{-1} \left( \Z^i \der_i Q^-_i \right) = \mu \left( X^i \frac{\Upsilon_i}{\Psi_i \Psi'_i} (X^i)^{-1} \right) Q^-_i = \mu \left( \frac{\Upsilon_i[-\varepsilon_i]}{\Psi_i[-\varepsilon_i] \Psi'_i[-\varepsilon_i]} \right) Q^-_i \\
 & = \mu(\Upsilon_i[-\varepsilon_i]) \frac{Q^-_i}{\chi_i[-\varepsilon_i]} = \frac{\mu(\Upsilon_i [-\varepsilon_i])}{\chi_i} \ .
\end{align*}
So
\begin{equation}\label{calculGammas}
\bar{\Gamma}_i = \frac{\mu(\Upsilon_i) [-\varepsilon_i]}{\chi_i} \quad , \quad i=1,\dots,n \ .
\end{equation}
For any $i=1,\dots,n$, the element $\Upsilon_i$, defined in \textbf{Section \ref{isowe}}, verifies $\Upsilon_i[-\varepsilon_i] = \Upsilon_i[-\varepsilon]$. Also $\chi_i[-\varepsilon] = \chi_i$, $i=1,\dots,n$. The equations \eqref{gammaimu11} and \eqref{calculGammas}, then imply
$$\bar{\Gamma}_i = \Gamma_i [-\varepsilon] \quad , \quad i=1,\dots,n \ .$$
Using equation \eqref{c_k2sums} and that $\Z^1 \dots \Z^n c(t) = c(t) \Z^1 \dots \Z^n$, we have
$$c(t) \ = \ \sum_i f_i(t) \Gamma_i - \rho(t) \ = \ \sum_i \bar{f}_i(t)[\varepsilon] \Gamma_i - \bar{\rho}(t)[\varepsilon] \ .$$
Since the $\Gamma_i$'s are linearly independent in $\Diffs(n)$, we get
$$\bar{\rho}(t) = \rho(t)[-\varepsilon] \quad \text{and} \quad \bar{f}_i(t) = f_i(t)[-\varepsilon] \ , \ i=1,\dots,n \ .$$
By definition of the highest weight vector $v$, we have $\bar{\Gamma}_i v = 0$, $i=1,\dots,n$ and so
$$\hspace{2cm}c(t) v = \left( \sum_i \bar{f}_i(t) \bar{\Gamma}_i - \bar{\rho}(t) \right) v = - \bar{\rho}(t) v = -\rho(t) \vert_{\th_i\mapsto \lambda_i-1} v \ .\hspace{4cm}\square$$

\begin{rmk}\label{remhighestwereps}
Similarly to \eqref{dmoac}, we define highest weight representations $W_{\vec{\lambda}}:=\text{Ind}_{\mathfrak{X}_n}^{\Diffs(n)}M'_{\vec{\lambda}}$, induced from the one-dimensional representation $M'_{\vec{\lambda}}$ : 
\begin{equation}\label{dmoacbb}
\th_i.v= \lambda_iv\ ,\ \Z^i. v = 0 \ ,\ i=1,\dots,n\ ,
\end{equation}
of the $\Ub(n)$-subring $\mathfrak{X}_n$ of $\Diffs(n)$, generated by $\{\Z^i\}_{i=1}^n$.
\end{rmk}

\subsection{Another family of infinite dimensional representations}
The isomorphism \eqref{fromdifftow} allows  to construct another $2n$-parametric family of $\Diff(n)$-modules. 
Let $\vec{\gamma}:=\{\gamma_1,\dots,\gamma_n\}\in\K^n$ 
be such that $\gamma_i-\gamma_j\notin\mathbb{Z}$ if for all $i,j=1,\dots,n$, 
$i\neq j$. 
Let $V_{\vec{\gamma}}$ be the $\K$-vector space with the basis 
$$v_{\vec{j}}:=(X^1)^{j_1+\gamma_1}(X^2)^{j_2+\gamma_2}\dots (X^n)^{j_n+\gamma_n}\ ,\ 
\text{where}\ \vec{j}:=\{ j_1,\dots,j_n\}\ ,\ j_1,\dots,j_n\in\mathbb{Z}\ .$$ 
Under the conditions on $\vec{\gamma}$, $V_{\vec{\gamma}}$ is a $\text{T}^{-1}\text{W}_n$-module. Define the action of $a_k$ on the space $V_{\vec{\gamma}}$ by $a_k\colon v_{\vec{j}} \mapsto A_k v_{\vec{j}}$ with an arbitrary 
$\vec{A}:=\{A_1,\dots,A_n\}\in\K^n$. Then $V_{\vec{\gamma}}$ becomes an 
$\K[a_1,\dots,a_n] \otimes \text{T}^{-1}\text{W}_n$-module and therefore $\Diff(n)$-module which we denote by 
$V_{\vec{\gamma},\vec{A}}$. The central operator $c_k$ acts on $V_{\vec{\gamma},\vec{A}}$ 
by multiplication by the scalar $A_k$.

\section{Finite dimensional $\Diffs(n)$-modules}\label{FiniteDimModules}

\begin{defi}
We call an element $v\neq 0$ of a $\Diffs(n)$-module a lowest weight vector of weight $\vec{\lambda}=(\lambda_1,\dots,\lambda_n)$ if it satisfies \eqref{dmoacbb}.
\end{defi}

\subsection{Irreducible modules}\label{Irred}
Since $\K$ is algebraically closed, any finite dimensional $\Diffs(n)$-module 
$\Mc$ over $\K$ contains a common eigenvector for all $\th_i$, $i=1,\dots,n$. Acting on it by the raising operators $\Z^i$, $\th_i \Z^i = \Z^i (\th_i + 1)$, we shall eventually construct a highest weight vector of some weight $\vec{\lambda}=(\lambda_1,\dots,\lambda_n)$ with $\lambda_i\neq\lambda_j$ for $i\neq j$ since $\th_{ij}+a$, $a\in\mathbb{Z}$, are invertible. 
Clearly $\mathfrak{D}_nv$ is a submodule in $\Mc$ and the action of $\hf$ on it is semisimple. If $\Mc$ is irreducible then the subspace of highest  weight vectors is one-dimensional. We conclude that 
an irreducible module $\Mc$ is a quotient of some $W_{\vec{\lambda}}$, see Remark \ref{remhighestwereps}, by its maximal non-trivial submodule. We call this quotient $L_{\vec{\lambda}}$.

In Section \ref{Irred} the symbol $\Mc$ stands for a quotient of $W_{\vec{\lambda}}$ and $v\in\Mc$ is the image of the generating highest weight vector of $W_{\vec{\lambda}}$. 
We associate to $\Mc$ the sequence $d_{\Mc}=(d_1,\dots,d_n)$ where $d_i$ is the smallest integer such that $\der_i^{d_i+1} v= 0$ and $\der_i^{d_i} v \neq 0$.
\begin{prop}\label{lemmanonzeroirre}
The following conditions are equivalent :

\noindent (i) 
for any $i \in \{1,\dots,n\}$ and $k_i \in \{1,\dots,d_i\}$, 
\begin{equation}\label{equapropnonzeroirre}
\Z^i\,\der_i^{k_i}v= a_{ik_i} \, \der_i^{k_i-1}v \quad \text{with} \quad a_{ik_i} \in \K^\star ;
\end{equation}

\noindent(ii) $\Mc$ is irreducible.
\end{prop}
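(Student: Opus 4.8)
The plan is to exploit that $\Mc$, being a cyclic highest weight module (a quotient of $W_{\lb}$), has one-dimensional weight spaces. By the PBW property \eqref{UhbasisDiff} together with $\Z^i.v=0$, the nonzero images of the monomials $w_{\vec l}:=\der_n^{l_n}\cdots\der_1^{l_1}v$, $\vec l\in\mathZ^n$, form a basis of $\Mc$, and $w_{\vec l}$ carries the distinct $\hn$-weight $\lb-\vec l$. First I would record that, since $\Ub(n)$ acts on $\Mc$, each $\th_{ij}+k$ with $k\in\mathbb{Z}$ must act invertibly, which forces $\lambda_i-\lambda_j\notin\mathbb{Z}$; consequently every structure constant produced by reordering the $\der$'s via \eqref{relshdiffcompd}, or by commuting a $\Z^i$ past a $\der_j$ with $j\neq i$ via \eqref{relshdiffcompdx}, evaluates to a nonzero scalar on each weight vector. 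Because any submodule is stable under the $\th_i$, it is a sum of weight lines; hence $\Mc$ is irreducible if and only if every nonzero $w_{\vec l}$ generates $\Mc$.

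Next I would reduce the condition ``$w_{\vec l}$ generates'' to a single nonvanishing statement. A weight count shows that among the PBW monomials only $(\Z^1)^{l_1}\cdots(\Z^n)^{l_n}$, up to an invertible factor from $\Ub(n)$, can send $w_{\vec l}$ back into the top weight line $\K v$; therefore $w_{\vec l}$ generates $\Mc$ if and only if $(\Z^1)^{l_1}\cdots(\Z^n)^{l_n}w_{\vec l}\neq 0$. The core computation is the raising action $\Z^i w_{\vec l}=b_{i,\vec l}\,w_{\vec l-\varepsilon_i}$, which is again forced by weights. Reordering $\der_i^{l_i}$ to the rightmost position, commuting $\Z^i$ through the remaining $\der_j$, and using $\Z^i\der_i^{l_i}v=a_{i,l_i}\der_i^{l_i-1}v$, one finds $b_{i,\vec l}=(\text{nonzero scalar})\cdot a_{i,l_i}$, where $a_{i,l_i}$ is precisely the constant of the statement. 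Telescoping along the path yields
\[(\Z^1)^{l_1}\cdots(\Z^n)^{l_n}w_{\vec l}=(\text{nonzero scalar})\cdot\Bigl(\prod_{i=1}^n\prod_{m=1}^{l_i}a_{i,m}\Bigr)v\ ,\]
so $w_{\vec l}$ generates $\Mc$ exactly when $a_{i,m}\neq0$ for all $i$ and all $m\le l_i$.

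The last ingredient is that $w_{\vec l}\neq0$ implies $l_i\le d_i$ for every $i$: reordering $\der_i^{l_i}$ to the right with an invertible coefficient in $\Ub(n)$ and invoking $\der_i^{d_i+1}v=0$ shows $w_{\vec l}=0$ as soon as some $l_i>d_i$. Granting these facts, both implications follow at once. For (ii)$\Rightarrow$(i), irreducibility applied to the nonzero vectors $\der_i^{k}v=w_{k\varepsilon_i}$ with $k\le d_i$ forces $\prod_{m=1}^k a_{i,m}\neq0$, hence $a_{i,m}\neq0$ for all $m\le d_i$. For (i)$\Rightarrow$(ii), an arbitrary nonzero $w_{\vec l}$ has $l_i\le d_i$, so every $a_{i,m}$ occurring in the product is nonzero by (i); thus $(\Z^1)^{l_1}\cdots(\Z^n)^{l_n}w_{\vec l}\neq0$ and $w_{\vec l}$ generates $\Mc$, giving irreducibility.

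I expect the main obstacle to be the bookkeeping in the raising computation: one must check that every coefficient arising from the reorderings \eqref{relshdiffcompd}, the cross-commutations \eqref{relshdiffcompdx}, and the evaluation of elements of $\Ub(n)$ on the intermediate weights is both finite and nonzero, which is exactly where the genericity $\lambda_i-\lambda_j\notin\mathbb{Z}$ enters. Once this is secured, the factorization of $b_{i,\vec l}$ through the single-index constant $a_{i,l_i}$ makes the remainder of the argument purely formal.
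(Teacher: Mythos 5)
Your argument is correct, and its computational engine is exactly the paper's: telescoping $(\Z^1)^{l_1}\cdots(\Z^n)^{l_n}$ through $\der_n^{l_n}\cdots\der_1^{l_1}v$ via the constants $a_{i,m}$, with every reordering coefficient coming from \eqref{relshdiffcompd}, \eqref{relshdiffcompdx} and \eqref{hdesp10c} evaluating to a finite nonzero scalar because $\lambda_i-\lambda_j\notin\mathbb{Z}$. The packaging, however, differs. The paper stays entirely inside the language of highest weight vectors: for (ii)$\Rightarrow$(i) it observes in one line that $a_{ik_i}=0$ would make $\der_i^{k_i}v$ a highest weight vector (it is killed by all $\Z^j$, $j\neq i$, automatically), hence would generate a proper submodule; for (i)$\Rightarrow$(ii) it notes that a proper submodule must contain a highest weight vector $w=\der_n^{k_n}\cdots\der_1^{k_1}v$, which is impossible since $(\Z^n)^{k_n}\cdots(\Z^1)^{k_1}w=cv$ with $c\neq 0$. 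You instead decompose every submodule into weight lines and prove the uniform criterion that a nonzero $w_{\vec l}$ generates $\Mc$ if and only if $(\Z^1)^{l_1}\cdots(\Z^n)^{l_n}w_{\vec l}\neq 0$; this is longer but yields a sharper intermediate statement and runs both implications through a single mechanism, whereas the paper's (ii)$\Rightarrow$(i) needs no generation criterion at all. One step you should make explicit: your ``weight count'' asserting that only $(\Z^1)^{l_1}\cdots(\Z^n)^{l_n}$ can carry $w_{\vec l}$ to a nonzero multiple of $v$. Weights alone allow any PBW monomial $\der^{\vec a}\Z^{\vec b}$ with $\vec b-\vec a=\vec l$; the claim is nevertheless true because all weights of $\Mc$ lie in $\vec\lambda-\mathZ^n$, so for $\vec a\neq 0$ the intermediate vector $\Z^{\vec b}w_{\vec l}$, which would have weight $\vec\lambda+\vec a$, must vanish. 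With that remark inserted, your proof is complete.
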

\begin{proof}
(ii) $\Rightarrow$ (i). The relations \eqref{relshdiffcompdx} and \eqref{hdesp10c} imply that
$\Z^j\, \der_i^{k_i} v = 0$, $j \neq i$, and $\Z^i\, \der_i^{k_i}v= a_{ik_i} \, \der_i^{k_i-1}v$.
If $a_{ik_i}= 0$ then $\der_i^{k_i} v$ is another highest weight vector of $\Mc$ and \eqref{equapropnonzeroirre} follows. 

(i) $\Rightarrow$ (ii). If $\Mc$ is reducible, it contains another highest weight vector $w$ of the form 
$$w =  \der_n^{k_n} \dots \der_1^{k_1} v\ ,\ (0,\dots,0)\neq (k_1,\dots,k_n) \in \mathZ^n\ .$$
By induction, \eqref{equapropnonzeroirre} implies that
$(\Z^i)^{k_i} \der_i^{k_i} v= b_{ik_i} v$ with $b_{ik_i}\neq 0$. Then the relations \eqref{relshdiffcompdx} imply that $(\Z^n)^{k_n} \dots (\Z^1)^{k_1}.w =cv$, 
$c \neq 0$, so $w$ cannot be a highest weight vector.
\end{proof}
\begin{coro}\label{corobasis}
The module $L_{\vec{\lambda}}$ admits a basis 
$$\der_n^{k_n} \dots \der_1^{k_1} v \ , \ 0 \leq k_i \leq d_i \ , \ \ \text{where}\ \ d_{L_{\vec{\lambda}}}=(d_1,\dots,d_n)\ .$$
\end{coro}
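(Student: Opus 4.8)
The plan is to combine the PBW structure of the induced module $W_{\vec{\lambda}}$ with the equivalence already established in Proposition \ref{lemmanonzeroirre}. Recall that $L_{\vec{\lambda}}$ is the irreducible quotient of $W_{\vec{\lambda}}=\text{Ind}_{\mathfrak{X}_n}^{\Diffs(n)}M'_{\vec{\lambda}}$, where $\Z^i.v=0$. By the PBW property in the form \eqref{UhbasisDiff} together with $\Z^i.v=0$, only the monomials with no $\Z$-factors survive on $v$, so $\{\der_n^{l_n}\dots\der_1^{l_1}v : (l_1,\dots,l_n)\in\mathZ^n\}$ is a $\K$-basis of $W_{\vec{\lambda}}$. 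By the weight relations \eqref{relsweights}, $\der_n^{l_n}\dots\der_1^{l_1}v$ has weight $(\lambda_1-l_1,\dots,\lambda_n-l_n)$, so distinct multidegrees give distinct weights and every weight space of $W_{\vec{\lambda}}$, hence of its quotient $L_{\vec{\lambda}}$, is at most one-dimensional. Consequently the nonzero images of these monomials in $L_{\vec{\lambda}}$ are automatically linearly independent, and it remains only to identify for which multidegrees the image is nonzero.

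First I would show that the image vanishes as soon as $l_i>d_i$ for some $i$. The relations \eqref{relshdiffcompd} let me reorder any $\der$-monomial at the cost of a factor that is a unit of $\Un$, since the coefficients $\frac{\th_{ij}-1}{\th_{ij}}$ lie in $\Un^\times$ (all $\th_{ij}+k$ being inverted in $\Un$). Rewriting $\der_n^{l_n}\dots\der_1^{l_1}$ so that the block $\der_i^{l_i}$ sits rightmost and applying it to $v$, the factor $\der_i^{l_i}v=\der_i^{l_i-d_i-1}(\der_i^{d_i+1}v)=0$ forces the whole monomial to vanish. Hence the images in $L_{\vec{\lambda}}$ of the monomials with some $l_i>d_i$ are zero, and $L_{\vec{\lambda}}$ is spanned by the images of $\der_n^{k_n}\dots\der_1^{k_1}v$ with $0\le k_i\le d_i$.

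It remains to show these box monomials are all nonzero, which is the crux. Here I would use that irreducibility of $L_{\vec{\lambda}}$ yields condition (i) of Proposition \ref{lemmanonzeroirre}, namely $\Z^i\der_i^{k_i}v=a_{ik_i}\der_i^{k_i-1}v$ with $a_{ik_i}\in\K^\star$ for $1\le k_i\le d_i$. Fix $w=\der_n^{k_n}\dots\der_1^{k_1}v$ with $0\le k_i\le d_i$ and apply the raising operators in increasing index order. Applying $(\Z^1)^{k_1}$ and using $\Z^1\der_j=\der_j\Z^1$ for $j>1$ (from \eqref{relshdiffcompdx}), the operator passes unchanged through $\der_n^{k_n}\dots\der_2^{k_2}$ and acts on $\der_1^{k_1}v$, peeling it off to $\bigl(\prod_{m=1}^{k_1}a_{1m}\bigr)\der_n^{k_n}\dots\der_2^{k_2}v$. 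Iterating with $(\Z^2)^{k_2},\dots,(\Z^n)^{k_n}$ yields $(\Z^n)^{k_n}\dots(\Z^1)^{k_1}w=c\,v$ with $c=\prod_{i,m}a_{im}\neq 0$, so $w\neq 0$. Together with the spanning and linear independence above, this exhibits $\{\der_n^{k_n}\dots\der_1^{k_1}v : 0\le k_i\le d_i\}$ as a basis of $L_{\vec{\lambda}}$.

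The main obstacle is this final non-vanishing step. Processing the $\Z^i$ in increasing order is essential: with this choice, at each stage $\Z^i$ meets only blocks $\der_j^{k_j}$ with $j>i$, through which it commutes unchanged by \eqref{relshdiffcompdx}, so the computation reduces cleanly to the single-index identities $\Z^i\der_i^{k_i}v=a_{ik_i}\der_i^{k_i-1}v$, which are exactly condition (i) of Proposition \ref{lemmanonzeroirre} and already incorporate the oscillator relation \eqref{hdesp10c}. The care lies in confirming that all the scalars $a_{ik_i}$ and all reordering coefficients remain nonzero, which holds because every $\th_{ij}+k$ acts invertibly on the weight vectors of the $\Un$-module $L_{\vec{\lambda}}$.
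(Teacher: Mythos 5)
Your proof is correct and follows essentially the same route the paper intends: the corollary is stated right after Proposition \ref{lemmanonzeroirre} precisely because spanning (PBW basis of $W_{\vec{\lambda}}$ plus vanishing of monomials with $k_i>d_i$), one-dimensionality of weight spaces, and the peeling computation $(\Z^n)^{k_n}\dots(\Z^1)^{k_1}w=cv$, $c\neq 0$ (which appears verbatim in the paper's proof of the implication (i) $\Rightarrow$ (ii)) together yield the basis. Your write-up just makes these implicit steps explicit, including the correct observation that the raising operators must be applied in increasing index order so that $\Z^i$ commutes exactly with the blocks $\der_j^{k_j}$, $j>i$, by \eqref{relshdiffcompdx}.
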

We express the irreducibility condition in terms of the highest weight $\vec{\lambda}$. 
\begin{lemm}\label{actiondxmLWVprop}
We have
\begin{equation}\label{actiondxmLWV}
\Z^i \der_i^m v = - \der_i^{m-1} \left( \sum_{k=0}^{m-1} \sigma_i[-k\varepsilon_i] \right) v\ ,\ i \in \{1,\dots,n\}\ ,\ m>1\ .
\end{equation}
\end{lemm}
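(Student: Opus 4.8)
The plan is to prove \eqref{actiondxmLWV} by induction on $m$, peeling off one factor $\der_i$ at a time with the oscillator-like relation \eqref{hdesp10c} and exploiting that $v$ is a lowest weight vector, so $\Z^j v = 0$ for all $j$. The base case is immediate: applying \eqref{hdesp10c} gives $\Z^i\der_i v=\sum_j\frac{1}{1-\th_{ij}}\der_j\Z^j v-\sigma_i v=-\sigma_i v$, since every $\Z^j v$ vanishes, which is \eqref{actiondxmLWV} for $m=1$.

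The key preliminary observation is that for $j\neq i$ the generator $\Z^j$ commutes with $\der_i$ up to a scalar rational function of $\th_{ji}$. Relabelling \eqref{relshdiffcompdx} gives $\Z^j\der_i=g_{ji}\der_i\Z^j$, where $g_{ji}=1$ if $j<i$ and $g_{ji}=\frac{\th_{ji}(\th_{ji}-2)}{(\th_{ji}-1)^2}$ if $j>i$. Iterating, and using $\der_i f=f[\varepsilon_i]\der_i$ from the weight relations \eqref{relsweights}, one finds $\Z^j\der_i^{m-1}=\big(\prod_{l=0}^{m-2}g_{ji}[l\varepsilon_i]\big)\der_i^{m-1}\Z^j$, and therefore $\Z^j\der_i^{m-1}v=0$ because $\Z^j v=0$.

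For the inductive step I would write $\der_i^m=\der_i\cdot\der_i^{m-1}$ and apply \eqref{hdesp10c}:
\[\Z^i\der_i^m v=\Big(\sum_j\frac{1}{1-\th_{ij}}\der_j\Z^j-\sigma_i\Big)\der_i^{m-1}v.\]
By the previous paragraph every term with $j\neq i$ vanishes, and the $j=i$ coefficient is $\frac{1}{1-\th_{ii}}=1$, so $\Z^i\der_i^m v=\der_i\,\Z^i\der_i^{m-1}v-\sigma_i\der_i^{m-1}v$. Substituting the inductive hypothesis for $\Z^i\der_i^{m-1}v$ into the first term gives $-\der_i^{m-1}\big(\sum_{k=0}^{m-2}\sigma_i[-k\varepsilon_i]\big)v$, while pushing $\sigma_i$ through $\der_i^{m-1}$ by the weight relation $f\der_i=\der_i f[-\varepsilon_i]$ turns the second term into $-\der_i^{m-1}\sigma_i[-(m-1)\varepsilon_i]v$. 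Adding these produces $-\der_i^{m-1}\big(\sum_{k=0}^{m-1}\sigma_i[-k\varepsilon_i]\big)v$, exactly the right-hand side of \eqref{actiondxmLWV}.

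The difficulty here is not conceptual but lies entirely in the careful bookkeeping of the finite-difference shifts: one must track that conjugation by $\der_i$ sends $f\mapsto f[\pm\varepsilon_i]$, and confirm that the off-diagonal contributions genuinely vanish. The latter is harmless precisely because it rests only on $\Z^j v=0$, so the potentially singular scalar factors $g_{ji}$ never cause trouble, being multiplied into zero. Once the shifts are handled correctly, the appearance of the telescoped sum $\sum_{k=0}^{m-1}\sigma_i[-k\varepsilon_i]$ is automatic.
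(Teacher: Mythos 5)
Your proof is correct and follows essentially the same route as the paper's: induction on $m$ via the oscillator-like relation \eqref{hdesp10c}, with the $j\neq i$ terms killed because $\Z^j v=0$, and the weight relations producing the shifted term $\sigma_i[-(m-1)\varepsilon_i]$ that completes the sum. The only difference is that you spell out the vanishing of the off-diagonal terms $\der_j\Z^j\der_i^{m-1}v$, which the paper uses implicitly when it passes directly to $\left(\der_i\Z^i-\sigma_i\right)\der_i^m v$.
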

\noindent{\it Proof.} We prove it by induction on $m$. For $m=0$, it is $\Z^i \ket{}=0$. For any $m\geq 0$, the relations \eqref{hdesp10c} imply that 
$\Z^i \der_i^{m+1} v = 
 \left( \der_i \Z^i-\sigma_i \right)\der_i^m v$. By the induction hypothesis, 
$$\hspace{1.2cm}\Z^i \der_i^{m+1} v = - \der_i^m \left( \sum_{k=0}^{m-1} \sigma_i[-k\varepsilon_i] \right) v - \der_i^m \sigma_i[-m\varepsilon_i] v = - (\der_i)^m \left( \sum_{k=0}^m \sigma_i[-k\varepsilon_i] \right) v \ .\hspace{1.2cm}\square$$

For brevity, we write $f \vert_{\lb} := f(\lambda_1,\dots,\lambda_n)$ for $f \in \Un$.

\begin{theo}\label{evalsigmairredmodule}
The module $\Mc$ is irreducible if and only if for any $i = 1,\dots,n$ 
\begin{equation}\label{condforirrll}
 \left(\sigma-\sigma[-(d_i+1) \varepsilon_i]\right) \vert_{\lb}=0 \ \ \text{and}\  \ 
\left(\sigma-\sigma[-k_i \varepsilon_i]\right)\vert_{\lb}\neq 0 \ , \ 1 \leq k_i \leq d_i  \ .\end{equation}
\end{theo}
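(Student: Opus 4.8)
The plan is to reduce the whole statement to the explicit action computed in Lemma \ref{actiondxmLWVprop} together with the irreducibility criterion of Proposition \ref{lemmanonzeroirre}, so that both halves of \eqref{condforirrll} become statements about one telescoping sum evaluated at $\lb$. First I would put the sum of Lemma \ref{actiondxmLWVprop} in closed form. Since $\sigma_i=\Delta_i\sigma=\sigma-\sigma[-\varepsilon_i]$, the sum telescopes,
$$\sum_{k=0}^{m-1}\sigma_i[-k\varepsilon_i]=\sum_{k=0}^{m-1}\bigl(\sigma[-k\varepsilon_i]-\sigma[-(k+1)\varepsilon_i]\bigr)=\sigma-\sigma[-m\varepsilon_i]\ .$$
This element lies in $\Un$ and sits immediately to the left of $v$, so it acts on the weight vector $v$ as the scalar obtained by evaluation at $\lb$; pulling the scalar through $\der_i^{m-1}$, Lemma \ref{actiondxmLWVprop} gives, for every $i$ and every $m\ge 1$,
$$\Z^i\der_i^m v=-\bigl(\sigma-\sigma[-m\varepsilon_i]\bigr)\vert_{\lb}\,\der_i^{m-1}v\ .$$
In the notation of Proposition \ref{lemmanonzeroirre} this identifies the coefficient $a_{im}=-(\sigma-\sigma[-m\varepsilon_i])\vert_{\lb}$, and no extra weight shift is introduced because the $\Un$-factor acts directly on $v$.

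With this identity both directions are short. For the interior condition I would invoke Proposition \ref{lemmanonzeroirre}: the module $\Mc$ is irreducible if and only if $a_{ik_i}\neq 0$ for all $i$ and all $1\le k_i\le d_i$, which by the displayed identity is precisely $(\sigma-\sigma[-k_i\varepsilon_i])\vert_{\lb}\neq 0$, the second part of \eqref{condforirrll} (here I note that $\der_i^{k_i-1}v\neq 0$ for $1\le k_i\le d_i$, so vanishing of $a_{ik_i}$ is genuinely equivalent to vanishing of the scalar). For the boundary condition I would use the definition of $d_i$: in $\Mc$ one has $\der_i^{d_i+1}v=0$ and $\der_i^{d_i}v\neq 0$, so applying $\Z^i$ and using the identity at $m=d_i+1$ yields $0=\Z^i\der_i^{d_i+1}v=-(\sigma-\sigma[-(d_i+1)\varepsilon_i])\vert_{\lb}\der_i^{d_i}v$; since $\der_i^{d_i}v\neq 0$ this forces $(\sigma-\sigma[-(d_i+1)\varepsilon_i])\vert_{\lb}=0$, the first part of \eqref{condforirrll}. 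Conversely, the second part of \eqref{condforirrll} reproduces condition (i) of Proposition \ref{lemmanonzeroirre}, the required shape $\Z^i\der_i^{k_i}v=a_{ik_i}\der_i^{k_i-1}v$ being automatic from the identity, whence irreducibility follows.

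The computations are entirely routine, and there is no real obstacle, only bookkeeping: one must keep the $\Un$-element $\sigma-\sigma[-m\varepsilon_i]$ on the correct side when evaluating at $\lb$, and track that the string $v,\der_iv,\dots,\der_i^{d_i}v$ consists of nonzero vectors. The one point worth emphasizing is conceptual rather than technical: the boundary equation is forced by the very definition of $d_i$ and is therefore automatically satisfied by any finite-dimensional quotient, so the genuine irreducibility content is carried by the interior non-vanishing conditions, while the boundary equalities simply record that the $\der_i$-string closes up at height $d_i$, consistently with $d_{\Mc}$.
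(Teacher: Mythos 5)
Your proof is correct and follows essentially the same route as the paper's: the paper's own argument is exactly the combination of Proposition \ref{lemmanonzeroirre} and Lemma \ref{actiondxmLWVprop} via the telescoping identity $\sum_{k=0}^{m-1}\sigma_i[-k\varepsilon_i]=\sigma-\sigma[-m\varepsilon_i]$, which you reproduce and then unwind in detail. Your closing remark that the boundary equality in \eqref{condforirrll} is forced by the very definition of $d_i$, so that the real irreducibility content lies in the interior non-vanishing conditions, is a correct and worthwhile clarification that the paper leaves implicit.
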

\noindent{\it Proof.} This follows from Proposition \eqref{lemmanonzeroirre} and Lemma \eqref{actiondxmLWVprop} since 
$$\hspace{2.8cm}\sum_{k=0}^{m-1} \sigma_i[-k\varepsilon_i] = \sum_{k=0}^{m-1} \left( \sigma[-k\varepsilon_i] - \sigma[-(k+1)\varepsilon_i] \right) = \sigma - \sigma[- m \varepsilon_i] \ .\hspace{2.8cm}\square$$

\paragraph{Examples.} 
{\bf 1.} $\sigma = H_1$, see \eqref{homogeneouspoly11}. The algebra $\Diffv{H_1}(n) = \Diff(n)$ has no finite-dimensional modules since
for any $(d_1,\dots,d_n) \in \mathZ^n$ and any $\vec{\lambda}$, 
$$ \left( H_1-H_1[-(d_i+1) \varepsilon_i]\right)\vert_{\lb} = (d_i+1) \neq 0 \quad , \quad i = 1,\dots,n\ .$$

\noindent{\bf 2.} $\sigma = H_2$. The algebra $\Diffv{H_2}(n)$ has no finite-dimensional modules if $n\geq 2$.  
Indeed, for any $(d_1,\dots,d_n) \in \mathZ^n$ and any $\vec{\lambda}$ the first condition in \eqref{condforirrll} is equivalent to
$$\sum_{k \leq l} \lambda_k \lambda_l = \sum_{k \leq l} (\lambda_k - (d_i+1)\delta^i_k) (\lambda_l - (d_i+1)\delta^i_l)
\ \Longleftrightarrow\  \sum_{l=1}^n \lambda_l = d_i + 1 - \lambda_i\ ,\ 1\leq i\leq n\ . $$
If $n \geq 2$ then for any $i \neq j$, we have $\lambda_i - \lambda_j = d_i - d_j \in \mathbb{Z}$
which is forbidden. However, the algebra $\Diffv{H_2}(1)$ is isomorphic to $\U(\sl_2)$ and therefore admits finite-dimensional modules.

\begin{conj}
$\Diffv{H_m}(n)$ has finite dimensional modules if and only if $m > n$.
\end{conj}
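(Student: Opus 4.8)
The plan is to translate the irreducibility criterion of Theorem~\ref{evalsigmairredmodule} into a statement about the roots of a single polynomial. First I would note that any nonzero finite-dimensional module contains a highest weight vector (as explained before Proposition~\ref{lemmanonzeroirre}) and hence admits a finite-dimensional irreducible quotient $L_{\lb}$; by Corollary~\ref{corobasis} such a quotient is finite-dimensional exactly when all $d_i$ are finite. Thus $\Diffv{H_m}(n)$ has a finite-dimensional module if and only if there exist a weight $\lb$ with $\lambda_i-\lambda_j\notin\mathbb{Z}$ for $i\neq j$ and integers $d_i\in\mathZ$ satisfying the conditions of Theorem~\ref{evalsigmairredmodule} for $\sigma=H_m$. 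Next I would compute, using the generating function $\sum_{m\geq 0}H_m t^m=\prod_j(1-\th_j t)^{-1}$, the difference occurring there:
\[ \bigl(H_m-H_m[-s\varepsilon_i]\bigr)\vert_{\lb}=s\,H_{m-1}(\lambda_1,\dots,\lambda_n,\lambda_i-s)\ ,\quad s\in\N\ . \]
(For $m=1$ this returns $s$, and for $m=2$ it returns $s(\lambda_1+\dots+\lambda_n+\lambda_i-s)$, recovering the two worked examples.) Setting $R(y):=H_{m-1}(\lambda_1,\dots,\lambda_n,y)$, a \emph{monic} polynomial in $y$ of degree $m-1$, the criterion becomes: $R(\lambda_i-(d_i+1))=0$ while $R(\lambda_i-k)\neq 0$ for $1\leq k\leq d_i$, for each $i$.

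\textbf{The easy (``only if'') direction.} I expect the implication $m\leq n\Rightarrow$ no finite-dimensional modules to follow from a degree count. Indeed, if a finite-dimensional (equivalently, irreducible) module exists, then for every $i$ the integer $d_i+1\geq 1$ makes $\lambda_i-(d_i+1)$ a root of the single polynomial $R$. Since $\lambda_i-(d_i+1)\equiv\lambda_i\pmod{\mathbb{Z}}$ and the $\lambda_i$ lie in pairwise distinct cosets of $\mathbb{Z}$, these $n$ roots are pairwise distinct. Hence $R$, of degree $m-1$, has at least $n$ distinct roots, forcing $m-1\geq n$, i.e. $m>n$. This is the uniform generalization of Examples~1 and~2, where $R$ has degree $0$ and $1$ respectively.

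\textbf{The hard (``if'') direction.} The main obstacle is the converse: for $m>n$ one must actually exhibit a weight $\lb$ producing a finite-dimensional $L_{\lb}$. By the reduction above this amounts to choosing $\lambda_1,\dots,\lambda_n$ in pairwise distinct cosets of $\mathbb{Z}$ so that, for each $i$, the degree-$(m-1)$ polynomial $R$ possesses a root of the form $\lambda_i-s_i$ with $s_i\in\N$; one then sets $d_i+1$ equal to the smallest such $s_i$, which automatically secures the non-vanishing conditions of Theorem~\ref{evalsigmairredmodule}. Heuristically there is room, since $R$ has $m-1\geq n$ roots and only $n$ of them must be matched to the $n$ cosets. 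The simplest attempt takes all $s_i=1$, i.e. asks that $(y-\lambda_1+1)\cdots(y-\lambda_n+1)$ divide $R(y)=\sum_{p=0}^{m-1}H_{m-1-p}(\lambda_1,\dots,\lambda_n)y^p$; the division remainder yields a system of $n$ polynomial equations in $\lambda_1,\dots,\lambda_n$. For $n=1$ this is solvable explicitly: $R(\lambda_1-1)=0$ reads $(\lambda_1-1)^m=\lambda_1^m$, solved by $\lambda_1=1/(1-\zeta)$ for any nontrivial $m$-th root of unity $\zeta$.

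\textbf{Where it gets stuck.} I expect the crux to be showing that this polynomial system is consistent for \emph{every} $m>n$ (not merely generically) and that it admits a solution off the degenerate loci $\lambda_i-\lambda_j\in\mathbb{Z}$. A soft dimension count over the algebraically closed field $\K$ suggests nonemptiness of the solution variety, but it does not by itself rule out that all solutions collapse onto $\{\lambda_i-\lambda_j\in\mathbb{Z}\}$, where the construction fails, nor does it guarantee genuine existence in each degree. Controlling both points uniformly in $n$ is the part of the argument I would not expect to clear by a soft count, which is consistent with the conjecture being verified in the text only for $n=1$ and $n=2$.
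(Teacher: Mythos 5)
The statement you were asked to prove is stated in the paper as a \emph{conjecture}: the text offers no proof, only the verifications for $\sigma=H_1$ (no finite-dimensional modules for any $n$) and $\sigma=H_2$ (finite-dimensional modules exactly for $n=1$). So there is no paper proof to compare against; the question is whether your argument settles the conjecture, and it does not --- though the half you do prove is correct and goes strictly beyond the paper's evidence. Your key identity $\bigl(H_m-H_m[-s\varepsilon_i]\bigr)\vert_{\lb}=s\,H_{m-1}(\lambda_1,\dots,\lambda_n,\lambda_i-s)$ is right: writing $\sum_j H_jt^j=\prod_k(1-\th_kt)^{-1}$, the difference of the generating functions with $\th_i$ and $\th_i-s$ equals $st\,(1-(\th_i-s)t)^{-1}\prod_k(1-\th_kt)^{-1}$, whose $t^m$-coefficient is exactly $s\,H_{m-1}(\th_1,\dots,\th_n,\th_i-s)$; it specializes correctly to the paper's two examples. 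Combined with Theorem~\ref{evalsigmairredmodule}, with the fact that in any finite-dimensional module the weights satisfy $\lambda_i-\lambda_j\notin\mathbb{Z}$ (the elements $\th_{ij}+a$, $a\in\mathbb{Z}$, are invertible in $\Un$ and hence act invertibly), and with the observation that the $n$ numbers $\lambda_i-(d_i+1)$ are then pairwise distinct roots of the single monic polynomial $R(y)=H_{m-1}(\lambda_1,\dots,\lambda_n,y)$ of degree $m-1$, you obtain $m-1\geq n$. This is a uniform proof of the ``no finite-dimensional modules when $m\leq n$'' implication for all $m$, whereas the paper only treats $m\in\{1,2\}$.

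The genuine gap --- which, to your credit, you identify yourself --- is the converse: for every $m>n$ with $n\geq 2$ one must exhibit $\lb$ with pairwise non-integer differences such that each coset $\lambda_i-\N$ meets the root set of $R$. Your reduction of the problem to this statement is sound (given such $\lb$, taking $d_i+1$ to be the minimal $s_i$ and using Lemma~\ref{actiondxmLWVprop}, Proposition~\ref{lemmanonzeroirre} and Corollary~\ref{corobasis}, the irreducible quotient $L_{\lb}$ of $W_{\lb}$ is finite-dimensional), and your $n=1$ computation $(\lambda_1-1)^m=\lambda_1^m$ is correct. But for $n\geq 2$ your divisibility ansatz only produces a system of $n$ polynomial equations in $\lambda_1,\dots,\lambda_n$, and neither of the two needed facts is established: that this system (or some relaxation of it, allowing $s_i>1$) has a solution over $\K$ for \emph{every} $m>n$, and that some solution avoids the degenerate locus $\lambda_i-\lambda_j\in\mathbb{Z}$, where the whole construction breaks down. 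A soft dimension count cannot exclude either failure mode, so the existence half remains unproven. In short: your proposal proves exactly one implication of the statement and leaves the other open --- which is precisely the status the statement has in the paper, where it is advanced only as a conjecture.
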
 

\subsection{Indecomposable $\Diffs(n)$-modules}\label{Reducibility.}
In this section, $\K = \mathbb{C}$. We give three examples of indecomposable $\Diffs(n)$-modules.

\vskip .1cm
\noindent{\bf Non-diagonalizable action of $\U(n)$.}
Set $\Z^1 \mapsto 0, \Z^2 \mapsto 0, \der_1 \mapsto 0, \der_2 \mapsto 0$ and
\begin{equation}\label{formh12nondiag}
\th_1 \mapsto  \frac{1}{2} \left( \begin{array}{cc}
1-i & 1 \\
0 & 1-i
\end{array} \right) \qquad \text{and} \qquad
\th_2 \mapsto  \frac{1}{2} \left( \begin{array}{cc}
1+i & 1 \\
0 & 1+i
\end{array} \right) \ .
\end{equation}
This defines an indecomposable two dimensional $\Diffv{H_4}(2)$-module since
$$\sigma_1 = H_4(\th_1,\th_2) - H_4(\th_1-\mathbbm{1},\th_2) \mapsto 0 \ \ \text{and} \ \ \sigma_2 = H_4(\th_1,\th_2) - H_4(\th_1,\th_2-\mathbbm{1})\mapsto 0\ .$$

\vskip .1cm
\noindent{\bf Diagonalizable action of $\U(n)$.} Set $\Z^2 \mapsto 0, \der_1 \mapsto 0, \der_2 \mapsto 0$ and
\begin{equation}\label{actinh1h2module211}
\Z^1 \mapsto \left( \begin{array}{cc}
0 & 0 \\
1 & 0
\end{array}\right) \quad , \quad
\th_1 \mapsto \left( \begin{array}{cc}
0 & 0 \\
0 & 1
\end{array} \right) \quad \text{and} \quad
\th_2 \mapsto \omega \left( \begin{array}{cc}
1 & 0 \\
0 & 1
\end{array} \right)
\end{equation}
with $\omega = \displaystyle e^{\frac{i \pi}{3}}$. 
This defines an indecomposable two dimensional $\Diffv{H_6}(2)$-module since
$$\sigma_1 = H_6(\th_1,\th_2) - H_6(\th_1-\mathbbm{1},\th_2) \mapsto 0 \ \ \text{and} \ \ \sigma_2 = H_6(\th_1,\th_2) - H_6(\th_1,\th_2-\mathbbm{1}) \mapsto 0 \ .$$

\vskip .1cm
\noindent{\bf Non polynomial $\sigma$.}  
Let $\vec{\lambda}=\{\lambda_1,\dots,\lambda_n\}\in\mathbb{C}^n$ be such that $\lambda_k - \lambda_l \notin \mathbb{Z}$ if $k \neq l$. 
Define polynomials $A_1(X) = \prod_{s=0}^{p-1} (X - \lambda_1+s)$  
and $A_j(X) = (X - \lambda_j)(X-\lambda_j+1)$, $j>1$. 
Set $\der_i \mapsto 0$, $i=1,\dots,n$, $\Z^j \mapsto 0$, $\th_j \mapsto \lambda_j \mathbbm{1}$, $j=2,\dots,n$, and 
\begin{equation}\label{formhidiagodimp}
\th_i \mapsto \left( \begin{array}{cccc}
\lambda_1 & & & \\
 & \lambda_1 - 1 & & \\
 & & \ddots & \\
 & & & \lambda_1 - p + 1
\end{array} \right) \quad \text{and} \quad 
\Z^i \mapsto \left( \begin{array}{cccc}
0 & 1 & & \\
 & \ddots & \ddots & \\
 & & \ddots & 1 \\
 & & & 0
\end{array} \right) \ ,
\end{equation}
so $\th_i \Z^i - \Z^i (\th_i+1)\mapsto 0$. 
This defines an indecomposable $p$-dimensional $\Diffv{\sigma}(2)$-module with with
$\sigma = \sum_{l=1}^n \frac{A_l(\th_l)}{\chi_l}$ since
$A_k(\th_k) ,A_k(\th_k - \mathbbm{1}) \mapsto 0$, $k=1,\dots,n$, so
$$\sigma_j = \sum_{l=1}^n \left( \frac{A_l(\th_l)}{\chi_l} - \frac{A_l(\th_l-\delta_{lj})}{\chi_l[-\varepsilon_j]} \right) \mapsto 0 \quad , \quad j = 1,\dots,n \ .$$

\chapter{ Rings $\Diff(n,N)$}\label{Chaptergenesevcop}

\vskip -1cm
In this chapter we generalize the results of \textbf{Chapter} 
\textbf{\ref{GeneralizedDiff}} to several copies of the $\h$-deformed coordinate rings.  
In \textbf{Section \ref{DefRelaDiffSigma}} we prove that for the ring which is built on the $\h$-deformed coordinate rings $V(n,N)$ and $V^\star(n,N')$ with $NN'>1$, the PBW property imposes severe restrictions on the constant terms in the equation \eqref{relaxndn3}. Essentially only one possibility is left: the ring $\Diff(n,N)$ consisting of $N$ copies of the ring $\Diff(n)$.
In \textbf{Section \ref{dingdifnN}} we define a $\K$-subring of $\Diff(n,N)$ which is a homomorphic image of $\U(\gl_N)$. With the help of this subring, we 
describe the subspace of quadratic central elements of $\Diff(n,N)$. In contrast to the ring $\Diff(n)$, it is one-dimensional for $N>1$. 
In \textbf{Section \ref{ActionSn}} we present an action of the symmetric group $\mathbb{S}_n$ on the ring $\Diff(n,N)$ and on the diagonal reduction algebra $\DRn$, generalizing the formulas for $\Diff(n)$. 

\section{Several copies}\label{DefRelaDiffSigma}
\vskip -.2cm
Let $\mathcal{L}$ be the ring with the generators $\Z^{i\alpha}$, $i=1,\dots,n$, $\alpha=1,\dots,N'$, and $\der_{j\beta}$, $j=1,\dots,n$, $\beta=1,\dots,N$ subject to the following defining relations. 
The $\h(n)$-weights of the generators are given by (\ref{hdesp1nn}) and (\ref{hdesp3}). The generators $\Z^{i\alpha}$ satisfy the relations (\ref{hdesp2}). The generators $\der_{j\beta}$ satisfy the relations (\ref{hdesp4}). We impose the general oscillator-like cross-commutation relations, compatible with the $\h(n)$-weights, between the generators $\Z^{i\alpha}$ and $\der_{j\beta}$:
\begin{equation}\label{secocrore}
\Z^{i\alpha}\der_{j\beta}=\sum_{k,l}\der_{k\beta} \RR_{lj}^{ki} \Z^{l\alpha}-\delta_{j}^i\sigma_{i\alpha\beta} \ ,\ 1\leq i,j\leq n\ ,\ 1\leq\alpha\leq N'\ ,\ 
1\leq\beta\leq N\ ,\ \sigma_{i\alpha\beta}\in\bar{\U}(n)\ .
\end{equation}

\subsection{PBW property}\label{SectionPBWDiffnN}

\begin{lemm}\label{seco}
Assume that at least one of the numbers $N$ and $N'$ is bigger than 1. Then the ring $\Diff(\Sigma)$ has the 
PBW is and only if 
\begin{equation}\label{posinb1}
\sigma_{i\alpha\beta} = \sigma^\alpha_\beta \ \ \text{for some}\ \ \sigma^\alpha_\beta \in \K \ .
\end{equation}
\end{lemm}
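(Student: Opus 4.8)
The plan is to repeat the diamond-lemma analysis of Proposition~\ref{equasigmasb}, now retaining the copy indices $\alpha,\beta,\gamma$. Since the relations \eqref{hdesp2} and \eqref{hdesp4} are the defining relations of the reduction algebras $\V(n,N')$ and $\V^*(n,N)$, the properties \eqref{hdesppbw} and \eqref{hdesppbwd} make the $\Z\Z\Z$- and $\der\der\der$-ambiguities resolvable; moreover the associated graded ring, obtained by setting every $\sigma_{i\alpha\beta}=0$, is the reduction algebra $\Diff(0,\dots,0)$ and already has the PBW property. Hence only the mixed ambiguities $\Z^{i\alpha}\der_{j\beta}\der_{k\gamma}$ and $\Z^{i\alpha}\Z^{j\beta}\der_{k\gamma}$ have to be examined, and only their parts of filtration degree below $3$ matter.

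For the forward (``only if'') direction I would assume $N>1$ and reorder $\Z^{i\alpha}\der_{j\beta}\der_{k\gamma}$ in the two bracketings, exactly as in \eqref{fiwa}--\eqref{sewa}, using \eqref{secocrore}, \eqref{hdesp4} and \eqref{weze}. The decisive new feature is that the two inner derivatives carry the distinct labels $\beta$ and $\gamma$: when $\beta\neq\gamma$ (available since $N>1$) the monomials $\der_{u\beta}$ and $\der_{u\gamma}$ are different PBW generators, so the lower-degree terms must match separately in each of them. Collecting the coefficient of $\der_{u\beta}$, then shifting by $-\varepsilon_u$ and invoking \eqref{weze} and \eqref{iceco} as in the passage from \eqref{cocoon} to \eqref{cocoon2}, yields
\begin{equation}\label{planA}
\RR^{ui}_{kj}\bigl(\sigma_{k\alpha\gamma}-\sigma_{i\alpha\gamma}[-\varepsilon_u]\bigr)=0\ ,\qquad i,j,k,u=1,\dots,n\ ,
\end{equation}
while the coefficient of $\der_{u\gamma}$ gives a second relation that is not needed for the forward implication.

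Analyzing \eqref{planA} through the ice condition \eqref{iceco} is the heart of the argument. The swapped pattern $u=j$, $i=k$, for which $\RR^{jk}_{kj}\neq0$ when $j\neq k$, gives $\sigma_{k\alpha\gamma}=\sigma_{k\alpha\gamma}[-\varepsilon_j]$ for every $j\neq k$, so $\sigma_{k\alpha\gamma}$ depends on $\th_k$ alone. The straight pattern $u=k$, $i=j$, for which $\RR^{ki}_{ki}=1/\th_{ki}\neq0$ when $i\neq k$, gives $\sigma_{k\alpha\gamma}=\sigma_{i\alpha\gamma}[-\varepsilon_k]$; combined with the previous step the left side depends only on $\th_k$ and the right only on $\th_i$ with $i\neq k$, so both are scalars in $\K$ independent of the site index. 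Thus $\sigma_{i\alpha\gamma}=\sigma^\alpha_\gamma\in\K$; letting $\gamma$ run over all copies (each being pairable with some $\beta\neq\gamma$) and $\alpha$ over all $\Z$-copies establishes \eqref{posinb1}. The degenerate case $n=1$ is covered by \eqref{planA} itself, which then reads $\sigma_{1\alpha\gamma}=\sigma_{1\alpha\gamma}[-\varepsilon_1]$ and forces constancy since a rational function of one variable invariant under an integer shift is constant; the symmetric case $N'>1$ is handled identically using $\Z^{i\alpha}\Z^{j\beta}\der_{k\gamma}$, or via the anti-automorphism $\epsilon$.

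For the converse I would assume $\sigma_{i\alpha\beta}=\sigma^\alpha_\beta\in\K$ and verify the mixed ambiguities directly. Equation \eqref{planA} then holds because $\RR^{ui}_{kj}[\varepsilon_u]=\RR^{ui}_{kj}[-\varepsilon_i]$ whenever $\RR^{ui}_{kj}\neq0$ (a one-line check on the components \eqref{dynRcompb}, compatible with \eqref{weze}), and the remaining $\der_{u\gamma}$-coefficient collapses to the $\sigma$-free identity $\sum_{a,b}\RR^{ab}_{kj}[-\varepsilon_i]\RR^{ui}_{ab}[\varepsilon_u]=\delta^i_j\delta^u_k$, which is valid because it is the constant-$\sigma$ specialization of the PBW relations of $\Diff(n)$, itself the $\sigma^\alpha_\beta=\delta^\alpha_\beta$ case of $\Diff(n,N)$. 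With \eqref{hdesppbw}--\eqref{hdesppbwd} and these mixed conditions satisfied, the diamond lemma delivers the PBW property. I expect the main obstacle to be the bookkeeping in \eqref{planA}: arranging the $\varepsilon$-shifts and the placement of the $\RR$-coefficients so that the coincidence of shifts on the nonvanishing components of $\RR$ can be read off cleanly and the two copy-labelled coefficients genuinely decouple.
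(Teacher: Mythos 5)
Your proposal is correct and takes essentially the same route as the paper: the same diamond-lemma reduction to the ambiguity $\Z^{i\alpha}\der_{j\beta}\der_{k\gamma}$ with $\beta\neq\gamma$, the same coefficient extraction and shift leading to \eqref{kosha1}, and the same analysis of the nonzero components of $\RR$ to force $\sigma_{i\alpha\beta}\in\K$. If anything, your write-up is slightly more careful than the paper's: in the forward direction you use both index patterns ($u=j$, $i=k$ and $u=k$, $i=j$), the first of which is genuinely needed to exclude solutions depending on $\th_{ij}$ when $n$ is small, whereas the paper records only the second; and in the converse you replace the paper's direct verification via \eqref{weze}, \eqref{iceco}, \eqref{Q5do} by an appeal to the already established PBW property of $\Diff(n)$, which is a legitimate shortcut.
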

\noindent{\it Proof.} Assume that, say, $N>1$. Repeating the calculations \eqref{fiwa} and  \eqref{sewa} for one copy in \textbf{Section \ref{PBWPorpertySection}}, we find, for $i,j,k=1,\dots,n$, 
\begin{equation}\label{fiwanb1}\begin{array}{rcl}
\left( x^{i\alpha}\der_{j\beta}\right) \der_{k\gamma}\ 
\rule[-3.5mm]{.28mm}{7mm}_{\;\text{l.d.t.}}&=&\displaystyle{ 
\left( \sum_{u,v}\RR^{ui}_{vj}[\varepsilon_u]\der_{u\beta}x^{v\alpha}-\delta^i_j\sigma_{i\alpha\beta}\right)\der_{k\gamma}
\ \rule[-3.5mm]{.28mm}{7mm}_{\;\text{l.d.t.}} }\\[1.2em]
&=&\displaystyle{ -\sum_{u}\RR^{ui}_{kj}[\varepsilon_u]\der_{u\beta}\sigma_{k\alpha\gamma}-\delta^i_j\sigma_{i\alpha\beta}\der_{k\gamma}\ .}\end{array}\end{equation}
\begin{equation}\label{sewanb1}\begin{array}{rcl}
x^{i\alpha}\left( \der_{j\beta} \der_{k\gamma}\right)\rule[-3.5mm]{.28mm}{7mm}_{\;\text{l.d.t.}}&=&\displaystyle{ x^{i\alpha}\sum_{a,b}\RR^{ab}_{kj}
\der_{b\gamma}\der_{a\beta}\ \rule[-3.5mm]{.28mm}{7mm}_{\;\text{l.d.t.}} } \\[1.2em]
&=&\displaystyle{
\sum_{a,b}\RR^{ab}_{kj}[-\varepsilon_i]\left( \sum_{c,d}\RR^{ci}_{db}[\varepsilon_c]\der_{c\gamma}x^{d\alpha}-\delta^i_b\sigma_{i\alpha\gamma}\right)\der_{a\beta}
\ \rule[-3.5mm]{.28mm}{7mm}_{\;\text{l.d.t.}} }\\[1.2em]
&=&\displaystyle{ 
-\sum_{a,b,c}\RR^{ab}_{kj}[-\varepsilon_i] \RR^{ci}_{ab}[\varepsilon_c]\der_{c\gamma}\sigma_{a\alpha\beta}-\sum_{a}\RR^{ai}_{kj}[-\varepsilon_i] 
\sigma_{i\alpha\gamma}\der_{a\beta} }\ .\end{array}\end{equation}
Take $\beta \neq \gamma$.
Equating the coefficients in $\der_{u\beta}$, $u=1,\dots,n$, in \eqref{fiwanb1} and \eqref{sewanb1}, we find 
\begin{equation}\label{ysy1}\RR^{ui}_{kj}[\varepsilon_u]\sigma_{k\alpha\gamma}[\varepsilon_u]=\RR^{ui}_{kj}[-\varepsilon_i]\sigma_{i\alpha\gamma}
\ ,\ i,k,j,u=1,\dots,n\ .\end{equation}
Equating the coefficients in $\der_{u\gamma}$, $u=1,\dots,n$, in \eqref{fiwanb1} and \eqref{sewanb1}, we find    
\begin{equation}\label{ysy2}\delta^i_j\delta^u_k\sigma_{i\alpha\beta}=\sum_{a,b}\RR^{ab}_{kj}[-\varepsilon_i] \RR^{ui}_{ab}[\varepsilon_u]
\sigma_{a\alpha\beta}[\varepsilon_u]\ ,\ i,k,j,u=1,\dots,n\ .\end{equation}
Shifting by $-\varepsilon_u$ and using the property \eqref{weze}, we rewrite the equality \eqref{ysy1} in the form
\begin{equation}\label{kosha1}
\RR^{ui}_{kj}\left(\sigma_{k\alpha\gamma}-\sigma_{i\alpha\gamma}[-\varepsilon_u]\right)=0\ .
\end{equation}
Setting $u=k$ and $j=i$ (with arbitrary $i,k=1,\dots,n$) in \eqref{kosha1}, we obtain
$\sigma_{k\alpha\gamma}=\sigma_{i\alpha\gamma}[-\varepsilon_k]\ $
which implies the assertion \eqref{posinb1}. 

A direct calculation, with the help of the properties \eqref{weze}, \eqref{iceco} and \eqref{Q5do} of the operator $\RR$, shows that the condition \eqref{posinb1} implies the equalities \eqref{ysy1} and \eqref{ysy2} as well as all the remaining conditions for the flatness of the deformation.
\hfill$\square$
Making the redefinitions of the generators, $\Z^{i\alpha}\rightsquigarrow A^{\alpha}_{\alpha'}\Z^{i\alpha'}$ and $\der_{i\beta}\rightsquigarrow B_{\beta}^{\beta'}\der_{i\beta'}$ 
with some $A\in GL(N',\mathbb{K})$ and $B\in GL(N,\mathbb{K})$ we can transform the matrix $\sigma_{\alpha\beta}$ to the diagonal form, with the diagonal $(1,\dots,1,0,\dots,0)$. Therefore, the ring $\mathcal{L}$ is formed by several copies of the rings $\Diff(n)$, $\V(n)$ and $\V^*(n)$. 
The ring with $\sigma_{\alpha\beta}=\delta_{\alpha\beta}$ is formed by several copies  of the ring $\Diff(n)$ only. It plays an important role in the theory of diagonal reduction algebras \cite{KO5,KO6}. We denote this ring by $\Diff(n,N)$. 

\section{Ring $\Diff(n,N)$}\label{dingdifnN}
In components, the defining relations \eqref{hdesp2}, \eqref{hdesp4} and \eqref{secocrore} for the ring $\Diff(n,N)$ read 
\begin{eqnarray}
\label{relasevco1}
&\displaystyle{ \Z^{i,\alpha} \Z^{j,\beta} = \frac{1}{\th_{ij}} \Z^{i,\beta} \Z^{j,\alpha} + \frac{\th_{ij}^2-1}{\th_{ij}^2} \Z^{j,\beta} \Z^{i,\alpha}\ ,\ 
\Z^{j,\alpha} \Z^{i,\beta} = - \frac{1}{\th_{ij}} \Z^{j,\beta} \Z^{i,\alpha} + \Z^{i,\beta} \Z^{j,\alpha} \ ,}&\\
\label{relasevco2}
&\displaystyle{ \der_{i,\alpha} \der_{j,\beta} = -\frac{1}{\th_{ij}} \der_{i,\beta} \der_{j,\alpha} + \frac{\th_{ij}^2-1}{\th_{ij}^2} \der_{j,\beta} \der_{i,\alpha}\ ,\ 
\der_{j,\alpha} \der_{i,\beta} = \frac{1}{\th_{ij}} \der_{j,\beta} \der_{i,\alpha} + \der_{i,\beta} \der_{j,\alpha} \ ,}&\\
\label{relasevco3}
&\displaystyle{ \Z^{i,\alpha} \der_{j,\beta} = \der_{j,\beta} \Z^{i,\alpha}\ ,\ \Z^{j,\alpha} \der_{i,\beta} = \frac{\th_{ij}(\th_{ij}+2)}{(\th_{ij}+1)^2} \der_{i,\beta} \Z^{j,\alpha}}&
\end{eqnarray}
for all $1 \leq i < j \leq n$, and
\begin{equation}\label{relasevco4}
\Z^{i,\alpha} \der_{i,\beta} = \sum_{k=1}^n \frac{1}{1+\th_{ki}} \der_{k,\beta} \Z^{k,\alpha} - \delta^\alpha_\beta\ ,\ 1 \leq i \leq n\ ,\ \alpha,\beta=1,\dots,N
\end{equation}

\subsection{Quadratic central elements}

Propositions \ref{quceel} and \ref{isocenter} tell us that, in $\Diff(n)$, there is $n$ quadratic elements generating the center. For $N>1$, we prove that there is only one quadratic central element in $\Diff(n,N)$.

Let 
$$A^\alpha_\beta = \sum_{i} \der_{i\beta} \Z^{i\alpha} \quad , \quad \alpha, \beta = 1,\dots,N \ .$$

\begin{prop}\label{propcommael}
In $\Diff(n,N)$, we have the relations
\begin{equation}\label{relaAxd}
[ A_\alpha^\beta , \th_i ] = 0 \ , \ [ A_\alpha^\beta , \Z^{i\gamma} ] = \delta_\alpha^\gamma \, \Z^{i\beta} \ \text{and} \ [ A_\alpha^\beta , \der_{i\gamma} ] = - \delta_\gamma^\beta \, \der_{i\alpha}\ ,\ 1\leq i\leq n\ ,\ 1\leq \alpha, \beta, \gamma\leq N \ .
\end{equation}
\end{prop}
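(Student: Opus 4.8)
The plan is to verify the three relations directly from the defining relations of $\Diff(n,N)$, reading the element of the statement with the index convention $A_\alpha^\beta=\sum_i\der_{i\alpha}\Z^{i\beta}$ (the relabelling of the $A^\alpha_\beta=\sum_i\der_{i\beta}\Z^{i\alpha}$ of the definition). The first relation is immediate from weights: each summand $\der_{i\alpha}\Z^{i\beta}$ carries $\h(n)$-weight $-\varepsilon_i+\varepsilon_i=0$, so $A_\alpha^\beta$ has weight zero and \eqref{hdesp1aw} gives $\th_i A_\alpha^\beta=A_\alpha^\beta\th_i$ at once.

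For the second relation I compute $A_\alpha^\beta\Z^{i\gamma}-\Z^{i\gamma}A_\alpha^\beta$, bringing both products to the ordering $\der\,\Z\,\Z$. In $A_\alpha^\beta\Z^{i\gamma}=\sum_k\der_{k\alpha}\Z^{k\beta}\Z^{i\gamma}$ I reorder $\Z^{k\beta}\Z^{i\gamma}$ by the quadratic relation \eqref{hdesp2}. In $\Z^{i\gamma}A_\alpha^\beta=\sum_k\Z^{i\gamma}\der_{k\alpha}\Z^{k\beta}$ I first pull $\Z^{i\gamma}$ through $\der_{k\alpha}$ by the cross relation \eqref{secocrore} (in which $\sigma_{i\alpha\beta}=\delta^\alpha_\beta$ for $\Diff(n,N)$); the constant term $-\delta^i_k\delta^\gamma_\alpha$ collapses the $k$-sum to the single contribution $\delta^\gamma_\alpha\Z^{i\beta}$, and the two remaining $\der\,\Z\,\Z$-expressions become literally identical after relabelling the dummy indices. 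They cancel, leaving $[A_\alpha^\beta,\Z^{i\gamma}]=\delta^\gamma_\alpha\Z^{i\beta}$.

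The third relation is obtained by the same mechanism in the ordering $\der\,\der\,\Z$: I reorder $A_\alpha^\beta\der_{i\gamma}=\sum_k\der_{k\alpha}\Z^{k\beta}\der_{i\gamma}$ by \eqref{secocrore}, whose constant term now produces $-\delta^\beta_\gamma\der_{i\alpha}$, and I reorder $\der_{i\gamma}A_\alpha^\beta=\sum_k\der_{i\gamma}\der_{k\alpha}\Z^{k\beta}$ by the quadratic relation \eqref{hdesp4}; again the two cubic terms coincide after renaming, and only $-\delta^\beta_\gamma\der_{i\alpha}$ survives. Alternatively, since $A_\alpha^\beta$ has weight zero and hence commutes with the factors $\varphi_i$, applying the induced anti-automorphism $\epsilon$ of \eqref{invoantiautoDiff11} to the second relation (under which $A_\alpha^\beta\mapsto A_\beta^\alpha$ and $\Z\leftrightarrow\der$ up to these factors) reproduces the third; this is a useful consistency check.

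The only real care is in tracking the positions of the $\th$-dependent R-matrix coefficients: \eqref{hdesp2} places $\RR$ to the left of the two $\Z$'s, \eqref{hdesp4} places it to the right of the two $\der$'s, and \eqref{secocrore} places it between a $\der$ and a $\Z$. I would write out explicitly that, after substituting the quadratic relation, the coefficient sits in exactly the same slot as in the term generated by \eqref{secocrore}, so the two expressions match term by term. No weight shift of these coefficients intervenes precisely because $A_\alpha^\beta$ is of weight zero; this is the structural reason the deformed bilinears $\sum_i\der_{i\alpha}\Z^{i\beta}$ satisfy the same $\gl_N$-commutation relations as in the undeformed Weyl algebra, and it is the step I expect to require the most careful bookkeeping.
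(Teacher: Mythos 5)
Your proposal is correct and takes essentially the same approach as the paper: the paper's proof is literally ``a direct calculation,'' and your plan carries out exactly that calculation from the defining relations \eqref{hdesp2}, \eqref{hdesp4}, \eqref{secocrore}, with the cubic terms matching slot-for-slot after renaming dummies and only the constant terms of \eqref{secocrore} surviving — and your reading $A_\alpha^\beta=\sum_i\der_{i\alpha}\Z^{i\beta}$ is indeed the one that makes the stated relations self-consistent. One minor point: the consistency check via $\epsilon$ is clean as stated with \eqref{invoantiautoDiff11}, which involves no $\varphi_i$ factors (those belong to the $\Diff(n)$ anti-automorphism \eqref{antiautom}), so the hedge ``up to these factors'' is unnecessary.
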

\begin{proof}
A direct calculation.
\end{proof}
It follows that $[A^\alpha_\beta , A^\gamma_\rho] = \delta^\alpha_\rho A^\gamma_\beta - \delta^\gamma_\beta A^\alpha_\rho$, so the elements $A^ \alpha_\beta$, 
generate a $\K$-subring of $\Diff(n,N)$ which is a homomorphic image of the enveloping algebra $\U(\gl_N)$.

\vskip .1cm
We define the elements
\begin{equation}\label{expressionM}
M^i_j = \sum_{\beta=1}^N \der_{j\beta} \Z^{i\beta} \quad , \quad i,j=1,\dots,n \ .
\end{equation}

\begin{lemm}\label{calculMx}
For any $i \neq j$ in $\{1,\dots,n\}$ and $\alpha \in \{1,\dots,N\}$, we have
$$[M^j_j , \Z^{i\alpha}] = \sum_{\beta=1}^N \der_{j\beta} \Z^{j\alpha} \Z^{i\beta} \frac{1}{\th_{ji}} \quad \text{and} \quad [M^i_i , \Z^{i\alpha}] = - \sum_{\beta=1}^N \sum_{j\neq i} \der_{j\beta} \Z^{j\alpha} \Z^{i\beta} \frac{1}{\th_{ji}} + \Z^{i\alpha} \ .$$
\end{lemm}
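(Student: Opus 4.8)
The plan is to establish both commutators by a direct reordering, using only the defining relations \eqref{relasevco1}--\eqref{relasevco4} of $\Diff(n,N)$ together with the weight relations \eqref{hdesp1nn}, \eqref{hdesp3}; the recurring point is that the rational coefficients from $\Ub(n)$ must be transported to the right-hand side prescribed by the statement, so one has to track the $\varepsilon$-shifts they acquire when commuted past the generators. Throughout I use that $\th_{ji}=\th_j-\th_i$ is invariant under the simultaneous shift $\varepsilon_i+\varepsilon_j$ (a special case of \eqref{weze}), so that $\th_{ji}^{-1}$ commutes with every monomial of the form $\Z^{j\,\cdot}\Z^{i\,\cdot}$.

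I would treat the diagonal commutator $[M^i_i,\Z^{i\alpha}]$ first. Writing $[M^i_i,\Z^{i\alpha}]=\sum_\beta(\der_{i\beta}\Z^{i\beta}\Z^{i\alpha}-\Z^{i\alpha}\der_{i\beta}\Z^{i\beta})$ and inserting the oscillator relation \eqref{relasevco4} for $\Z^{i\alpha}\der_{i\beta}$, the term $k=i$ of $\sum_k\frac1{1+\th_{ki}}\der_{k\beta}\Z^{k\alpha}$ has coefficient $\frac1{1+\th_{ii}}=1$, since $\th_{ii}=0$, and yields $\der_{i\beta}\Z^{i\alpha}\Z^{i\beta}$. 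As generators of equal spatial index commute, $\Z^{i\alpha}\Z^{i\beta}=\Z^{i\beta}\Z^{i\alpha}$, this cancels $\der_{i\beta}\Z^{i\beta}\Z^{i\alpha}$, while the constant $-\delta^\alpha_\beta$ summed over $\beta$ contributes $+\Z^{i\alpha}$. The surviving terms are $-\sum_\beta\sum_{k\neq i}\frac1{1+\th_{ki}}\der_{k\beta}\Z^{k\alpha}\Z^{i\beta}+\Z^{i\alpha}$, and the weight relations give $\frac1{1+\th_{ki}}\der_{k\beta}=\der_{k\beta}\frac1{\th_{ki}}$, after which $\th_{ki}^{-1}$ may be pushed to the far right; renaming $k\to j$ reproduces the claimed expression.

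For the off-diagonal commutator $[M^j_j,\Z^{i\alpha}]$ with $i\neq j$ I would write it as $\sum_\beta\der_{j\beta}(\Z^{j\beta}\Z^{i\alpha}-\RR^{ji}_{ij}\Z^{i\alpha}\Z^{j\beta})$: because $i\neq j$ the cross relation \eqref{secocrore} carries no constant term, and the ice condition \eqref{iceco} leaves a single component $\RR^{ji}_{ij}$ when $\Z^{i\alpha}$ is carried through $\der_{j\beta}$. Expanding $\Z^{i\alpha}\Z^{j\beta}$ by \eqref{hdesp2} and collapsing the coefficient with the involutivity $\RR^2=\mathrm{Id}$ of \eqref{Q5do}, namely $1-\RR^{ji}_{ij}\RR^{ij}_{ji}=(\RR^{ij}_{ij})^2=\th_{ij}^{-2}$, turns the parenthesis into $\th_{ij}^{-2}\Z^{j\beta}\Z^{i\alpha}-\RR^{ji}_{ij}\RR^{ij}_{ij}\Z^{i\beta}\Z^{j\alpha}$. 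It then remains to re-express the $i$-before-$j$ monomial $\Z^{i\beta}\Z^{j\alpha}$ in $j$-before-$i$ order; solving the $2\times2$ linear system furnished by \eqref{relasevco1} and its $\alpha\leftrightarrow\beta$ image (invertible because $\th_{ij}^2-1$ is a unit of $\Ub(n)$) and substituting back, the $\Z^{j\beta}\Z^{i\alpha}$ contributions cancel and the parenthesis collapses to $-\th_{ij}^{-1}\Z^{j\alpha}\Z^{i\beta}=\Z^{j\alpha}\Z^{i\beta}\th_{ji}^{-1}$, which is the assertion. Running this step in R-matrix form lets the orderings $i<j$ and $i>j$ be handled simultaneously; alternatively one checks $i<j$ directly (a single substitution) and then $i>j$.

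The main obstacle is precisely this last reordering for the off-diagonal commutator: all the defining relations place their coefficients on the left, whereas the statement displays $\th_{ji}^{-1}$ on the right, so one must track for each factor $\th_{ij}$, $\th_{ij}\pm1$ whether it stands to the left or to the right of a given monomial and which $\varepsilon$-shift it acquires upon commutation. The invariance of $\th_{ji}$ under $\varepsilon_i+\varepsilon_j$, together with $\RR^2=\mathrm{Id}$, is what keeps this bookkeeping finite and makes the two orderings collapse to the same uniform answer.
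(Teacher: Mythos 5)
Your proposal is correct and takes essentially the same route as the paper, whose proof of Lemma \ref{calculMx} is exactly a direct calculation from the defining relations \eqref{relasevco1}--\eqref{relasevco4}; you simply carry this calculation out explicitly, in R-matrix form via \eqref{secocrore}, \eqref{hdesp2}, the ice condition \eqref{iceco} and $\RR^2=\mathrm{Id}$. Both of your computations check out: the $k=i$ term of \eqref{relasevco4} indeed cancels against $\der_{i\beta}\Z^{i\beta}\Z^{i\alpha}$ and the constants sum to $+\Z^{i\alpha}$, while in the off-diagonal case the $\Z^{j\beta}\Z^{i\alpha}$ terms cancel and the coefficient collapses to $\th_{ji}^{-1}$, with the $\varepsilon$-shift bookkeeping (invariance of $\th_{ji}$ under $\varepsilon_i+\varepsilon_j$) handled correctly.
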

\begin{proof}
It is done by direct calculations using the relations \eqref{relasevco1}-\eqref{relasevco4}.
\end{proof}
\begin{prop}
Any quadratic central element of $\Diff(n,N)$, $N>1$, is proportional to $\sum A_\alpha^\alpha-H_1$.
\end{prop}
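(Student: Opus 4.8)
The plan is to first confirm that $C:=\sum_\alpha A_\alpha^\alpha-H_1$ is central, and then to show that it spans, up to an additive scalar, the whole space of quadratic central elements. That $C$ is central is immediate from Proposition \ref{propcommael} together with the weight relations \eqref{hdesp1nn}, \eqref{hdesp3}: summing $[A_\alpha^\beta,\Z^{i\gamma}]=\delta_\alpha^\gamma\Z^{i\beta}$ over $\alpha=\beta$ gives $[\sum_\alpha A_\alpha^\alpha,\Z^{i\gamma}]=\Z^{i\gamma}$, while $[H_1,\Z^{i\gamma}]=\sum_j[\th_j,\Z^{i\gamma}]=\Z^{i\gamma}$, and likewise $[\sum_\alpha A_\alpha^\alpha,\der_{i\gamma}]=-\der_{i\gamma}=[H_1,\der_{i\gamma}]$; since both terms commute with each $\th_i$, the element $C$ is central.

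For the converse I would start from the fact that any central element has $\hn$-weight zero, and that the only $\hn$-weight-zero PBW monomials of filtration degree at most two are elements of $\Un$ and the products $\der_{i\alpha}\Z^{i\beta}$ with matching space index. Thus a quadratic central element has the form $c=\sum_{i,\alpha,\beta}g_i^{\alpha\beta}\,\der_{i\alpha}\Z^{i\beta}+g_0$ with $g_i^{\alpha\beta},g_0\in\Un$. The first reduction exploits the inner $\gl_N$-action: since each $A_\alpha^\beta$ lies in $\Diff(n,N)$, centrality forces $[A_\alpha^\beta,c]=0$. As the coefficients lie in $\Un$ and commute with the $A_\alpha^\beta$, Proposition \ref{propcommael} gives $[A_\alpha^\beta,c]=\sum_i\sum_{\mu,\nu}g_i^{\mu\nu}\big(\delta_\alpha^\nu\,\der_{i\mu}\Z^{i\beta}-\delta_\mu^\beta\,\der_{i\alpha}\Z^{i\nu}\big)$, and the PBW independence of the ordered monomials $\der_{i\mu}\Z^{i\nu}$ over $\Un$ forces, for each $i$, the matrix $(g_i^{\mu\nu})$ to commute with all matrix units, i.e. $g_i^{\mu\nu}=g_i\,\delta^{\mu\nu}$. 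Hence $c=\sum_i g_i\,M^i_i+g_0$ with $g_i\in\Un$. This step does not yet use $N>1$; for $N=1$ it recovers the $n$-parameter family of Proposition \ref{quceel}.

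The second, decisive reduction is to impose $[c,\Z^{k\gamma}]=0$ and isolate its homogeneous components. Writing $[g_iM^i_i,\Z^{k\gamma}]=\Z^{k\gamma}\big(g_i[\varepsilon_k]-g_i\big)M^i_i+g_i[M^i_i,\Z^{k\gamma}]$ and substituting the explicit commutators of Lemma \ref{calculMx}, the degree-three part of $[c,\Z^{k\gamma}]$ (equivalently, the condition that the leading symbol be central in $\mathrm{gr}\,\Diff(n,N)$) is a combination of $\der\,\Z\,\Z$ monomials; after bringing each such monomial to PBW order by means of \eqref{relasevco1} and \eqref{relasevco3}, the vanishing of the coefficient of every basis monomial yields a linear finite-difference system for the $g_i$. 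The main obstacle is precisely this bookkeeping: for $N=1$ the available monomials are too few to pin down the $g_i$ (one recovers the derivatives of the elementary symmetric functions), whereas for $N>1$ the monomials carrying two distinct copy-indices $\beta\neq\gamma$, reordered through the mixing relations \eqref{relasevco1}, produce the additional equations that force all differences $g_i-g_k$ to vanish and each $g_i$ to be a single constant $\kappa\in\K$.

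Finally, with $g_i=\kappa$ the quadratic part of $c$ equals $\kappa\sum_iM^i_i=\kappa\sum_\alpha A_\alpha^\alpha$, the terms $\Z^{k\gamma}\big(g_i[\varepsilon_k]-g_i\big)M^i_i$ drop out, and the degree-one component of $[c,\Z^{k\gamma}]=0$ collapses to $\Z^{k\gamma}\big(\kappa+g_0[\varepsilon_k]-g_0\big)=0$, i.e. $g_0[\varepsilon_k]-g_0=-\kappa$ for every $k$; hence $g_0=-\kappa H_1$ up to an additive scalar. Commutation with $\der_{k\gamma}$ follows by the symmetric computation (equivalently, by applying the anti-automorphism $\epsilon$). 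This yields $c=\kappa\big(\sum_\alpha A_\alpha^\alpha-H_1\big)$ up to a central constant, proving that every quadratic central element is proportional to $\sum_\alpha A_\alpha^\alpha-H_1$.
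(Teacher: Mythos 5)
Your proposal is correct and follows essentially the same route as the paper's proof: weight considerations together with commutation against the elements $A_\alpha^\beta$ of Proposition \ref{propcommael} reduce a quadratic central element to the form $\sum_i g_i M^i_i + g_0$ with $g_i,g_0\in\Un$, and then Lemma \ref{calculMx} applied to $[c,\Z^{k\gamma}]=0$ forces, for $N>1$, all $g_i$ to equal one constant $\kappa$ and $g_0=-\kappa H_1$ up to an additive scalar. The only differences are expository (you make explicit the Schur-type argument behind $g_i^{\mu\nu}=g_i\delta^{\mu\nu}$ and the direct check that $\sum_\alpha A_\alpha^\alpha-H_1$ is central, both of which the paper leaves implicit).
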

\begin{proof} If an element of the form $\sum \der_{j\alpha}\Z^{i\beta}f^{j\alpha}_{i\beta}+g$, $f^{j\alpha}_{i\beta},g\in\Un$, is central then by the weight considerations and Proposition \ref{propcommael}, it is of the form 
$$C = \sum_{i=1}  M^i_i f_i+ g\ ,\ f_i,g \in \Un\ .$$
Lemma \ref{calculMx} implies
\begin{equation}\label{calculcomCx}
[C,\Z^{i\alpha}] = \sum_{\beta} \sum_{j } \der_{j\beta} \left( \Z^{j\alpha}\Z^{i\beta} \frac{f_j-f_i}{\th_{ji}} + \Z^{j\beta} \left( \Delta_i f_j \right) \Z^{i\alpha} \right) + \left( \Delta_i g + f_i[-\varepsilon_i] \right) \Z^{i\alpha} \ .
\end{equation}
For $N>1$, the right hand side is zero iff all $f_i$ are equal, $f_i = f$, $\Delta_i f = 0$ and $\Delta_i g = -f$. This concludes the proof.
\end{proof}

\section{Action of symmetric group}\label{ActionSn}
\paragraph{Action of $\mathbb{S}_n$ on $\Diff(n,N)$.}
The symmetric group $\mathbb{S}_n$ acts on $\Diff(n)$, see  \eqref{acofsn1}.
We do not know an analogue of the isomorphism $\mu$, see \eqref{fromwtodiff}, for the ring $\Diff(n,N)$. However, straightforward analogues of the formulas \eqref{acofsn1} provide an action of $\mathbb{S}_n$ by automorphisms on the ring $\Diff(n,N)$.
\begin{prop}
The maps $s_i$, $i=1,\dots,n-1$, defined on the generators of $\Diff(n,N)$ by 
\begin{equation}\label{acsnN1}
\begin{array}{l}
{\displaystyle
s_i(\Z^{i\alpha}) = - \Z^{i+1,\alpha} \th_{i,i+1}\ , \ s_i(\Z^{i+1,\alpha}) = \Z^{i\alpha} \frac{1}{\th_{i,i+1}}\ , \ s_i(\Z^{j\alpha}) = \Z^{j\alpha}\ 
\text{for}\ j\neq i,i+1\ ,}\\[.1em]
{\displaystyle 
s_i(\der_{i\alpha}) = - \frac{1}{\th_{i,i+1}} \der_{i+1,\alpha}\ , \ s_i(\der_{i+1,\alpha}) = \th_{i,i+1} \der_{i\alpha}\ ,\ 
s_i(\der_{j\alpha}) =\der_{j\alpha}\ \text{for}\ j\neq i,i+1\ ,}\\[.8em]
s_i(\th_j)=\th_{s_i(j)} \ ,
\end{array}
\end{equation}
extend to automorphisms of the ring $\Diff(n,N)$. Moreover, these automorphisms satisfy the Artin relations and therefore give the action of the  
symmetric group $\mathbb{S}_n$ by automorphisms.
\end{prop}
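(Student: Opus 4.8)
The plan is to regard $\Diff(n,N)$ as the algebra presented over $\Ub(n)$ by the generators $\Z^{i\alpha},\der_{i\alpha}$ subject to the relations \eqref{hdesp1nn}, \eqref{hdesp3} and \eqref{relasevco1}--\eqref{relasevco4}, and to use the universal property of such a presentation: a map prescribed on the generators extends to an endomorphism exactly when it carries every defining relation to a consequence of the defining relations. Note first that $s_i$ restricts on $\Ub(n)$ to the automorphism induced by the transposition of the indices $i,i+1$; since this permutes the denominators $\th_{jk}+m$ it respects the localization, so it is legitimate to speak of extending $s_i$ to $\Diff(n,N)$. The decisive structural remark is that $s_i$ leaves every copy index $\alpha$ fixed and mixes only the two Cartan indices $i$ and $i+1$, with $s_i(\th_{i,i+1})=-\th_{i,i+1}$; thus any relation whose indices avoid $\{i,i+1\}$ is preserved trivially, and the remaining checks involve only the pairs $(\Z^{i\alpha},\Z^{i+1,\alpha})$ and $(\der_{i\alpha},\der_{i+1,\alpha})$.

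For the well-definedness I would sort the verification by relation type. The weight relations \eqref{hdesp1nn}, \eqref{hdesp3} are preserved because $s_i$ is weight-covariant: it sends an element of $\h(n)$-weight $\omega$ to one of weight $s_i(\omega)$, matching the permuted Kronecker deltas via $s_i(\th_j)=\th_{s_i(j)}$. For the homogeneous quadratic relations \eqref{relasevco1}--\eqref{relasevco3} the check is a direct substitution using only the explicit components \eqref{dynRcompb} and the properties \eqref{weze}, \eqref{iceco}, \eqref{Q5do} of $\RR$; because $s_i$ acts with the \emph{same} $\th$-dependent coefficients on each copy, the genuinely two-copy cases ($\alpha\neq\beta$) obey the very identities already needed for a single copy. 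The delicate relation is the oscillator relation \eqref{relasevco4}, the only one carrying a zero-order term $-\delta^\alpha_\beta$: here the scalar is $s_i$-invariant, the diagonal part $\alpha=\beta$ is precisely the defining relation of the single copy $\Diff(n)$ and so is preserved because \eqref{acofsn1} is already known to define an automorphism of $\Diff(n)$, while the off-diagonal part $\alpha\neq\beta$ is homogeneous and falls under the graded computation.

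With each $s_i$ an endomorphism, invertibility follows from $s_i^2=\mathrm{id}$ on generators: using $s_i(\th_{i,i+1})=-\th_{i,i+1}$ one gets, for instance, $s_i^2(\Z^{i\alpha})=\Z^{i\alpha}\,\th_{i,i+1}^{-1}\th_{i,i+1}=\Z^{i\alpha}$, and likewise for $\der_{i\alpha}$ and $\th_j$, so $s_i$ is an involutive automorphism. The commutations $s_is_j=s_js_i$ for $|i-j|\geq 2$ are immediate since the two maps touch disjoint generators. It remains to check the braid relation $s_is_{i+1}s_i=s_{i+1}s_is_{i+1}$, and for automorphisms it suffices to verify it on generators. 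Since each $s_k$ sends $\Z^{l\alpha}$ to a $\Ub(n)$-combination of the $\Z^{m\alpha}$ with the \emph{same} copy index and with coefficients independent of $\alpha$ (and similarly for the $\der$'s and the $\th$'s), the identity on $\Z^{j\alpha}$, $\der_{j\alpha}$, $\th_j$ collapses to exactly the one-copy identity in $\Ub(n)$ obtained for $\Diff(n)$ in \textbf{Section \ref{Actionbraidsymmetric}}, with $\alpha$ a passive spectator.

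The main obstacle is the well-definedness step, and inside it the oscillator relation \eqref{relasevco4}: it is the unique inhomogeneous relation, so its invariance cannot be inferred from the R-matrix part alone and must be reconciled with the single-copy normalization $\sigma=H_1$. Everything else is either formal --- the spectator-index reduction of the braid relation and of $s_i^2=\mathrm{id}$ to the $\Diff(n)$ case --- or a bookkeeping exercise in the identities \eqref{weze}--\eqref{Q5do}.
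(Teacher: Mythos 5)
Your proposal is correct and takes essentially the same route as the paper: the paper's proof is exactly the direct verification, on generators, that the maps \eqref{acsnN1} preserve the defining relations \eqref{relasevco1}--\eqref{relasevco4} and satisfy the Artin relations, which is what you organize (with the sensible extra shortcuts that the equal-copy relations and the braid identities reduce, with $\alpha$ a spectator, to the one-copy case \eqref{acofsn1} already established via $\mu$). One caveat on your framing: the "remaining checks" are not confined to the pairs $(\Z^{i\alpha},\Z^{i+1,\alpha})$, $(\der_{i\alpha},\der_{i+1,\alpha})$ — relations coupling a site in $\{i,i+1\}$ to a third site $j$, and in particular the sum over all $k$ in \eqref{relasevco4}, are not preserved trivially (one needs, e.g., $s_i(\der_{i\beta}\Z^{i\alpha})=\der_{i+1,\beta}\Z^{i+1,\alpha}$ so that the $k=i$ and $k=i+1$ terms swap) — but these are covered by the substitution checks your second paragraph prescribes, so the plan as a whole is complete.
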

\begin{proof} After the formulas (\ref{acsnN1}) are written down, the verification is a direct calculation. 
\end{proof}
\noindent{\bf Action of $\mathbb{S}_n$ on $\mathbf{L}(n,N)$ and $\DRn$.}
The diagonal reduction algebra $\DRn$ is   
generated by the elements $\teL_i^j$, $i,j=1,\dots,n$. The defining relations 
of $\DRn$ are given by the reflection equation, see \cite{KO5}
$$\RR_{12}\teL_1\RR_{12}\teL_1-\teL_1\RR_{12}\teL_1\RR_{12}=\RR_{12}\teL_1-\teL_1\RR_{12}\ ,$$
where $\teL=\{\teL_i^j\}_{i,j=1}^n$ is the matrix of generators (we refer to \cite{C,Sk,RS,KS,IO,IOP,IMO1,IMO2} for various aspects and applications 
of the reflection equation). 

\vskip .1cm
For each $N$ there is a homomorphism (\cite{KO5}, Section 4.1)
$$\tau_N\colon \DRn\to\Diff(n,N)\ \ \text{defined by}\ \ 
\tau_N(\teL_i^j)=\sum_\alpha \Z^{j,\alpha}\der_{i,\alpha}\ .$$
Moreover $\tau_N$ is an embedding for $N\geq n$. 

\vskip .1cm
The formulas (\ref{acsnN1}) show that the image of $\tau_N$ is preserved by the automorphisms $s_i$. 

\vskip .1cm
The element $s_i(\tau_N(\teL^j_k))$ 
can be written by the same formula for all $N$. Since $\tau_N$ is injective for $N\geq n$ we conclude that the formulas (\ref{acsnN1})
induce the action of the symmetric group $\mathrm{S}_n$ on the diagonal reduction algebra $\DRn$ by automorphisms. 
The action on the generators $\teL_j^k$, $j,k=1,\dots,n$  is given by
\begin{equation}\label{defsymactL}
\begin{array}{l}
\displaystyle{ s_i( \teL^i_j) = -  \teL^{i+1}_j\th_{i,i+1} \ ,\ s_i( \teL^{i+1}_j) =  \teL^{i}_j\frac{1}{\th_{i,i+1}} \ ,\ j\neq i,i+1 \ ,}\\[.3em]
\displaystyle{ s_i( \teL_i^j) = -\frac{1}{\th_{i,i+1}}  \teL_{i+1}^j\ ,\ s_i( \teL_{i+1}^j) = \th_{i,i+1}  \teL^{i}_j\ ,\ j\neq i,i+1\ ,}\\[1.2em]
\displaystyle{ s_i( \teL^i_i) =  \teL^{i+1}_{i+1}\ ,\ s_i( \teL^i_{i+1}) = - \teL^{i+1}_{i}(\th_{i,i+1}-1)^2\ ,}\\[.6em]
\displaystyle{ s_i( \teL^{i+1}_i) = - \teL^{i}_{i+1}\frac{1}{(\th_{i,i+1}+1)^2}\ ,\ s_i( \teL^{i+1}_{i+1}) =  \teL^{i}_{i}\ ,} \\[1.2em]
\displaystyle{ s_i( \teL^k_j) =  \teL^k_j\ ,\ k\neq i,i+1\ \text{and}\ j\neq i,i+1\ .}
\end{array}
\end{equation}

\newpage

\centerline{\bf Abstract}

\vskip .2cm
The ring $\Diff(n)$ of $\h$-deformed differential operators appears in the theory of reduction algebras. In this thesis, we construct the rings of generalized differential operators on the $\h$-deformed vector spaces of $\gl$-type. In contrast to the $q$-deformed vector spaces for which the ring of differential operators is unique up to an isomorphism, the general ring of $\h$-deformed differential operators $\Diffs(n)$ is labeled by a rational function $\sigma$ in $n$ variables, satisfying an over-determined system of finite-difference equations. We obtain the general solution of the system. We show that the center of $\Diffs(n)$ is a ring of polynomials  in $n$ variables. We construct an isomorphism between certain localizations of $\Diffs(n)$ and the Weyl algebra $\W_n$ extended by $n$ indeterminates. We present some conditions for the irreducibility of the finite dimensional $\Diffs(n)$-modules. Finally, we discuss difficulties for finding analogous constructions for the ring $\Diff(n, N)$ formed by several copies of $\Diff(n)$.

\vskip .2cm\noindent
{\bf Key words:} differential operators, Yang-Baxter equation, reduction algebras, universal enveloping
algebra, representation theory, Poincar\'e--Birkhoff--Witt property, rings of fractions.

\vspace{10ex}

\centerline{\bf R\'esum\'e}

\vskip .2cm
L'anneau $\Diff(n)$ des op\'erateurs diff\'erentiels $\h$-deform\'es appara\^it dans la th\'eorie des alg\`ebres de r\'eduction.
Dans cette th\`ese, nous construisons les anneaux des op\'erateurs diff\'erentiels g\'en\'eralis\'es sur les espaces vectoriels $\h$-deform\'es de type $\gl$. Contrairement aux espaces vectoriels $q$-deform\'es pour lesquel l'anneau des op\'erateurs diff\'erentiels est unique \`a isomorphisme pr\`es, l'anneau g\'en\'eralis\'e des op\'erateurs diff\'erentiels $\h$-deform\'es $\Diffs(n)$ est index\'e par une fonction rationnelle $\sigma$ en $n$ variables, solution d'un syst\`eme d\'eg\'en\'er\'e d'\'equations aux diff\'erences finies. Nous obtenons la solution g\'en\'erale de ce syst\`eme. Nous montrons que le centre de $\Diffs(n)$ est un anneau des polyn\^omes en $n$ variables. Nous construisons un isomorphisme entre des localisations de l'anneau $\Diffs(n)$ et de l'alg\`ebre de Weyl $\text{W}_n$ \'etendue par $n$ ind\'etermin\'es. Nous pr\'esentons des conditions d'irr\'eductibilit\'e des modules de dimension fini de $\Diffs(n)$. Finalement, nous discutons des difficult\'es \`a trouver les constructions analogues pour l'anneau $\Diff(n,N)$ correspondant \`a $N$ copies de $\Diff(n)$.

\vskip .2cm\noindent
{\bf Key words:} op\'erateurs diff\'erentiels, \'equation de Yang-Baxter, alg\`ebres de r\'eduction, alg\`ebre enveloppante universelle, th\'eorie des repr\'esentations, propri\'et\'e de Poincar\'e--Birkhoff--Witt, corps des fractions.

\end{document}